\tikzset{
  equation/.style={
    baseline={([yshift=-1.5ex]current bounding box.center)}
  },
  torus horizontal/.style = {
  decoration={
    markings,
    mark=at position 0.5 with {
            \draw (-2pt,-2pt) -- (2pt,2pt);
            \draw (2pt,-2pt) -- (-2pt,2pt);
    }}, decorate},
  torus vertical/.style = {
  decoration={
    markings,
    mark=at position 0.5 with {
            \draw (-2pt,-2pt) -- (2pt,2pt);
            \draw (-3pt,-2pt) -- (1pt,2pt);
    }}, decorate}
}
\newtheorem{Theo}{Theorem}
\newtheorem{Prop}[Theo]{Proposition}
\newtheorem{Coro}[Theo]{Corollary}
\newtheorem{Lemm}[Theo]{Lemma}
\theoremstyle{definition}
\newtheorem{Defi}[Theo]{Definition}
\newtheorem{Rema}[Theo]{Remark}
\newtheorem{Assumption}{Assumption}
\theoremstyle{remark}
\newcommand\unit{\mathbbm{1}}
\newcommand\dist {\operatorname{dist}}
\newcommand\complex{\mathbb{C}}
\newcommand\cB{\mathcal{B}}
\newcommand\cD{\mathcal{D}}
\newcommand{\cE}{{\mathcal{E}}}
\newcommand\cG{\mathcal{G}}
\newcommand\cH{\mathcal{H}}
\newcommand\cK{\mathcal{K}}
\newcommand\cL{\mathcal{L}}
\newcommand{\cO}{{\mathcal{O}}}
\newcommand{\cP}{{\mathcal{P}}}
\newcommand{\cR}{{\mathcal{R}}}
\newcommand\cS{\mathcal{S}}
\newcommand{\cV}{{\mathcal{V}}}
\newcommand\id{\operatorname{Id}}
\newcommand\gap{\operatorname{gap}}
\title{Spectral Gap Bounds for Quantum Markov Semigroups\\ via Correlation Decay}
\author[1,2,3]{Angelo Lucia\thanks{ {\tt angelo.lucia@polimi.it} , \, \url{https://orcid.org/0000-0003-1709-1220}}}
\author[2,3]{David Pérez-García\thanks{ {\tt dperezga@ucm.es} , \, \url{https://orcid.org/0000-0003-2990-791X}}}
\author[4]{Antonio Pérez-Hernández\thanks{ {\tt antperez@ind.uned.es} , \, \url{https://orcid.org/0000-0001-8600-7083}  }   }
\affil[1]{
Dipartimento di Matematica, Politecnico di Milano, 20131 Milano, Italy
}
\affil[2]{
    Instituto de Ciencias Matemáticas,
    28049 Madrid, Spain
    }
\affil[3]{
    Departamento de Análisis Matemático y Matemática Aplicada, 
    Universidad Complutense de Madrid,\newline 28040 Madrid, Spain
}
\affil[4]{
    Departamento de Matemática Aplicada I,
    Universidad Nacional de Educación a Distancia,\newline 28040 Madrid, Spain
}
\begin{document}

\maketitle

{\small \noindent \textbf{Acknowledgements:} \footnotesize \\
The authors acknowledge financial support from grants PID2020-113523GB-I00, PID2023-146758NB-I00 and CEX2023-001347-S, funded by MICIU/AEI/10.13039/501100011033. D.\,P-G.~acknowledges support from grant  TEC-2024/COM-84-QUITEMAD-CM, funded by Comunidad de Madrid.
A.\,L.~acknowledges support from  grant RYC2019-026475-I funded by MICIU/AEI/10.13039/501100011033 and ``ESF Investing in your future'', and by ``Programma per Giovani Ricercatori Rita Levi Montalcini'' funded by the Italian Ministry of University and Research (MUR). This work has been financially supported by the Ministry for Digital Transformation and the Civil Service of the Spanish Government through the QUANTUM ENIA project call – Quantum Spain project, and by the European Union through the Recovery, Transformation and Resilience Plan – NextGenerationEU within the framework of the Digital Spain 2026 Agenda.\\
}

{\small\noindent \textbf{Abstract}:
Starting from an arbitrary full-rank state of a lattice quantum spin system, we define a \emph{canonical purified Hamiltonian} and characterize its spectral gap in terms of a spatial mixing condition (or correlation decay) of the state. When the state considered is a Gibbs state of a  local,  commuting  Hamiltonian at positive temperature, we show that the spectral gap of the canonical purified Hamiltonian provides a lower bound to the spectral gap of a large class of reversible generators of quantum Markov semigroup, including local and ergodic Davies generators.
As an application of our construction, 
we show that the mixing condition is always satisfied for any finite-range 1D model, as well as by Kitaev's quantum double models.
\noindent \emph{Key words: thermalization, quantum double model, self-correcting quantum memory, spectral gap, Davies generator}.\\

}

\newpage
\tableofcontents

\section{Introduction}

In the analysis of the behavior of classical spin system, a key role has been played by the observation that certain ``dynamical'' features (i.e., properties related to the evolution, such as the convergence rate to equilibrium) are related to certain ``static'' properties of the invariant state (i.e., various kinds of decay of correlations). Uncovering and understanding these type of connections has been central in many seminal results \cite{Aizenman1987, Zegarlinski1990, Stroock1992, Martinelli1993, Martinelli1994A, Martinelli1994B, Martinelli1999, Cesi2001}.

It is therefore not surprising that the same approach has been attempted in order to study the behavior of dissipative quantum spin systems \cite{Kastoryano2013,KB_2016,Bardet2021,Bardet2023,Bardet2024}. While the intuition supporting this approach still indicates that short-range correlations in invariant states should be related to quick convergence of the evolution, finding the exact generalizations of the classical results to the quantum setting has been challenging.

In \cite{KB_2016}, the authors show that a specific class of generators of dissipative semigroups, called Davies generators, associated to the Gibbs state of a commuting, local, finite-range Hamiltonian, have a spectral gap independent of system size if and only if the Gibbs state satisfies a condition they call \emph{strong clustering}. Note that, by standard arguments, bounds on the spectral gap can be related to mixing time estimates on the semigroup.
While conceptually their result fulfills the program of relating static and dynamical properties of these models, the strong clustering condition is given in terms of local conditional expectations, and is therefore difficult to compute explicitly for specific models.

In this work, we take a different approach: starting from an arbitrary full-rank state $\sigma$ of a lattice quantum spin system on a finite volume $\Lambda$, we construct a self-adjoint operator $\mathbf{H}$ on a doubled Hilbert space, which we denote the \emph{canonical purified Hamiltonian} (canonical in the sense that it only depends on $\sigma$), which has a purification of $\sigma$ as unique ground state (see Section~\ref{sec:canonicalPurifiedHamiltonian}). We argue that this canonical purified Hamiltonian connects dynamical and static properties of a large class of dissipative evolutions having $\sigma$ as invariant state.
On the one hand, in Section~\ref{sec:OpenQuantumSystems} we show that it is possible to compare $\mathbf{H}$ with a large class of generators of $\sigma$-reversible semigroups, which notably include the case of locally ergodic Davies generators, obtaining spectral gap bounds (and consequently mixing time estimates) for these generators in terms of the gap of $\mathbf{H}$.
On the other hand, in Section~\ref{sec:Estimating_the_gap}, we show that a positive bound on the gap of $\mathbf{H}$ can be obtained, under mild technical assumptions, from a spatial mixing condition on the state $\sigma$. We also show a weak reverse statement: that when $\sigma$ is the Gibbs state of a local, finite-range, commuting Hamiltonian, then the decay of the spatial mixing condition is implied when $\mathbf{H}$ is \emph{locally gapped} (i.e., the spectral gap over any finite volume is uniformly bounded away from zero).

To be more concrete, for any partition of our set $\Lambda$ into four disjoint subsets $\Lambda =ABCD:=A \sqcup B \sqcup C \sqcup D$ , we consider the quantity
\begin{equation}\label{eq:spatial-mixing-condition}
    \Delta_\sigma(A : C | D) :=
    \sup_{R_{AD}, Q_{CD}} \abs{\Tr_{ACD}[(\sigma_{ACD} - \sigma_{AD} \sigma_{D}^{-1} \sigma_{DC}) Q_{CD}^{\dagger} R_{AD}]} \,,
\end{equation}
where the supremum is taken over $R_{AD}$ supported on $AD$, $Q_{CD}$ supported on $CD$, and normalized in such a way that 
$ \operatorname{Tr} (\sigma R_{AD}^\dag R_{AD}) = \operatorname{Tr} (\sigma  Q_{CD}^\dag Q_{CD})  = 1.$
We show that, under certain technical assumptions, $\mathbf{H}$ is (locally) gapped if and only if $\Delta_\sigma(A : C | D)$ decays sufficiently fast in the size of $B$ for certain type of decompositions. For instance, when considering as $\sigma$ the Gibbs state of a 2D local commuting Hamiltonian, we need to examine the three type of decompositions exhibited in Figure~\ref{fig:decompositionTorusQDM}. We also provide upper and lower bounds to $\Delta_\sigma(A:C|D)$, which are easier to compute, allowing us to provide numerous examples of cases in which $\mathbf{H}$ is gapped.

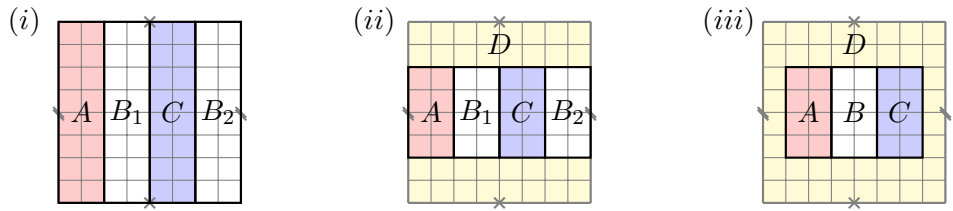
\begin{figure}[ht]
\centering
\begin{tikzpicture}[equation,scale=0.3]
    \node at (-1.5, 8) {$(i)$};
    \filldraw[red!20] (0,0) rectangle (2,8);
    \filldraw[blue!20] (4,0) rectangle (6,8);
    \draw[gray, thin] (0,0) grid (8,8);
    \draw[gray,thick, postaction=torus horizontal] (0,0) -- (8,0);
    \draw[gray,thick, postaction=torus horizontal] (0,8) -- (8,8);
    \draw[gray,thick, postaction=torus vertical] (0,0) -- (0,8);
    \draw[gray,thick, postaction=torus vertical] (8,0) -- (8,8);
    \draw[black, thick] (0,0) rectangle (2,8);
    \node at (1, 4) {$A$};
    \draw[black, thick] (2,0) rectangle (4,8);
    \node at (3, 4) {$B_{1}$};
    \draw[black, thick] (4,0) rectangle (6,8);
    \node at (5, 4) {$C$};
    \draw[black, thick] (6,0) rectangle (8,8);
    \node at (7, 4) {$B_{2}$};
    \end{tikzpicture}
    \hspace{1cm} 
\begin{tikzpicture}[equation,scale=0.3]
    \node at (-1.5, 8) {$(ii)$};
    \filldraw[yellow!20] (0,0) rectangle (8,2);
    \filldraw[yellow!20] (0,6) rectangle (8,8);
    \filldraw[red!20] (0,2) rectangle (2,6);
    \filldraw[blue!20] (4,2) rectangle (6,6);
    \draw[gray, thin] (0,0) grid (8,8);
    \draw[gray,thick, postaction=torus horizontal] (0,0) -- (8,0);
    \draw[gray,thick, postaction=torus horizontal] (0,8) -- (8,8);
    \draw[gray,thick, postaction=torus vertical] (0,0) -- (0,8);
    \draw[gray,thick, postaction=torus vertical] (8,0) -- (8,8);
    \draw[black, thick] (0,2) rectangle (2,6);
    \node at (1, 4) {$A$};
    \draw[black, thick] (2,2) rectangle (4,6);
    \node at (3, 4) {$B_{1}$};
    \draw[black, thick] (4,2) rectangle (6,6);
    \node at (5, 4) {$C$};
    \draw[black, thick] (6,2) rectangle (8,6);
    \node at (7, 4) {$B_{2}$};
    \node at (4, 7) {$D$};
    \end{tikzpicture}
    \hspace{1cm} 
    \begin{tikzpicture}[equation,scale=0.3]
    \node at (-1.5, 8) {$(iii)$};
    \filldraw[yellow!20] (0,0) rectangle (8,8);
    \filldraw[white] (1,2) rectangle (7,6);
    \filldraw[red!20] (1,2) rectangle (3,6);
    \filldraw[blue!20] (5,2) rectangle (7,6);
    \draw[gray, thin] (0,0) grid (8,8);
    \draw[gray,thick, postaction=torus horizontal] (0,0) -- (8,0);
    \draw[gray,thick, postaction=torus horizontal] (0,8) -- (8,8);
    \draw[gray,thick, postaction=torus vertical] (0,0) -- (0,8);
    \draw[gray,thick, postaction=torus vertical] (8,0) -- (8,8);
    \draw[black, thick] (1,2) rectangle (3,6);
    \node at (2, 4) {$A$};
    \draw[black, thick] (3,2) rectangle (5,6);
    \node at (4, 4) {$B$};
    \draw[black, thick] (5,2) rectangle (7,6);
    \node at (6, 4) {$C$};
    \node at (4, 7) {$D$};
    \end{tikzpicture}
    \caption{Possible decompositions of the torus into four subregions $\Lambda = ABCD$. In $(i)$ and $(ii)$ the subset $B$ is formed by the union of $B_{1}$ and $B_{2}$, whereas in $(i)$ the set $D$ is empty.}
    \label{fig:decompositionTorusQDM}
\end{figure}

As we will see, in cases where $D=\emptyset$, the quantity $\Delta(A:C|\emptyset)$ serves as an upper bound of the usual operator correlation function $\operatorname{Corr}_{\sigma}(A:C)$, so the exponential decay of the former yields the absence of a thermal phase transition provided that $\sigma$ is a Gibbs state.

Note that the quantity $\Delta_\sigma(A:C|D)$ is zero if, and only if, 
\begin{equation}
    \sigma_{ADC} = \sigma_{AD} \sigma_{D}^{-1} \sigma_{DC}\,.
\end{equation}
This expression bears a clear resemblance to the Petz recovery condition 
\begin{equation}
\sigma_{ADC} = \sigma_{DC}^{1/2} \sigma_{D}^{-1/2}\sigma_{AD} \sigma_{D}^{-1/2} \sigma_{DC}^{1/2}\,,
\end{equation}
that characterizes \emph{quantum Markov chains} between $A-D-C$, i.e., states for which the conditional mutual information $I(A:C|D)$ is zero \cite{hayden2004}. Actually, states satisfying $\sigma_{ADC} = \sigma_{AD} \sigma_{D}^{-1} \sigma_{DC}$ are called \emph{BS-quantum Markov chains} between $A-D-C$. These states are characterized by the vanishing of several recently introduced measures of conditional mutual information based on the Belavkin-Staszewski (BS) relative entropy \cite{bluhm25} \cite{Gondolf24}. It is known that every BS-quantum Markov chain is a quantum Markov chain, but the converse does not hold \cite{bluhm25}. Thus, we can interpret $\Delta_\sigma(A:C|D)$ as a measure of correlations between regions $A$ and $C$ conditioned on $D$. However, we will not explore the precise connection with other measures of conditional information in this work.

As an application of our theory, we are able to verify a fast decay of $\Delta_\sigma(A:C|D)$ when $\sigma$ is the Gibbs state of any local, finite-range Hamiltonian in 1D (not necessarily commuting), or for the case of 2D quantum double models introduced by Kitaev~\cite{Kitaev2003}. These results are contained in Section~\ref{sec:1D-chains} and Section~\ref{sec:quantum-double}. Together with the results from Section~\ref{sec:OpenQuantumSystems}, this implies a spectral gap and mixing time estimate for any locally ergodic Davies generator for these models, with the extra restriction that we only know how to construct local Davies generators in the case of Gibbs states of commuting Hamiltonians.

Notwithstanding this restriction on the class of states for which  local Davies generators can be defined, the construction of the canonical purified Hamiltonian $\mathbf{H}$ and its spectral gap analysis work in a more general setting, i.e., even for Gibbs state of long range and/or non-commuting interactions in higher dimensions (of course, obtaining bounds on $\Delta_\sigma(A:C|D)$ might be extremely challenging in these settings). We believe that our results might be useful also to study these states, for which it has been hard to define generators having them as invariant states. 

\subsection{Notation}

Let $\mathcal{H}$ be a finite-dimensional Hilbert space and $\cB(\cH)$ the space of bounded linear operators on $\cH$. The trace of $Q \in \cB(\cH)$ is denoted by $\Tr(Q)$, while its adjoint is denoted by $Q^\dag$. The operator norm of an operator $Q \in \cB(\cH)$ is denoted by $\norm{Q}_{\infty}$, or simply $\|Q\|$ when no confusion arises, whereas for every $p \in [1, \infty)$ the Schatten $p$-norm is defined by
\[ \norm{Q}_{p} = (\operatorname{Tr}|Q|^{p})^{1/p} \quad \mbox{ where } \quad |Q|:= (Q^{\dagger} Q)^{1/2}\,. \]
In particular, for $p=2$ the norm is the norm associated to the Hilbert-Schmidt scalar product on $\cB(\cH)$ given by
\begin{equation}
\braket{Q, S}_{HS} = \operatorname{Tr}(Q^{\dagger} S) \quad , \quad Q,S \in \cB(\cH)\,. 
\end{equation}
Each state $\omega$ of $\cB(\cH)$ can be represented as $\omega(Q) = \Tr(\rho\, Q)$, for some density operator $\rho \in \cS(\cH)$, where
\[ \cS(\cH) = \{ \rho \in \cB(\cH) \colon \rho^{\dagger} = \rho, \rho \geq 0, \Tr(\rho)=1 \}\,. \]
We will also work with another scalar product on $\cB(\cH)$. Let $\sigma \in \cS(\cH)$ be a full-rank density operator, i.e. $\sigma > 0$. We then define the \emph{GNS (or Liouville) scalar product} associated to $\sigma$ as
\begin{equation}
    \braket{Q, S}_\sigma = \operatorname{Tr}(\sigma\, Q^\dag S) \quad ,\quad Q,S \in \cB(\cH)\,, 
\end{equation}
whose corresponding norm we denote by $\norm{\cdot}_\sigma$.

For each operator $T:\cB(\cH) \to \cB(\cH)$, we denote by $T^{\ast}$ the dual or adjoint operator with respect to the Hilbert-Schmidt scalar product on $\cB(\cH)$, that is, $T^{\ast}$ is the only operator satisfying
\[ \braket{A, T(B)}_{HS} = \braket{T^{\ast}(A), B}_{HS} \quad \text{ for every } A,B \in \cB(\cH)\,. \]
Moreover, $\norm{T}$ will denote the operator norm of $T$ when $\cB(\cH)$ is equipped with the Hilbert-Schmidt norm.

We are going to work with Hilbert spaces representing states of a quantum spin system on a finite volume. We will denote by $\Lambda$ a finite set: as concrete examples to keep in mind, consider $\Lambda = \mathbbm{Z}_N^D$ or $\Lambda = [-N,N]^{\times D} \subset \mathbbm{Z}^D$, for some given integers $D$ and $N$. If $X \subset \Lambda$, we will denote $X^{c} = \Lambda \setminus X$. If $A_{1}, \ldots, A_{n}$ are pairwise disjoint subsets of $\Lambda$, we will write $A_{1}A_{2}\ldots A_{n}$ to denote their disjoint union $A_{1} \sqcup A_{2} \sqcup \ldots \sqcup A_{n}$. In particular, if these sets form a partition of $\Lambda$, we will write $\Lambda = A_{1} \ldots A_{n}$

To each $x \in \Lambda$ we associate a finite-dimensional Hilbert space $\cH_{x}$, that we will refer to as the \emph{local} Hilbert space associated to the site $x$. For simplicity, we will assume that they all have the same \emph{local} dimension $d$, that is $\cH_{x} \equiv \complex^{d}$ for every site $x$. The space $\cB(\cH_{x})$ will be refereed as the local space of observables. For every subset $X \subset \Lambda$ we then define a local Hilbert space $\cH_{X} \equiv \otimes_{x \in X} \cH_{x} \equiv (\complex^d)^{\otimes |X|}$, as well as a space of observables $\cB(\cH_{X}) \equiv \otimes_{x \in X} \cB(\cH_{x})$. If $X \subset Y$ are subsets of $\Lambda$, then we can embed $\cB(\cH_X)$ into $\cB(\cH_Y)$ by identifying it with $\cB(\cH_X) \otimes \unit_{Y\setminus X}$, where $\unit_{Y\setminus X}$ is the unit of $\cB(\cH_{Y\setminus X})$, i.e., $\unit_{Y\setminus X} \equiv \otimes_{x\in Y\setminus X} \unit_{\cH_x}$. This is a unital and $\norm{\cdot}_{\infty}$-preserving inclusion. Given $Q \in \cB(\cH_\Lambda)$ we will say that $Q$ is supported in $X \subset \Lambda$ if $Q \in \cB(\cH_X)$. The \emph{support} of $Q$ is the smallest subset with this property. If $\sigma$ is a full-rank state on $\cH_{\Lambda}$, and $X$ a subset of $\Lambda$, we will denote by $\sigma_{X^c}$ the partial trace over $X$ of $\sigma$, i.e., $\sigma_{X^c} = \Tr_X (\sigma) \in \cB(\cH_{X^{c}})$.

We will sometimes, but not always, consider states $\sigma$ which are Gibbs state of local Hamiltonians. In these cases, we suppose to have an assignment $\Phi$ from any $X\subset \Lambda$ to $\Phi_X\in \cB(\cH_X)$ such that $\Phi_X$ is self-adjoint. We then define for any region $R\subset \Lambda$:
\begin{equation}
    H_R := \sum_{X \colon X \subset R} \Phi_X \quad , \quad H_{R}^{\partial} := H_{\Lambda} -H_{\Lambda \setminus R} = \sum_{X \colon X \cap R \neq \emptyset} \Phi_{X}.
\end{equation}
We call $\Phi$ an \emph{interaction}, and we say that $\Phi$ has \emph{finite range} with range $r > 0$ if $\Phi_X = 0$ whenever $\operatorname{diam}(X) > r$. In this case, we denote
\[\overline{R} = \bigcup \{ X \colon X \cap R \neq \emptyset, \Phi_{X} \neq 0\} \,\, , \quad\partial R = \overline{R} \setminus R\,, \]
and
\[ \| \Phi\| = \sup_{x \in \Lambda} \sum_{X \ni x}\| \Phi_{X}\|_\infty .\]
We will say that $\Phi$ is \emph{commuting} if $\Phi_X$ and $\Phi_{X'}$ commute for every pair of regions $X$, $X'$. 
The \emph{Gibbs (or thermal) state} of the system at (inverse) temperature $\beta >0$ is then defined as $\sigma_{\beta} = e^{-\beta H_{\Lambda}}/Z_{\beta}$ where $Z_{\beta} = \operatorname{Tr}(e^{-\beta H_{\Lambda}})$.

\subsection{Canonical purified Hamiltonian}
\label{sec:canonicalPurifiedHamiltonian}

Let $\sigma$  be a positive semidefinite operator on $\cH_{\Lambda}$. For every $X \subset \Lambda$,  let us define
\begin{equation}\label{equa:definingWX}
W_{X}:=  \mathcal{B}(\mathcal{H}_{X^{c}})\, \sigma^{1/2} = \{ (\mathbbm{1}_{X} \otimes O) \sigma^{1/2} \colon O \in \mathcal{B}(\mathcal{H}_{X^{c}}) \} \subset \cB(\cH_\Lambda) \,.\end{equation}
We denote by $\Pi_{X}: \cB(\cH_\Lambda) \to \cB(\cH_\Lambda)$ the orthogonal projection (with respect to the Hilbert-Schmidt scalar product) onto $W_{X}$. As usual, we write $\Pi^{\perp}=\operatorname{Id}-\Pi$. If $X= \{ x\}$ is unipunctual, we will simply write $\Pi_{x} = \Pi_{\{ x\}}$. Observe that if $X \subset Y$, then $W_{Y} \subset W_{X}$, and thus $\Pi_{Y} \leq \Pi_{X}$ and $\Pi_{X}^{\perp} \leq \Pi_{Y}^{\perp}$.

\begin{Defi}\label{Defi:DaviesHamiltonian}
With the previous notation, we define the \emph{canonical purified Hamiltonian} associated to $\sigma$ as
\begin{equation}
\mathbf{H}_{X}: \cB(\cH_\Lambda) \to \cB(\cH_\Lambda) \quad , \quad \mathbf{H}_{X} = \sum_{x \in X} \Pi_{x}^{\perp} \quad , \quad X \subset \Lambda\,.     
\end{equation}
When $X= \Lambda$, we will simply write $\mathbf{H} = \mathbf{H}_{\Lambda}$. 
\end{Defi}

For every $X \subset \Lambda$, note that $\mathbf{H}_{X}$ is positive-semidefinite with respect to the Hilbert-Schmidt scalar product, and it is frustration free in the sense that its ground state space is given by its kernel, being moreover $\ker(\mathbf{H}_{X}) = W_{X}$. In particular, the ground state space of $\mathbf{H}_{\Lambda}$ is $W_\Lambda = \mathbb{C} \sigma^{1/2}$.

One significant advantage of the canonical purified Hamiltonian is that it admits an explicit expression for the orthogonal projection $\Pi_{X}$ onto the ground state space $W_{X}$ of $\mathbf{H}_{X}$, as stated in the following proposition.

\begin{Prop}\label{Prop:explicitprojectionCanonicalPurified}
If $\sigma$ is full rank, then the orthogonal projection $\Pi_{X}$ onto $W_{X}$ is given by
\begin{equation}\label{eq:explicitprojectionCanonicalPurified}
\Pi_{X}(Q)= \Tr_{X}(Q \sigma^{1/2}) \sigma_{X^{c}}^{-1} \sigma^{1/2} \,, \quad Q \in \cB(\cH_\Lambda)\,.  
\end{equation} 
\end{Prop}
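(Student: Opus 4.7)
The plan is to verify directly that the map $P \colon \cB(\cH_\Lambda) \to \cB(\cH_\Lambda)$ defined by $P(Q) := \operatorname{Tr}_X(Q\sigma^{1/2}) \sigma_{X^c}^{-1} \sigma^{1/2}$ satisfies the two characterizing properties of the orthogonal projection onto $W_X$: namely, (i) $P(Q) \in W_X$ for every $Q \in \cB(\cH_\Lambda)$, and (ii) $\braket{Q - P(Q), R}_{HS} = 0$ for every $R \in W_X$. By uniqueness of the orthogonal projection this will force $P = \Pi_X$.

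First I would rewrite the formula as $P(Q) = (\unit_X \otimes T_Q)\sigma^{1/2}$, where $T_Q := \operatorname{Tr}_X(Q \sigma^{1/2}) \sigma_{X^c}^{-1} \in \cB(\cH_{X^c})$. The full-rank hypothesis on $\sigma$ ensures that $\sigma_{X^c}^{-1}$ is well-defined, and property (i) is then immediate from the definition \eqref{equa:definingWX} of $W_X$.

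The main computation is property (ii). I would fix an arbitrary $R = (\unit_X \otimes O)\sigma^{1/2} \in W_X$ with $O \in \cB(\cH_{X^c})$, take the adjoint $P(Q)^\dag = \sigma^{1/2}(\unit_X \otimes T_Q^\dag)$, and expand
\begin{equation*}
\braket{P(Q), R}_{HS} = \operatorname{Tr}\!\left(\sigma^{1/2}(\unit_X \otimes T_Q^\dag)(\unit_X \otimes O)\sigma^{1/2}\right) = \operatorname{Tr}\!\left(\sigma (\unit_X \otimes T_Q^\dag O)\right) = \operatorname{Tr}_{X^c}\!\left(\sigma_{X^c} T_Q^\dag O\right),
\end{equation*}
where I used cyclicity of the trace and the defining property of the partial trace. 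Taking adjoints in the definition of $T_Q$ and using that $\sigma_{X^c}^{-1}$ is self-adjoint gives $\sigma_{X^c} T_Q^\dag = \operatorname{Tr}_X(\sigma^{1/2} Q^\dag)$, so that
\begin{equation*}
\braket{P(Q), R}_{HS} = \operatorname{Tr}_{X^c}\!\left(\operatorname{Tr}_X(\sigma^{1/2} Q^\dag)\, O\right) = \operatorname{Tr}\!\left(Q^\dag (\unit_X \otimes O)\sigma^{1/2}\right) = \braket{Q, R}_{HS},
\end{equation*}
which is exactly property (ii).

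There is no real obstacle here beyond keeping the partial trace identities and the adjoint on $T_Q$ in order; the argument is essentially a one-line computation once the formula is rewritten in the factored form $(\unit_X \otimes T_Q)\sigma^{1/2}$. The only place where the assumption matters is in inverting $\sigma_{X^c}$, which is precisely why full rank of $\sigma$ is assumed in the statement.
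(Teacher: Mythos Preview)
Your proof is correct and follows essentially the same approach as the paper: both arguments rest on the characterization of the orthogonal projection via the Hilbert projection theorem, checking that the candidate lies in $W_X$ and that the difference $Q - P(Q)$ is orthogonal to $W_X$. The only cosmetic difference is direction: the paper \emph{derives} the formula from the orthogonality condition, whereas you define $P$ by the formula and \emph{verify} the two properties directly; the underlying partial-trace manipulations are the same.
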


\begin{proof}
By the well-known Hilbert projection theorem, there exists a unique orthogonal projection $\Pi_{X}$ onto $W_{X}$ that associates each $Q \in \cB(\cH_\Lambda)$ with the unique element $\Pi_{X}(Q) \in W_{X}$ such that $Q - \Pi_{X}(Q)$ is orthogonal to $W_{X}$. Note that  we can express the orthogonality condition $Q - \Pi_{X}(Q) \perp W_{X}$ explicitly as follows: for every $O \in \cB(\cH_{X^{c}})$
\[
0=\langle O \sigma^{1/2}, Q - \Pi_{X}(Q) \rangle_{HS}  =  \operatorname{Tr}(\sigma^{1/2}O^{\dagger}Q) - \operatorname{Tr}(\sigma^{1/2} O^{\dagger}\Pi_{X}(Q)) \,, 
\]
or equivalently
\[ \operatorname{Tr}(O^{\dagger}Q\sigma^{1/2}) = \operatorname{Tr}(O^{\dagger}\Pi_{X}(Q)\sigma^{1/2})  \,.\]
Next, we utilize the property 
\begin{equation}\label{equa:auxOrthogonalprojection1}
\Tr(O_{1}O_{2}) = \Tr_{X^{c}}(O_{1}\Tr_{X}(O_{2})) \qc \mbox{for every } O_{1} \in \cB(\cH_{X^{c}}), \, O_{2} \in \cB(\cH_\Lambda)\,,
\end{equation}
to rewrite the previous expression as follows
\[ \Tr_{X^{c}}(O^{\dagger}\Tr_{X}(Q\sigma^{1/2})) = \Tr_{X^{c}}(O^{\dagger}\Tr_{X}(\Pi_{X}(Q)\sigma^{1/2}))\,. \]
But since $O \in \cB(\cH_{X^{c}})$ is arbitrary, we conclude that $Q - \Pi_{X}(Q) \perp W_{X}$ if and only if 
\[ \Tr_{X}(Q\sigma^{1/2}) =  \Tr_{X}(\Pi_{X}(Q)\sigma^{1/2})\,.\]
Now, we observe that the condition $\Pi_{X}(Q) \in W_{X}$ implies that $\Pi_{X}(Q) \sigma^{-1/2}$ belongs to $\cB(\cH_{X^{c}})$, which allows us to rewrite the last expression as
\[ \Tr_{X}(Q\sigma^{1/2}) = 
\Tr_{X}(\Pi_{X}(Q)\sigma^{1/2}) = \Tr_{X}(\Pi_{X}(Q)\sigma^{-1/2}\sigma) = \Pi_{X}(Q)\sigma^{-1/2} \sigma_{X^c}.
\]
From here, we solve for $\Pi_{X}(Q)$ to get the claimed expression for $\Pi_{X}(Q)$.
\end{proof}
\begin{Defi}
    We denote by $\gap(\mathbf{H}_{X})$ the \emph{spectral gap} of the operator $\mathbf{H}_{X}$, namely the difference between the two smallest eigenvalues of $\mathbf{H}_{X}$ (not counting multiplicities), or in other words the smallest non-zero eigenvalue of $\mathbf{H}_{X}$.
\end{Defi}

\section{Open systems and dynamics}
\label{sec:OpenQuantumSystems}

In this section, we recall some basics on quantum Markov semigroups, and then we will show that, under certain assumptions, the spectral gap of the generator of a quantum Markov semigroup can be bounded in terms of the spectral gap of the canonical purified Hamiltonian $\mathbf{H}$. For details on the definitions and basic properties of quantum Markov semigroups, we refer to \cite{Qchannels, Alicki_L_07}.

\subsection{Quantum Markov semigroups}
Let us consider a quantum system with associated space of states $\cH$ (finite-dimensional Hilbert space).

\begin{Defi}
A \emph{quantum Markov semigroup} (QMS for short) on $\cH$ is a one-parameter family $(T_{t})_{t \geq 0}$ of linear operators $T_{t}:\cB(\cH) \to \cB(\cH)$  satisfying the following properies: (i) $T_{t+s} = T_{t} \circ T_{s}$ for every $t,s \geq 0$, (ii) $T_{0} = \id$, (iii) $t \to T_{t}$ is continuous (in the operator norm), (iv) $T_{t}$ is completely positive and  $T_{t}(\mathbbm{1}) = \mathbbm{1}$ (unit-preserving) for every $t \geq 0$.
\end{Defi}

This type of semigroups model the evolution of observables $t \mapsto A(t) := T_{t}(A)$, whereas the dual semigroup $(T_{t}^{\ast})_{t \geq 0}$, whose elements are completely positive and trace-preserving, describes the evolution of states $t \mapsto \rho(t) = T_{t}^{\ast}(\rho)$. Both pictures are equivalent via the duality relation:
\[ \Tr(\rho A(t)) = \langle \rho, T_{t}(A) \rangle_{HS} = \langle T_{t}^{\ast}(\rho), A \rangle_{HS}  = \Tr(\rho(t) A)\,. \]

The continuous semigroup structure ensures that the evolution is memoryless over time (Markovian) and differentiable. As a consequence, it can be described as $T_{t} = e^{t\cL}$ for a (unique) superoperator $\cL: \cB(\cH) \longrightarrow \cB(\cH)$ called the \emph{generator}, that is characterized by the quantum Markovian \emph{master equation}
\[ \frac{d}{dt} T_{t} = \cL \circ T_{t}\,. \]
In the particular case of quantum Markov semigroups, the generator is called the \emph{Liouvillian} and admits the following well-known characterization.

\begin{Theo}[Structure of QMS generators]\label{Theo:structureQMSgenerators}
A superoperator $\cL:\cB(\cH) \to \cB(\cH)$ is a generator of a quantum Markov semigroup if and only if there exist a Hermitian operator $H$ and a set of operators $\{ L_{j}\}_{j=1}^{D}$ where $D = \dim(\cH)^{2}$ such that
\begin{equation}\label{eq:generator-derivation-dissipation}
\cL(Q)  = i \delta(Q) + \cD(Q)
\end{equation}
where $\delta(Q) = \comm{H}{Q}$ is a derivation, i.e. $\delta(Q_1 Q_2) = \delta(Q_1) Q_2 + Q_1\delta(Q_2)$, and
\begin{equation}
\cD(Q) = \sum_{j=1}^{D} L_{j}^{\dagger} Q L_{j} - \frac{1}{2}\{ L_{j}^{\dagger}L_{j}, Q\}_{+}=\frac{1}{2}\sum_{j=1}^{D}\qty(L_{j}^{\dagger}[Q,L_{j}] + [L_{j}^{\dagger},Q]L_{j})\,, \end{equation}
being $\{ a,b\}_{+} := ab+ba$ the anticommutator. The previous expression is called the Lindbladian form of the generator, and the superoperator $\cD$ is called the \emph{dissipative} term.
\end{Theo}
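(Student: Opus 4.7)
The plan is to prove both implications of this classical GKLS/Lindblad characterization: sufficiency (the stated form generates a QMS) is largely computational, while necessity (every QMS generator has this form) carries the structural content and proceeds via the Choi--Jamiolkowski correspondence combined with a conditional positivity condition at $t=0$.

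For sufficiency, I would first check $\cL(\unit) = 0$: the commutator $[H,\unit]$ vanishes, and the anticommutator in $\cD$ exactly cancels the first term when $Q = \unit$, so $T_t$ is unital. For complete positivity I would split $\cL = \Phi + \Psi$ with $\Phi(Q) = \sum_j L_j^\dagger Q L_j$ manifestly completely positive, and $\Psi(Q) = G^\dagger Q + QG$ for $G = -iH - \tfrac{1}{2}\sum_j L_j^\dagger L_j$. The semigroup generated by $\Psi$ is the sandwich $e^{t\Psi}(Q) = e^{tG^\dagger} Q e^{tG}$, which is completely positive as a single-Kraus map. The Lie--Trotter product formula then expresses $e^{t\cL} = \lim_n (e^{t\Phi/n} e^{t\Psi/n})^n$ as a limit of compositions of completely positive maps, hence completely positive. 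Continuity in $t$ is automatic in finite dimension.

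For necessity, let $|\Omega\rangle = \sum_i |i\rangle\otimes|i\rangle \in \cH \otimes \cH$ and consider the Choi matrix $C_t := (T_t\otimes\id)(|\Omega\rangle\langle\Omega|)$; complete positivity of $T_t$ is equivalent to $C_t \geq 0$. Since $C_0 = |\Omega\rangle\langle\Omega|$ is rank one, for every $|\psi\rangle \perp |\Omega\rangle$ the scalar function $t \mapsto \langle\psi|C_t|\psi\rangle$ is non-negative and vanishes at $t=0$; hence its derivative at zero, namely $\langle\psi|C_\cL|\psi\rangle$ with $C_\cL := (\cL\otimes\id)(|\Omega\rangle\langle\Omega|)$, is non-negative. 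Writing $P$ for the orthogonal projection onto the orthogonal complement of $|\Omega\rangle$, this reads $P C_\cL P \geq 0$. Diagonalising $P C_\cL P = \sum_j \lambda_j |v_j\rangle\langle v_j|$ with $\lambda_j \geq 0$ and using the Choi--Jamiolkowski bijection $|v_j\rangle = (F_j \otimes \id)|\Omega\rangle$, I would set $L_j = \sqrt{\lambda_j}\, F_j$; these are the candidate Lindblad jump operators.

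With the $L_j$'s fixed, the residual map $\cR := \cL - \cD$ has a Choi matrix supported along $|\Omega\rangle$ by construction, which forces $\cR(Q) = GQ + QG^\dagger$ for some $G \in \cB(\cH)$. The unital conditions $\cL(\unit) = 0$ and $\cD(\unit) = 0$ together imply $G + G^\dagger = 0$, so $G = iH$ for a Hermitian $H$, determined up to a real additive multiple of $\unit$ (the usual Hamiltonian phase ambiguity). The main technical obstacle is exactly the bookkeeping that goes into this last step: one must verify that the choice $L_j = \sqrt{\lambda_j}\, F_j$ strips off precisely the part of $C_\cL$ off the $|\Omega\rangle$-diagonal, so that the residual Choi matrix is confined to a rank-two block along $|\Omega\rangle$ corresponding to a derivation. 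Once that identification is made, the GKLS form of $\cL$ reads off directly from the Choi--Jamiolkowski description of maps $Q \mapsto GQ + QG^\dagger$.
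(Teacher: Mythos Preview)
The paper does not actually prove this theorem; it is the classical Gorini--Kossakowski--Sudarshan--Lindblad result, stated with references to \cite{Qchannels, Alicki_L_07} but without an argument. Your sketch is the standard modern proof via the Choi matrix and conditional complete positivity, and it is essentially correct.

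A couple of small points worth tightening. First, when you say the residual Choi matrix is ``supported along $|\Omega\rangle$'', what you have actually shown is $P C_{\cR} P = 0$, so $C_{\cR} = \alpha\,|\Omega\rangle\langle\Omega| + |\Omega\rangle\langle u| + |u\rangle\langle\Omega|$ with $u \perp \Omega$ and $\alpha \in \mathbb{R}$ (Hermiticity of $C_{\cR}$ follows from $\cL$ being Hermiticity-preserving). Since $|\Omega\rangle\langle\Omega|$ is the Choi matrix of the identity map, the $\alpha$-term contributes $\alpha Q$, which you then absorb into $G$; this is the bookkeeping you allude to, and it works as stated. Second, your diagonalisation of $P C_{\cL} P$ yields at most $\dim(\cH)^2 - 1$ nonzero jump operators (the $F_j$ are automatically traceless because $v_j \perp \Omega$), so to match the statement with $D = \dim(\cH)^2$ one simply pads with a zero operator. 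Neither point affects the validity of your argument.
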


We say that a state $\sigma$ is a \emph{steady state} of the quantum Markov semigroup, or of its generator $\cL$, if $e^{t \cL^{\ast}}(\sigma) = \sigma$ for every $t$, or equivalently, if $\cL^{\ast}(\sigma) = 0$. If $\sigma$ is the only steady state of the semigroup, that is, if $\ker{(\cL^{\ast})} = \mathbb{C} \sigma$, then the semigroup, or its generator, is said to be \emph{primitive}. In this case, it can be shown that $\rho(t)$ converges to $\sigma$ as $t \rightarrow \infty$ for every initial state $\rho$.

\subsection{Detailed-balance and spectral gap}
We first recall the definition of quantum detailed balance, or reversibility \cite{kossakowski_quantum_1977,agarwal_open_1973,Alicki1976,carmichael_detailed_1976,spohn_stationary_1977}, in particular in the formulation of \cite{Temme2010}.

Let $\sigma \in \cB(\cH)$ be positive semi-definite. For each $s \in [0,1]$ let us define $\Gamma_{s} : \cB(\cH) \longrightarrow \cB(\cH)$ by
\[ \Gamma_{s}(Q) := \sigma^{1-s} Q \sigma^{s}\quad , \quad Q \in \cB(\cH). \]
Observe that $\Gamma_{s}^{\ast} := (\Gamma_{s})^{\ast} = \Gamma_{s}$, since
\[ \langle A, \Gamma_{s}(B) \rangle_{HS} = \operatorname{Tr}(A^{\dagger} \sigma^{1-s}B \sigma^{s}) = \operatorname{Tr}(\sigma^{s}A^{\dagger} \sigma^{1-s}B) = \operatorname{Tr}((\sigma^{1-s}A \sigma^{s})^\dagger B) = \langle \Gamma_{s}(A),B \rangle_{HS}\,. \] 

\begin{Defi}\cite{Temme2010}
An operator $\cL: \cB(\cH) \to \cB(\cH)$ is said to satisfy the \emph{$s$-detailed balance condition}  with respect to a full-rank state $\sigma \in \cB(\cH)$ for some $s \in [0,1]$ (or that it is ($s$,$\sigma$)-\emph{reversible}) if $\Gamma_{s} \circ \cL  =  \cL^{\ast} \circ \Gamma_{s}$.
\end{Defi}

Observe that in this case, $\cL^{\ast}(\sigma) = \cL^{\ast} \Gamma_{s}(\mathbbm{1}) = \Gamma_{s} \cL (\mathbbm{1}) = \Gamma_{s}(0)=0$. In particular, if $\cL$ is the generator of a quantum Markov semigroup, then $\sigma$ is a steady state of the semigroup. The following result is well known and can be easily proved.

\begin{Prop}\label{Prop:characterizationDetailedBalance}
Given a full-rank state $\sigma \in \cS(\mathcal{H})$ and $s \in [0,1]$, the following conditions are equivalent:
\begin{enumerate}[(i)]
\item $\mathcal{L}$ satisfies the $s$-detailed balance condition with respect to $\sigma$ and $s$;
\item $\Gamma_{s}^{-1/2} \circ \mathcal{L} \circ \Gamma_{s}^{1/2}$ is self-adjoint, i.e., $\Gamma_{s}^{1/2} \circ \mathcal{L} \circ \Gamma_{s}^{-1/2} = \Gamma_{s}^{-1/2} \circ \mathcal{L}^{\ast} \circ \Gamma_{s}^{1/2}$; 
\item $\mathcal{L}$ is self-adjoint with respect to the inner product defined by
\[ \langle A , B \rangle_{\sigma, s}:= \operatorname{Tr}(A^{\dagger} \Gamma_{s}(B)) = \operatorname{Tr}(A^{\dagger} \sigma^{1-s} B \sigma^{s})\,. \]
\end{enumerate}
\end{Prop}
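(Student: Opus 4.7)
The plan is to prove (i) $\Leftrightarrow$ (ii) $\Leftrightarrow$ (iii) via straightforward algebraic manipulations, exploiting that $\Gamma_s$ is self-adjoint and positive definite with respect to $\langle\cdot,\cdot\rangle_{HS}$ whenever $\sigma$ is full rank. Since $\Gamma_s$ is the composition of the commuting self-adjoint positive operators $Q \mapsto \sigma^{1-s}Q$ and $Q \mapsto Q\sigma^s$, positivity is clear, and therefore $\Gamma_s^{1/2}$ is a well-defined positive, self-adjoint, invertible operator on $(\cB(\cH),\langle\cdot,\cdot\rangle_{HS})$ which satisfies $\Gamma_s^{1/2}\circ\Gamma_s^{1/2} = \Gamma_s$ and commutes with $\Gamma_s^{-1/2}$.

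For (i) $\Leftrightarrow$ (ii), I would start from $\Gamma_s \circ \mathcal{L} = \mathcal{L}^* \circ \Gamma_s$ and compose on the left and on the right with $\Gamma_s^{-1/2}$, obtaining
\[
\Gamma_s^{1/2} \circ \mathcal{L} \circ \Gamma_s^{-1/2} = \Gamma_s^{-1/2} \circ \mathcal{L}^* \circ \Gamma_s^{1/2}.
\]
The right-hand side is precisely the Hilbert–Schmidt adjoint of the left-hand side, since $(\Gamma_s^{\pm 1/2})^* = \Gamma_s^{\pm 1/2}$. Hence (i) is equivalent to the conjugated operator being self-adjoint with respect to $\langle\cdot,\cdot\rangle_{HS}$, which is exactly the content of (ii). The reverse direction is obtained by composing with $\Gamma_s^{1/2}$ on both sides.

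For (ii) $\Leftrightarrow$ (iii), the key observation is the identity
\[
\langle A, B\rangle_{\sigma,s} = \operatorname{Tr}(A^{\dagger}\Gamma_s(B)) = \operatorname{Tr}(A^{\dagger}\Gamma_s^{1/2}\circ\Gamma_s^{1/2}(B)) = \braket{\Gamma_s^{1/2}(A), \Gamma_s^{1/2}(B)}_{HS},
\]
so that $\Gamma_s^{1/2}$ realizes a unitary isomorphism from $(\cB(\cH),\langle\cdot,\cdot\rangle_{\sigma,s})$ onto $(\cB(\cH),\langle\cdot,\cdot\rangle_{HS})$. Under this isomorphism, $\mathcal{L}$ is transported to $\Gamma_s^{1/2}\circ\mathcal{L}\circ\Gamma_s^{-1/2}$, and self-adjointness is preserved by unitary conjugation. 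Therefore $\mathcal{L}$ is self-adjoint for $\langle\cdot,\cdot\rangle_{\sigma,s}$ if and only if $\Gamma_s^{1/2}\circ\mathcal{L}\circ\Gamma_s^{-1/2}$ is self-adjoint for $\langle\cdot,\cdot\rangle_{HS}$, which is (ii). Writing the equivalence out explicitly by substituting $A' = \Gamma_s^{1/2}(A)$ and $B' = \Gamma_s^{1/2}(B)$ in $\langle A, \mathcal{L}(B)\rangle_{\sigma,s} = \langle \mathcal{L}(A), B\rangle_{\sigma,s}$ yields the statement with minimal calculation.

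There is essentially no serious obstacle: the entire argument is bookkeeping with the self-adjointness and invertibility of $\Gamma_s^{1/2}$. The only point worth being careful about is the consistent choice of signs in the powers of $\Gamma_s^{\pm 1/2}$ and recognising that the self-adjointness condition on $\Gamma_s^{\pm 1/2}\circ\mathcal{L}\circ\Gamma_s^{\mp 1/2}$ is insensitive to swapping the two choices once the matching equation between $\mathcal{L}$ and $\mathcal{L}^*$ is derived.
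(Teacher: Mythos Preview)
Your proof is correct. The paper does not give a proof of this proposition, stating only that it ``is well known and can be easily proved''; your argument via the positivity and self-adjointness of $\Gamma_s$ on $(\cB(\cH),\langle\cdot,\cdot\rangle_{HS})$, and the resulting unitary identification with $(\cB(\cH),\langle\cdot,\cdot\rangle_{\sigma,s})$, is exactly the standard verification the authors have in mind.
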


\noindent It is known that if $\cL$ satisfies the detailed-balance condition w.r.t. $\sigma$ and some $s \in [0,1/2) \cup (1/2,1]$, then it satisfies the condition for every $s \in [0,1]$, see \cite{Carlen_2017}, in which case it is self-adjoint with respect to the GNS inner product $\langle \cdot, \cdot \rangle_{\sigma,s}$. From now on we will restrict to the case $s=1$, so we will omit any reference to $s$.

\begin{Rema}
Suppose we have a generator $\cL$ decomposed as in \eqref{eq:generator-derivation-dissipation}, namely $i\delta + \cD$, and assume that $\cD$ is $\sigma$-reversible. Since $\delta = \delta^{\ast}$ is self-adjoint, $\sigma$ is a steady state for $\cL$ if and only if $\delta(\sigma) = 0$, i.e., $\comm{H}{\sigma}=0$. In this case, $\delta$ is also $\sigma$-reversible. In fact
\[
\langle Q, \delta(Q) \rangle_\sigma = \operatorname{Tr}( \sigma Q^\dagger \delta (Q) ) = \operatorname{Tr}( \delta(Q \sigma )^{\dagger} Q ) = 
\operatorname{Tr}(\sigma \delta(Q)^\dagger Q) + \operatorname{Tr}(\delta(\sigma)^{\dagger} Q^*Q) = \langle\delta(Q),Q \rangle_\sigma\,.
\]

Note that this implies that $i\delta$, and consequently $\cL$, \emph{is not} $\sigma$-reversible unless $\delta = 0$. Moreover, if $\delta(\sigma) = 0$ and $\cD$ is $\sigma$-reversible, then $\cL$ is primitive if and only if $\cD$ is primitive, which can be reformulated as $\ker(\cD) = \mathbb{C} \mathbf{1}$ or $\ker(\cD^\ast) = \mathbb{C} \sigma$. Under these conditions, it is possible to estimate the speed of convergence towards the steady state for every initial state in terms of the spectral gap of its dissipative term $\cD$, as the following proposition shows.
\end{Rema}

\begin{Prop}{{\cite[Lemma 12]{Temme2010}}}\label{Prop:speedConvergenceGap}
    Let $\cL = i\delta + \cD$ be the generator of a quantum Markov semigroup, decomposed as in Theorem~\ref{Theo:structureQMSgenerators}.
    Suppose that there exists a full-rank state $\sigma$ such that $\delta(\sigma) = 0 $, $\cD$ is $\sigma$-reversible and $\ker \cD =  \mathbb{C} \mathbf{1}$. Then, we have that
    \begin{equation}
        \norm{\rho(t) - \sigma}_1 \le \sigma_{\min}^{-1/2} e^{-t \gap(\cD)},
    \end{equation}
    where $\rho(t)  = T_t^* (\rho)$ is the evolution of an arbitrary state $\rho$, $\sigma_{\min}$ is the smallest eigenvalue of $\sigma$, and $\gap(\cD)$ is the spectral gap of $\cD$, defined as
    \begin{equation}
    \gap(\cD) = - \min \{\lambda : \lambda \in spec(\cD), \lambda \neq 0 \}.
    \end{equation}
\end{Prop}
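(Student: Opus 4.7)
The plan is to transfer the problem from the trace norm on states to the GNS norm on observables, where $\sigma$-reversibility of $\cD$ yields exponential contraction at rate $\gap(\cD)$. The first step is to ``unpurify'' the perturbation: write
\begin{equation*}
    \rho - \sigma = \Gamma_{1/2}(Y_{0}), \qquad Y_{0} := \sigma^{-1/2}\rho\sigma^{-1/2} - \unit,
\end{equation*}
which is Hermitian with $\langle \unit, Y_{0}\rangle_{\sigma} = \Tr(\rho - \sigma) = 0$, so $Y_{0}\in(\ker\cD)^{\perp}$ in the GNS inner product. Next, I would use $\delta(\sigma)=0$ (equivalent to $[H,\sigma]=0$, which implies $\delta\circ\Gamma_{1/2} = \Gamma_{1/2}\circ\delta$) together with the $\sigma$-reversibility of $\cD$ at $s=1/2$ (Proposition~\ref{Prop:characterizationDetailedBalance}), which gives $\cD^{\ast}\circ\Gamma_{1/2} = \Gamma_{1/2}\circ\cD$, to obtain the intertwining $\cL^{\ast}\circ\Gamma_{1/2} = \Gamma_{1/2}\circ\bar{\cL}$ with $\bar{\cL} := -i\delta + \cD$. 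Exponentiating, $\rho(t) - \sigma = \Gamma_{1/2}(Y(t))$ where $Y(t) := e^{t\bar{\cL}}(Y_{0})$; and since $\delta(\unit) = 0$ and $\cD(\unit) = 0$, the subspace $\unit^{\perp}$ is invariant under $\bar{\cL}$, so $Y(t)\perp\unit$ for all $t$.

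I would then bound the trace norm in terms of the GNS norm of $Y(t)$ via Hölder's inequality for Schatten norms:
\begin{equation*}
    \|\rho(t) - \sigma\|_{1} = \|\sigma^{1/2}\cdot Y(t)\sigma^{1/2}\|_{1} \leq \|\sigma^{1/2}\|_{2}\,\|Y(t)\sigma^{1/2}\|_{2} = \|Y(t)\|_{\sigma},
\end{equation*}
using $\|\sigma^{1/2}\|_{2}^{2} = \Tr\sigma = 1$ and $\|Y(t)\sigma^{1/2}\|_{2}^{2} = \Tr(\sigma Y(t)^{\dagger}Y(t)) = \|Y(t)\|_{\sigma}^{2}$.

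The core estimate is the GNS-contraction $\|Y(t)\|_{\sigma}\leq e^{-t\gap(\cD)}\|Y_{0}\|_{\sigma}$. By Proposition~\ref{Prop:characterizationDetailedBalance}(iii), $\cD$ is self-adjoint with respect to $\langle\cdot,\cdot\rangle_{\sigma}$, and by the remark preceding the statement $\delta$ is also $\sigma$-reversible, so $-i\delta$ is anti-self-adjoint in the GNS inner product. Differentiating,
\begin{equation*}
    \frac{d}{dt}\|Y(t)\|_{\sigma}^{2} = 2\operatorname{Re}\langle Y(t), \bar{\cL} Y(t)\rangle_{\sigma} = 2\langle Y(t), \cD Y(t)\rangle_{\sigma} \leq -2\gap(\cD)\|Y(t)\|_{\sigma}^{2},
\end{equation*}
where the Hamiltonian contribution vanishes after taking the real part and the inequality uses $Y(t)\perp\unit$ together with the spectral gap of the self-adjoint $\cD$ on $(\ker\cD)^{\perp}$. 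Grönwall's inequality yields the exponential contraction. I would finally bound the initial datum using $\sigma^{-1}\leq\sigma_{\min}^{-1}\unit$ and $\Tr(\rho^{2})\leq 1$:
\begin{equation*}
    \|Y_{0}\|_{\sigma}^{2} = \Tr(\sigma Y_{0}^{2}) = \Tr(\rho\sigma^{-1}\rho) - 1 \leq \sigma_{\min}^{-1} - 1 \leq \sigma_{\min}^{-1}.
\end{equation*}
Combining the three estimates gives the claim.

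The only delicate point is the intertwining $\cL^{\ast}\circ\Gamma_{1/2} = \Gamma_{1/2}\circ\bar{\cL}$: the sign on the Hamiltonian term flips (reflecting that $\cL$ itself is not $\sigma$-reversible), and one has to check that the \emph{modified} generator $\bar{\cL}$ still contracts $\|\cdot\|_\sigma$ at rate $\gap(\cD)$. This works because $\bar{\cL}$ preserves the dissipative part $\cD$ unchanged, while the modified Hamiltonian part remains anti-self-adjoint in the GNS inner product and therefore generates an isometry that does not affect the norm estimate.
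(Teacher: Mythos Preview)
Your argument is correct. The paper does not actually give its own proof of this proposition; it simply cites \cite[Lemma~12]{Temme2010} and remarks afterwards that the cited lemma assumes only $1/2$-detailed balance, which is implied by the (stronger) $s=1$ reversibility used throughout the paper. Your write-up is essentially that standard proof, carried out in full: you pass to the KMS variable $Y(t)=\sigma^{-1/2}\rho(t)\sigma^{-1/2}-\unit$, use the intertwining $\cL^{\ast}\Gamma_{1/2}=\Gamma_{1/2}\bar\cL$ (correctly noting the sign flip on the Hamiltonian part), exploit that $-i\delta$ is GNS--anti-self-adjoint so it drops out of $\tfrac{d}{dt}\|Y(t)\|_\sigma^2$, and apply the spectral gap of $\cD$ on $\unit^\perp$ together with the H\"older/initial-data bounds. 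One small presentational point: when you invoke ``$\sigma$-reversibility of $\cD$ at $s=1/2$'', it is worth saying explicitly that this follows from the paper's $s=1$ hypothesis via the remark just before the proposition (the implication $s\neq 1/2 \Rightarrow$ all $s$), since in the paper's conventions ``$\sigma$-reversible'' means $s=1$.
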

Note that in the proof of \cite[Lemma 12]{Temme2010} it is assumed that $\cD$ satisfies the $1/2$-detailed balance condition (with respecto to $\sigma$), but as we have discussed this is a strictly weaker condition than the $s$-detailed balance condition for $s\neq 1/2$.

In order to verify that a given generator satisfies the $\sigma$-reversibility condition, we will use the following result, based on \cite[Theorem 3.1]{Carlen_2017}, which extends upon a prior characterization by Alicki \cite{Alicki1976}. 

\begin{Prop}\label{Prop:detailBalanceConseq}
Let $\cD$ be the dissipative term of a generator of a QMS on $\cB(\cH)$ and let $\sigma \in \cS(\cH)$be a full-rank state. Then, $\cD$ satisfies the detailed balance condition with respect to $\sigma$ if and only if $\cD$ can be decomposed as
\begin{equation}\label{eq:carlen-maas-1}
    \cD(Q) = \frac{1}{2} \sum_{j \in J} \cD_{j}(Q)\,,
\end{equation}
where
\begin{equation}\label{eq:carlen-maas-2}
    \cD_{j}(Q):= e^{-\omega_{j}/2} \left(V_{j}^{\dagger}[Q, V_{j}] + [V_{j}^{\dagger}, Q]V_{j} \right) 
+ e^{\omega_{j}/2} \left(V_{j}[Q, V_{j}^{\dagger}] + [V_{j}, Q]V_{j}^{\dagger} \right),
\end{equation}
$\omega_{j} \in \mathbb{R}$ for every $j \in J$, and $\{V_{j} \}_{j \in J}$ is a collection of elements in $\cB(\cH)$ satisfying:
\begin{enumerate}[(i)]
\item $\{ V_{j}\}_{j \in J} = \{ V_{j}^{\dagger}\}_{j \in J}$.
\item Each $V_{j}$ is an eigenvector of the modular operator $\Delta_{\sigma}(Q) = \sigma Q \sigma^{-1}$ with eigenvalue $e^{-\omega_{j}}$.
\end{enumerate}
Moreover, in this case, each $\cD_j$ is negative semidefinite with respect to the GNS inner product, that is, 
\begin{equation}\label{equa:positivesemidefinite}
-\langle Q, \cD_{j}(Q) \rangle_{\sigma} \ge 0 \, \qc \forall Q \in \cB(\cH),
\end{equation}
and its kernel is given by
\begin{equation}\label{eq:kernel-commutant}
    \ker(\cD_{j}) = \{ V_{j}, V_{j}^{\dagger }\}'.
\end{equation}
\end{Prop}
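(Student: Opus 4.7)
The structural equivalence — detailed balance for $\cD$ iff it has the decomposition \eqref{eq:carlen-maas-1}--\eqref{eq:carlen-maas-2} with $\{V_j\} = \{V_j^\dagger\}$ and each $V_j$ an eigenoperator of $\Delta_\sigma$ — is precisely the content of \cite[Theorem 3.1]{Carlen_2017}, refining Alicki \cite{Alicki1976}. My plan is to recall the idea of the proof and then verify the two extra statements we actually need beyond what is cited: the negative semidefiniteness of each individual piece $\cD_j$ in the GNS inner product, and the kernel characterization \eqref{eq:kernel-commutant}.

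For the implication from the structural form to detailed balance, I would verify directly that each $\cD_j$ is GNS-self-adjoint, which by Proposition~\ref{Prop:characterizationDetailedBalance} is equivalent to the detailed balance condition. The computation reduces to moving $\sigma$ past $V_j$ and $V_j^\dagger$ via $\sigma V_j = e^{-\omega_j} V_j \sigma$ (and its adjoint), combined with cyclicity of the trace; the prefactors $e^{\pm \omega_j/2}$ in \eqref{eq:carlen-maas-2} are designed precisely to absorb the exponential factors that appear. The converse is the hard direction: starting from a generic Lindblad representation, one averages the Lindblad operators along the modular flow $\{\Delta_\sigma^{it}\}_{t \in \mathbb{R}}$ to replace them by eigenoperators of $\Delta_\sigma$, and then uses detailed balance to pair them up into Hermitian-conjugate pairs with the stated weights. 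I would quote this part from \cite{Alicki1976, Carlen_2017} rather than reproducing it.

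For the negative semidefiniteness, I would rewrite
\begin{equation*}
\cD_j(Q) = 2\,e^{-\omega_j/2}\Big(V_j^\dagger Q V_j - \tfrac{1}{2}\{V_j^\dagger V_j, Q\}_+\Big) + 2\,e^{\omega_j/2}\Big(V_j Q V_j^\dagger - \tfrac{1}{2}\{V_j V_j^\dagger, Q\}_+\Big),
\end{equation*}
expand $-\langle Q, \cD_j(Q)\rangle_\sigma$, and use $V_j \sigma = e^{\omega_j}\sigma V_j$ together with trace cyclicity. With the appropriate bookkeeping, the eight resulting traces regroup into a positive combination of commutator-norm squares of the form $c_+\, \|[V_j, Q]\|_\sigma^2 + c_-\,\|[V_j^\dagger, Q]\|_\sigma^2$ with $c_\pm > 0$, from which \eqref{equa:positivesemidefinite} is immediate. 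The kernel characterization \eqref{eq:kernel-commutant} then follows: since $\sigma$ is full-rank, the $\sigma$-norms above are genuine norms, so $\cD_j(Q) = 0$ forces $[V_j, Q] = [V_j^\dagger, Q] = 0$, that is, $Q \in \{V_j, V_j^\dagger\}'$; the reverse inclusion is visible directly from \eqref{eq:carlen-maas-2}.

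The main obstacle is the necessity direction of the structural theorem, i.e., the modular-flow averaging argument that turns an arbitrary Lindblad representation of $\cD$ into one built from eigenoperators of $\Delta_\sigma$ organized in Hermitian-conjugate pairs; for this step I would simply cite \cite[Theorem 3.1]{Carlen_2017}. Everything else is a direct algebraic verification whose only subtle point is the careful bookkeeping of the weights $e^{\pm \omega_j/2}$ so that the final expression for $-\langle Q, \cD_j(Q)\rangle_\sigma$ is manifestly a sum of squares.
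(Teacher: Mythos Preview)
Your proposal is correct and follows essentially the same route as the paper: both cite \cite[Theorem~3.1]{Carlen_2017} for the structural equivalence, and both establish \eqref{equa:positivesemidefinite} and \eqref{eq:kernel-commutant} by expanding $\langle Q,\cD_j(Q)\rangle_\sigma$ and using the modular relation $\sigma V_j = e^{-\omega_j}V_j\sigma$ together with trace cyclicity to arrive at the explicit sum-of-squares formula $\langle Q,\cD_j(Q)\rangle_\sigma = -e^{-\omega_j/2}\|[Q,V_j]\|_\sigma^2 - e^{\omega_j/2}\|[Q,V_j^\dagger]\|_\sigma^2$. The only cosmetic difference is that you pass through the anticommutator form of $\cD_j$ before expanding, whereas the paper works directly from the commutator form in \eqref{eq:carlen-maas-2}; the bookkeeping is the same either way.
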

Note that, in the case in which $\sigma= e^{-\beta H_\Lambda}/Z_{\beta}$ is the Gibbs state at inverse temperature $\beta >0$ associated to a Hamiltonian $H_\Lambda$, then the eigenvalues $\omega_j$ are exactly equal to the \emph{Bohr frequencies} of $H_\Lambda$, i.e., differences of eigenvalues of $H_\Lambda$. 
\begin{proof}
The equivalence between equations \eqref{eq:carlen-maas-1} and \eqref{eq:carlen-maas-2} and the fact that $\cD$ satisfies the detailed balance condition (with respect to $\sigma$) is the content of \cite[Theorem 3.1]{Carlen_2017}. To see that $-\cD_j$ is positive semidefinite with respect to the GNS scalar product, let us start writing
\begin{align*} 
\langle Q, \cD_{j}(Q) \rangle_{\sigma} 
& = e^{-\omega_{j}/2} \langle Q, V_{j}^{\dagger}[Q, V_{j}] \rangle_{\sigma} + e^{-\omega_{j}/2} \langle Q, [V_{j}^{\dagger},Q] V_{j} \rangle_{\sigma}  \\
& +e^{\omega_{j}/2} \langle Q, V_{j}[Q, V_{j}^{\dagger}] \rangle_{\sigma} +e^{\omega_{j}/2} \langle Q, [V_{j}, Q] V_{j}^{\dagger} \rangle_{\sigma}.
\end{align*}
In the previous expression, we can replace the first and third summands using that
\begin{align*}
\langle Q, V_{j}^{\dagger} [Q,V_{j}] \rangle_{\sigma} 
& = \operatorname{Tr}(\sigma Q^{\dagger} V_{j}^{\dagger} [Q,V_{j}]) = \langle V_{j}Q, [Q,V_{j}] \rangle_{\sigma}\,,\\
\langle Q, V_{j}^{\dagger} [Q,V_{j}] \rangle_{\sigma} 
& = \operatorname{Tr}(\sigma Q^{\dagger} V_{j} [Q,V_{j}^{\dagger}]) = \langle V_{j}^{\dagger}Q, [Q,V_{j}^{\dagger}] \rangle_{\sigma}\,.
\end{align*}
On the other hand, by means of the equality
$\Delta_{\sigma}(V_{j}) = \sigma V_{j}\sigma^{-1} =e^{-\omega_{j}} V_{j}$, we can replace the other two summands with
\begin{align*}
\langle Q, [V_{j}^{\dagger}, Q]V_{j} \rangle_{\sigma}
& =\operatorname{Tr}(\sigma Q^{\dagger} [V_{j}^{\dagger},Q]V_{j}) 
= e^{\omega_{j}}\operatorname{Tr}(\sigma V_{j}Q^{\dagger} [V_{j}^{\dagger},Q])  
=e^{\omega_{j}} \langle Q V_{j}^{\dagger}, [V_{j}^{\dagger}, Q] \rangle_{\sigma}\,,\\
\langle Q, [V_{j}, Q]V_{j}^{\dagger}\rangle_{\sigma} 
& =\operatorname{Tr}(\sigma Q^{\dagger}[V_{j}, Q]V_{j}^{\dagger}) 
= e^{-\omega_{j}} \operatorname{Tr}(\sigma V_{j}^{\dagger}Q^{\dagger}[V_{j}, Q])
= e^{-\omega_{j}}\langle Q V_{j}, [V_{j}, Q]\rangle_{\sigma}\,.
\end{align*}
 Thus, we have the alternative expression
\begin{align*} 
\langle Q, \cD_{j}(Q) \rangle_{\sigma} 
& = e^{-\omega_{j}/2} \langle V_{j}Q, [Q, V_{j}] \rangle_{\sigma} + e^{-\omega_{j}/2} e^{\omega_{j}}\langle Q V_{j}^{\dagger}, [V_{j}^{\dagger},Q]  \rangle_{\sigma}  \\
& \quad \quad +e^{\omega_{j}/2} \langle V_{j}^{\dagger}Q, [Q, V_{j}^{\dagger}] \rangle_{\sigma} +e^{\omega_{j}/2}  e^{-\omega_{j}} \langle QV_{j}, [V_{j}, Q]  \rangle_{\sigma}\\[2mm]
& = e^{-\omega_{j}/2} \langle V_{j}Q, [Q, V_{j}] \rangle_{\sigma} -  e^{\omega_{j}/2}\langle Q V_{j}^{\dagger}, [Q,V_{j}^{\dagger}]  \rangle_{\sigma}  \\
& \quad \quad +e^{\omega_{j}/2} \langle V_{j}^{\dagger}Q, [Q, V_{j}^{\dagger}] \rangle_{\sigma} -e^{-\omega_{j}/2}   \langle QV_{j}, [Q,V_{j}]  \rangle_{\sigma}\\[2mm]
& = -e^{-\omega_{j}/2} \langle [Q, V_{j}], [Q, V_{j}] \rangle_{\sigma} - e^{\omega_{j}/2} \langle [Q, V_{j}^{\dagger}], [Q, V_{j}^{\dagger}] \rangle_\sigma \\
& =   -e^{-\omega_{j}/2} \norm{ [Q,V_j]}_{\sigma}^2 - e^{\omega_{j}/2} \norm*{[Q,V_j^\dag]}_\sigma^2.
\end{align*}
As the last expression is $\le 0$, and vanishes only when $Q$ commutes with $V_j$ and $V_j^\dag$, this shows both \eqref{equa:positivesemidefinite} and \eqref{eq:kernel-commutant}.
\end{proof}

\subsection{Local primitivity}\label{sec:local-primitivity}

We will now consider the case of a QMS defined on a quantum spin system over $\Lambda$, i.e., when $\cH = \cH_\Lambda$, in which case it will be natural to consider generators which are \emph{locally primitive}, meaning they can be written as a sum of terms whose kernels contain only operators supported on the complement of a single site.
\begin{Defi}\label{def:local-primitivity}
Let $\cL:\cB(\cH_{\Lambda}) \to \cB(\cH_{\Lambda})$ be a Lindbladian generator of a QMS and let $\sigma \in \cB(\cH_{\Lambda})$ be a full-rank state. Assume that we can decompose $\cL$ as
\[ \cL = i \delta + \cD = i \delta + \textstyle \sum_{x \in \Lambda} \cD_{x}, \]
where $\delta(\cdot) = i\comm{H_\Lambda}{\cdot}$ is a derivation with respect to a self-adjoint operator $H_\Lambda$ which commutes with $\sigma$. We will say that this decomposition is
\begin{enumerate}[($i$)]
\item locally $\sigma$-reversible, if  $\cD_{x}$ is $\sigma$-reversible for every $x \in \Lambda$,
\item locally primitive, if $\ker(\cD_{x}) \subset \mathbbm{1}_{x} \otimes \cB(\cH_{\Lambda \setminus \{ x\}})$ for every $x \in \Lambda$.
\end{enumerate}
\end{Defi}

Note that these generators satisfy the conditions of Theorem~\ref{Prop:speedConvergenceGap}. To investigate the gap of the dissipative part, denoted as $\text{gap}(\mathcal{D})$, we can leverage the local reversibility condition of $\mathcal{D} = \sum_{x \in \Lambda} \mathcal{D}_{x}$ and treat $\mathcal{D}$ as a local Hamiltonian with respect to the scalar product $\langle \cdot,\cdot\rangle_{\sigma}$. Several tools have emerged in recent years for studying the spectral gap of local Hamiltonians \cite{Kastoryano2013,Lucia2023}. However, a potential issue arises due to the scalar product $\langle \cdot,\cdot\rangle_{\sigma}$, which is not inherently ``local''. To address this, following the ideas of \cite{Alicki2009}, we relate each local term $\cD_x$ to an operator $D_{x}$ on $\cB(\cH_\Lambda)$ which is self-adjoint with respect to the Hilbert-Schmidt scalar product:
\begin{equation}\label{eq:purification}
    D_{x}:\cB(\cH_\Lambda) \to \cB(\cH_\Lambda) \qc D_{x} = -\Gamma^{1/2} \circ \cD_{x} \circ \Gamma^{-1/2}\,,
\end{equation}
or explicitly
\begin{equation}
     D_{x}(Q) = -\cD_{x}(Q \sigma^{-1/2}) \sigma^{1/2}\, , \quad Q \in \mathcal{B}(\mathcal{H}_\Lambda) .
\end{equation}
Note that, by Proposition~\ref{Prop:detailBalanceConseq}, each $D_{x}$ is positive semi-definite (i.e. $D_{x} \geq 0$) for every $x \in \Lambda$. More generally, for every $X \subset \Lambda$ define $D_{X}:=\sum_{x \in X}D_{x}$ and $\cD_{X}=\sum_{x \in X}\cD_{x}$. Then, $D_{X}$ and $-\cD_{X}$ have the same (nonnegative) eigenvalues, and their eigenvectors can be identified via the map $Q \mapsto Q \sigma^{1/2}$.

In the following result, we will show that in the case of a local QMS generator which also satisfies the local primitivity condition, it is always possible to estimate its spectral gap in terms of the spectral gap of the \emph{canonical purified Hamiltonian} $\mathbf{H}$ defined in Section~\ref{sec:canonicalPurifiedHamiltonian}.

\begin{Theo}\label{thm:local-primitivity-gap}
Let $\sigma \in \cB(\cH_{\Lambda})$ be a full-rank state and let $\mathbf{H}_{\Lambda}$ be its canonical purified Hamiltonian. Then, for every $\cL$ satisfying the conditions in Definition~\ref{def:local-primitivity}, it holds that
\[ \gap(\cD) \geq \min_{x \in \Lambda}\{ \gap(\cD_{x})\} \gap\left( \mathbf{H}_{\Lambda}\right). \]
\end{Theo}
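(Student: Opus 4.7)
The plan is to pass from the $\sigma$-GNS picture to the Hilbert-Schmidt picture using the similarity map \eqref{eq:purification}, and then do a clean positive-semidefinite comparison on $\cB(\cH_\Lambda)$ term-by-term against the $\Pi_x^\perp$.

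First I would recall that, because $\cD_x$ is $\sigma$-reversible and $D_x = -\Gamma^{1/2}\circ\cD_x\circ\Gamma^{-1/2}$, the map $Q\mapsto Q\sigma^{1/2}$ is a similarity that intertwines $-\cD_x$ on $(\cB(\cH_\Lambda),\langle\cdot,\cdot\rangle_\sigma)$ with $D_x$ on $(\cB(\cH_\Lambda),\langle\cdot,\cdot\rangle_{HS})$. Hence $D_x\geq 0$ in Hilbert-Schmidt, $D:=\sum_x D_x$ is self-adjoint and PSD in Hilbert-Schmidt, and crucially $\gap(\cD) = \gap(D)$ and $\gap(\cD_x) = \gap(D_x)$ for every $x$.

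Next I would identify the relevant kernels. The key observation is that the local primitivity assumption $\ker(\cD_x) \subset \mathbbm{1}_x\otimes \cB(\cH_{\Lambda\setminus\{x\}})$ translates, under the similarity, into
\[
\ker(D_x) = \ker(\cD_x)\cdot \sigma^{1/2} \subset (\mathbbm{1}_x\otimes \cB(\cH_{\Lambda\setminus\{x\}}))\sigma^{1/2} = W_x,
\]
so if $P_x$ denotes the Hilbert-Schmidt orthogonal projection onto $\ker(D_x)$, then $P_x \leq \Pi_x$, equivalently $\Pi_x^\perp \leq \id - P_x$. Since $D_x$ is PSD with gap $\gap(D_x)$ and kernel projection $P_x$, we have the standard bound $D_x \geq \gap(D_x)(\id - P_x) \geq \gap(D_x)\Pi_x^\perp$. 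Summing over $x\in\Lambda$ yields the operator inequality
\begin{equation*}
D = \sum_{x\in\Lambda} D_x \;\geq\; \min_{x\in\Lambda}\gap(D_x)\sum_{x\in\Lambda}\Pi_x^\perp \;=\; \min_{x\in\Lambda}\gap(D_x)\cdot\mathbf{H}_\Lambda.
\end{equation*}

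Finally I would upgrade this PSD inequality to the claimed gap comparison. The generators here are primitive: local reversibility gives $\ker(\cD) = \bigcap_x \ker(\cD_x)$, and by local primitivity this intersection lies in $\bigcap_x\mathbbm{1}_x\otimes\cB(\cH_{\Lambda\setminus\{x\}})=\mathbb{C}\mathbbm{1}$, while containing $\mathbbm{1}$. Hence $\ker(D) = \mathbb{C}\sigma^{1/2} = W_\Lambda = \ker(\mathbf{H}_\Lambda)$. On the Hilbert-Schmidt orthogonal complement of $\mathbb{C}\sigma^{1/2}$, the operator $\mathbf{H}_\Lambda$ is bounded below by $\gap(\mathbf{H}_\Lambda)\cdot\id$, so the inequality above restricted to this subspace gives $D \geq \min_x\gap(D_x)\cdot\gap(\mathbf{H}_\Lambda)\cdot\id$ on $(\mathbb{C}\sigma^{1/2})^\perp$, and this is precisely $\gap(D)\geq \min_x\gap(D_x)\cdot\gap(\mathbf{H}_\Lambda)$. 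Combined with $\gap(\cD)=\gap(D)$ and $\gap(\cD_x)=\gap(D_x)$ this concludes the proof.

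The only subtle step is the second one, namely establishing $\ker(D_x)\subset W_x$ from the abstract local primitivity hypothesis; once that inclusion is in place the rest is a routine PSD-and-kernel argument. I would be careful to note that we do not need $\ker(D_x)=W_x$ (which would be a strictly stronger local-ergodicity statement), only the inclusion, which is exactly what local primitivity provides.
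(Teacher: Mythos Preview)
Your proof is correct and follows essentially the same route as the paper's: pass to the Hilbert--Schmidt picture via the similarity $Q\mapsto Q\sigma^{1/2}$, use local primitivity to get $\ker(D_x)\subset W_x$ and hence $D_x\geq \gap(\cD_x)\,\Pi_x^\perp$, sum to compare $D$ with $\mathbf{H}_\Lambda$, and then exploit that both have the same one-dimensional kernel $\mathbb{C}\sigma^{1/2}$ to extract the gap inequality. The paper's presentation is nearly identical, differing only in cosmetic phrasing (it writes the final step as $D_\Lambda\geq \min_x\gap(\cD_x)\gap(\mathbf{H}_\Lambda)\,\Pi_\Lambda^\perp$ rather than restricting to $(\mathbb{C}\sigma^{1/2})^\perp$).
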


\begin{proof}
By construction, $\gap(\cD)$ is the same as the gap of the Hamiltonian $D_{\Lambda} = \sum_{x\in \Lambda} D_x$, where $D_x$ is defined in \eqref{eq:purification}. 
 In particular, it holds that $\gap(\cD_{X}) =  \gap(D_{X})$, and the ground state space (i.e. kernel) of $H_{X}$ satisfies, 
\[ \textstyle \ker{(D_{X})} = \bigcap_{x \in X} \ker{(D_{x})} = \bigcap_{x \in X} \ker{(\cD_{x})} \sigma^{1/2} \subset W_{X} \qc \ker(H_{\Lambda}) = W_{\Lambda} =  \mathbb{C} \sigma^{1/2}\,, \]
where we have used the property of local primitivity. Let $P_{X}$ be the orthogonal projection onto $\ker{( \sum_{x \in X} D_{x})}$, and let us write $P_{x}:= P_{\{ x\}}$ for each $x \in \Lambda$ to simplify. Then, the \emph{local} gap of each $D_{x}$ satisfies
\[\textstyle  D_{x} \geq \gap(D_{x}) P_{x}^{\perp} = \gap(\cD_{x}) P_{x}^{\perp} \geq \gap(\cD_{x}) \Pi_{x}^{\perp}\,. \]
Summing over all $x \in \Lambda$ we get that
\[ D_{\Lambda}=\sum_{x \in \Lambda} D_{x} \geq \min_{x \in \Lambda}\{ \gap(\cD_{x}) \} \sum_{x \in \Lambda} \Pi_{x}^{\perp} =  \min_{x \in \Lambda}\{ \gap(\cD_{x}) \} \, \mathbf{H}_{\Lambda}\,. \]
Observe now that $D_{\Lambda}$ and $\mathbf{H}_{\Lambda}$, being frustration-free Hamiltonians with lowest eigenvalue equal to zero, have the same ground space $W_{\Lambda} = \mathbb{C} \sigma^{1/2}$. Using this property on $\mathbf{H}_\Lambda$, we can now add the following to the previous chain of inequalities
\[ D_{\Lambda}=\sum_{x \in \Lambda} D_{x} \geq \min_{x \in \Lambda}\{ \gap(\cD_{x}) \} \gap\left(\mathbf{H}_\Lambda \right)  \Pi_{\Lambda}^{\perp} \,. \]
Finally, using the definition of gap and noting that $W_{\Lambda}$ is also the ground space of $H_{\Lambda}$, we conclude that
\[ \gap(\cD) = \gap(D_\Lambda) \geq \min_{x \in \Lambda}\{ \gap(\cD_{x}) \} \gap\left(\mathbf{H}_\Lambda \right) \,. \] 

\end{proof}

The preceding theorem enables us to divide the spectral gap problem of $\cD$ into two distinct steps. Firstly, we examine the gap of the local dissipative terms $\mathcal{D}_{x}$ and establish a uniform lower bound for this gap. Subsequently, we apply the results from Section~\ref{sec:canonicalPurifiedHamiltonian} to derive a lower bound for the spectral gap of the canonical purified Hamiltonian $\mathbf{H}_{\Lambda}$, by verifying the corresponding mixing condition on $\sigma$. Remarkably, the latter gap relies solely on the structure of $\sigma$ but not on $\cD$, so it is a static property of the system, and it is independent of the specific choice of generator $\cD$ (as long as it satisfied local primitivity and local reversibility).

\subsection{Davies semigroup}

A particularly important class of generators of QMS is the one considered by Davies~\cite{Davies1974}. They describe, under certain assumptions, the weak-coupling limit of the evolution given by the coupling of a quantum spin system with Hamiltonian $H_\Lambda$ to a thermal bath at inverse temperature $\beta$.

In order to construct a locally primitive generator, we consider as coupling operators for each $x \in \Lambda$ a family $(S_{x,\alpha})_{\alpha}$ of elements in $\cB(\cH_{\Lambda})$.
Then the Davies generator is given by
\begin{equation}
    \cL(Q) = i [H_{\Lambda},Q] + \sum_{x \in \Lambda} \sum_{\alpha, \omega} \hat g_{x,\alpha}(\omega)\qty( S_{x, \alpha}(\omega)^{\dagger} Q S_{x, \alpha}(\omega) - \frac{1}{2}\{ S_{x, \alpha}(\omega)^{\dagger}S_{x, \alpha}(\omega), Q\}_{+})\, 
\end{equation}
where the variable $\omega$ runs over the Bohr frequencies of $H_\Lambda$, $\hat g_{x,\alpha}(\omega)$ are positive constants satisfying $\hat g_{x,\alpha}(-\omega) = e^{-\beta \omega}\hat g_{x,\alpha}(\omega)$, and the operators $S_{x,\alpha}(\omega)$ are related to the couplings $S_{x,\alpha}$ by
\[
e^{it H_\Lambda} S_{x,\alpha} e^{-it H_\Lambda} = \sum_{\omega} S_{x,\alpha}(\omega) e^{-i\omega t} \qc \forall t \in \mathbb{R},
\]
implying that $ S_{x,\alpha}(\omega)^\dag = S_{x,\alpha}(-\omega)$.
We can see that the dissipative terms of $\cL$ can be put into the form of \eqref{eq:carlen-maas-1} by a simple re-grouping:
\begin{equation}
    \cL(Q) = i [H_{\Lambda},Q] + \sum_{x \in \Lambda} \sum_{\alpha, \omega} 
    \qty(e^{-\omega/2}V_{x,\alpha,\omega}^{\dag}\comm{Q}{V_{x,\alpha,\omega}} + e^{\omega/2} \comm{V_{x,\alpha,\omega}}{Q}V_{x,\alpha,\omega}^\dag),
\end{equation}
where $V_{x,\alpha,\omega} = g_{x,\alpha}^{1/2}(\omega)S_{x, \alpha}(\omega)$. From Proposition~\ref{Prop:detailBalanceConseq} it follows that
\begin{multline}\label{eq:davies-1}
\cD_{x,\alpha,\omega}(Q) := e^{-\omega/2} \left(V_{x,\alpha,\omega}^{\dagger}[Q, V_{x,\alpha,\omega}] + [V_{x,\alpha,\omega}^{\dagger}, Q]V_{x,\alpha,\omega} \right)\\ 
+ e^{\omega/2} \left(V_{x,\alpha,\omega}[Q, V_{x,\alpha,\omega}^{\dagger}] + [V_{x,\alpha,\omega}, Q]V_{x,\alpha,\omega}^{\dagger} \right)
\end{multline}
is semidefinite negative with respect to the GNS scalar product, and its kernel is given by $\{V_{x,\alpha,\omega},V_{x,\alpha,\omega}^\dag\}'=\{S_{x, \alpha}(\omega), S_{x, \alpha}(\omega)^\dag\}'$.

By further grouping the terms $\cD_{x,\alpha,\omega}$, we obtain a decomposition of the type of Definition~\ref{def:local-primitivity}: defining
\begin{equation}\label{eq:davies-2}
    \cD_{x} := \sum_{\alpha,\omega} \cD_{x,\alpha,\omega} \,,
\end{equation}
it follows that $\cD_x$ is $\sigma$-reversible for every $x \in \Lambda$, where $\sigma = \frac{1}{Z_\beta}\exp(-\beta H_\Lambda)$ is the thermal equilibrium state of the system. Moreover, we have that~\cite[Proposition 5.5]{Lucia2023}
\[
\ker(\cD_{x}) = \{ S_{x, \alpha}(\omega) \colon \forall \alpha,\omega\}'.
\]
Therefore, local primitivity holds as long as the right-hand side of the last equation is contained in $\cB(\cH_{\Lambda\setminus\{x\}})$. A sufficient condition for this to happen is the following:
\begin{Assumption}\label{assumption:davies-1}
The coupling operators $\{ S_{x,\alpha}\}$ satisfy
\begin{equation}
    \{ S_{x, \alpha} \colon \forall \alpha \}' = \cB(\cH_{\Lambda\setminus\{x\}}).
\end{equation}
\end{Assumption}

We can summarize this discussion in the following Proposition:
\begin{Prop}\label{prop:davies-local-primitivity}
Under Assumption~\ref{assumption:davies-1}, $\cL$ satisfies the conditions of Definition~\ref{def:local-primitivity}, and as a consequence of Theorem~\ref{thm:local-primitivity-gap}
\[
\gap(\cD) \ge \min_{x\in \Lambda}{\gap(\cD_x)} \gap(\mathbf{H}),
\]
where $\cD_x$ is defined in equations \eqref{eq:davies-1}and \eqref{eq:davies-2}. 
\end{Prop}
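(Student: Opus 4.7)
The plan is to check that, under Assumption~\ref{assumption:davies-1}, the Davies generator $\cL$ satisfies all three conditions of Definition~\ref{def:local-primitivity}, after which Theorem~\ref{thm:local-primitivity-gap} immediately yields the stated inequality. Two of the conditions are essentially already in place from the discussion preceding the statement, so the only real content is a short commutant computation for local primitivity.

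The derivation part is immediate: with $H_{\Lambda}$ self-adjoint and $\sigma = e^{-\beta H_{\Lambda}}/Z_{\beta}$, one has $[H_\Lambda,\sigma]=0$, so $\delta(\sigma)=0$. Local $\sigma$-reversibility of each $\cD_x$ is also already extracted in the paragraphs around \eqref{eq:davies-1} and \eqref{eq:davies-2}: the rescaled operators $V_{x,\alpha,\omega}=\hat g^{1/2}_{x,\alpha}(\omega)\,S_{x,\alpha}(\omega)$ satisfy $\Delta_{\sigma}(V_{x,\alpha,\omega})=e^{-\beta\omega}V_{x,\alpha,\omega}$, because $\sigma S_{x,\alpha}(\omega)\sigma^{-1}=e^{-\beta\omega}S_{x,\alpha}(\omega)$ follows from the Heisenberg decomposition of $S_{x,\alpha}$; moreover, the detailed-balance relation $\hat g_{x,\alpha}(-\omega)=e^{-\beta\omega}\hat g_{x,\alpha}(\omega)$ together with $S_{x,\alpha}(\omega)^{\dagger}=S_{x,\alpha}(-\omega)$ forces the family $\{V_{x,\alpha,\omega}\}_{\alpha,\omega}$ to be closed under adjoints. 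Hence each $\cD_x$ fits the template of Proposition~\ref{Prop:detailBalanceConseq} (after the harmless rescaling $\omega\mapsto\beta\omega$ needed to align the exponents $e^{\pm\omega/2}$ in \eqref{eq:carlen-maas-2}) and is therefore $\sigma$-reversible.

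The only non-trivial step is local primitivity. By Proposition~\ref{Prop:detailBalanceConseq}, the kernel is
\[
\ker(\cD_x) = \{\, S_{x,\alpha}(\omega),\, S_{x,\alpha}(\omega)^{\dagger} : \alpha,\omega\,\}'.
\]
Setting $t=0$ in the defining relation $e^{itH_\Lambda}S_{x,\alpha}e^{-itH_\Lambda}=\sum_{\omega}S_{x,\alpha}(\omega)e^{-i\omega t}$ gives $S_{x,\alpha}=\sum_{\omega}S_{x,\alpha}(\omega)$, so any operator that commutes with every $S_{x,\alpha}(\omega)$ commutes in particular with every $S_{x,\alpha}$. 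Therefore
\[
\ker(\cD_x) \subset \{\, S_{x,\alpha} : \alpha\,\}' = \cB(\cH_{\Lambda\setminus\{x\}}),
\]
the last equality being Assumption~\ref{assumption:davies-1}. This verifies local primitivity, the three conditions of Definition~\ref{def:local-primitivity} are in place, and Theorem~\ref{thm:local-primitivity-gap} yields $\gap(\cD)\ge \min_{x\in\Lambda}\gap(\cD_x)\,\gap(\mathbf{H})$. I do not expect any genuine obstacle: the substantive analytic work is already absorbed into Theorem~\ref{thm:local-primitivity-gap} and the Carlen--Maas structure theorem, and the proposition essentially certifies that the Davies construction fits the axiomatic framework of Section~\ref{sec:local-primitivity}. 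The only point requiring care is the bookkeeping of the factor $\beta$ when reconciling the Davies normalization of the Bohr frequencies with the exponents appearing in Proposition~\ref{Prop:detailBalanceConseq}.
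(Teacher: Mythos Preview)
Your proposal is correct and follows essentially the same route as the paper: the paper treats this proposition as a summary of the preceding discussion, citing \cite[Proposition~5.5]{Lucia2023} for the identity $\ker(\cD_x)=\{S_{x,\alpha}(\omega):\alpha,\omega\}'$ and then observing that Assumption~\ref{assumption:davies-1} gives the required containment. Your write-up is slightly more self-contained, since you derive the kernel formula directly from Proposition~\ref{Prop:detailBalanceConseq} (using that each $\cD_{x,\alpha,\omega}$ is GNS-negative-semidefinite, so the kernel of the sum is the intersection of kernels) and make explicit the $t=0$ identity $S_{x,\alpha}=\sum_\omega S_{x,\alpha}(\omega)$ to pass from the commutant of the Fourier components to that of the couplings; the paper leaves these steps to the cited reference.
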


In order to estimate $\gap(\cD)$, we now need to compute a lower bound to $\gap(\cD_x)$. In \cite{Lucia2023}, this was done under the following assumptions:
\begin{Assumption}\label{assumption:davies-2}
The system Hamiltonian $H_\Lambda = \sum_{X\subset \Lambda} \Phi_X$ is commuting and has finite range $r>0$.
\end{Assumption}

\begin{Assumption}\label{assumption:davies-3}
The coupling operators $\{S_{x,\alpha}\}_\alpha$ are \emph{local}, namely they belong to $\cB(\cH_x)$ for every $x \in \Lambda$.
\end{Assumption}

Under Assumptions~\ref{assumption:davies-1}-\ref{assumption:davies-3}, the generators $\cD_x$ are also finite range and supported on neighborhoods $R_x \subset \Lambda$ of $x$, where $R_{x}$ is contained in a ball of radius $4r$ centered at $x$. This allows to obtain a simple estimate on the gap of $\cD_x$.
\begin{Prop}[{\cite[Proposition 5.11]{Lucia2023}}]\label{prop:local-davies-gap-commuting}
Under Assumptions~\ref{assumption:davies-1},~\ref{assumption:davies-2} and~\ref{assumption:davies-3}, there exist positive constants $C_1$ and $C_2$ such that
    \begin{equation}
        \gap(\cD_x) \ge \frac{C_2}{\abs{\Omega_x}} \hat{g}_{\min}(x) e^{-C_1\beta \abs{R_x}},
    \end{equation}
    where $\Omega_x$ is the set of Bohr frequencies of the Hamiltonian $H_{\{x\}}^{\partial} = H_\Lambda - H_{\Lambda\setminus \{x\}}$, and $\hat{g}_{\min}(x) = \min_\alpha \min_{\omega \in \Omega_x} \hat{g}_{x,\alpha}(\omega)$.
\end{Prop}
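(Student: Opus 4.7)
The plan is to exploit the Carlen--Maas structure of each $\cD_{x,\alpha,\omega}$ (which gives a Dirichlet form as a sum of weighted squared commutators), then strip away all the coefficients to be left with a ``bare'' Poincaré inequality for the single-site couplings, which under Assumption~\ref{assumption:davies-1} is a finite-dimensional problem on $R_x$.

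\textbf{Step 1 (localization of $\cD_x$).} First I would argue that under Assumptions~\ref{assumption:davies-2} and~\ref{assumption:davies-3} the operators $S_{x,\alpha}(\omega)$ are supported in $\overline{\{x\}}$: since $S_{x,\alpha}\in\cB(\cH_x)$ commutes with $H_{\Lambda\setminus\{x\}}$, the commuting assumption gives $e^{itH_\Lambda} S_{x,\alpha} e^{-itH_\Lambda} = e^{itH^{\partial}_{\{x\}}} S_{x,\alpha} e^{-itH^{\partial}_{\{x\}}}$, so the Fourier components $S_{x,\alpha}(\omega)$ are supported in $\overline{\{x\}}$ and the associated Bohr frequencies are those of $H^{\partial}_{\{x\}}$. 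Consequently every term in \eqref{eq:davies-1} lives on a ball of radius at most $2r$, and $\cD_x$ lives on the ball $R_x$ of radius $\le 4r$.

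\textbf{Step 2 (bounding prefactors).} Since the Bohr frequencies $\omega\in\Omega_x$ of $H^{\partial}_{\{x\}}$ satisfy $\abs{\omega}\le 2\norm{H^{\partial}_{\{x\}}}\le 2\norm{\Phi}\abs{R_x}$, the detailed-balance weights $e^{\pm \beta\omega/2}$ appearing through the modular operator are bounded below by $e^{-C\beta\abs{R_x}}$ for a constant $C$ depending only on $\norm{\Phi}$. Combined with $\hat g_{x,\alpha}(\omega)\ge \hat g_{\min}(x)$, using the quadratic-form expression derived in the proof of Proposition~\ref{Prop:detailBalanceConseq}, this gives
\[
-\langle Q,\cD_x(Q)\rangle_\sigma \;\ge\; \hat g_{\min}(x)\, e^{-C_1\beta\abs{R_x}} \sum_{\alpha,\omega}\bigl(\norm{[Q,S_{x,\alpha}(\omega)]}_\sigma^2 + \norm{[Q,S_{x,\alpha}(\omega)^\dag]}_\sigma^2\bigr).
\]

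\textbf{Step 3 (sum over $\omega$).} Using $S_{x,\alpha}=\sum_{\omega\in\Omega_x} S_{x,\alpha}(\omega)$ and Cauchy--Schwarz for the sum of $\abs{\Omega_x}$ vectors in the GNS space, I would bound
\[
\norm{[Q,S_{x,\alpha}]}_\sigma^2 \;\le\; \abs{\Omega_x}\sum_{\omega}\norm{[Q,S_{x,\alpha}(\omega)]}_\sigma^2,
\]
and analogously for the adjoints. This collapses the $\omega$-sum at the price of the factor $\abs{\Omega_x}^{-1}$ that appears in the statement, reducing the problem to a Poincaré-type inequality involving only the original local couplings $\{S_{x,\alpha},S_{x,\alpha}^\dag\}_\alpha$.

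\textbf{Step 4 (Poincaré inequality from Assumption~\ref{assumption:davies-1}).} The remaining task is to show that there exists a constant $c_0>0$ depending only on the couplings and on the local conditional state of $\sigma$ on $R_x$ such that
\[
\sum_{\alpha}\bigl(\norm{[Q,S_{x,\alpha}]}_\sigma^2 + \norm{[Q,S_{x,\alpha}^\dag]}_\sigma^2\bigr) \;\ge\; c_0\,\norm{Q-\cE_x(Q)}_\sigma^2,
\]
where $\cE_x$ is the GNS conditional expectation onto $\cB(\cH_{\Lambda\setminus\{x\}})$, which by Assumption~\ref{assumption:davies-1} is precisely $\ker(\cD_x)$ from Proposition~\ref{Prop:detailBalanceConseq}. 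For $Q$ GNS-orthogonal to $\ker(\cD_x)$, $\cE_x(Q)=0$ and the right-hand side is $\norm{Q}_\sigma^2$, giving the desired spectral gap. This is the main obstacle: one must produce an \emph{explicit} constant $c_0$. I would do this by restricting to $R_x$, noting that the quadratic form on the left defines a primitive dissipator on the finite-dimensional algebra $\cB(\cH_x)$ (with the commutant condition of Assumption~\ref{assumption:davies-1} guaranteeing primitivity), and then comparing GNS and Hilbert--Schmidt scalar products via the condition number of the marginal of $\sigma$ on $R_x$. Since the Hamiltonian is commuting and finite-range, this condition number is at most $e^{C'\beta\abs{R_x}}$, so $c_0$ can be taken of the form $c'_0 e^{-C'\beta\abs{R_x}}$ for an absolute constant $c'_0>0$.

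\textbf{Step 5 (conclusion).} Combining Steps 2--4 yields
\[
-\langle Q,\cD_x(Q)\rangle_\sigma \;\ge\; \frac{C_2}{\abs{\Omega_x}}\,\hat g_{\min}(x)\, e^{-C_1\beta\abs{R_x}}\,\norm{Q}_\sigma^2
\]
for all $Q$ GNS-orthogonal to $\ker(\cD_x)$, absorbing $C$ and $C'$ into $C_1$, which by definition of the spectral gap gives the claim.
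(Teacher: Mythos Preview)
The paper does not actually prove this proposition: it is quoted verbatim from \cite[Proposition~5.11]{Lucia2023} and no argument is given in the present paper. So there is nothing in the paper to compare your attempt against.

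That said, your outline is a reasonable reconstruction of how such a bound is typically obtained, and Steps~1--3 are correct and standard: the commuting, finite-range assumption localizes the Fourier components $S_{x,\alpha}(\omega)$ to $\overline{\{x\}}$ and restricts the relevant Bohr frequencies to those of $H^\partial_{\{x\}}$; the exponential weights $e^{\pm\beta\omega/2}$ are then controlled by $e^{-C\beta\abs{R_x}}$; and Cauchy--Schwarz over $\omega$ produces the $\abs{\Omega_x}^{-1}$ factor while reducing to commutators with the bare couplings $S_{x,\alpha}$.

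The only place where your sketch is genuinely thin is Step~4. You phrase it as ``restricting to $R_x$'' and comparing GNS with Hilbert--Schmidt via the condition number of the local marginal, but the test operator $Q$ lives in $\cB(\cH_\Lambda)$, not in $\cB(\cH_{R_x})$, so one cannot literally restrict. What makes this work in the commuting case is that both the Dirichlet form $\sum_\alpha\norm{[Q,S_{x,\alpha}]}_\sigma^2$ and the GNS conditional expectation onto $\cB(\cH_{\Lambda\setminus\{x\}})$ are $\cB(\cH_{R_x^c})$-bimodule maps (because $\sigma$ factorizes across $R_x$ up to a boundary term controlled by $e^{C'\beta\abs{R_x}}$), so the problem does reduce to a genuinely finite-dimensional Poincar\'e inequality on $\cB(\cH_{R_x})$ with respect to a local state. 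Your conclusion that $c_0 \ge c_0' e^{-C'\beta\abs{R_x}}$ is correct, but note that $c_0'$ is not absolute: it depends on the specific coupling family $\{S_{x,\alpha}\}_\alpha$ through the gap of the corresponding single-site depolarizing-type map. This is consistent with the statement, since $C_2$ is allowed to depend on the model.
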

Note that, in the case of \emph{translation invariance}, the estimate on $\gap(\cD_x)$ is independent on $x$, and scales with respect to beta as $\order{\hat{g}_{\min} e^{-c\beta}}$. For certain choices of thermal bath, $\hat{g}_{\min}$ is also independent of $\beta$, and so the bound reduces to an exponential in $\beta$.

\section{Spectral gap estimates for the canonical purified Hamiltonian}
\label{sec:Estimating_the_gap}

The aim of this section is to obtain estimates on $\gap(\mathbf{H}_{\Lambda})$ where $\mathbf{H}_{\Lambda}$ is the canonica purified Hamiltonian of a full-rank state $\sigma \in \cB(\cH_\Lambda)$.
As shown in \cite{KL_2018}, the gap of every local and frustration-free Hamiltonian $H_{\Lambda}$ is closely related to a property of its orthogonal projections $P_{X}$ onto the ground spaces of $H_{X}$, where $X \subset \Lambda$. This property, known as the \emph{martingale condition}, describes the decay of $\| P_{A} P_{B} - P_{A \cup B} \|$ in relation to the size of $A \cap B$. In the case of the canonical purified Hamiltonian, the availability of explicit and tractable descriptions for each ground state subspace $W_{X}$ and its orthogonal projection $\Pi_{X}$ allows for an explicit computation of these differences in terms of a  correlation measure of the given state $\sigma$ defining the Hamiltonian.

There is, however, a subtle difference between the definition of $\mathbf{H}_{X}$ in our context, and the Hamiltonian $H_{X}$ used in \cite{KL_2018}. In that work,  $H_{X}$ is defined as a sum $H_{X} = \sum_{Z \subset X} \Phi_{Z}$ for some local interaction $\Phi$ where each $\Phi_{Z}$ is positive semidefinite and supported in $Z$. In contrast, we have defined $\mathbf{H}_{X} = \sum_{x \in X} \Pi_{x}^{\perp}$, where each $\Pi_{x}^{\perp}$ is positive semidefinite but may not be supported in a proper subset of $\Lambda$ for a general $\sigma$. Nevertheless, this is not impediment to prove the following lemma since its argument only relies on the fact that the summands $\Pi_{x}^{\perp}$ are positive semidefinite, and thus $\mathbf{H}_{L \cup R} \leq \mathbf{H}_{L} + \mathbf{H}_{R}$ for every pair of subsets $L,R \subset \Lambda$.

\begin{Lemm}[{\cite{KL_2018,Lucia2023}}]\label{Lemm:DivideAndConquer}
Let $X \subset \Lambda$. Assume there exist $0< \delta < 1$ and $L_{i}, R_{i} \subset \Lambda$ for $i=1, \ldots, s$ satisfying the following properties:
\begin{enumerate}[(i)]
\item $X=L_{i} \cup R_{i}$ for $i=1, \ldots, s$,
\item $(L_{i} \cap R_{i}) \cap (L_{j} \cap R_{j})=\emptyset$ whenever $i \neq j$,
\item $\| \Pi_{L_{i}} \Pi_{R_{i}} - \Pi_{X} \| \leq \delta$.
\end{enumerate}
Then,
\begin{equation} \gap(\mathbf{H}_{X}) \geq \frac{1-\delta}{1+ \frac{1}{s}} \min\{ \gap(\mathbf{H}_{Y}) \colon Y=L_{1}, \ldots, L_{s}, R_{1}, \ldots, R_{s} \}\,.
\end{equation}
\end{Lemm}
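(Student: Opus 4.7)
My plan is to combine an elementary identity coming from the site-sum structure of $\mathbf{H}_Y$ with a two-projection geometric inequality that encodes hypothesis (iii). The algebraic identity will yield the $1/(1+1/s)$ factor, and the geometric inequality will yield the $(1-\delta)$ factor.

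To begin, I would exploit that $\mathbf{H}_Y = \sum_{y \in Y} \Pi_y^\perp$ is a sum over single-site terms, so inclusion--exclusion on hypothesis (i) gives, for each $i$,
\[
\mathbf{H}_{L_i} + \mathbf{H}_{R_i} = \sum_{y \in L_i \cup R_i} \Pi_y^\perp + \sum_{y \in L_i \cap R_i} \Pi_y^\perp = \mathbf{H}_X + \mathbf{H}_{L_i \cap R_i}.
\]
Summing over $i = 1, \dots, s$ and invoking hypothesis (ii) --- which makes the sets $L_i \cap R_i$ pairwise disjoint subsets of $X$ --- yields $\sum_i \mathbf{H}_{L_i \cap R_i} \leq \mathbf{H}_X$, and therefore $\sum_i (\mathbf{H}_{L_i} + \mathbf{H}_{R_i}) \leq (s+1)\mathbf{H}_X$. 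Writing $g := \min\{\gap(\mathbf{H}_Y) : Y=L_1,\dots,R_s\}$ and using the frustration-free bound $\mathbf{H}_Y \geq g\,\Pi_Y^\perp$, I obtain
\[
\mathbf{H}_X \;\geq\; \frac{g}{s+1} \sum_{i=1}^s \bigl(\Pi_{L_i}^\perp + \Pi_{R_i}^\perp\bigr).
\]

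The heart of the argument is then the pointwise two-projection inequality $\Pi_{L_i}^\perp + \Pi_{R_i}^\perp \geq (1-\delta)\, \Pi_X^\perp$ for every $i$. I would first observe that $W_X = W_{L_i} \cap W_{R_i}$: since $\sigma$ is full rank, $O \mapsto O\sigma^{1/2}$ is a linear isomorphism carrying $\cB(\cH_{Y^c})$ to $W_Y$, and $\cB(\cH_{L_i^c}) \cap \cB(\cH_{R_i^c}) = \cB(\cH_{(L_i \cup R_i)^c}) = \cB(\cH_{X^c})$, so $\Pi_X$ is the orthogonal projection onto $\operatorname{range}(\Pi_{L_i}) \cap \operatorname{range}(\Pi_{R_i})$. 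The claim then reduces to a standard Hilbert-space fact about two orthogonal projections: the canonical CS decomposition block-diagonalizes the pair $(\Pi_{L_i}, \Pi_{R_i})$ into $2 \times 2$ invariant blocks indexed by a principal angle $\theta$, on which the smallest eigenvalue of $\Pi_{L_i}^\perp + \Pi_{R_i}^\perp$ equals $1 - \cos\theta$, while the block-wise operator norm of $\Pi_{L_i}\Pi_{R_i} - \Pi_X$ equals $\cos\theta$ on the blocks outside the intersection. Hypothesis (iii) therefore caps $\cos\theta_{\max} \leq \delta$, delivering the inequality.

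Summing the two-projection bound over $i$ and plugging into the previous display produces
\[
\mathbf{H}_X \;\geq\; \frac{g\, s(1-\delta)}{s+1}\, \Pi_X^\perp \;=\; \frac{1-\delta}{1+1/s}\, g\, \Pi_X^\perp,
\]
which, since $\Pi_X$ is the projection onto the kernel of the frustration-free operator $\mathbf{H}_X$, is precisely the claimed lower bound on $\gap(\mathbf{H}_X)$. I expect the CS-decomposition step yielding the two-projection inequality to be the main technical point that must be made carefully; the rest is a direct manipulation of the single-site expansion of $\mathbf{H}_Y$ together with the disjointness hypothesis (ii).
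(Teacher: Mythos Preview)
Your proof is correct and follows the standard argument from the cited references \cite{KL_2018,Lucia2023}; the paper itself does not reprove the lemma but simply remarks that the original argument goes through because it only requires $\mathbf{H}_{L\cup R}\le \mathbf{H}_L+\mathbf{H}_R$, which is precisely the inclusion--exclusion identity you exploit in your first step. One small terminological slip: in the CS-decomposition step the relevant quantity is the \emph{largest} value of $\cos\theta$ over the generic $2\times 2$ blocks (equivalently, the smallest nonzero principal angle), not ``$\cos\theta_{\max}$''; but your inequality $\Pi_{L_i}^\perp+\Pi_{R_i}^\perp\ge(1-\|\Pi_{L_i}\Pi_{R_i}-\Pi_X\|)\Pi_X^\perp$ is stated and used correctly.
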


This result allows us to recursively reduce the spectral gap problem to sets of smaller size. It is then important to have a method for estimating $\|\Pi_{L} \Pi_{R} - \Pi_{L \cup R}\|$, as well as explicit estimates of the gap on ``small'' sets. We deal with the first issue in  Section~\ref{subsec:MixingCondition}, where we relate the martingale property with a spatial mixing condition  for $\sigma$; and we deal with the second issue in Section~\ref{sec:rough_estimate}, where we establish a lower bound of the gap of $\mathbf{H}_{X}$, which we only expect to be useful when $X$ is ``small'' (i.e., at the end of the recursion for the gap estimate of Lemma~\ref{Lemm:DivideAndConquer}). These results are later used to derive explicit lower bounds on $\gap(\mathbf{H}_{\Lambda})$ for a 1D ring (Section~\ref{sec:gap-1D}), a 2D torus (Section~\ref{sec:gap-2D}) and higher-dimensional systems (Section~\ref{sec:gap-higher-dim}).

\subsection{Mixing condition}
\label{subsec:MixingCondition}

The martingale condition for the canonical purified Hamiltonian $\mathbf{H}$ is going to be reformulated in terms of a measure of correlations of the original state $\sigma$, defined in the following definition.

\begin{Defi}\label{def:spatial-mixing}
    Let $\sigma \in \cB(\cH_\Lambda)$ be a full-rank state, and let  $A,C,D$ be three disjoint subsets of $\Lambda$. We define
    \begin{equation}\label{eq:delta-definition}
        \Delta_\sigma(A:C|D) := 
\sup_{R_{AD}, Q_{CD}} \left|\Tr_{ACD}[(\sigma_{ACD} - \sigma_{AD} \sigma_{D}^{-1} \sigma_{DC}) Q_{CD}^{\dagger} R_{AD}]\right| \,,
\end{equation}
where the supremum is taken over $R_{AD} \in \cB(\cH_{AD})$ and $Q_{CD} \in \cB(\cH_{CD})$ with 
$\norm{R_{AD}}_\sigma = \norm{Q_{CD}}_\sigma  = 1$.
\end{Defi}

The following result shows several upper and lower bounds for \eqref{eq:delta-definition}, that will be helpful to study its decay.
 
\begin{Prop}
\label{prop:estimatesMartinagleCondition}
Let $\Lambda =ABCD$ be a partition into four disjoint subsets. Then, we have
\begin{equation}\label{equa:estimatesMartinagleCondition1}
\Delta_\sigma(A:C|D) \leq \frac{1}{2}\left(\left\|\mathbbm{1} - \sigma_{AD} \sigma_{D}^{-1} \sigma_{DC} \sigma_{ACD}^{-1} \right\|_{\infty} + \left\|\mathbbm{1} - \sigma_{ACD}^{-1}\sigma_{AD} \sigma_{D}^{-1} \sigma_{DC}\right\|_{\infty} \right)
\end{equation}
Moreover, if $D=\emptyset$, the following bound holds
\begin{align}
\label{equa:estimatesMartinagleCondition2} \Delta_\sigma(A:C|\emptyset) & \leq \| \mathbbm{1} - \sigma_{A} \sigma_{C} \sigma^{-1}_{AC}\|_{\infty}\,,\\[2mm]
\label{equa:estimatesMartinagleCondition3} \Delta_\sigma(A:C|\emptyset) & \geq  \sup_{\| Q_{A}\|_{\infty}, \| Q_{C}\|_{\infty} \leq 1} \left|\Tr(\sigma Q_{C}Q_{A}) - \Tr(\sigma Q_{C}) \Tr(\sigma 
Q_{A}) \right| \,,
\end{align}
\end{Prop}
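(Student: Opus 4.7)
The plan is to prove (\ref{equa:estimatesMartinagleCondition1}) first by exploiting two natural factorizations of $M := \sigma_{ACD} - \sigma_{AD}\sigma_D^{-1}\sigma_{DC}$, obtain (\ref{equa:estimatesMartinagleCondition2}) as its $D=\emptyset$ special case, and derive (\ref{equa:estimatesMartinagleCondition3}) by exhibiting admissible test operators in the supremum defining $\Delta_\sigma(A:C|\emptyset)$.

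For (\ref{equa:estimatesMartinagleCondition1}) I would begin by checking the algebraic identities $M = (\mathbbm{1} - N_1)\sigma_{ACD} = \sigma_{ACD}(\mathbbm{1} - N_2)$, with $N_1 := \sigma_{AD}\sigma_D^{-1}\sigma_{DC}\sigma_{ACD}^{-1}$ and $N_2 := \sigma_{ACD}^{-1}\sigma_{AD}\sigma_D^{-1}\sigma_{DC}$, and write $M$ as the average
\[
M = \tfrac{1}{2}(\mathbbm{1} - N_1)\sigma_{ACD} + \tfrac{1}{2}\sigma_{ACD}(\mathbbm{1} - N_2).
\]
For each summand I would use H\"older's inequality $|\Tr(XY)| \leq \|X\|_\infty\|Y\|_1$ (after cyclically rearranging the second term so that $\mathbbm{1}-N_2$ sits on one end), obtaining
\[
|\Tr(MQ_{CD}^\dagger R_{AD})| \leq \tfrac{1}{2}\bigl(\|\mathbbm{1}-N_1\|_\infty + \|\mathbbm{1}-N_2\|_\infty\bigr)\,\|\sigma_{ACD} Q_{CD}^\dagger R_{AD}\|_1 \,.
\]
The crucial step is then to show that $\|\sigma_{ACD} Q_{CD}^\dagger R_{AD}\|_1 \leq 1$ under the normalization $\|Q_{CD}\|_\sigma = \|R_{AD}\|_\sigma = 1$. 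Using the cyclic invariance of the Schatten $1$-norm, $\|AB\|_1 = \|BA\|_1$, followed by the splitting $\sigma_{ACD} = \sigma_{ACD}^{1/2}\sigma_{ACD}^{1/2}$ and the Hilbert-Schmidt H\"older bound,
\[
\|\sigma_{ACD}Q_{CD}^\dagger R_{AD}\|_1 = \|R_{AD}\sigma_{ACD}^{1/2} \cdot \sigma_{ACD}^{1/2}Q_{CD}^\dagger\|_1 \leq \|R_{AD}\sigma_{ACD}^{1/2}\|_2 \,\|\sigma_{ACD}^{1/2} Q_{CD}^\dagger\|_2 \,.
\]
Each Hilbert-Schmidt factor collapses to a GNS-norm: since $R_{AD}^\dagger R_{AD}$ is supported on $AD$, $\|R_{AD}\sigma_{ACD}^{1/2}\|_2^2 = \Tr_{AD}(\sigma_{AD} R_{AD}^\dagger R_{AD}) = \|R_{AD}\|_\sigma^2$, and similarly $\|\sigma_{ACD}^{1/2}Q_{CD}^\dagger\|_2 = \|Q_{CD}\|_\sigma$ after using cyclicity of the trace.

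Bound (\ref{equa:estimatesMartinagleCondition2}) follows as an immediate corollary: when $D=\emptyset$, $\sigma_D = 1$, and since $[\sigma_A,\sigma_C]=0$ the operators $N_1$ and $N_2$ become mutual adjoints, so $\|\mathbbm{1}-N_1\|_\infty = \|\mathbbm{1}-N_2\|_\infty = \|\mathbbm{1}-\sigma_A\sigma_C\sigma_{AC}^{-1}\|_\infty$ and the average reduces to a single term. For the lower bound (\ref{equa:estimatesMartinagleCondition3}), I would rewrite the covariance as
\[
\Tr[\sigma Q_C Q_A] - \Tr[\sigma Q_C]\Tr[\sigma Q_A] = \Tr[(\sigma_{AC}-\sigma_A\sigma_C)Q_C Q_A] \,,
\]
which becomes a valid candidate in the supremum defining $\Delta_\sigma(A:C|\emptyset)$ on choosing $\hat R_A = Q_A/\|Q_A\|_\sigma$ and $\hat Q_C = Q_C^\dagger/\|Q_C^\dagger\|_\sigma$ as test operators. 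Since $\Tr[\sigma X^\dagger X] \leq \|X\|_\infty^2\,\Tr\sigma = \|X\|_\infty^2$, both $\|Q_A\|_\sigma$ and $\|Q_C^\dagger\|_\sigma$ are at most $1$, so dividing by these normalizing constants only decreases the resulting estimate, and taking the supremum over $\|Q_A\|_\infty,\|Q_C\|_\infty \leq 1$ yields (\ref{equa:estimatesMartinagleCondition3}).

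The main obstacle I anticipate is identifying the correct chain of rearrangements in (\ref{equa:estimatesMartinagleCondition1}): naive splittings such as $\sigma_{ACD}\cdot Q_{CD}^\dagger R_{AD} = \sigma_{ACD}^{1/2}\cdot (\sigma_{ACD}^{1/2}Q_{CD}^\dagger R_{AD})$ produce the factor $\|\sigma_{ACD}^{1/2}Q_{CD}^\dagger R_{AD}\|_2$, which does not simplify to $\|Q_{CD}\|_\sigma\|R_{AD}\|_\sigma$ in any direct way. The trick is to first use $\|AB\|_1 = \|BA\|_1$ to place $R_{AD}$ to the left of $\sigma_{ACD}$ and $Q_{CD}^\dagger$ to its right, so that after splitting $\sigma_{ACD}$ the two halves $\sigma_{ACD}^{1/2}$ sit adjacent to $R_{AD}$ and $Q_{CD}^\dagger$, producing precisely the two GNS norms required by the normalization hypothesis.
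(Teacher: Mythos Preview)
Your proof is correct and complete. The arguments for \eqref{equa:estimatesMartinagleCondition2} and \eqref{equa:estimatesMartinagleCondition3} are essentially identical to the paper's.

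For \eqref{equa:estimatesMartinagleCondition1} you take a genuinely different and somewhat more elementary route. The paper first inserts $\sigma_{ACD}^{1/2}\sigma_{ACD}^{-1/2}$ on both sides of $\sigma_{AD}\sigma_D^{-1}\sigma_{DC}$ to obtain the intermediate symmetric bound
\[
\Delta_\sigma(A:C|D)\;\le\;\bigl\|\,\mathbbm{1}-\sigma_{ACD}^{-1/2}\bigl(\sigma_{AD}\sigma_D^{-1}\sigma_{DC}\bigr)\sigma_{ACD}^{-1/2}\,\bigr\|_\infty,
\]
and then invokes a matrix-analysis inequality (Bhatia, Theorem~IX.4.5) of the form $\|S^\dagger X S\|_\infty\le\tfrac12(\|SS^\dagger X\|_\infty+\|XSS^\dagger\|_\infty)$ with $S=\sigma_{ACD}^{-1/2}$ to pass to the two one-sided expressions. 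Your averaging of the two factorizations $M=(\mathbbm{1}-N_1)\sigma_{ACD}=\sigma_{ACD}(\mathbbm{1}-N_2)$, followed by H\"older and the cyclic trick $\|AB\|_1=\|BA\|_1$ to reduce $\|\sigma_{ACD}Q_{CD}^\dagger R_{AD}\|_1$ to the product of the two GNS norms, bypasses Bhatia's inequality entirely. What the paper's route buys is the self-adjoint intermediate bound above, which can be of independent use; what your route buys is a self-contained argument using only H\"older for Schatten norms.
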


\begin{proof}
Let us first prove \eqref{equa:estimatesMartinagleCondition1}. We start from  the definition of $\Delta_\sigma$ in \eqref{eq:delta-definition}. Using that 
\[ \Tr_{ADC}(\sigma_{ADC}Q_{CD}^{\dagger}Q_{DC}) = \Tr_{ADC}(\sigma_{ADC}R_{AD}^{\dagger}R_{AD}) =1,\]
we can estimate
\begin{align*}
|\Tr_{ACD}((\sigma_{ACD} & - \sigma_{AD} \sigma_{D}^{-1} \sigma_{DC})  Q_{CD}^{\dagger} R_{AD})|=\\[1mm]
& = |\Tr_{ACD}((\mathbbm{1} - \sigma_{ACD}^{-1/2}(\sigma_{AD} \sigma_{D}^{-1} \sigma_{DC}) \sigma_{ACD}^{-1/2}) (Q_{CD} \sigma_{ACD}^{1/2})^{\dagger} R_{AD} \sigma_{ACD}^{1/2})|\\[1mm]
& \leq \| \mathbbm{1} - \sigma_{ACD}^{-1/2}(\sigma_{AD} \sigma_{D}^{-1} \sigma_{DC}) \sigma_{ACD}^{-1/2} \|_{\infty} \,.
\end{align*}
We then apply \cite[Theorem IX.4.5]{Bhatia97}, which for $S := \sigma_{ADC}^{-1/2}$ and $X := \sigma_{AD} \sigma_{D}^{-1} \sigma_{DC} - \sigma_{ADC}$ yields that
\[ \| S^{\dagger} X S\|_{\infty} \leq \frac{1}{2}\| SS^{\dagger}X + X S S^{\dagger}\|_{\infty} \leq \frac{1}{2}(\|SS^{\dagger}X\|_{\infty} + \| XSS^{\dagger}\|_{\infty})\,. \]
This leads to the desired inequality \eqref{equa:estimatesMartinagleCondition1}.

To check the second part, let us now assume that $D=\emptyset$. In this case, the formula we just proved reads
\begin{align*} 
\Delta_\sigma(A:C|\emptyset) & \leq \frac{1}{2}\left(\left\|\mathbbm{1} - \sigma_{A}  \sigma_{C} \sigma_{AC}^{-1} \right\|_{\infty} + \left\|\mathbbm{1} - \sigma_{AC}^{-1}\sigma_{A} \sigma_{C}\right\|_{\infty} \right) = \left\|\mathbbm{1} - \sigma_{AC}^{-1}\sigma_{A} \sigma_{C}\right\|_{\infty}\,, 
\end{align*}
as the norm is invariant under the adjoint operation. This provides \eqref{equa:estimatesMartinagleCondition2}. Finally, since every $Q$ with $\| Q\|_{\infty} = 1$ satisfies $\operatorname{Tr}(Q_{C}^{\dagger}Q_{C} \sigma) \leq 1$, we can estimate from below
\[ \Delta_\sigma(A:C|\emptyset) \geq \sup_{\| Q_{A}\|_{\infty}, \| Q_{C}\|_{\infty} \leq 1} |\Tr(\sigma Q_{C}Q_{A}) - \Tr(\sigma Q_{C}) \Tr(\sigma 
Q_{A}) | \,,  \]
which is the usual operator correlation function, as it appears in \eqref{equa:estimatesMartinagleCondition3} . 
\end{proof}

We are now ready to prove the main theorem of this section, showing that $\Delta_\sigma(A:C|D)$ precisely encodes the martingale condition for the canonical purified Hamiltonian $\mathbf{H}$.

\begin{Theo}\label{Theo:martingaleVScorrelations}
Let $\Lambda =ABCD$ be a partition into four disjoint subsets. Then, we have that
\begin{equation}\label{equa:martingaleVScorrelationsMainFormula}
\| \Pi_{AB}\Pi_{BC} - \Pi_{ABC}  \| = \Delta_\sigma(A:C|D).
\end{equation}
\end{Theo}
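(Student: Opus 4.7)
The plan is to exploit the explicit formula in Proposition~\ref{Prop:explicitprojectionCanonicalPurified} together with the inclusions $W_{ABC}\subset W_{AB}$ and $W_{ABC}\subset W_{BC}$ to reduce the computation of the operator norm to a supremum of sesquilinear pairings of elements in $W_{AB}$ and $W_{BC}$, and then to simplify the resulting expression via trace cyclicity and partial traces.

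First, I would argue that the operator $T:=\Pi_{AB}\Pi_{BC}-\Pi_{ABC}$ has image in $W_{AB}$ and vanishes on $W_{BC}^{\perp}$. The former is clear since $\Pi_{AB}$ maps into $W_{AB}$ and $\Pi_{ABC}$ maps into $W_{ABC}\subset W_{AB}$. The latter follows because $W_{ABC}\subset W_{BC}$ implies $\Pi_{ABC}\Pi_{BC}=\Pi_{ABC}$, so $T=\Pi_{AB}\Pi_{BC}-\Pi_{ABC}\Pi_{BC}$. Consequently,
\begin{equation*}
\|T\|=\sup\bigl\{|\langle S,TQ\rangle_{HS}|\colon S\in W_{AB},\,Q\in W_{BC},\,\|S\|_{HS}=\|Q\|_{HS}=1\bigr\}.
\end{equation*}

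Next, I would parametrize $Q=R_{AD}\sigma^{1/2}$ with $R_{AD}\in\cB(\cH_{AD})$ and $S=Q_{CD}\sigma^{1/2}$ with $Q_{CD}\in\cB(\cH_{CD})$, noting that $\|Q\|_{HS}^{2}=\operatorname{Tr}(\sigma R_{AD}^{\dag}R_{AD})=\|R_{AD}\|_{\sigma}^{2}$ and likewise $\|S\|_{HS}=\|Q_{CD}\|_{\sigma}$. Since $Q\in W_{BC}=\ker\mathbf{H}_{BC}$ we have $\Pi_{BC}(Q)=Q$, so the explicit formula \eqref{eq:explicitprojectionCanonicalPurified} gives
\begin{equation*}
TQ=\bigl[\operatorname{Tr}_{AB}(R_{AD}\sigma)\,\sigma_{CD}^{-1}-\operatorname{Tr}_{ABC}(R_{AD}\sigma)\,\sigma_{D}^{-1}\bigr]\sigma^{1/2}.
\end{equation*}

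The main computational step is to take the Hilbert--Schmidt inner product of $S=Q_{CD}\sigma^{1/2}$ with $TQ$. Using $\operatorname{Tr}(\sigma\,X)=\operatorname{Tr}_{CD}(\sigma_{CD}X)$ for $X\in\cB(\cH_{CD})$, and the identities $\operatorname{Tr}_{AB}(R_{AD}\sigma)=\operatorname{Tr}_{A}(R_{AD}\sigma_{ACD})$ and $\operatorname{Tr}_{ABC}(R_{AD}\sigma)=\operatorname{Tr}_{A}(R_{AD}\sigma_{AD})$, together with the cyclicity of the trace (here one has to be careful that $\sigma_{CD}$ and $\sigma_{CD}^{-1}$ commute, and that the factors $\sigma_{D}^{-1}$ can be moved through $Q_{CD}^{\dag}$ only inside the trace), a direct manipulation yields
\begin{equation*}
\langle S,TQ\rangle_{HS}=\operatorname{Tr}_{ACD}\bigl[(\sigma_{ACD}-\sigma_{AD}\sigma_{D}^{-1}\sigma_{CD})\,Q_{CD}^{\dag}\,R_{AD}\bigr].
\end{equation*}
Substituting this together with $\|Q\|_{HS}=\|R_{AD}\|_{\sigma}$ and $\|S\|_{HS}=\|Q_{CD}\|_{\sigma}$ into the variational expression for $\|T\|$ reproduces verbatim Definition~\ref{def:spatial-mixing}, proving \eqref{equa:martingaleVScorrelationsMainFormula}.

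The only conceptual subtlety is the reduction in the first paragraph; the rest is bookkeeping with partial traces. The main obstacle I anticipate is keeping track of operator ordering in the computation of $\langle S,TQ\rangle_{HS}$, but the fact that both terms collapse into a single clean partial trace over $ACD$ makes the identification with $\Delta_{\sigma}(A:C|D)$ transparent.
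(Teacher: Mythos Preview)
Your proposal is correct and follows essentially the same approach as the paper's proof. The paper begins with the factorization $\Pi_{AB}\Pi_{BC}-\Pi_{ABC}=(\Pi_{AB}-\Pi_{ABC})(\Pi_{BC}-\Pi_{ABC})$ and then restricts the variational formula for the norm to $W_{AB}\times W_{BC}$, arriving at $\langle \tilde Q,\tilde R\rangle_{HS}-\langle \tilde Q,\Pi_{ABC}\tilde R\rangle_{HS}$; your argument reaches the same pairing by observing directly that $T$ has range in $W_{AB}$ and kernel containing $W_{BC}^{\perp}$, after which the partial-trace bookkeeping is identical.
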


\begin{proof}
Let us start by noticing that
\begin{equation}\label{equa:martingaleVScorrelationsAux1}
\begin{split}
\| \Pi_{AB}\Pi_{BC} - \Pi_{ABC}  \| & = \| (\Pi_{AB} - \Pi_{ABC})(\Pi_{BC} - \Pi_{ABC}) \|\\[1mm]
& = \sup\{|\langle \tilde{Q} - \Pi_{ABC}(\tilde{Q}), \tilde{R} - \Pi_{ABC}(\tilde{R})\rangle_{HS}| \} \,,
\end{split}
\end{equation}
where the supremum is taken over $\tilde{Q} \in W_{AB}$ and $\tilde{R} \in W_{BC}$ such that $\| \tilde{Q}\|_{2} = \| \tilde{R}\|_{2} = 1$.
Since $\Pi_{ABC}$ is a self-adjoint projection, we can expand the above scalar product as 
\begin{equation}\label{equa:equa:martingaleVScorrelationsAux3}
\langle \tilde{Q} - \Pi_{ABC}(\tilde{Q}), \tilde{R} - \Pi_{ABC}(\tilde{R})\rangle_{HS} =
 \langle \tilde{Q} , \tilde{R}\rangle_{HS} - \langle \tilde{Q},  \Pi_{ABC}(\tilde{R})\rangle_{HS}\,.
\end{equation}
We can next represent every $\tilde{Q} \in W_{AB}$  as $\tilde{Q} = Q_{CD} \sigma^{1/2}$ for some $Q_{CD} = \mathbbm{1}_{AB} \otimes Q_{CD} \in \cB(\cH_{CD})$, and analogously every $\widetilde{R} \in W_{BC}$ as $\tilde{R} = R_{AD} \sigma^{1/2}$ for some $R_{AD} = \mathbbm{1}_{BC} \otimes R_{AD} \in \cB(\cH_{AD})$. In particular, 
\[
  \Pi_{ABC}(\tilde{R}) = \Tr_{ABC}(R_{AD} \sigma) \sigma_{D}^{-1} \sigma^{1/2} = \Tr_{A}(R_{AD} \sigma_{AD})\sigma_{D}^{-1} \sigma^{1/2}\,.
\]
We also have that $\norm{Q_{CD}}_\sigma = \norm*{\tilde{Q}}_2 = 1$ and $\norm{R_{AD}}_\sigma = \norm*{\tilde{R}}_2 = 1$.
With these new expressions, we can rewrite the scalar products appearing in \eqref{equa:equa:martingaleVScorrelationsAux3} as
\begin{align*}
\langle \widetilde{Q} , \widetilde{R} \rangle_{HS} & = \operatorname{Tr}(\sigma Q_{CD}^{\dagger} R_{AD}) = \operatorname{Tr}_{ADC}(\sigma_{ADC} Q_{CD}^{\dagger} R_{AD})
\end{align*}
and
\begin{align*}
\langle \widetilde{Q} , \Pi_{ABC}(\widetilde{R}) \rangle_{HS} & = \operatorname{Tr}(\sigma^{1/2} Q_{CD}^{\dagger} \Tr_{A}(R_{AD} \sigma_{AD}) \sigma_{D}^{-1} \sigma^{1/2})\\[2mm]
& =\operatorname{Tr}(\sigma_{D}^{-1} \sigma   Q_{CD}^{\dagger} \Tr_{A}(R_{AD} \sigma_{AD})  )\\[2mm]
& = \operatorname{Tr}_{CD}( \sigma_{D}^{-1} \sigma_{CD} Q_{CD}^{\dagger} \Tr_{A}(R_{AD} \sigma_{AD}) )\\[2mm]
& = \operatorname{Tr}_{ADC}(\sigma_{AD}\sigma_{D}^{-1} \sigma_{CD}Q_{CD}^{\dagger}R_{AD}).
\end{align*}
Finally, inserting this formulas in \eqref{equa:equa:martingaleVScorrelationsAux3}, we conclude the result.
\end{proof}

\begin{Coro}\label{coro:martingaleVScorrelations}
    Let $\Lambda =ABCD$ be a partition into four disjoint subsets. Then, the quantity $\Delta_\sigma(A:C|D)$ can be written as
    \begin{equation}
        \Delta_\sigma(A:C|D) = \sup_{Q_{CD}, R_{AD}} \abs{ \langle Q_{CD} , R_{AD} \rangle_\sigma},
    \end{equation}
    where the supremum is taken over $R_{AD} \in \cB(\cH_{AD})$ and $Q_{CD} \in \cB(\cH_{CD})$ satisfying $\norm*{R_{AD}}_\sigma  = \norm*{Q_{CD}}_\sigma = 1$
    and
    \begin{equation}
    \Tr_C (Q_{CD} \sigma_{CD}) = \Tr_A(R_{AD}\sigma_{AD}) = 0.
    \end{equation}
\end{Coro}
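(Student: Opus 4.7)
The plan is to combine Theorem~\ref{Theo:martingaleVScorrelations} with a rewriting of the operator norm $\|\Pi_{AB}\Pi_{BC} - \Pi_{ABC}\|$ via the factorization
\[
\Pi_{AB}\Pi_{BC} - \Pi_{ABC} = (\Pi_{AB} - \Pi_{ABC})(\Pi_{BC} - \Pi_{ABC})\,,
\]
which is valid because $W_{ABC} \subset W_{AB} \cap W_{BC}$, so $\Pi_{AB}\Pi_{ABC} = \Pi_{BC}\Pi_{ABC} = \Pi_{ABC}$. Both factors are orthogonal projections, onto the subspaces $W_{AB}\ominus W_{ABC}$ and $W_{BC}\ominus W_{ABC}$ respectively.

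First, I would use the standard fact that for any operator $T$, $\|T\| = \sup |\langle x, Ty\rangle_{HS}|$ over unit vectors. Applied to the factorized form above and using self-adjointness of each projection factor, the norm equals the supremum of $|\langle x', y'\rangle_{HS}|$ as $x'$ ranges over the unit sphere of $W_{AB}\ominus W_{ABC}$ and $y'$ over the unit sphere of $W_{BC}\ominus W_{ABC}$. (The supremum on the unit ball is attained on the unit sphere by scaling.)

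Next, I would parametrize these two subspaces explicitly. Any $x' \in W_{AB}$ can be written as $x' = Q_{CD}\sigma^{1/2}$ for some $Q_{CD} \in \cB(\cH_{CD})$, and by Proposition~\ref{Prop:explicitprojectionCanonicalPurified}, $\Pi_{ABC}(Q_{CD}\sigma^{1/2}) = \Tr_C(Q_{CD}\sigma_{CD})\,\sigma_{D}^{-1}\sigma^{1/2}$. Since $\sigma_D$ is invertible, the condition $x' \in W_{AB}\ominus W_{ABC}$ is equivalent to $\Tr_C(Q_{CD}\sigma_{CD}) = 0$. Similarly, any $y' \in W_{BC}\ominus W_{ABC}$ has the form $R_{AD}\sigma^{1/2}$ with $\Tr_A(R_{AD}\sigma_{AD}) = 0$. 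Moreover, $\|x'\|_2 = \|Q_{CD}\|_\sigma$ and $\|y'\|_2 = \|R_{AD}\|_\sigma$. Finally, the Hilbert–Schmidt inner product translates as
\[
\langle Q_{CD}\sigma^{1/2}, R_{AD}\sigma^{1/2}\rangle_{HS} = \operatorname{Tr}(\sigma Q_{CD}^\dagger R_{AD}) = \langle Q_{CD}, R_{AD}\rangle_\sigma\,.
\]

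Combining the above with Theorem~\ref{Theo:martingaleVScorrelations} yields the claimed identity immediately. There is essentially no obstacle here: the argument is a routine unpacking once the factorization of $\Pi_{AB}\Pi_{BC}-\Pi_{ABC}$ and the explicit formula for $\Pi_{ABC}$ are invoked. The only small point to check is that the unit-ball supremum in the norm representation can be replaced by a unit-sphere supremum, which follows because the inner product is homogeneous.
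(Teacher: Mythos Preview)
Your proposal is correct and follows essentially the same approach as the paper: the factorization $\Pi_{AB}\Pi_{BC}-\Pi_{ABC}=(\Pi_{AB}-\Pi_{ABC})(\Pi_{BC}-\Pi_{ABC})$ already appears in the proof of Theorem~\ref{Theo:martingaleVScorrelations}, and the Corollary's proof in the paper simply quotes that the supremum ranges over $W_{AB}\cap W_{ABC}^{\perp}$ and $W_{BC}\cap W_{ABC}^{\perp}$ and then translates these conditions via $\Pi_{ABC}$ exactly as you do. The only cosmetic difference is that you re-derive the norm representation explicitly, whereas the paper cites it from the earlier proof.
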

\begin{proof}
In the proof of Theorem~\ref{Theo:martingaleVScorrelations}, we have seen that we can write
\begin{equation}\label{equa:commutingMartingaleUpperBound1}
\Delta_\sigma(A:C|D)
= \sup_{\widetilde{Q}, \widetilde{R}} \abs{ \langle \widetilde{Q}, \widetilde{R} \rangle_{HS}}
\end{equation}
where the supremum is taken over all $\widetilde{Q} \in W_{AB} \cap W_{ABC}^{\perp}$ and $\widetilde{R} \in W_{BC} \cap W_{ABC}^{\perp}$ with $\| \widetilde{Q}\|_{2} = \|\widetilde{R}\|_{2} = 1$. 
The condition $\widetilde{Q} \in W_{AB}$  means that $\widetilde{Q} = Q_{CD}\sigma^{1/2}$ for some $Q_{CD} \in \cB(\cH_{CD})$. With this notation, $\| \widetilde{Q}\|_{2} = 1$ is equivalent to $\norm*{Q_{CD}}_\sigma=1$ whereas $\widetilde{Q} \in W_{ABC}^{\perp}$ is equivalent to
\[ 
0 = \Pi_{ABC}(\widetilde{Q}) = \Tr_{ABC}( Q_{CD}\sigma) \sigma_{D}^{-1}\sigma^{1/2} = \Tr_{C}( Q_{CD}\sigma_{CD})\sigma_{D}^{-1}\sigma^{1/2}
\]
or $\Tr_{C}(Q_{CD} \sigma_{DC})=0$. 
Analogously,  $\widetilde{R} \in W_{BC} \cap W_{ABC}^{\perp}$ and $\|\widetilde{R}\|_{2} = 1$ is equivalent to $\widetilde{R} = R_{AD}\sigma^{1/2}$ for some $R_{AD} \in \cB(\cH_{AD})$, $\Tr_{A}(R_{AD}\sigma_{AD}) = 0$ and $\norm*{\sigma_{AD}}_\sigma = 1$. To conclude, we observe that $\langle \widetilde{Q}, \widetilde{R} \rangle_{HS} = \langle Q_{CD} , R_{AD} \rangle_\sigma$.
\end{proof}

In the particular case in which the marginals of the state $\sigma$ over different regions commute with each other, we can improve the bound of Proposition~\ref{prop:estimatesMartinagleCondition} as follows.
\begin{Prop}\label{prop:commutingMartingaleUpperBound}
Let $\Lambda =ABCD$ be a partition into four disjoint subsets.  If  the marginals $\sigma_{AD}, \sigma_{D}, \sigma_{DC}$ commute with each other, then we can upper estimate
\[ \Delta_\sigma(A:C|D) \leq  1-  \|\sigma_{AD} \sigma_{D}^{-1} \sigma_{DC} \sigma_{ADC}^{-1}\|_{\infty}^{-1} \,. \]
\end{Prop}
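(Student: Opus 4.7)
The plan is to combine the alternative formulation of $\Delta_\sigma(A:C|D)$ given by Corollary~\ref{coro:martingaleVScorrelations} with a Cauchy--Schwarz argument, leveraging the fact that under the commuting hypothesis the operator $P := \sigma_{AD}\sigma_D^{-1}\sigma_{DC}$ is Hermitian and positive semidefinite (as a product of commuting positive semidefinite operators). Setting $S := \sigma_{ACD}$ and $\mu := \|PS^{-1}\|_\infty$, the first step I would take is to establish the operator inequality $\mu S - P \geq 0$. For this, note that $PS^{-1}$ is similar, via conjugation by $S^{1/2}$, to the Hermitian PSD operator $S^{-1/2}P S^{-1/2}$; hence the eigenvalues of $PS^{-1}$ are real, nonnegative, and bounded above by $\mu$. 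Thus $S^{-1/2}PS^{-1/2} \leq \mu\,\mathbbm{1}$, which rearranges to $P \leq \mu S$.

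Next, by Corollary~\ref{coro:martingaleVScorrelations}, I would rewrite $\Delta_\sigma(A:C|D) = \sup |\langle Q_{CD}, R_{AD}\rangle_\sigma|$, with the supremum taken over $Q_{CD}, R_{AD}$ normalized in $\|\cdot\|_\sigma$ and subject to $\Tr_C(Q_{CD}\sigma_{CD}) = \Tr_A(R_{AD}\sigma_{AD}) = 0$. The pivotal observation is that these orthogonality constraints, combined with the commutation of $\sigma_{AD}, \sigma_D, \sigma_{DC}$, force the cross term $\Tr[P\, Q_{CD}^\dagger R_{AD}]$ to vanish: one rearranges it as
\[
\Tr_D\!\bigl[\Tr_A(R_{AD}\sigma_{AD})\,\sigma_D^{-1}\,\Tr_C(\sigma_{DC}\, Q_{CD}^\dagger)\bigr] = 0,
\]
where the commutativity is used to push $\sigma_{AD}$ and $\sigma_{DC}$ past $\sigma_D^{-1}$ and to each other. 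Consequently,
\[
\mu\, \langle Q_{CD}, R_{AD}\rangle_\sigma \;=\; \mu\,\Tr[S\, Q_{CD}^\dagger R_{AD}] - \Tr[P\, Q_{CD}^\dagger R_{AD}] \;=\; \Tr\bigl[(\mu S - P)\, Q_{CD}^\dagger R_{AD}\bigr].
\]

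For the final step, since $\mu S - P \geq 0$, I can introduce its positive semidefinite square root $V := (\mu S - P)^{1/2}$ and apply the Hilbert--Schmidt Cauchy--Schwarz inequality, obtaining $|\Tr[(\mu S - P)\, Q_{CD}^\dagger R_{AD}]| \leq \|V Q_{CD}^\dagger\|_2 \cdot \|R_{AD} V\|_2$. A short computation using $\Tr_A \sigma_{AD} = \sigma_D$ (and its analogue over $C$) shows that $\Tr[P\, Q_{CD}^\dagger Q_{CD}] = \Tr[P\, R_{AD}^\dagger R_{AD}] = 1$, so both Hilbert--Schmidt norms simplify to $\sqrt{\mu - 1}$. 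Chaining everything yields $\mu\,|\langle Q_{CD}, R_{AD}\rangle_\sigma| \leq \mu - 1$, i.e.\ the stated bound $\Delta_\sigma(A:C|D) \leq 1 - 1/\mu$.

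The step I expect to be the main obstacle is spotting the decomposition $\mu S = (\mu S - P) + P$: the whole argument hinges on realizing that the commutation hypothesis plays a double role, simultaneously (i) making $P$ positive semidefinite so that a PSD Cauchy--Schwarz step is available on $\mu S - P$, and (ii) making the orthogonality constraints of Corollary~\ref{coro:martingaleVScorrelations} annihilate the $P$-cross term. Without commutativity neither ingredient survives, which is precisely why this sharpened bound is special to the commuting setting.
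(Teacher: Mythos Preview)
Your proof is correct, and the overall strategy is closely related to the paper's but packaged differently. The paper first passes through the polarization identity, writing $\Re\langle Q_{CD},R_{AD}\rangle_\sigma = 1 - \tfrac12\|Q_{CD}-R_{AD}\|_\sigma^2$, and then lower bounds this norm by inserting $\rho_{ADC}^{1/2}\rho_{ADC}^{-1/2}$ (with $\rho_{ADC}=\sigma_{DC}\sigma_D^{-1}\sigma_{AD}$, your $P$) and using $\rho^{-1/2}\sigma\rho^{-1/2}\geq\|\rho^{1/2}\sigma^{-1}\rho^{1/2}\|_\infty^{-1}\mathbbm{1}$; the computation $\Tr[\rho_{ADC}\,|Q_{CD}-R_{AD}|^2]=2$ then finishes the job. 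Your route bypasses polarization entirely: you split $\mu S = (\mu S - P) + P$, kill the $P$-cross term via the orthogonality constraints, and apply Cauchy--Schwarz to the $(\mu S - P)$-term. Both arguments rest on exactly the same three ingredients (positivity of $P$ from commutativity; $\Tr[P\,Q_{CD}^\dagger R_{AD}]=0$; $\Tr[P\,Q_{CD}^\dagger Q_{CD}]=\Tr[P\,R_{AD}^\dagger R_{AD}]=1$), so the difference is one of organization rather than substance. Your version is arguably more direct.

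One small remark: the vanishing of the cross term $\Tr[P\,Q_{CD}^\dagger R_{AD}]$ follows purely from the support structure and the orthogonality constraint $\Tr_A(R_{AD}\sigma_{AD})=0$ (just rewrite it as $\Tr_D\bigl[\Tr_A(R_{AD}\sigma_{AD})\,\sigma_D^{-1}\,\Tr_C(\sigma_{DC}Q_{CD}^\dagger)\bigr]$ by tracing out $C$ then $A$), so commutativity is not actually needed there. Commutativity enters only to make $P$ Hermitian positive semidefinite, which is what allows you to form $(\mu S - P)^{1/2}$.
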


\begin{Rema}
Using that $\|\sigma_{AD} \sigma_{D}^{-1} \sigma_{DC} \sigma_{ADC}^{-1}\|_{\infty} \leq 1+ \|\sigma_{AD} \sigma_{D}^{-1} \sigma_{DC} \sigma_{ADC}^{-1} - \mathbbm{1}\|_{\infty}$ we get an improved version of \eqref{equa:estimatesMartinagleCondition1}, namely
\[ \Delta_\sigma(A:C|D)  \leq \frac{ \| \mathbbm{1} -\sigma_{AD} \sigma_{D}^{-1} \sigma_{DC} \sigma_{ADC}^{-1}\|_{\infty}}{1+\|  \mathbbm{1} -\sigma_{AD} \sigma_{D}^{-1} \sigma_{DC} \sigma_{ADC}^{-1}\|_{\infty}}\,. \]
\end{Rema}

\begin{proof}[Proof of Proposition~\ref{prop:commutingMartingaleUpperBound}] Recall that, by Corollary~\ref{coro:martingaleVScorrelations},
\[ 
\Delta_\sigma(A:C|D)
=  \sup_{Q_{CD}, R_{AD}} \abs{\langle Q_{CD}, R_{AD} \rangle_{\sigma}} =
\sup_{Q_{CD}, R_{AD}} \Re\langle Q_{CD}, R_{AD} \rangle_{\sigma},
\]
where the supremum is taken over $R_{AD} \in \cB(\cH_{AD})$ and $Q_{CD} \in \cB(\cH_{CD})$ satisfying $\norm*{R_{AD}}_\sigma  = \norm*{Q_{CD}}_\sigma = 1$
and $\Tr_C (Q_{CD} \sigma_{CD}) = \Tr_A(R_{AD}\sigma_{AD}) = 0$.

Using the polarization identity and the normalization of $Q_{CD}$ and $R_{AD}$, we can write
\[ \Re\langle Q_{CD}, R_{AD} \rangle_{\sigma} = 1-\frac{1}{2} \norm*{Q_{CD}-R_{AD}}_\sigma^2  = 1 - \frac{1}{2} \norm*{S_{ADC}}^2_\sigma ,\]
where $S_{ADC}:=Q_{CD}-R_{AD}$, so that the supremum turns into an infimum
\begin{equation}\label{equa:commutingMartingaleUpperBound2}
\Delta_\sigma(A:C|D)
= 1-\frac{1}{2}\textstyle \inf \norm*{S_{ADC}}^2_\sigma.
\end{equation}

Let us denote $\rho_{ADC} = \sigma_{DC} \sigma_{D}^{-1} \sigma_{AD}$, which a positive and invertible operator by the commutativity hypothesis. Then, we can write
\begin{align*}
\norm*{S_{ADC}}^2_\sigma  = \Tr_{}(S_{ADC} \, \sigma \, S_{ADC}^{\dagger}) & = \Tr_{ADC}(S_{ADC} \, \sigma_{ADC} \,S_{ADC}^{\dagger})\\ & = \Tr_{ADC}(S_{ADC}\rho_{ADC}^{1/2} \rho_{ADC}^{-1/2} \sigma_{ADC}  \rho_{ADC}^{-1/2}\rho_{ADC}^{1/2}  S_{ADC}^{\dagger}) )\,. 
\end{align*}
Using the inequalities $\mathbbm{1} \| O^{-1}\|_{\infty}^{-1} \leq O \leq \| O\|_{\infty} \mathbbm{1}$ valid for every positive and invertible observable $O$, we can lower estimate the previous expression to get
\begin{equation} \label{equa:commutingMartingaleUpperBound3}
\norm*{S_{ADC}}^2_\sigma \geq \Tr_{ADC}(S_{ADC} \rho_{ADC} S_{ADC}^{\dagger}) \|\rho_{ADC}^{1/2} \sigma_{ADC}^{-1} \rho_{ADC}^{1/2} \|_{\infty}^{-1}  
\,.
\end{equation}
Notice that the first factor can be simplified
\begin{align*}
\Tr_{ADC}(S_{ADC} \rho_{ADC} S_{ADC}^{\dagger}) 
& = \Tr_{ADC}(S_{ADC}^{\dagger}S_{ADC} \rho_{ADC})\\  
& = \Tr_{ADC}(Q_{CD}^{\dagger} Q_{CD} \sigma_{AD} \sigma_{D}^{-1} \sigma_{DC}) -2 \Re \Tr_{ADC}(Q_{CD}^{\dagger}R_{AD}\sigma_{AD} \sigma_{D}^{-1}\sigma_{DC})\\
& \quad \quad +\Tr_{ADC}(R_{AD}^{\dagger} R_{AD} \sigma_{AD} \sigma_{D}^{-1} \sigma_{DC})\\
&=  \Tr_{CD}(Q_{CD}^{\dagger}Q_{CD} \sigma_{D}\sigma_{D}^{-1} \sigma_{CD}) - 2 \Re \Tr_{DC}(Q_{CD}^{\dagger} \Tr_{A}(R_{AD}\sigma_{AD}) \sigma_{D}^{-1}\sigma_{DC}\\
& \quad \quad + \Tr_{AD}(R_{AD}^{\dagger}R_{AD} \sigma_{AD} \sigma_{D}^{-1} \sigma_{D})\\
& = \Tr_{CD}(Q_{CD}^{\dagger}Q_{CD} \sigma_{D}) - 0 + \Tr_{AD}(R_{AD}^{\dagger}R_{AD} \sigma_{AD} ) =2\,.
\end{align*}
Replacing this value in \eqref{equa:commutingMartingaleUpperBound3} and applying this estimate to \eqref{equa:commutingMartingaleUpperBound2} we get that
\[  \Delta_\sigma(A:C|D) \leq 1-\| \rho_{ADC}^{1/2} \sigma_{ADC}^{-1} \rho_{ADC}^{1/2}\|_{\infty}^{-1}\,. \]
Finally, using that $\| \rho_{ADC}^{1/2} \sigma_{ADC}^{-1} \rho_{ADC}^{1/2}\|_{\infty} \leq \| \rho_{ADC} \sigma_{ADC}^{-1}\|_{\infty}$ by \cite[Proposition IX.1.1]{Bhatia97}, we conclude the result.
\end{proof}

\subsection{Small regions estimates}
\label{sec:rough_estimate}

The aim of this section will be to prove the following estimate on the spectral gap of $\mathbf{H}$, which will be useful to bound the gap of the canonical Hamiltonian on small subsets of $\Lambda$.

\begin{Theo}\label{Theo:roughBoundSpectralGap}
Let $\sigma \in \cB(\cH_\Lambda)$ be a full-rank state, and let $\mathbf{H}$ be the associated canonical purified Hamiltonian. For each $X \subset \Lambda$, let us denote 
\begin{equation}\label{equa:roughBoundSpectralGap}
\eta_{X}(\sigma) := \inf\{ \|Q \sigma^{-1/2} \|_{\infty} \cdot \|\sigma^{1/2} Q^{-1} \|_{\infty} \colon Q \in \mathcal{B}_{X^{c}} \text{ invertible} \}\,.    
\end{equation}
Then, we have that
\begin{equation}
\operatorname{gap}(\mathbf{H}_{X}) \geq \frac{1}{\eta_{X}(\sigma)^{4}}\,.
\end{equation}
\end{Theo}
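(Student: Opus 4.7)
The plan is to transfer the problem to the $\sigma$-GNS picture via the isometry $T\colon A\mapsto A\sigma^{1/2}$ from $(\cB(\cH_\Lambda),\langle\cdot,\cdot\rangle_{HS})$ to $(\cB(\cH_\Lambda),\langle\cdot,\cdot\rangle_\sigma)$. Under $T$, $\mathbf{H}_X$ is unitarily equivalent to $\mathcal{L}_X := \sum_{x\in X}\mathcal{P}_x^\perp$, where $\mathcal{P}_x$ denotes the $\sigma$-orthogonal (Petz) conditional expectation onto $\cB(\cH_{\{x\}^c})$; the ground space becomes $\cB(\cH_{X^c})$, so the problem reduces to lower-bounding $\langle A,\mathcal{L}_X A\rangle_\sigma / \norm{A}_\sigma^2$ for $A$ with $\mathcal{P}_X A = 0$. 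I then introduce an auxiliary positive-definite inner product $\langle A,B\rangle_Q := \Tr(Q^\dagger Q A^\dagger B)$ associated to any invertible $Q \in \cB(\cH_{X^c})$, used to mediate a comparison between $\mathcal{L}_X$ and a trivially gapped commuting Hamiltonian.

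The argument then has the following structure. For invertible $Q\in \cB(\cH_{X^c})$, set $a := \norm{Q\sigma^{-1/2}}_\infty$ and $b := \norm{\sigma^{1/2}Q^{-1}}_\infty$: these yield $b^{-2}\sigma \leq Q^\dagger Q \leq a^2 \sigma$, hence the norm equivalence $b^{-2}\norm{\cdot}_\sigma^2 \leq \norm{\cdot}_Q^2 \leq a^2\norm{\cdot}_\sigma^2$. A crucial observation is that, since $Q\in \cB(\cH_{X^c})\subset \cB(\cH_{\{x\}^c})$, the $Q$-orthogonal projection onto $\cB(\cH_{\{x\}^c})$ is the uniform conditional expectation $\tilde{\mathcal{P}}_x(A) := \Tr_x(A)/d_x$, independent of $Q$; analogously $\tilde{\mathcal{P}}_X$ projects onto $\cB(\cH_{X^c})$. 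By the minimizing property of the $Q$-projection together with $\mathcal{P}_x A \in \cB(\cH_{\{x\}^c})$,
\[
\norm{\tilde{\mathcal{P}}_x^\perp A}_Q \;\leq\; \norm{\mathcal{P}_x^\perp A}_Q \;\leq\; a\,\norm{\mathcal{P}_x^\perp A}_\sigma,
\]
and summing over $x\in X$ gives $\sum_x \norm{\tilde{\mathcal{P}}_x^\perp A}_Q^2 \leq a^2\,\langle A,\mathcal{L}_X A\rangle_\sigma$. The commuting $Q$-self-adjoint projections $\tilde{\mathcal{P}}_x$ simultaneously diagonalize with joint kernel $\cB(\cH_{X^c})$, giving $\sum_x \tilde{\mathcal{P}}_x^\perp \geq \tilde{\mathcal{P}}_X^\perp$ in the $Q$-inner product, so the left-hand side is at least $\norm{\tilde{\mathcal{P}}_X^\perp A}_Q^2 \geq b^{-2}\norm{\tilde{\mathcal{P}}_X^\perp A}_\sigma^2$. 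Finally, when $\mathcal{P}_X A = 0$, the element $\tilde{\mathcal{P}}_X A \in \cB(\cH_{X^c})$ is $\sigma$-orthogonal to $A$, so Pythagoras in the $\sigma$-inner product yields $\norm{\tilde{\mathcal{P}}_X^\perp A}_\sigma^2 = \norm{A}_\sigma^2 + \norm{\tilde{\mathcal{P}}_X A}_\sigma^2 \geq \norm{A}_\sigma^2$. Chaining these estimates gives $a^2 \langle A,\mathcal{L}_X A\rangle_\sigma \geq b^{-2}\norm{A}_\sigma^2$, so $\gap(\mathbf{H}_X) = \gap(\mathcal{L}_X)\geq (ab)^{-2}$; optimizing over $Q$ yields $\gap(\mathbf{H}_X)\geq \eta_X(\sigma)^{-2} \geq \eta_X(\sigma)^{-4}$, where the last inequality uses $\eta_X(\sigma)\geq 1$ by submultiplicativity.

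The main obstacle is identifying the correct auxiliary structure: recognizing that, since $Q$ is supported entirely on $X^c$, the $Q$-orthogonal projection onto $\cB(\cH_{\{x\}^c})$ coincides with the partial trace and is \emph{independent} of the particular $Q$. This is precisely what reduces the a-priori complicated non-commuting problem (Petz conditional expectations in the $\sigma$-inner product) to a commuting one with trivial gap equal to $1$, at the multiplicative price $(ab)^{-2}$ coming from the two directions of the norm equivalence between $\norm{\cdot}_\sigma$ and $\norm{\cdot}_Q$.
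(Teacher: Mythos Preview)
Your proof is correct and follows the same overall strategy as the paper---transferring the problem to the maximally mixed case via an invertible $Q\in\cB(\cH_{X^c})$ and using that the commuting partial-trace Hamiltonian $\sum_x \tau_x^\perp$ has gap one---but your execution is tighter and in fact yields the stronger bound $\gap(\mathbf{H}_X)\ge \eta_X(\sigma)^{-2}$. The paper works in the Hilbert--Schmidt picture and compares $\|\Pi_X^\perp Q\|_2$ with $\|\tau_X^\perp(Q\sigma^{-1/2}R)\|_2$ by passing through the \emph{non-orthogonal} projection $T_X(Q)=\tau_X(Q\sigma^{-1/2})\sigma^{1/2}$ (their Lemma~\ref{lemm:comparingLocalTermsCPH}); bounding $\|\mathrm{Id}-T_X\|$ costs an extra round-trip factor $\|\sigma^{-1/2}R\|_\infty\|R^{-1}\sigma^{1/2}\|_\infty$, which is where the exponent~$4$ comes from. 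You instead observe that the normalized partial trace $\tilde{\mathcal P}_x=d_x^{-1}\Tr_x$ is precisely the \emph{$Q$-orthogonal} projection onto $\cB(\cH_{\{x\}^c})$, so the minimizing property gives $\|\tilde{\mathcal P}_x^\perp A\|_Q\le\|\mathcal P_x^\perp A\|_Q$ directly, avoiding the detour through $T_X$ and saving one factor of $ab$ in each direction. The Pythagoras step at the end (using $\mathcal P_X A=0$ so that $A\perp_\sigma\tilde{\mathcal P}_X A$) is a further clean touch that replaces the paper's second application of the lemma. Both arguments are short; yours is slightly more conceptual and gives a sharper constant.
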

\begin{Rema}\label{Rema:etaProperties}
Before proving the theorem, let us comment on some useful properties of $\eta_X$.
\begin{itemize}
\item[$(i)$] Since $\| \cdot \|_{\infty}$ is sub-multiplicative, we have the trivial bounds 
\[ 1 \leq \eta_{X}(\sigma) \leq \|\sigma^{-1/2}\|_{\infty} \| \sigma^{1/2}\|_{\infty}\le (\sigma_{\max}/\sigma_{\min})^{1/2},\]
where $\sigma_{\max}$ (resp. $\sigma_{\min}$) is the largest (resp. smallest) eigenvalue of $\sigma$.
\item[$(ii)$] We can interpret $\eta_{X}(\sigma)$ as a measure of the distance from $\sigma$ to $\cB_{X^{c}}$, since its minimal possible value,$\eta_{X}(\sigma) = 1$, is attained  if and only if $\sigma \in \mathcal{B}_{X^{c}}$. Indeed, if $\eta_{X}(\sigma) = 1$, an easy compactness argument in the finite-dimensional space $\cB_{X^{c}}$ shows that there exists $Q \in \cB(\cH_{X^c})$ such that $1=\|Q\sigma^{-1/2}\|_{\infty} \|\sigma^{1/2}Q^{-1}\|_{\infty}$. For a general operator $O$,  the condition $\| O\|_{\infty} \| O^{-1}\|_{\infty} = 1$ implies that $O^{\dagger} O =\mathbbm{1}$, so $\sigma = Q^{\dagger}Q \in \cB_{X^{c}}$. The converse implication is trivial.
\item[$(iii)$] Using \cite[Theorem IX.2.1]{Bhatia97}, we can omit the square roots at the expenses of minimizing over invertible positive elements
\[ \eta_{X}(\sigma) \leq \inf\{ \|Q \sigma^{-1} \|_{\infty} \cdot \|\sigma Q^{-1} \|_{\infty} \colon Q \in \mathcal{B}_{X^{c}} \text{ positive and invertible} \}\,.  \]
\item[$(iv)$] When $\sigma$ is the Gibbs state $\sigma=e^{-\beta H_{\Lambda}}/\Tr(e^{-\beta H_{\Lambda}})$ of a finite-range and commuting Hamiltonian, then we can simply upper bound $\eta_X(\sigma)$ by choosing $Q = e^{-\frac{\beta}{2} H_{X^{c}}}$ in \eqref{equa:roughBoundSpectralGap}:
\begin{equation} \eta_{X}(\sigma) \leq  e^{\beta \|H_{\Lambda} - H_{X^{c}}\|_{\infty}} \leq e^{ \beta |X| \| \Phi\|} \label{eq:roughBoundFiniteRange} \end{equation}
where $\Phi$ is the local interaction generating $H_{\Lambda}$.
\end{itemize}
\end{Rema}

In the particularly simple situation in which $\sigma$ is the maximally mixed state $\frac{1}{d_{\Lambda}} \mathbbm{1}_{\Lambda}$, we can see that the subspaces $W_X$ are exactly equal to  $\cB(\cH_{X^{c}})$, the orthogonal projection onto $W_{X}$ is then given by $\tau_{X}(\cdot) := \frac{1}{d_{X}} \Tr_{X}(\cdot)$, the canonical purified Hamiltonian is commuting and its spectral gap is one. We will use this fact to relate the gap of the canonical purified Hamiltonian of an arbitrary full-rank state $\sigma$ to that of the maximally mixed state, in order to prove the above Theorem~\ref{Theo:roughBoundSpectralGap}.

For any $Q \in \cB(\cH_\Lambda)$, note that $Q \in W_{X}$ if and only if $Q \sigma^{-1/2} \in \cB(\cH_{X^{c}})$, by the very definition of the subspaces $W_{X}$, see \eqref{equa:definingWX}. The following result gives a quantitative version of this relation.

\begin{Lemm}\label{lemm:comparingLocalTermsCPH}
Let $X \subset \Lambda$  and let $R_{X^{c}} \in \cB(\cH_{X^{c}})$ be any invertible element. Then, for every $Q \in \cB(\cH_\Lambda)$, denoting $\widetilde{Q} := Q \sigma^{-1/2}R_{X^{c}}$ we have
\begin{equation}\label{equa:} 
\frac{1}{c^2} \cdot \langle \widetilde{Q}, \tau_{X}^{\perp}(\widetilde{Q}) \rangle_{HS} \leq \langle Q, \Pi_{X}^{\perp}(Q) \rangle_{HS} \leq  C^2 \cdot \langle \widetilde{Q}, \tau_{X}^{\perp}(\widetilde{Q}) \rangle_{HS}\,. 
\end{equation}
where $c=\| R_{X^{c}}^{-1} \sigma^{1/2}\|_{\infty} \|\sigma^{-1/2} R_{X^{c}}\|_{\infty}^{2}$ and $C=\| R_{X^{c}}^{-1}\sigma^{1/2}\|_{\infty}$.
\end{Lemm}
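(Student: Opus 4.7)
The plan is to reduce the comparison of the two quadratic forms to the bi-Lipschitz behavior of the right-multiplication map
\[ \phi : \cB(\cH_{\Lambda}) \to \cB(\cH_{\Lambda}), \qquad \phi(M) := M \sigma^{-1/2} R_{X^{c}}, \]
which satisfies $\phi(Q) = \widetilde{Q}$ by construction. The crucial observation is that $\phi$ maps $W_{X}$ bijectively onto $\cB(\cH_{X^{c}})$: if $P = O \sigma^{1/2} \in W_{X}$ with $O \in \cB(\cH_{X^{c}})$, then $\phi(P) = O R_{X^{c}} \in \cB(\cH_{X^{c}})$, and surjectivity follows from the invertibility of $R_{X^{c}}$ within $\cB(\cH_{X^{c}})$.

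Next, I would exploit the variational characterization of both sides,
\[ \langle Q, \Pi_{X}^{\perp}(Q) \rangle_{HS} = \min_{P \in W_{X}} \| Q - P \|_{2}^{2}, \qquad \langle \widetilde{Q}, \tau_{X}^{\perp}(\widetilde{Q}) \rangle_{HS} = \min_{P' \in \cB(\cH_{X^{c}})} \| \widetilde{Q} - P' \|_{2}^{2}, \]
combined with bi-Lipschitz estimates for $\phi$ with respect to $\|\cdot\|_{2}$. These estimates are direct consequences of sub-multiplicativity: for every $M \in \cB(\cH_{\Lambda})$,
\[ \| \phi(M) \|_{2} \leq \| \sigma^{-1/2} R_{X^{c}} \|_{\infty} \| M \|_{2}, \qquad \| M \|_{2} = \| \phi(M) R_{X^{c}}^{-1} \sigma^{1/2} \|_{2} \leq \| R_{X^{c}}^{-1} \sigma^{1/2} \|_{\infty} \| \phi(M) \|_{2}. \]

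For the upper bound, set $\widetilde{P}^{*} := \tau_{X}(\widetilde{Q})$, the optimizer on the right, and let $P^{*} := \phi^{-1}(\widetilde{P}^{*}) \in W_{X}$. Since $\phi(Q - P^{*}) = \widetilde{Q} - \widetilde{P}^{*}$, applying the second bi-Lipschitz inequality to $M = Q - P^{*}$ gives
\[ \langle Q, \Pi_{X}^{\perp}(Q) \rangle_{HS} \leq \| Q - P^{*} \|_{2}^{2} \leq \| R_{X^{c}}^{-1} \sigma^{1/2} \|_{\infty}^{2} \, \| \widetilde{Q} - \widetilde{P}^{*} \|_{2}^{2} = C^{2} \langle \widetilde{Q}, \tau_{X}^{\perp}(\widetilde{Q}) \rangle_{HS}. \]
The lower bound is symmetric: take instead $P^{*} := \Pi_{X}(Q)$, push it forward to $\widetilde{P}^{*} := \phi(P^{*}) \in \cB(\cH_{X^{c}})$, and apply the first bi-Lipschitz estimate; then the stated constant $c$ follows (possibly using the trivial bound $\| R_{X^{c}}^{-1} \sigma^{1/2} \|_{\infty} \| \sigma^{-1/2} R_{X^{c}} \|_{\infty} \geq 1$, obtained from $\phi^{-1} \circ \phi = \Id$, to weaken the cleaner constant $\| \sigma^{-1/2} R_{X^{c}} \|_{\infty}$ into the form stated in the lemma).

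The argument is essentially formal once the bijection $\phi$ and its bi-Lipschitz behavior on $W_{X}$ and $\cB(\cH_{X^{c}})$ have been recognized; there is no genuine obstacle. The only delicate point is bookkeeping to match the precise constants $c$ and $C$ appearing in the statement, with $C$ coming out directly from the second norm inequality and $c$ requiring possibly a slight worsening of the cleanest bound the method provides.
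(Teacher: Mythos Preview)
Your proof is correct, and for the upper bound it is essentially identical to the paper's: the paper introduces the non-orthogonal projection $T_{X}(Q) = \tau_{X}(Q\sigma^{-1/2}R_{X^{c}})R_{X^{c}}^{-1}\sigma^{1/2}$, which is precisely your $\phi^{-1}(\tau_{X}(\phi(Q)))$, and then bounds $\|\Pi_{X}^{\perp}(Q)\|_{2} \le \|Q - T_{X}(Q)\|_{2}$ exactly as you do.

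For the lower bound the two arguments diverge. The paper again routes through $T_{X}$: it writes $\tau_{X}^{\perp}(\widetilde{Q}) = (Q - T_{X}(Q))\sigma^{-1/2}R_{X^{c}}$, then uses $Q - T_{X}(Q) = \Pi_{X}^{\perp}(Q) - T_{X}(\Pi_{X}^{\perp}(Q))$ and re-applies the earlier estimate on $\|\cdot - T_{X}(\cdot)\|_{2}$, which is what produces the extra factor $\|R_{X^{c}}^{-1}\sigma^{1/2}\|_{\infty}\|\sigma^{-1/2}R_{X^{c}}\|_{\infty}$ in $c$. Your argument is more direct: you push forward the optimal $\Pi_{X}(Q)$ via $\phi$ and compete in the variational formula for $\tau_{X}^{\perp}$, obtaining the sharper constant $\|\sigma^{-1/2}R_{X^{c}}\|_{\infty}$ in place of $c$. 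As you correctly note, the stated $c$ then follows by weakening via $\|R_{X^{c}}^{-1}\sigma^{1/2}\|_{\infty}\|\sigma^{-1/2}R_{X^{c}}\|_{\infty} \ge 1$. Your route is cleaner and strictly improves the constant; the paper's detour through $T_{X}$ is unnecessary for this direction.
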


\begin{proof}
First, let us define for each $X \subset \Lambda$ the bounded and linear operator  $T_{X}: \cB(\cH_{\Lambda}) \to \cB(\cH_{\Lambda})$ given by
\[ T_{X}(Q) = \tau_{X}(Q\sigma^{-1/2})\sigma^{1/2} \quad , \quad Q \in \cB(\cH_{\Lambda})\,.\]
It satisfies that $T_{X}(Q) \in W_{X}$ for every $Q \in \cB(\cH_{\Lambda})$, and moreover $T_{X}(Q) = Q$ whenever $Q \in W_{X}$. In other words, $T_{X}$ is a linear projection onto $W_{X}$, but different from $\Pi_{X}$ as it is not orthogonal. Note that for the prefixed invertible element $R_{X^{c}}  \in \cB(\cH_{X^{c}})$, we could rewrite $T_{X}(Q) = \tau_{X}(Q \sigma^{-1/2} R_{X^{c}}) R_{X^{c}}^{-1} \sigma^{1/2}$. Therefore, we can estimate
\begin{equation}\label{equa:comparingLocalTermsCPHaux1} 
\| Q - T_{X}(Q)\|_{2} = \| \tau_{X}^{\perp}(Q \sigma^{-1/2} R_{X^{c}}) R_{X^{c}}^{-1} \sigma^{1/2} \|_{2} \leq \| Q\|_{2} \|\sigma^{-1/2} R_{X^{c}} \|_{\infty}  \| R_{X^{c}}^{-1} \sigma^{1/2}\|_{\infty}\,. 
\end{equation}
With this operator and its properties in mind, let us prove \eqref{lemm:comparingLocalTermsCPH}. Note first, that we can rewrite
\[ \langle Q, \Pi_{X}^{\perp}(Q) \rangle_{HS} = \| \Pi_{X}^{\perp}(Q)\|_{2}^{2} = \|Q - \Pi_{X}(Q)\|_{2}^{2}\quad , \quad \langle \widetilde{Q}, \tau_{X}^{\perp}(\widetilde{Q}) \rangle_{HS} = \| \tau_{X}^{\perp}(\widetilde{Q})\|_{2}^{2}\,.  \]
Using that $\Pi_{X}^{\perp} T_{X} = 0$, we can estimate
\begin{align*} 
\| \Pi_{X}^{\perp}(Q)\|_{2} 
 =  \|  \Pi_{X}^{\perp}(Q - T_{X}(Q))\|_{2} 
\leq \| Q - T_{X}(Q) \|_{2} 
 & =  \|  \tau_{X}^{\perp}(\widetilde{Q}) R_{X^{c}}^{-1}\sigma^{1/2} \|_{2}\\[1.5mm]  
 & \leq \| R_{X^{c}}^{-1} \sigma^{1/2}\|_{\infty} \| \tau_{X}^{\perp}(\widetilde{Q}) \|_{2}  \,.
\end{align*}
On the other hand, using \eqref{equa:comparingLocalTermsCPHaux1}, 
\begin{align*}
\|\tau_{X}^{\perp}(\widetilde{Q})\|_{2} 
= \|(Q  - T_{X}(Q)) \sigma^{-1/2} R_{X^{c}}\|_{2} 
& \leq \| Q - T_{X}(Q)\|_{2} \cdot \| \sigma^{-1/2} R_{X^{c}}\|_{\infty}\\
& = \| Q - \Pi_{X}(Q) - T_{X}(Q - \Pi_{X}(Q))\|_{2} \cdot \| \sigma^{-1/2}R_{X^{c}}\|_{\infty}\\
& = \| \Pi_{X}^{\perp}(Q) - T_{X}(\Pi_{X}^{\perp}(Q))\|_{2} \cdot \| \sigma^{-1/2}R_{X^{c}}\|_{\infty}\\
& \leq \| \Pi_{X}^{\perp}(Q)\|_{2}  \|\sigma^{-1/2} R_{X^{c}} \|_{\infty}  \| R_{X^{c}}^{-1} \sigma^{1/2}\|_{\infty} \|\sigma^{-1/2} R_{X^{c}}\|_{\infty} \,.
\end{align*}
This concludes the proof.
\end{proof}

\begin{proof}[Proof of Theorem~\ref{Theo:roughBoundSpectralGap}]
Let us fix arbitrary $R_{X^{c}} \in \cB(\cH_{X^{c}})$ and $Q \in \mathcal{B}(\cH_\Lambda)$, denote $\widetilde{Q}:=Q\sigma^{-1/2} R_{X^{c}}$. Note that $R_{X^{c}} \in \mathcal{B}(\cH_{\{x\}^{c}})$ for every $x \in X$, which let us apply Lemma~\ref{lemm:comparingLocalTermsCPH} for each $x \in X$ to obtain
\[ 
\langle Q, \mathbf{H}_{X}(Q) \rangle_{HS}
=  \sum_{x \in X} \langle Q, \Pi_{x}^{\perp}(Q) \rangle_{HS}
\geq \frac{1}{c^2} \sum_{x \in X} \langle \widetilde{Q}, \tau_{x}^{\perp}(\widetilde{Q}) \rangle_{HS} \,.
\]
Using next that for the canonical purity Hamiltonian associated to the maximally mixed state, the gap of any $\sum_{x \in X} \tau_{x}^{\perp}$ is always one, and applying again Lemma~\ref{lemm:comparingLocalTermsCPH} we get
\begin{multline*}
\langle Q, \mathbf{H}_{X}(Q) \rangle_{HS} \geq \frac{1}{c^2}\langle \widetilde{Q}, \tau_{X}^{\perp}(\widetilde{Q}) \rangle_{HS}\\
\geq \frac{1}{c^2\cdot C^2} \langle Q, \Pi_{X}^{\perp}(Q) \rangle_{HS} = \frac{1}{\|R_{X^{c}}^{-1} \sigma^{1/2}\|_{\infty}^{4} \cdot \|\sigma^{-1/2} R_{X^{c}}\|_{\infty}^{4}} \cdot \langle Q, \Pi_{X}^{\perp}(Q) \rangle_{HS} \,.
\end{multline*}
Since $R_{X^{c}} \in \cB(\cH_{X^{c}})$ is an arbitrary invertible operator, we can take supremums to get the constant $\eta_{X}(\sigma)$, finishing the proof.
\end{proof}

\subsection{Gap in 1D}\label{sec:gap-1D}

To describe the periodic boundary conditions, we have to introduce further notation. For each natural $N$, let us denote by $\mathbb{S}_{N}$ the quotient $\mathbb{R}/\sim$ where we relate $x \sim x+N$ for every $x \in \mathbb{R}$. Note that we can identify $\mathbb{S}_{N} \equiv [0, N)$. Let us recall the notion of a (closed) \emph{interval} in $\mathbb{S}_{N}$. Given $x,y \in \mathbb{S}_{N}$ we denote by $d_{+}(x,y)$ the unique $0 \leq c < N$ such that $x+c \sim y$.  Then, we define the \emph{interval} $[a,b]$ as the set $\{ x \in \mathbb{S}_{N} \colon d_{+}(a, x) \leq d_{+}(a,b) \}$. 

Let us now consider a quantum spin system over the lattice ring $\Lambda_{N} \equiv \mathbb{Z}_{N} \subset \mathbb{S}_{N}$. In an abuse of notation, we will identify each interval $[a,b]$ with $[a,b] \cap \Lambda_{N}$. 

\begin{Assumption}\label{Assumption:1Dgap}
Let us assume we have a full-rank state $\sigma \in \cB(\cH_{\Lambda_N})$, as well as a positive non-increasing function $\delta:\mathbb{R} \to (0,1)$ satisfying the following condition:  Consider any partition $\Lambda_N = A I_{1} C I_{2}$ into four disjoint intervals as in Figure~\ref{fig:1DringSplitting}, where $I_{1}$ and $I_{2}$ shield $A$ from $C$, and $|I_{1}|, |I_{2}| \geq \ell$ for some $\ell \geq 1$. Then, for any $D \in \{\emptyset, I_{1},I_{2}\}$, it holds that
\begin{equation}\label{equa:decayMartingaleCondition1D}
\Delta_{\sigma}(A:C|D) \leq \delta(\ell) \,. 
\end{equation}
\end{Assumption}

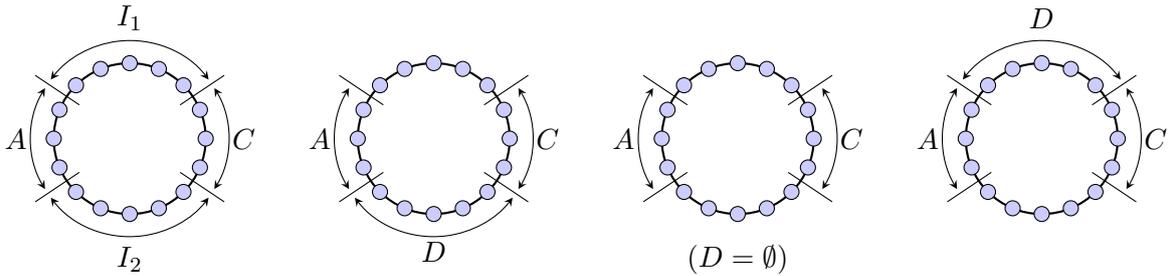
\begin{figure}[bht]
\centering
\begin{tikzpicture}
    \def\nodes{16}
    \def\radius{1}
    \def\angreg{34}


    \begin{scope} 
    \draw[thick] (0,0) circle(\radius);

    \foreach \i in {1,...,\nodes} {
        \pgfmathsetmacro{\angle}{360/\nodes * (\i - 1)}
        \node[draw, circle, fill=blue!20, inner sep=2pt] at (\angle:\radius) {};
    }


    \draw[thin] (-\angreg:{\radius-0.2}) -- (-\angreg:{\radius+0.5});

     \draw[thin] (\angreg:{\radius-0.2}) -- (\angreg:{\radius+0.5});

    \draw[thin] ({180-\angreg}:{\radius-0.2}) -- ({180-\angreg}:{\radius+0.5});

     \draw[thin] ({180+\angreg}:{\radius-0.2}) -- ({180+\angreg}:{\radius+0.5});


    \draw[ stealth-stealth] (-\angreg+3:{\radius+0.3}) arc(-\angreg+3:\angreg-3:{\radius+0.3});

    \node at ({\radius+0.5}, 0) {$C$};

     \draw[ stealth-stealth] (\angreg+3:{\radius+0.3}) arc(\angreg+3:{180-\angreg-3}:{\radius+0.3});

     \node at (0, {\radius+0.6}) {$I_{1}$};
     
    \draw[ stealth-stealth] ({180-\angreg+3}:{\radius+0.3}) arc({180-\angreg+3}:{180+\angreg-3}:{\radius+0.3});

    \node at ({-\radius-0.5}, 0) {$A$};

    \draw[ stealth-stealth] ({180+\angreg+3}:{\radius+0.3}) arc({180+\angreg+3}:{360-\angreg-3}:{\radius+0.3});

    \node at (0,{-\radius-0.6}) {$I_{2}$};
    \end{scope}


    \begin{scope}[xshift=4cm]  
    \draw[thick] (0,0) circle(\radius);

    \foreach \i in {1,...,\nodes} {
        \pgfmathsetmacro{\angle}{360/\nodes * (\i - 1)}
        \node[draw, circle, fill=blue!20, inner sep=2pt] at (\angle:\radius) {};
    }


    \draw[thin] (-\angreg:{\radius-0.2}) -- (-\angreg:{\radius+0.5});

     \draw[thin] (\angreg:{\radius-0.2}) -- (\angreg:{\radius+0.5});

    \draw[thin] ({180-\angreg}:{\radius-0.2}) -- ({180-\angreg}:{\radius+0.5});

     \draw[thin] ({180+\angreg}:{\radius-0.2}) -- ({180+\angreg}:{\radius+0.5});


    \draw[ stealth-stealth] (-\angreg+3:{\radius+0.3}) arc(-\angreg+3:\angreg-3:{\radius+0.3});

    \node at ({\radius+0.5}, 0) {$C$};

    \draw[ stealth-stealth] ({180-\angreg+3}:{\radius+0.3}) arc({180-\angreg+3}:{180+\angreg-3}:{\radius+0.3});

    \node at ({-\radius-0.5}, 0) {$A$};

    \draw[ stealth-stealth] ({180+\angreg+3}:{\radius+0.3}) arc({180+\angreg+3}:{360-\angreg-3}:{\radius+0.3});

    \node at (0,{-\radius-0.5}) {$D$};
    \end{scope}


    \begin{scope}[xshift=8cm] 
    \draw[thick] (0,0) circle(\radius);

    \foreach \i in {1,...,\nodes} {
        \pgfmathsetmacro{\angle}{360/\nodes * (\i - 1)}
        \node[draw, circle, fill=blue!20, inner sep=2pt] at (\angle:\radius) {};
    }


    \draw[thin] (-\angreg:{\radius-0.2}) -- (-\angreg:{\radius+0.5});

     \draw[thin] (\angreg:{\radius-0.2}) -- (\angreg:{\radius+0.5});

    \draw[thin] ({180-\angreg}:{\radius-0.2}) -- ({180-\angreg}:{\radius+0.5});

     \draw[thin] ({180+\angreg}:{\radius-0.2}) -- ({180+\angreg}:{\radius+0.5});


    \draw[ stealth-stealth] (-\angreg+3:{\radius+0.3}) arc(-\angreg+3:\angreg-3:{\radius+0.3});

    \node at ({\radius+0.5}, 0) {$C$};

    \draw[ stealth-stealth] ({180-\angreg+3}:{\radius+0.3}) arc({180-\angreg+3}:{180+\angreg-3}:{\radius+0.3});

    \node at ({-\radius-0.5}, 0) {$A$};

    \node at (0,{-\radius-0.6}) {$(D=\emptyset)$};

    \end{scope}


    \begin{scope}[xshift=12cm] 
    \draw[thick] (0,0) circle(\radius);

    \foreach \i in {1,...,\nodes} {
        \pgfmathsetmacro{\angle}{360/\nodes * (\i - 1)}
        \node[draw, circle, fill=blue!20, inner sep=2pt] at (\angle:\radius) {};
    }


    \draw[thin] (-\angreg:{\radius-0.2}) -- (-\angreg:{\radius+0.5});

     \draw[thin] (\angreg:{\radius-0.2}) -- (\angreg:{\radius+0.5});

    \draw[thin] ({180-\angreg}:{\radius-0.2}) -- ({180-\angreg}:{\radius+0.5});

     \draw[thin] ({180+\angreg}:{\radius-0.2}) -- ({180+\angreg}:{\radius+0.5});


    \draw[ stealth-stealth] (\angreg+3:{\radius+0.3}) arc(\angreg+3:{180-\angreg-3}:{\radius+0.3});

     \node at (0, {\radius+0.6}) {$D$};
    
    \draw[ stealth-stealth] (-\angreg+3:{\radius+0.3}) arc(-\angreg+3:\angreg-3:{\radius+0.3});

    \node at ({\radius+0.5}, 0) {$C$};

    \draw[ stealth-stealth] ({180-\angreg+3}:{\radius+0.3}) arc({180-\angreg+3}:{180+\angreg-3}:{\radius+0.3});

    \node at ({-\radius-0.5}, 0) {$A$};

    \end{scope}
    
    \end{tikzpicture}
    \caption{The first picture represent a splitting of the ring as in Assumption~\ref{Assumption:1Dgap}, whereas the other  pictures correspond to the three possible choices for $D$ when considering $\Delta_{\sigma}(A:C|D)$.}
    \label{fig:1DringSplitting}
\end{figure}

Under the previous condition, we can prove the following result. 

\begin{Theo}[1D models]\label{Theo:gapIntervals1D}
Let $\mathbf{H}_{\Lambda_N}$ be the canonical purified Hamiltonian of $\sigma$. For fixed integers $N \geq \mu \geq 9$, let us denote  $\delta_{k} = \delta(\lfloor \frac{\mu}{9}(9/8)^{k}\rfloor)$ for each $k \geq 0$. Then,
\begin{equation}
    \label{eq:gapIntervals1D}
\operatorname{gap}(\mathbf{H}_{\Lambda_{N}}) \geq e^{-5} \cdot \qty(\prod_{k=0}^{\infty} (1-\delta_{k}) ) \cdot \min\qty{ \frac{1}{\eta_I(\sigma)^4} \colon I \subset \Lambda_{N} \text{ interval with }|I| \leq \mu} .
\end{equation}
\end{Theo}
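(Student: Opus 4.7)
My proof plan is to iterate Lemma~\ref{Lemm:DivideAndConquer} along a geometric schedule of interval lengths, translating at each step the required martingale norm into the spatial-mixing quantity $\Delta_\sigma(A:C|D)$ via Theorem~\ref{Theo:martingaleVScorrelations} and bounding it by Assumption~\ref{Assumption:1Dgap}. The recursion terminates once intervals reach size at most $\mu$, where Theorem~\ref{Theo:roughBoundSpectralGap} supplies the base estimate $\gap(\mathbf{H}_I) \ge 1/\eta_I(\sigma)^4$.

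The top step is the only one that treats the full ring. I partition $\Lambda_N = A I_1 C I_2$ into four consecutive arcs with $|I_1|,|I_2| \ge \ell$, and set $L = I_2 A I_1$, $R = I_1 C I_2$; these are overlapping arcs with $L\cup R = \Lambda_N$ and $L\cap R = I_1\sqcup I_2$. Theorem~\ref{Theo:martingaleVScorrelations} with $D=\emptyset$ gives $\|\Pi_L \Pi_R - \Pi_{\Lambda_N}\| = \Delta_\sigma(A:C|\emptyset)\le \delta(\ell)$ from the case $D=\emptyset$ of the mixing assumption. One can in fact arrange several such splits on the ring with pairwise disjoint shields, giving $s$ pairs at the top level. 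At every subsequent level I operate on a proper sub-interval $X\subsetneq \Lambda_N$, splitting $X = ABC$ with shield $B$ of size $\ell$ and setting $D = \Lambda_N\setminus X$, itself an interval; here the case $D\ne\emptyset$ of Assumption~\ref{Assumption:1Dgap} yields $\Delta_\sigma(A:C|D)\le \delta(\ell)$, which, again by Theorem~\ref{Theo:martingaleVScorrelations}, controls the corresponding martingale norm. As above, multiple disjoint-shield splits of $X$ improve the combinatorial factor in the lemma.

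Scaling. At level $k$ (numbered from the smallest scale up) I use shield size $\ell_k = \lfloor (\mu/9)(9/8)^k\rfloor$, matching the exponent in the statement, and sub-intervals whose size is comparable to $\mu(9/8)^k$, with the geometric balance $2L_k - \ell_k \sim L_{k+1}$ governing the recursion. Collecting the factors across all levels yields
\[
\gap(\mathbf{H}_{\Lambda_N})\;\ge\; \prod_{k\ge 0}\frac{1-\delta_k}{1+1/s_k}\cdot \min_{|I|\le\mu}\frac{1}{\eta_I(\sigma)^4},
\]
and choosing the $s_k$ to grow with $k$ (larger parent intervals admit more disjoint shields of the prescribed size $\ell_k$) so that $\sum_k 1/s_k \le 5$, one obtains $\prod_k (1+1/s_k)^{-1} \ge e^{-5}$ from $\log(1+x)\le x$. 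Since $\delta_k \le 1$, extending the product to infinity (over $k$ beyond the finite number of recursion steps needed to reach size $\mu$) only weakens the bound, justifying the infinite product in \eqref{eq:gapIntervals1D}.

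The principal obstacle is the discrete bookkeeping: for every interval length appearing in the recursion one must exhibit a concrete configuration of $s_k$ disjoint shields of integer size $\ell_k$, giving sub-intervals that fall into the next level's size range, while verifying that the resulting decompositions $\Lambda_N = A B C D$ all lie in one of the three cases of Assumption~\ref{Assumption:1Dgap}. The hypothesis $\mu \ge 9$ is precisely what makes $\ell_0 \ge 1$ well-defined, and the ratio $9/8$ is what balances shield size against sub-interval size so that the combinatorial factors $(1 + 1/s_k)^{-1}$ remain summably close to $1$.
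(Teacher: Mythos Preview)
Your overall strategy is the right one and matches the paper's: a top step on the full ring using the $D=\emptyset$ case of Assumption~\ref{Assumption:1Dgap}, followed by a recursion on proper sub-intervals using the $D\in\{I_1,I_2\}$ case, terminating with Theorem~\ref{Theo:roughBoundSpectralGap}. The conversion of martingale norms into $\Delta_\sigma$ via Theorem~\ref{Theo:martingaleVScorrelations} is exactly how the paper proceeds.

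There is, however, a real arithmetic gap in your geometric scheme. You propose that at level $k$ the sub-intervals have size comparable to $\mu(9/8)^k$ \emph{and} the shields have size $\ell_k\sim(\mu/9)(9/8)^k$. But then the ratio of interval size to shield size stays bounded (about $9$), so the number $s_k$ of disjoint shields you can place is bounded uniformly in $k$. This is incompatible with your requirement $\sum_k 1/s_k\le 5$: with $s_k=O(1)$ the product $\prod_k(1+1/s_k)^{-1}$ decays like $c^{-K}$ over $K\sim\log_{9/8}(N/\mu)$ levels, i.e.\ polynomially in $N$, and the bound \eqref{eq:gapIntervals1D} fails. Your relation $2L_k-\ell_k\sim L_{k+1}$ (which is the $s=1$ splitting) gives the same symptom.

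The missing idea is that the interval sizes must grow at a rate \emph{strictly faster} than the shield sizes. The paper takes intervals of size at most $\mu(3/2)^k$ and $s_k=\lfloor(4/3)^k\rfloor$ pairs; the overlap of each pair then has size at least
\[
\Big\lfloor \tfrac{m}{6s_k}\Big\rfloor \;\ge\; \Big\lfloor \tfrac{\mu(3/2)^{k-1}}{6(4/3)^k}\Big\rfloor \;=\; \Big\lfloor \tfrac{\mu}{9}(9/8)^k\Big\rfloor,
\]
so the shield growth rate $9/8$ emerges as the \emph{ratio} $(3/2)/(4/3)$ of interval growth to pair-count growth, not as the interval growth rate itself. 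With $s_k\sim(4/3)^k$ one has $\sum_k 1/s_k<\infty$, and a short computation gives $\prod_k(1+1/s_k)^{-1}\ge e^{-5}$. (At the very top the paper simply uses $s=1$, contributing a factor $1/2$ which is absorbed into $e^{-5}$.) Once you adopt this two-rate scheme, the rest of your outline goes through.
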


\begin{Rema}\label{Rema:infiniteProductConvergence}
The constant $\prod_k (1-\delta_k)$ can be seen to be strictly positive
whenever $(\delta_k)_{k}$ is a summable sequence, since
\[
\prod_{k=0}^{\infty} (1-\delta_{k}) \ge \prod_{k=0}^{k_0-1}(1-\delta_k) \prod_{k=k_0}^{\infty} \frac{1}{1+2\delta_k} \ge \prod_{k=0}^{k_0-1}(1-\delta_k) \,\,  \exp(-2\sum_{k\ge k_0} \delta_k),
\]
where $k_0$ is the smallest value of $k$ such that $\delta_k < 1/2$ for every $k\ge k_0$.
\end{Rema}

\begin{proof}
We will use the floor $\lfloor \cdot \rfloor$ and ceiling $\lceil \cdot \rceil$ functions. We begin by considering a partition $\Lambda = A I_{1} C I_{2}$ as in Figure~\ref{fig:1DringSplitting} with sizes $|A|=|C|=1$, $|B_{1}| = \lfloor N/2 \rfloor -1$ and $|B_{2}| = \lceil N/2 \rceil-1$. Note that the identity $\lceil N/2 \rceil+ \lfloor N/2 \rfloor = N$ guarantees that such a partition is indeed possible. If we define the intervals $L = I_{2}AI_{1}$ and $R = I_{1}CI_{2}$, then Assumption~\ref{Assumption:1Dgap} yields that 
\[ \| \Pi_{L} \Pi_{R} -\Pi_{\Lambda_{N}} \| = \Delta_{\sigma}(A:C|\emptyset) \leq \delta(\lfloor N/2 \rfloor-1) \leq \delta_{0}.\]
Applying Lemma~\ref{Lemm:DivideAndConquer}, we obtain 
\begin{equation}\label{equa:gapIntervals1Daux1} 
\gap(\mathbf{H}_{\Lambda_{N}}) \geq \frac{1-\delta_{0}}{2} \min\{\gap(\mathbf{H}_{L}), \gap(\mathbf{H}_{R})\} \geq  \frac{1-\delta_{0}}{2} \min_{I \in \mathcal{F}}\gap(\mathbf{H}_{I})\,, 
\end{equation}
where $\mathcal{F}$ is the set of all intervals $I \subset \Lambda_{N}$ such that $|I| \leq \lceil N/2 \rceil + 1$. To estimate the last factor from below, we will adopt a similar strategy, recursively estimating the gap on a given interval in terms of the gaps of smaller subintervals. We start defining for each $k \geq 0$
\[
\mathcal{F}_{k} = \{ I \in \mathcal{F} \colon |I| \leq \mu (3/2)^{k}  \} \,.
\]
Observe that this is an increasing sequence $\mathcal{F}_{k} \subset \mathcal{F}_{k+1} \subset \mathcal{F}$ which eventually stabilizes, i.e. $\mathcal{F}_{k} = \mathcal{F}$ for some $k$. We claim that for every $k \geq 1$, denoting $s_{k}:=\lfloor (4/3)^{k} \rfloor$,
\begin{equation}\label{equa:gapIntervals1Daux2} 
\min_{I \in \mathcal{F}_{k}} \gap(\mathbf{H}_{I}) \geq \frac{1-\delta_{k}}{1+\frac{1}{s_{k}}} \min_{I' \in \mathcal{F}_{k-1}} \gap(\mathbf{H}_{I'})\,.
\end{equation}
To prove this, let us fix
$k\geq 1$ and an arbitrary $I \in \mathcal{F}_{k}$. If $I \in \mathcal{F}_{k-1}$, the inequality holds trivially, since the constant $\frac{1-\delta_{k}}{1+1/s_{k}}$ is smaller than one. Thus, we assume that $I \in \mathcal{F}_{k} \setminus \mathcal{F}_{k-1}$, implying
\[ \mu (3/2)^{k-1} < |I| \leq \mu (3/2)^{k}\,. \]
If we denote $|I| = m$, we can identify $I$ with the interval $[1,m]$. We will need the following easy estimate at several points of the argument
\begin{equation}\label{equa:gapIntervals1DAux} 
\frac{m}{6s_{k}} \geq \frac{\mu(3/2)^{k-1}}{6(4/3)^{k}} =  \frac{\mu}{9} \left( \frac{9}{8}\right)^{k} \geq 1\,. 
\end{equation}
Next, we define for each $j=1, \ldots, s_{k}$
\[ \textstyle
L_{j} = \left[1 \, , \,  \frac{m}{3} + (2j) \frac{m}{6 s_{k}}\right]  \quad , \quad R_{j} = \left[\frac{m}{3} + (2j-1) \frac{m}{6 s_{k}} \, , \, m\right]\,.
\]
If the endpoints are non-integers, we interpret the intervals as the set of integers (i.e. lattice points) they contain. We then have
\[ |L_{j}| \leq \frac{m}{3} + j \frac{m}{3 s_{k}} \leq m-\frac{m}{3}   \leq  \mu \left( \frac{3}{2}\right)^{k-1}\,, \]
and similarly
\[ |R_{j}| \leq \frac{2m}{3} - (2j-1)\frac{m}{6s_{k}}+1 \leq \frac{2m}{3} - \frac{m}{6s_{k}}+1  \leq \frac{2m}{3} \leq  \mu \left( \frac{3}{2}\right)^{k-1} \,, \]
where the third inequality follows from \eqref{equa:gapIntervals1DAux}. This shows that $L_{j}, R_{j} \in \mathcal{F}_{k-1}$ for every $j=1, \ldots, s_{k}$. On the other hand, the intersection of each pair $L_{j}, R_{j}$ is a new interval
\[ \textstyle L_{j} \cap R_{j} = \left[\frac{m}{3} + (2j-1) \frac{m}{6 s_{k}}, \frac{m}{3} + (2j) \frac{m}{6 s_{k}}\right]\,.  \]
Note that $(L_{i} \cap R_{i}) \cap (L_{j} \cap R_{j}) = \emptyset$ if $i \neq j$. Moreover, applying again \eqref{equa:gapIntervals1DAux} we can lower estimate
\[ |L_{j} \cap R_{j}| \geq \lfloor \tfrac{m}{6 s_{k}}\rfloor  \geq \lfloor \tfrac{\mu}{9}  \left( \tfrac{9}{8}\right)^{k}\rfloor \geq 1\,. \]
Each pair $(L_{j}, R_{j})$ yields a partition $\Lambda_{N} = A_{j}B_{j}C_{j}D_{j}$ into four disjoint consecutive intervals such that $I=A_{j}B_{j}C_{j}$, $L_{j} = A_{j}B_{j}$, $R_{j} = B_{j}C_{j}$ and so $B_{j} = L_{j} \cap R_{j}$. Moreover, $|B_{j}| \geq \lfloor \tfrac{m}{6 s_{k}}\rfloor $ and $|D_{j}| \geq N-m \geq \tfrac{m}{6} \geq \tfrac{m}{6s_{k}}$ since $m \leq \lceil N/2 \rceil + 1$ and $N \geq 9$. Therefore, by Assumption~\ref{Assumption:1Dgap}, we can estimate
\[ \| \Pi_{L_{j}} \Pi_{R_{j}} - \Pi_{I}\| = \Delta_{\sigma}(A_{j}:C_{j}|D_{j}) \leq \delta(\lfloor \tfrac{m}{6s_{k}} \rfloor) \leq \delta_{k}\,. \]
Applying Lemma~\ref{Lemm:DivideAndConquer}, and the fact that $L_{j}, R_{j} \in \mathcal{F}_{k-1}$ for every $j$, we get that
\begin{equation}\label{equa:gapIntervals1DAux3} 
\operatorname{gap}(\mathbf{H}_{I}) \geq \frac{1-\delta_{k}}{1 + \frac{1}{s_{k}}} \min_{I' \in \mathcal{F}_{k-1}}\operatorname{gap}(\mathbf{H}_{I'})\,. 
\end{equation}
This proves the claim. Thus, combining \eqref{equa:gapIntervals1Daux1} and \eqref{equa:gapIntervals1Daux2}, together with the fact that $\mathcal{F}_{k} = \mathcal{F}$ if $k$ is large enough, we deduce that
\[ 
\operatorname{gap}(\mathbf{H}_{\Lambda_{N}}) \geq  \qty(\prod_{k=0}^{\infty} \frac{1-\delta_{k}}{1+\frac{1}{s_{k}}} ) \cdot \min_{I \in \mathcal{F}_{0}} \operatorname{gap}(\mathbf{H}_{I}) .
\]
We can finish the proof by applying Theorem~\ref{Theo:roughBoundSpectralGap}, and noticing that
\[
\prod_{k=0}^{\infty} \frac{1}{1+\frac{1}{s_{k}}} \geq \exp(-\sum_{k=0}^{\infty} \frac{1}{s_{k}}) \ge \exp( - \sum_{k=0}^6 \frac{1}{s_k} - \sum_{k=7}^\infty \qty(\frac{3}{4})^{k-1} ) \ge \exp(-5),
\]
where we used the fact that $s_k \ge (4/3)^{k-1}$ when $k \ge 7$.
\end{proof}

\subsection{Gap in 2D}\label{sec:gap-2D}

We will take $\Lambda_{N}$ (the set where the \emph{spins} of the system are located) as the set of midpoints of the edges $\cE_{N}$ of the square lattice $\mathbb{Z}_{N} \times \mathbb{Z}_{N}$ on the torus $\mathbb{S}_{N} \times \mathbb{S}_{N}$, since this is the setting in which the Quantum Double Models are defined (see Figure~\ref{Fig:squareLatticeFirstRep}). We will identify each point of $\Lambda_{N}$ with the corresponding edge from $\cE_{N}$ in the forthcoming figures.

\begin{figure}[ht] 
\centering
  \begin{tikzpicture}[equation,scale=0.5]
    \draw[gray, thin] (0,0) grid (6,6);
    \draw[gray,thick, postaction=torus horizontal] (0,0) -- (6,0);
    \draw[gray,thick, postaction=torus horizontal] (0,6) -- (6,6);
    \draw[gray,thick, postaction=torus vertical] (0,0) -- (0,6);
    \draw[gray,thick, postaction=torus vertical] (6,0) -- (6,6);
    \end{tikzpicture}
    \hspace{2cm}
    \begin{tikzpicture}[equation,scale=0.5]
    \draw[gray, thin] (0,0) grid (6,6);
    \draw[gray,thick, postaction=torus horizontal] (0,0) -- (6,0);
    \draw[gray,thick, postaction=torus horizontal] (0,6) -- (6,6);
    \draw[gray,thick, postaction=torus vertical] (0,0) -- (0,6);
    \draw[gray,thick, postaction=torus vertical] (6,0) -- (6,6);
    \foreach \x in {0,...,5}
        \foreach \y in {0,...,6}
            \node[draw, circle, fill=blue!20, inner sep=1.5pt] at (0.5+\x,\y) {};
             \foreach \x in {0,...,6}
        \foreach \y in {0,...,5}
            \node[draw, circle, fill=blue!20, inner sep=1.5pt] at (\x,0.5+\y) {};
    \end{tikzpicture}\\[2mm]
\caption{The square lattice on a torus (left) and a quantum spin system with spins located at the midpoints of the edges (right). The markings along the borders of both squares indicate the pairwise identification of edges, following a standard topological representation of the torus. A similar convention will be used for the cylinder, where only one pair of opposite edges is identified.}
\label{Fig:squareLatticeFirstRep}
\end{figure}
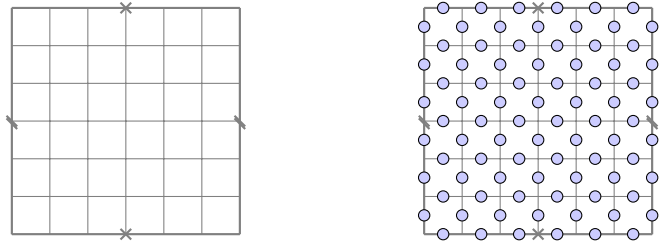

A \emph{proper rectangle} $\cR$ in $\mathbb{S}_{N} \times \mathbb{S}_{N}$ is a Cartesian product of intervals $\cR=[a_{1}, b_{1}] \times [a_{2}, b_{2}]$ where $a_{1}, b_{1}, a_{2}, c_{2} \in \mathbb{Z}_{N}$ (we will only work with intervals having endpoints in $\mathbb{Z}_{N}$). In this case, its number of plaquettes per row is then $d_{+}(a_{1},b_1)$ and per column is $d_{+}(a_{2},b_2)$. Shortly, we will say that $\cR$ has \emph{dimensions} $d_{+}(a_{1},b_1)$ and $d_{+}(a_{2},b_2)$. Note that both dimensions are less than or equal to $N-1$. A \emph{cylinder} is a Cartesian product of the form $\mathbb{S}_{N} \times [a,b]$ or $[a,b] \times \mathbb{S}_{N}$. It has dimensions $d_{+}(a,b) \leq N-1$ and $N$. We will use the term \emph{rectangle} to refer to proper rectangles, cylinders or the whole torus $\mathbb{S}_{N} \times \mathbb{S}_{N}$, which has both dimensions equal to $N$. In an abuse of notation, we will identify $\cR$ with $\cR \cap \Lambda_N$ and write $\cR \subset \Lambda_N$ (see Figure~\ref{Figure:rectanglesTorus}). 

\begin{figure}[ht]
\centering
\begin{tikzpicture}[equation,scale=0.4]
    \draw[gray, thin] (0,0) grid (6,6);
    \draw[gray,thick, postaction=torus horizontal] (0,0) -- (6,0);
    \draw[gray,thick, postaction=torus horizontal] (0,6) -- (6,6);
    \draw[gray,thick, postaction=torus vertical] (0,0) -- (0,6);
    \draw[gray,thick, postaction=torus vertical] (6,0) -- (6,6);
    \draw[black, ultra thick] (2,2) grid (5,4);
    \end{tikzpicture}
    \hspace{1cm} 
    \begin{tikzpicture}[equation,scale=0.4]
    \draw[gray, thin] (0,0) grid (6,6);
    \draw[gray,thick, postaction=torus horizontal] (0,0) -- (6,0);
    \draw[gray,thick, postaction=torus horizontal] (0,6) -- (6,6);
    \draw[gray,thick, postaction=torus vertical] (0,0) -- (0,6);
    \draw[gray,thick, postaction=torus vertical] (6,0) -- (6,6);
     \foreach \x in {2,...,5}
        \foreach \y in {2,...,3}
            \node[draw, circle, fill=blue!20, inner sep=1.5pt] at (\x,\y+0.5) {};
    \foreach \x in {2,...,4}
        \foreach \y in {2,...,4}
        \node[draw, circle, fill=blue!20, inner sep=1.5pt] at (\x+0.5,\y) {};
    \end{tikzpicture}
    \hspace{1.5cm} 
    \begin{tikzpicture}[equation,scale=0.4]
    \draw[gray, thin] (0,0) grid (6,6);
    \draw[gray,thick, postaction=torus horizontal] (0,0) -- (6,0);
    \draw[gray,thick, postaction=torus horizontal] (0,6) -- (6,6);
    \draw[gray,thick, postaction=torus vertical] (0,0) -- (0,6);
    \draw[gray,thick, postaction=torus vertical] (6,0) -- (6,6);
    \draw[black, ultra thick] (2,4) grid (5,6);
    \draw[black, ultra thick] (2,0) grid (5,1);
    \end{tikzpicture}
    \hspace{1cm} 
    \begin{tikzpicture}[equation,scale=0.4]
    \draw[gray, thin] (0,0) grid (6,6);
    \draw[gray,thick, postaction=torus horizontal] (0,0) -- (6,0);
    \draw[gray,thick, postaction=torus horizontal] (0,6) -- (6,6);
    \draw[gray,thick, postaction=torus vertical] (0,0) -- (0,6);
    \draw[gray,thick, postaction=torus vertical] (6,0) -- (6,6);
     \foreach \x in {2,...,5}
        \foreach \y in {4,...,5}
            \node[draw, circle, fill=blue!20, inner sep=1.5pt] at (\x,\y+0.5) {};
    \foreach \x in {2,...,4}
        \foreach \y in {4,...,6}
        \node[draw, circle, fill=blue!20, inner sep=1.5pt] at (\x+0.5,\y) {};
    \foreach \x in {2,...,5}
        \foreach \y in {0}
            \node[draw, circle, fill=blue!20, inner sep=1.5pt] at (\x,\y+0.5) {};
    \foreach \x in {2,...,4}
        \foreach \y in {0,...,1}
        \node[draw, circle, fill=blue!20, inner sep=1.5pt] at (\x+0.5,\y) {};    
    \end{tikzpicture}\\[10mm]

    \begin{tikzpicture}[equation,scale=0.4]
    \draw[gray, thin] (0,0) grid (6,6);
    \draw[gray,thick, postaction=torus horizontal] (0,0) -- (6,0);
    \draw[gray,thick, postaction=torus horizontal] (0,6) -- (6,6);
    \draw[gray,thick, postaction=torus vertical] (0,0) -- (0,6);
    \draw[gray,thick, postaction=torus vertical] (6,0) -- (6,6);
    \draw[black, ultra thick] (0,1) grid (6,3);
    \end{tikzpicture}
    \hspace{1cm}
    \begin{tikzpicture}[equation,scale=0.4]
    \draw[gray, thin] (0,0) grid (6,6);
    \draw[gray,thick, postaction=torus horizontal] (0,0) -- (6,0);
    \draw[gray,thick, postaction=torus horizontal] (0,6) -- (6,6);
    \draw[gray,thick, postaction=torus vertical] (0,0) -- (0,6);
    \draw[gray,thick, postaction=torus vertical] (6,0) -- (6,6);
    \foreach \x in {0,...,6}
        \foreach \y in {1,...,2}
            \node[draw, circle, fill=blue!20, inner sep=1.5pt] at (\x,\y+0.5) {};
    \foreach \x in {0,...,5}
        \foreach \y in {1,...,3}
        \node[draw, circle, fill=blue!20, inner sep=1.5pt] at (\x+0.5,\y) {}; 
    \end{tikzpicture}
    \hspace{1.5cm}
    \begin{tikzpicture}[equation,scale=0.4]
    \draw[gray] (0,0) grid (6,6);
    \draw[gray,thick, postaction=torus horizontal] (0,0) -- (6,0);
    \draw[gray,thick, postaction=torus horizontal] (0,6) -- (6,6);
    \draw[gray,thick, postaction=torus vertical] (0,0) -- (0,6);
    \draw[gray,thick, postaction=torus vertical] (6,0) -- (6,6);
    \draw[black, ultra thick] (3,0) grid (5,6);
    \end{tikzpicture}
    \hspace{1cm}
    \begin{tikzpicture}[equation,scale=0.4]
    \draw[gray] (0,0) grid (6,6);
    \draw[gray,thick, postaction=torus horizontal] (0,0) -- (6,0);
    \draw[gray,thick, postaction=torus horizontal] (0,6) -- (6,6);
    \draw[gray,thick, postaction=torus vertical] (0,0) -- (0,6);
    \draw[gray,thick, postaction=torus vertical] (6,0) -- (6,6);
    \foreach \x in {3,...,5}
        \foreach \y in {0,...,5}
            \node[draw, circle, fill=blue!20, inner sep=1.5pt] at (\x,\y+0.5) {};
    \foreach \x in {3,...,4}
        \foreach \y in {0,...,6}
        \node[draw, circle, fill=blue!20, inner sep=1.5pt] at (\x+0.5,\y) {}; 
    \end{tikzpicture}\\[2mm]
    
\caption{Examples of rectangular regions within the square lattice $\Lambda_{N} \equiv \cE_{N}$. The top row shows two examples of proper rectangles, while the bottom row displays two examples of cylinders. In each case, the left image highlights the edges within the region, and the right image highlights the corresponding spins. In the figures that follow, we will adopt the edge-based (left) representation.}
\label{Figure:rectanglesTorus}
\end{figure}
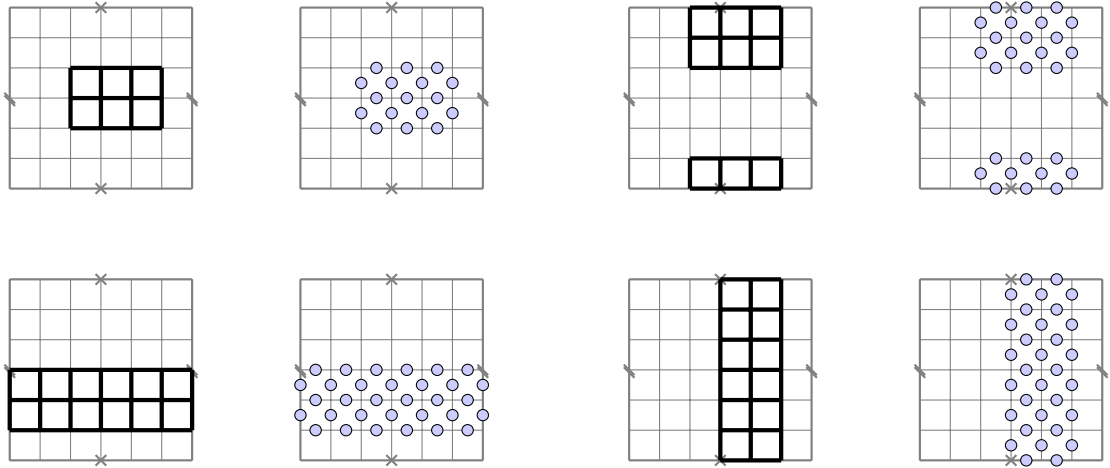

We will say that a rectangle with dimensions $d_{1}$ and $d_{2}$ is \emph{admissible} if it satisfies $d_{1} \leq 6d_{2}$ and $d_{2} \leq 6d_{1}$. This condition ensures that neither dimension is excessively longer than the other; in other words, admissible rectangles are not too narrow or too wide. We will denote by $\mathcal{F}$ the set of all proper rectangles that are admissible. If $\mu>0$ is an integer, we will denote by $\mathcal{F}_{\mu}$ the subset of admissible proper rectangles having both dimensions less than or equal to $\mu$.

 

\begin{Assumption}\label{Assumtion2Dtorus}\mbox{}
Let us assume we have an invertible state $\sigma \in \mathcal{B}(\mathcal{H}_{\Lambda_N})$, as well as a positive non-increasing function $\delta: \mathbb{R} \to (0,1)$ satisfying the following conditions for every $\ell \geq 2$: 
\begin{enumerate}
\item[($i$)] Consider any partition of the torus $\Lambda_{N}$ along either of its two wrapping directions into four disjoint subsets, denoted $\Lambda_{N}=AB_{1}CB_{2}$, as illustrated in the next picture:

\[
\begin{tikzpicture}[equation,scale=0.3]
\begin{scope}
    \filldraw[red!20] (0,0) rectangle (2,8);
    \filldraw[blue!20] (4,0) rectangle (6,8);
    \draw[gray, thin] (0,0) grid (8,8);
    \draw[gray,thick, postaction=torus horizontal] (0,0) -- (8,0);
    \draw[gray,thick, postaction=torus horizontal] (0,8) -- (8,8);
    \draw[gray,thick, postaction=torus vertical] (0,0) -- (0,8);
    \draw[gray,thick, postaction=torus vertical] (8,0) -- (8,8);
    \draw[black, thick] (0,0) rectangle (2,8);
    \node at (1, 4) {$A$};
    \draw[black, thick] (2,0) rectangle (4,8);
    \node at (3, 4) {$B_{1}$};
    \draw[black, thick] (4,0) rectangle (6,8);
    \node at (5, 4) {$C$};
    \draw[black, thick] (6,0) rectangle (8,8);
    \node at (7, 4) {$B_{2}$};
\end{scope}
\begin{scope}[xshift=15cm]
    \draw[step=1,gray,thin] (0.1,0) grid (1.9,8);
    \draw[thick, gray, postaction=torus horizontal] (0.1,0) -- (1.9,0); 
    \draw[thick, gray, postaction=torus horizontal] (0.1,8) -- (1.9,8);
    \draw (1,4.5) node{\small $A$};
    \draw[|-|] (0,-0.5) -- (1, -0.5) node[below, black] {\small $ \geq 2$} -- (2,-0.5);
    
    \draw[step=1,gray,thin,xshift=2cm] (2,0) grid (4,8);
    \draw[thick, gray, postaction=torus horizontal,xshift=2cm] (2,0) -- (4,0); 
    \draw[thick, gray, postaction=torus horizontal,xshift=2cm] (2,8) -- (4,8); 
    \draw[black,thick,xshift=2cm] (2,0) rectangle (4,8);
    \draw[xshift=2cm] (3,4.5) node{\small $B_1$};
    \draw[|-|,xshift=2cm] (2,-0.5) -- (3, -0.5)   node[below, black] {\small $ \geq \ell$} -- (4,-0.5);
    
    \draw[step=1,gray,thin,xshift=4cm] (4.1,0) grid (5.9,8);
    \draw[thick, gray, postaction=torus horizontal,xshift=4cm] (4.1,0) -- (5.9,0); 
    \draw[thick, gray, postaction=torus horizontal,xshift=4cm] (4.1,8) -- (5.9,8); 
    \draw[xshift=4cm] (5,4.5) node{\small $C$};
    \draw[|-|,xshift=4cm] (4,-0.5) -- (5, -0.5)   node[below, black] {\small $ \geq 2$} -- (6,-0.5);

    \draw[step=1,gray,thin,xshift=6cm] (6,0) grid (8,8);
    \draw[thick, gray, postaction=torus horizontal,xshift=6cm] (6,0) -- (8,0); 
    \draw[thick, gray, postaction=torus horizontal,xshift=6cm] (6,8) -- (8,8); 
    \draw[xshift=6cm] (7,4.5) node{\small $B_2$};
    \draw[|-|,xshift=6cm] (6,-0.5) -- (7, -0.5) node[below, black] {\small $ \geq \ell$} -- (8,-0.5);
    \draw[black,thick,xshift=6cm] (6,0) rectangle (8,8);
\end{scope}
    \end{tikzpicture}
\]
In this decomposition, the regions $A$, $B_1$, $C$, and $B_2$ are arranged sequentially along the chosen direction; the regions $\mathcal{R}_1 := B_2AB_1$ and $\mathcal{R}_2 := B_1CB_2$ each form a cylinder with their intersection $B=B_{1}B_{2}$ consisting of two cylinders $B_1$ and $B_2$ separated by at least two plaquettes; and each of $B_1$ and $B_2$ contains at least $\ell$ plaquettes along the splitting direction. Then, it holds that
\[ \Delta_{\sigma}(A:C|D=\emptyset) \leq \delta(\ell)\,. \]

\item[($ii$)] Consider any cylinder $\mathcal{R}$ whose smaller dimension is larger than or equal to $\ell$, and partition it along its wrapping direction into four disjoint subsets, denoted $\mathcal{R} = AB_1CB_2$, as illustrated in the next picture where $D=\Lambda_{N} \setminus \mathcal{R}$: 
\[
\begin{tikzpicture}[scale=0.3]
\begin{scope}
    \filldraw[yellow!20] (0,0) rectangle (8,2);
    \filldraw[yellow!20] (0,6) rectangle (8,8);
    \filldraw[red!20] (0,2) rectangle (2,6);
    \filldraw[blue!20] (4,2) rectangle (6,6);
    \draw[gray, thin] (0,0) grid (8,8);
    \draw[gray,thick, postaction=torus horizontal] (0,0) -- (8,0);
    \draw[gray,thick, postaction=torus horizontal] (0,8) -- (8,8);
    \draw[gray,thick, postaction=torus vertical] (0,0) -- (0,8);
    \draw[gray,thick, postaction=torus vertical] (8,0) -- (8,8);
    \draw[black, thick] (0,2) rectangle (2,6);
    \node at (1, 4) {$A$};
    \draw[black, thick] (2,2) rectangle (4,6);
    \node at (3, 4) {$B_{1}$};
    \draw[black, thick] (4,2) rectangle (6,6);
    \node at (5, 4) {$C$};
    \draw[black, thick] (6,2) rectangle (8,6);
    \node at (7, 4) {$B_{2}$};
    \node at (4, 7) {$D$};
    \end{scope}
\begin{scope}[xshift=17cm, yshift=2cm]
\draw[step=1,white!50!black,thin] (0.1,0) grid (1.9,4);
\draw (1,1.5) node{\small $A$};
\draw[|-|] (0,-0.5) -- (1, -0.5)   node[below, black] {\small $\geq 2$} -- (2,-0.5);
\draw[step=1,white!50!black,thin,xshift=2cm] (2,0) grid (4,4);
\draw[xshift=2cm] (3,1.5) node{\small $B_1$};
\draw[black,thick,xshift=2cm] (2,0) rectangle (4,4);
\draw[|-|,xshift=2cm] (2,-0.5) -- (3, -0.5)   node[below, black] {\small $\geq \ell$} -- (4,-0.5);
\draw[step=1,white!50!black,thin,xshift=4cm] (4.1,0) grid (5.9,4);
\draw[xshift=4cm] (5,1.5) node{\small $C$};
\draw[|-|,xshift=4cm] (4,-0.5) -- (5, -0.5)   node[below, black] {\small $\geq 2$} -- (6,-0.5);
\draw[step=1,white!50!black,thin,xshift=6cm] (6,0) grid (8,4);
\draw[xshift=6cm] (7,1.5) node{\small $B_2$};
\draw[black,thick,xshift=6cm] (6,0) rectangle (8,4);
\draw[|-|,xshift=6cm] (6,-0.5) -- (7, -0.5) node[below, black] {\small $\geq \ell$} -- (8,-0.5);
\draw[|-|,xshift=7.5cm] (8,0) -- (8,2)   node[right, black] {\small $\geq \ell$} -- (8,4);
\end{scope}
\end{tikzpicture}
\]
In this decomposition, the regions $A$, $B_1$, $C$, and $B_2$ are arranged sequentially along the wrapping direction; the regions $\mathcal{R}_1 := B_2AB_1$ and $\mathcal{R}_2 := B_1CB_2$ are proper rectangles  with intersection $B=B_{1}B_{2}$ consisting of two proper rectangles $B_1$ and $B_2$ separated by at least two plaquettes; and each $B_{1}$ and $B_{2}$ contain at least $\ell$ plaquettes along both coordinate directions. Then, it holds that
\[ \Delta(A:C|D) \leq \delta(\ell)\,. \]

\item[($iii$)] Consider any proper rectangle $\mathcal{R}$ with dimensions less than or equal to $N-\ell$, and partition it along the coordinate direction of its larger dimension into three disjoint regions, denoted $\mathcal{R} = ABC$, as illustrated in the next picture, where $D:=\Lambda_{N} \setminus \mathcal{R}$:  

\[
\begin{tikzpicture}[scale=0.3]
\begin{scope}
    \filldraw[yellow!20] (0,0) rectangle (8,8);
    \filldraw[white] (1,2) rectangle (7,6);
    \filldraw[red!20] (1,2) rectangle (3,6);
    \filldraw[blue!20] (5,2) rectangle (7,6);
    \draw[gray, thin] (0,0) grid (8,8);
    \draw[gray,thick, postaction=torus horizontal] (0,0) -- (8,0);
    \draw[gray,thick, postaction=torus horizontal] (0,8) -- (8,8);
    \draw[gray,thick, postaction=torus vertical] (0,0) -- (0,8);
    \draw[gray,thick, postaction=torus vertical] (8,0) -- (8,8);
    \draw[black, thick] (1,2) rectangle (3,6);
    \node at (2, 4) {$A$};
    \draw[black, thick] (3,2) rectangle (5,6);
    \node at (4, 4) {$B$};
    \draw[black, thick] (5,2) rectangle (7,6);
    \node at (6, 4) {$C$};
    \node at (4, 7) {$D$};
\end{scope}
\begin{scope}[xshift=17cm, yshift=2cm]
     \draw[gray] (0,0) grid (1.9,3);
     \draw (1,1.5) node{\small $A$};
     \draw[|-|] (0,-0.5) -- (1, -0.5) 
    node[below, black] {\small $ \geq 1$} -- (2,-0.5);
     \draw[gray,xshift=2cm] (2,0) grid (5,3);
     \draw[xshift=2cm] (3.5,1.5) node{\small $B$};
     \draw[|-|,xshift=2cm] (2,-0.5) -- (3.5, -0.5) 
    node[below, black] {\small $\geq \ell$} -- (5,-0.5);
     \draw[gray, xshift=4cm] (5.1,0) grid (7,3);
      \draw[xshift=4cm] (6,1.5) node{\small $C$};
     \draw[|-|,xshift=4cm] (5,-0.5) -- (6, -0.5) 
    node[below, black] {\small $\geq 1$} -- (7,-0.5);
    \draw[|-|,xshift=5.5cm] (7,0) -- (7, 1.5) 
    node[right, black] {\small $\geq \ell$} -- (7,3);
     \draw[|-|] (0,3.8) -- (5.5, 3.8) 
    node[above, black] {\small $\leq N-\ell$} -- (11,3.8);
    \end{scope}
\end{tikzpicture}
\]
In this decomposition, the regions $A$, $B$, and $C$ are arranged sequentially along the splitting direction, the regions $\mathcal{R}_1 := AB$ and $\mathcal{R}_2 := BC$ are proper rectangles with intersection $B = \mathcal{R}_1 \cap \mathcal{R}_2$ being a proper rectangle that contains at least $\ell$ plaquettes along both coordinate directions. Then, it holds that
\[  \Delta_{\sigma}(A:C|D) \leq \delta(\ell)\,. \]
\end{enumerate}
\end{Assumption}

Under the above conditions, we can prove the following.

\begin{Theo}\label{Theo:gap2}
Let  $\mathbf{H}_{\Lambda_N}$ be canonical purified Hamiltonian of $\sigma$. For fixed integers $N \geq \mu \geq 2^{8}$, let us denote  $\delta_{k}:= \delta(\lfloor \tfrac{\sqrt{\mu}}{8}(9/8)^{k/2}\rfloor\,)$  for each $k \geq 0$. Then,
\begin{equation}\label{equa:TheoGap2}
\gap(\mathbf{H}_{\Lambda_N}) \, \geq \,e^{-11}(1-\delta_{0})^2 \left(\, \prod_{k=1}^{\infty} (1-\delta_{k})\right) \, \inf\left\{ \frac{1}{\eta_{\cR}(\sigma)^{4}} \colon \cR \in \mathcal{F}_{\mu} \right\}\,. 
\end{equation}
\end{Theo}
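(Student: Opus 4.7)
The plan is to mirror the structure of Theorem~\ref{Theo:gapIntervals1D}, but with three nested reductions reflecting the three items of Assumption~\ref{Assumtion2Dtorus}. The overall strategy is to recursively apply Lemma~\ref{Lemm:DivideAndConquer}: first using item~(i) to reduce from the full torus to cylinders, then item~(ii) to reduce from cylinders to admissible proper rectangles, and finally item~(iii) iteratively on proper rectangles, terminating when both dimensions fall below $\mu$ and invoking Theorem~\ref{Theo:roughBoundSpectralGap} on the remainder.

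First I would split $\Lambda_N = A B_1 C B_2$ along one of the two wrapping directions of the torus with $|A|=|C|=1$ and $|B_1|,|B_2|$ as balanced as possible (each at least $\lfloor N/2\rfloor - 1$, which is $\ge \sqrt{\mu}/8$ by the hypothesis $N \ge \mu \ge 2^8$). Taking $L = B_2 A B_1$ and $R = B_1 C B_2$, both cylinders whose union is $\Lambda_N$, item~(i) of Assumption~\ref{Assumtion2Dtorus} gives $\|\Pi_L \Pi_R - \Pi_{\Lambda_N}\| \le \delta_0$ and Lemma~\ref{Lemm:DivideAndConquer} yields a factor $(1-\delta_0)/2$, reducing the problem to $\gap(\mathbf{H}_{\mathcal{R}})$ for cylinders $\mathcal{R}$. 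Repeating the same argument on an arbitrary cylinder, but now invoking item~(ii) for the bound $\|\Pi_L \Pi_R - \Pi_{\mathcal{R}}\| \le \delta_0$, produces another factor $(1-\delta_0)/2$ and reduces the problem to admissible proper rectangles. These two initial reductions account for the $(1-\delta_0)^2$ prefactor in \eqref{equa:TheoGap2} together with two of the multiplicative $1/2$ losses absorbed into $e^{-11}$.

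The third and most intricate step is a 2D analogue of the recursion from Theorem~\ref{Theo:gapIntervals1D} applied to proper rectangles. I would define a nested family $\mathcal{F}_\mu = \mathcal{F}_0 \subset \mathcal{F}_1 \subset \cdots \subset \mathcal{F}$ of admissible proper rectangles, where $\mathcal{F}_k$ consists of rectangles whose largest dimension is at most of order $\mu (3/2)^{k/2}$. The square root captures the essentially 2D feature that each split only shrinks one of the two dimensions. For any $\mathcal{R} \in \mathcal{F}_k \setminus \mathcal{F}_{k-1}$, split $\mathcal{R} = ABC$ along its longer dimension using $s_k \sim \lfloor (4/3)^{k/2} \rfloor$ overlapping choices of the middle region $B$, each of width at least $\lfloor \sqrt{\mu}(9/8)^{k/2}/8 \rfloor$. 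Item~(iii) then yields $\|\Pi_{AB}\Pi_{BC} - \Pi_{\mathcal{R}}\| \le \delta_k$, and Lemma~\ref{Lemm:DivideAndConquer} contributes the factor $(1-\delta_k)/(1+1/s_k)$. Admissibility of the halves is preserved because splitting the longer dimension by a constant factor leaves both ratio constraints $d_1 \le 6 d_2$, $d_2 \le 6 d_1$ intact. After finitely many iterations every rectangle falls into $\mathcal{F}_\mu$, and Theorem~\ref{Theo:roughBoundSpectralGap} provides the terminal estimate $1/\eta_{\mathcal{R}}(\sigma)^4$.

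The main obstacle is the 2D bookkeeping needed to simultaneously (a) preserve admissibility through every split and alternate between horizontal and vertical splits in a way consistent with the definition of $\mathcal{F}_k$, (b) guarantee that the prescribed overlap width $\ell_k$ fits inside the rectangle being split, and (c) satisfy the geometric precondition of item~(iii) that $\mathcal{R}$ has both dimensions at most $N - \ell$ so that the external region $D = \Lambda_N \setminus \mathcal{R}$ is wide enough (this uses $N \ge \mu \ge 2^8$). Once these geometric constraints are verified, collecting the factors $(1-\delta_k)/(1+1/s_k)$ across all levels, together with the two $(1-\delta_0)/2$ factors from the initial torus-to-cylinder and cylinder-to-rectangle reductions, and bounding $\prod_k (1+1/s_k)^{-1}$ as in the proof of Theorem~\ref{Theo:gapIntervals1D} (cf.~Remark~\ref{Rema:infiniteProductConvergence}) yields the claimed constant $e^{-11}$, the $(1-\delta_0)^2$ factor, and the infinite product $\prod_{k \ge 1}(1-\delta_k)$.
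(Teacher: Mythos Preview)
Your overall three-stage architecture---torus $\to$ cylinder $\to$ proper rectangle $\to$ recursion---matches the paper, but two specific choices break the argument.

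First, taking $|A|=|C|=1$ in the initial splits is too aggressive. After two such splits you land on proper rectangles of dimensions $(N-1)\times(N-1)$, yet Assumption~\ref{Assumtion2Dtorus}(iii) only applies to rectangles with both dimensions $\le N-\ell$ for the $\ell\ge 2$ you want to use; the complement $D$ is then a one-plaquette strip and cannot absorb the required separation. So the very first invocation of (iii) is illegal, and your claim that ``this uses $N\ge\mu\ge2^8$'' does not help: the dimensions stay $N-1$ no matter how large $N$ is. The paper avoids this by making $A,B_1,C,B_2$ each of size roughly $N/4$ in the first two reductions, so the resulting proper rectangles have dimensions at most $d_{\max}=N-\lfloor N/4\rfloor$. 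This leaves a buffer of $\lfloor N/4\rfloor$ in the complement, large enough to accommodate every overlap $\ell_k$ that arises later in the recursion.

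Second, indexing the rectangle recursion by the \emph{largest} dimension does not give a strictly decreasing filtration. Take a square of side $M$: splitting its longer side by a factor $\le 2/3$ leaves the other side at $M$, so the maximum dimension of each piece is still $M$ and nothing drops a level. The paper sidesteps this by defining $\mathcal{G}_k$ via the \emph{product} $d_1 d_2 \le \mu(3/2)^k$. Splitting the longer side shrinks the product by a factor $\le 2/3$, so the pieces lie in $\mathcal{G}_{k-1}$ regardless of which side is now longer. The lower bound $d_1 d_2>\mu(3/2)^{k-1}$, combined with admissibility and $d_1\ge d_2$, then gives $d_1>\sqrt{\mu}\,(3/2)^{(k-1)/2}$; this is exactly what produces the overlap scale $\ell_k=\lfloor d_1/(6s_k)\rfloor\ge\lfloor\tfrac{\sqrt{\mu}}{8}(9/8)^{k/2}\rfloor$ and justifies the definition of $\delta_k$.
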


We can make the same consideration for the infinite product appearing in the above expression that we made in Remark~\ref{Rema:infiniteProductConvergence}.

\begin{proof}
We can argue as in \cite[Theorem 2.5]{Lucia2023}, but dealing with admissible rectangles. We will use the floor $\lfloor \cdot \rfloor$ and ceiling $\lceil \cdot \rceil$ functions. First, let us decompose the torus $\Lambda_{N}$ as the union of two admissible cylinders $\Lambda_{N} = L \cup R$ taking
\[ L = \left[0,3\lfloor N/4 \rfloor \right] \times \mathbb{S}_{N} \mbox{ and } R = \left[ 2\lfloor N/4 \rfloor , \lfloor N/4 \rfloor \right] \times \mathbb{S}_{N}\,. \]
Observe that this decomposition leads to a partition $\Lambda_{N} = AB_{1}CB_{2}$ as in Assumption~\ref{Assumtion2Dtorus}.($i$), where $B_{1} = [0, \lfloor N/4 \rfloor] \times \mathbb{S}_{N}$ and $B_{2} = [2\lfloor N/4\rfloor, 3\lfloor N/4\rfloor] \times \mathbb{S}_{N}$ so that $L=B_{1}AB_{2}$ and $R=B_{2}CB_{1}$. Note that both $B_{1}$ and $B_{2}$ contain at least $\ell=\lfloor N/4 \rfloor$ plaquettes along both coordinate directions. Thus, by Assumption~\ref{Assumtion2Dtorus}.($i$),  $\| \Pi_{L} \Pi_{R} - \Pi_{L \cup R} \| = \Delta_{\sigma}(A:C|\emptyset) \leq \delta(\lfloor N/4\rfloor) \leq \delta_{0}$. This implies, by Lemma~\ref{Lemm:DivideAndConquer}
\begin{equation}\label{equa:gap2Daux1} 
\operatorname{gap}(\mathbf{H}_{\Lambda_N}) \geq \frac{1-\delta_{0}}{2} \min\{ \operatorname{gap}(\mathbf{H}_{L}), \operatorname{gap}(\mathbf{H}_{R}) \}\,. 
\end{equation}
Next, we write $L$ as the union of two proper rectangles $L_{0}$ and $R_{0}$  defined as
\[ L_{0} = \left[0,3\lfloor N/4\rfloor \right]  \times \left[ 0,3 \lfloor N/4 \rfloor \right]  \quad , \quad R_{0} = \left[0, 3\lfloor N/4 \rfloor \right]  \times \left[2\lfloor N/4 \rfloor , \lfloor N/4 \rfloor \right]\,. \]
Similarly, we write $R$ as a union of two proper rectangles $L_{1}$ and $R_{1}$ given by
\[ L_{1} = \left[ 2\lfloor N/4 \rfloor , \lfloor N/4 \rfloor \right]  \times \left[ 0 , 3\lfloor N/4\rfloor \right]  \quad , \quad R_{1} = \left[ 2\lfloor N/4\rfloor , \lfloor N/4 \rfloor \right]  \times \left[ 2\lfloor N/4 \rfloor , \lfloor N/4\rfloor \right]\,. \]
The dimensions $d_{1}$ and $d_{2}$ of each of these four rectangles satisfy $3\lfloor N/4 \rfloor \leq d_{1}, d_{2} \leq N-\lfloor N/4 \rfloor$, so they are admissible rectangles belonging to $\mathcal{F}_{d_{max}}$ where where $d_{max}=N-\lfloor N/4\rfloor$. 

Moreover, each of these two decompositions will lead to a partition as in Assumption~\ref{Assumtion2Dtorus}.($ii$). For instance, in the case of $L=L_{0} \cup R_{0}$ we have the partition $\Lambda_{N} = AB_{1}CB_{2}D$ where $B_{1} = \left[ 0, 3\lfloor N/4\rfloor \right] \times \left[ 0, \lfloor N/4\rfloor \right]$ and $B_{2} = \left[ 0, 3\lfloor N/4\rfloor \right] \times \left[ 2\lfloor N/4 \rfloor , 3\lfloor N/4\rfloor \right]$ so that $L_{0}=B_{2}AB_{1}$ and $R_{0} = B_{1}CB_{2}$. Note that both $B_{1}$ and $B_{2}$ contain at least $\ell = \lfloor N/4 \rfloor$ plaquettes along both coordinate directions. Thus, by Assumption~\ref{Assumtion2Dtorus}.($ii$), $\| \Pi_{L_0} \Pi_{R_0} - \Pi_{L} \| = \Delta_{\sigma}(A:C|\emptyset) \leq \delta(\lfloor N/4\rfloor) \leq \delta_{0}$, and similarly for $L_{1}$, $R_{1}$ and $R$. This yields, by Lemma~\ref{Lemm:DivideAndConquer} 
\begin{equation}\label{equa:gap2Daux2} 
\min\{ \operatorname{gap}(\mathbf{H}_{L}), \operatorname{gap}(\mathbf{H}_{R}) \} \geq \frac{1-\delta_{0}}{2}  \min_{X \in \mathcal{F}_{d_{max}}}\operatorname{gap}(\mathbf{H}_{X}). 
\end{equation}
Finally, to obtain a lower bound on the last term, we consider the same argument from \cite[Theorem 2.5]{Lucia2023} but working with admissible rectangles. 

For each $k \geq 0$, let $\mathcal{G}_{k}$ be the subset of $\mathcal{F}_{d_{max}}$ of all admissible proper rectangles of  with dimensions $d_{1}$ and $d_{2}$ satisfying $d_{1} \cdot d_{2} \leq \mu (3/2)^{k}$, and let $s_{k} = \lfloor (4/3)^{k/2}\rfloor$. We claim that for every $k \geq 1$
\begin{equation}\label{equa:gap2Daux3} 
\min_{X \in \mathcal{G}_{k}} \operatorname{gap}(\mathbf{H}_{X}) \geq \frac{1-\delta_{k}}{1+\frac{1}{s_{k}}} \min_{X' \in \mathcal{G}_{k-1}} \operatorname{gap}(\mathbf{H}_{X'})\,. 
\end{equation}
To check the claim, let $X \in \mathcal{G}_{k} \setminus \mathcal{G}_{k-1}$ for $k \geq 1$, that we can identify with the rectangle $[0,d_{1}] \times [0, d_{2}]$. Let assume without loss of generality that $d_{2} \leq d_{1} (\leq 6d_{2})$, so that 
\[  d_{2}^{2} \leq d_{1}d_{2} \leq \mu (3/2)^{k} \quad , \quad  d_{1}^{2} \geq d_{1}d_{2} > \mu(3/2)^{k-1}. \]
Then, consider 
\[ \ell_{k}:=\lfloor \frac{d_{1}}{6s_{k}} \rfloor  \geq \lfloor \frac{\sqrt{\mu} (3/2)^{\frac{k-1}{2}}}{6(4/3)^{\frac{k}{2}}} \rfloor = \lfloor \frac{\sqrt{\mu} (9/8)^{\frac{k}{2}}}{6(3/2)^{\frac{1}{2}}} \rfloor \geq \lfloor \frac{\sqrt{\mu}}{8} (9/8)^{\frac{k}{2}}\rfloor \geq 2\,,  \]
where the last inequality follows from $\mu \geq 2^8$, and define for each $j=0,1,\ldots, s_{k}-1$
\[ L_{j} = [0, \lceil d_{1}/3 \rceil + (2j+1) \ell_{k}] \times [0, d_{2}] \quad , \quad R_{j} = [\lceil d_{1}/3 \rceil + 2j \ell_{k}, d_{1}] \times [0, d_{2}]\,. \]
These are admissible proper rectangles belonging to $\mathcal{G}_{k-1}$. Indeed, the horizontal dimension $d_{1}'$ of $L_{j}$ satisfies $d_{1}' \leq d_{1} \leq 6d_{2}$ and its vertical dimension $d_{2}$ satisfies $d_{2} \leq d_{1} \leq 6 \lceil d_{1}/3 \rceil \leq 6d_{1}'$. On the other hand, the horizontal dimension $d_{1}''$ of $R_{j}$ satisfies $d_{1}'' \leq d_{1} \leq 6d_{2}$ and 
\[ d_{1}'' = d_{1} - \lceil d_{1}/3 \rceil - 2 j \ell_{k} \geq  d_{1} - d_{1}/3 -1 - d_{1}/3 = \frac{d_{1}}{3} - 1 \geq \frac{d_{1}}{6} \geq \frac{d_{2}}{6} \]
since $d_{1} \geq \sqrt{\mu} \geq 16$. The intersection
\[ L_{j} \cap R_{j} = [ \lceil d_{1}/3\rceil + 2 j \ell_{k} , \lceil d_{1}/3\rceil + (2j+1) \ell_{k} ] \times [0,d_{2}] \]
contains at least $\ell_{k} \geq \lfloor \frac{\sqrt{\mu}}{8} (9/8)^{k/2} \rfloor$ plaquettes per row and $d_{2} \geq d_{1}/6 \geq \ell_{k}$ plaquettes per column. Each decomposition $X=L_{j} \cup R_{j}$  leads to a partition $\Lambda=A_{j}B_{j}C_{j}D_{j}$ as in Assumption~\ref{Assumtion2Dtorus}.($iii$) with $\ell = \ell_{k}$ where $L_{j}=A_{j}B_{j}$, $R_{j} = B_{j}C_{j}$. Note that the dimensions of $L_{j}$ and $R_{j}$ are smaller than $k_{max} \leq N-\ell$. Thus, using the assumption we get $\| \Pi_{L_{j}} \Pi_{R_{j}} - \Pi_{X}\| \leq \delta(\ell_{k}) \leq \delta(\lfloor \frac{\sqrt{\mu}}{8} (\sqrt{9/8})^{k} \rfloor) = \delta_{k}$ for every $j=0,1,\ldots, s_{k}-1$, and so by Lemma~\ref{Lemm:DivideAndConquer} we get the claimed inequality \eqref{equa:gap2Daux3}. As a consequence, since $\mathcal{G}_{k} = \mathcal{F}_{d_{max}}$ for large $k$ and $\mathcal{G}_{0} \subset  \mathcal{F}_{\mu}$
\begin{equation}\label{equa:gap2Daux4} 
\min_{X \in \mathcal{F}_{d_{max}}} \operatorname{gap}(\mathbf{H}_{X}) \geq \left( \prod_{k=0}^{\infty} \frac{1-\delta_{k}}{1+\frac{1}{s_{k}}} \right) \min_{X \in \mathcal{G}_{0}}\operatorname{gap}(\mathbf{H}_{X}) \geq \left( \prod_{k=0}^{\infty} \frac{1-\delta_{k}}{1+\frac{1}{s_{k}}} \right) \min_{X \in \mathcal{F}_{\mu}}\operatorname{gap}(\mathbf{H}_{X}) \,. 
\end{equation}
We can combine inequalities \eqref{equa:gap2Daux1}, \eqref{equa:gap2Daux2} and \eqref{equa:gap2Daux4} with Theorem~\ref{Theo:roughBoundSpectralGap} to obtain the result, by observing moreover that $s_k \ge (4/3)^{\frac{k-1}{2}}$ when $k\ge 14$, and therefore
\[
\prod_{k=1}^\infty \frac{1}{1+\frac{1}{s_k}} \ge \exp(-\sum_{k=1}^\infty \frac{1}{s_k}) \ge \exp( -\sum_{k=1}^{13} \frac{1}{s_k} - \sum_{k=14}^\infty \qty(\frac{3}{4})^{\frac{k-1}{2}}) \ge \exp(-9).
\]
\end{proof}

\subsection{Gap in arbitrary dimensions}\label{sec:gap-higher-dim}
The result obtained for 1D chains and 2D tori in the previous sections can be generalized to higher dimensions. When working in this setting, there are different valid choices that can be made regarding which ``shapes'' of finite volumes one is interested in looking at. Here we present an exemplary result, constructed out of families of rectangles, but variations of this results are possible and might be needed depending on the specific model considered.

Let $\Lambda_N = \mathbb{Z}_{N}^D$ the $D$-dimensional tours. The regions we will consider are then products of intervals and $\mathbbm{Z}_N$: a \emph{hyper-cylinder} with $D-i$ open boundary conditions and $i$ periodic boundary condition is a set which, up to permutation of the coordinates, is equal to
\[ [a_{1}, b_1] \times \cdots \times [a_{D-i}, b_{D-i}] \times \mathbbm{Z}_N^i,\]
for $i=0,\dots, D$ and $(a_j,b_j)_{j=1,\dots,D-i}$ a set of $D-i$ pairs of elements of $\mathbbm{Z}_N$. Note that when $i=D$, a hyper-cylinder is simply $\mathbbm{Z}_N^D$ while when $i=0$, it is a product of intervals.
 
Let $\ell_k = (3/2)^{k/D}$, and for $k \le k_{\max} = \lceil D \log(N/2)/\log(3/2) \rceil $ let
\begin{equation}
    R(k) = [0, \ell_{k+1}] \times \ldots \times [0, \ell_{k+D}].
\end{equation}
We then define $\mathcal{F}_{k}$ as the set of products of intervals (i.e., hyper-cylinders with $D$ open boundary conditions) which are contained in $R(k)$ up to translations and permutations of the coordinates, we we write $\gap(\mathcal{F}_k)$ for the $\inf_{R \in \mathcal{F}_k} \gap(\mathbf{H}_R)$.
We also denote
\[
I_0 = \qty[0, \frac{3N}{4}] \qc I_1 = \qty[\frac{N}{2}, \frac{N}{4}]
\]
and
\[
C_{i_1,\dots, i_r} = I_{i_1} \times \cdots \times I_{i_r} \qc i_1,\dots, i_r = 0,1.
\]
\begin{Prop}
    With the notation defined above, let
    \begin{align}
        L_{i_1,\dots, i_{r-1}} &= C_{i_1,\dots, i_{r-1}} \times \qty[\frac{N}{4}, \frac{N}{2}] \times \mathbb{Z}_N^{D-r} ,\\
        R_{i_1,\dots, i_{r-1}} &=  C_{i_1,\dots, i_{r-1}} \times \qty[\frac{3N}{4}, 0] \times \mathbb{Z}_N^{D-r},
    \end{align}
    for each $r=1,\dots, s$ and each choice of $i_1,\dots, i_{r-1} = 0,1$. Denote 
    \begin{equation}
        \delta^{(r)} = \min_{i_1,\dots, i_{r-1} = 0,1} \Delta_\sigma\left( 
        L_{i_1,\dots, i_{r-1}} \colon
        R_{i_1,\dots, i_{r-1}} \,\middle|\,  C_{i_1,\dots, i_{r-1}} \times \mathbb{Z}_N^{D-r+1} \right).
    \end{equation}
    Then we have that
    \begin{equation}
        \gap(\mathbf{H}_{\Lambda_N}) \ge 2^{-D} \prod_{r=1}^D (1-\delta^{(r)})\, \cdot \, \gap (\mathcal{F}_{k_{\max}}).
    \end{equation}
\end{Prop}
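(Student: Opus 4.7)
The strategy is to iterate Lemma~\ref{Lemm:DivideAndConquer} once per coordinate direction of the torus, peeling off one $\mathbb{Z}_N$-factor at a time. Each of the $D$ applications uses the overlapping cover with $s=1$, so it contributes a multiplicative factor of $(1-\delta^{(r)})/2$; after all $D$ rounds the surviving regions are products of intervals which, up to translation and coordinate permutation, lie inside $R(k_{\max})$, and hence have gap at least $\gap(\mathcal{F}_{k_{\max}})$. Multiplying the $D$ factors recovers the advertised prefactor $2^{-D}\prod_{r=1}^D (1-\delta^{(r)})$.

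In more detail, at the start of round $r$ one has $2^{r-1}$ regions of the form $X_{i_1,\ldots,i_{r-1}} := C_{i_1,\ldots,i_{r-1}}\times\mathbb{Z}_N^{D-r+1}$, indexed by $(i_1,\ldots,i_{r-1})\in\{0,1\}^{r-1}$. To each such $X$ I would apply Lemma~\ref{Lemm:DivideAndConquer} with the cover
\[
L' = C_{i_1,\ldots,i_{r-1}}\times I_0\times\mathbb{Z}_N^{D-r},\qquad
R' = C_{i_1,\ldots,i_{r-1}}\times I_1\times\mathbb{Z}_N^{D-r},
\]
which, since $I_0\cup I_1 = \mathbb{Z}_N$, satisfies $L'\cup R'=X$. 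The symmetric differences $L'\setminus R'$ and $R'\setminus L'$ are precisely $L_{i_1,\ldots,i_{r-1}}$ and $R_{i_1,\ldots,i_{r-1}}$ as defined in the proposition (modulo one-point boundary effects that do not affect the relevant marginals of $\sigma$). By Theorem~\ref{Theo:martingaleVScorrelations} the martingale norm $\|\Pi_{L'}\Pi_{R'}-\Pi_X\|$ is exactly the corresponding $\Delta_\sigma$ quantity, which is bounded by $\delta^{(r)}$ uniformly in $(i_1,\ldots,i_{r-1})$. Lemma~\ref{Lemm:DivideAndConquer} with $s=1$ then yields
\[
\gap(\mathbf{H}_{X_{i_1,\ldots,i_{r-1}}}) \ \geq\ \frac{1-\delta^{(r)}}{2}\,\min_{i_r\in\{0,1\}}\gap(\mathbf{H}_{X_{i_1,\ldots,i_r}}),
\]
propagating the recursion to round $r+1$.

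After $D$ rounds the only regions left are the product rectangles $X_{i_1,\ldots,i_D} = C_{i_1,\ldots,i_D}$, and by definition of the family their gap is at least $\gap(\mathcal{F}_{k_{\max}})$; assembling the bounds along the recursion yields the claim. The one bookkeeping step that requires care is matching the conditional region naturally produced by Theorem~\ref{Theo:martingaleVScorrelations}, namely $\Lambda_N\setminus X_{i_1,\ldots,i_{r-1}}$, against the region $C_{i_1,\ldots,i_{r-1}}\times\mathbb{Z}_N^{D-r+1}$ appearing in the definition of $\delta^{(r)}$: this is innocuous because $\Delta_\sigma(A:C\mid D)$ depends on $\sigma$ only through the marginals $\sigma_{ACD},\sigma_{AD},\sigma_D,\sigma_{DC}$, so the location of the ``conditional'' label is immaterial as long as these marginals coincide. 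Beyond this identification, the argument is a straightforward induction paralleling the 2D recursion used in the proof of Theorem~\ref{Theo:gap2}, and no new ingredient is needed.
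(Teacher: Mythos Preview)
Your proof is correct and follows exactly the same approach as the paper: iterate Lemma~\ref{Lemm:DivideAndConquer} with $s=1$ once per coordinate direction, covering each hyper-cylinder $C_{i_1,\dots,i_{r-1}}\times\mathbb{Z}_N^{D-r+1}$ by the two overlapping pieces $C_{i_1,\dots,i_{r-1}}\times I_{i_r}\times\mathbb{Z}_N^{D-r}$, $i_r=0,1$, and invoking Theorem~\ref{Theo:martingaleVScorrelations} to bound the martingale norm by $\delta^{(r)}$. Your observation about the conditional region is well taken: the third argument of $\Delta_\sigma$ in the definition of $\delta^{(r)}$ should indeed be the \emph{complement} of $C_{i_1,\dots,i_{r-1}}\times\mathbb{Z}_N^{D-r+1}$ (compare the analogous definition of $\delta_k$ in the next proposition), so what you flagged is a typo in the statement rather than something requiring a separate justification.
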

\begin{proof}
It is easy to see that the sets $L_{i_1,\dots, i_{r-1}}$ and $R_{i_1,\dots, i_{r-1}}$ 
satisfy
\begin{align*}
L_{i_1,\dots, i_{r-1}} \cup R_{i_1,\dots, i_{r-1}} &= C_{i_1,\dots, i_{r-1}} \times \mathbb{Z}_N^{D-r+1} , \\
L_{i_1,\dots, i_{r-1}} \setminus R_{i_1,\dots, i_{r-1}} &= C_{i_1,\dots, i_{r-1}} \times \qty[\frac{N}{4}, \frac{N}{2}] \times \mathbb{Z}_N^{D-r} ,\\
R_{i_1,\dots, i_{r-1}} \setminus L_{i_1,\dots, i_{r-1}} &= C_{i_1,\dots, i_{r-1}} \times \qty[\frac{3N}{4}, 0] \times \mathbb{Z}_N^{D-r}.
\end{align*}
From the assumption and Theorem~\ref{Theo:martingaleVScorrelations}, it follows that
\[
\norm{P_{L_{i_1,\dots, i_{r-1}}} P_{R_{i_1,\dots, i_{r-1}}} - P_{C_{i_1,\dots, i_{r-1}} \times \mathbb{Z}_N^{D-r+1}}} \le \delta^{(r)} \qc \forall i_1,\dots, i_{r-1} = 0,1.
\]
The result then follows by using Lemma~\ref{Lemm:DivideAndConquer} iteratively, once for each dimension, with the choice $s=1$ in each case.
\end{proof}

In a sense, we have now reduced the problem from a periodic boundary condition one to a open boundary condition one. The crucial point about the family of sets $\mathcal{F}_k$ is that we can always decompose one element $R \in \mathcal{F}_{k+1}$ as the union of two elements of $\mathcal{F}_{k-1}$ with a ``large enough'' overlap, and moreover we can do this in $s_k = \lfloor \sqrt{\ell_k} \rfloor$ independent ways. This is summarized in the following proposition.
\begin{Prop}[{\cite[Proposition 1]{KL_2018}}]
Let $k \ge k_{\min} = \lceil D \log(64)/\log(3/2) \rceil$, and let $X \in \mathcal{F}_{k+1}\setminus \mathcal{F}_k$. We can then find $s_k = \lfloor \sqrt{\ell_k} \rfloor$ pairs $(L_i,R_i)_{i=1}^{s_k}$ of elements of $\mathcal{F}_k$ with the following properties:
\begin{enumerate}
    \item $X = L_i \cup R_i$ for every $i$;
    \item $\dist(L_i \setminus R_i, R_i \setminus L_i) \ge \frac{\ell_k}{8s_k}-2$;
    \item $L_i \cup R_i \cup L_j \cup R_j = \emptyset$ for $i\neq j$.
\end{enumerate}
\end{Prop}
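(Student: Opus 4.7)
The plan is to adapt a standard cut-and-overlap decomposition to the family $\mathcal{F}_k$. Let $X \in \mathcal{F}_{k+1}\setminus \mathcal{F}_k$ and sort its side lengths as $d_1 \ge \cdots \ge d_D$. Membership in $\mathcal{F}_{k+1}$ is equivalent to $d_j \le \ell_{k+D+2-j}$ for every $j$, while failure of membership in $\mathcal{F}_k$ forces at least one index $r$ with $d_r > \ell_{k+D+1-r}$. Take $r^{\star}$ to be the smallest such $r$ and perform all cuts along the coordinate direction corresponding to $d_{r^{\star}}$, which up to a permutation of coordinates we take to be the first. Write $X = [0, d_{r^{\star}}] \times Y$, with $Y$ the product of the remaining intervals, fixed throughout.

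Fix the overlap width $w = \lfloor \ell_k/(8 s_k)\rfloor + 2$ and cut positions $m_i = \lceil d_{r^{\star}}/3 \rceil + (i-1) w$ for $i = 1,\dots, s_k$, and set
\[
L_i = [0, m_i + w] \times Y, \qquad R_i = [m_i, d_{r^{\star}}] \times Y.
\]
Property~(1) $L_i \cup R_i = X$ is immediate. The overlap slabs $L_i \cap R_i = [m_i, m_i + w] \times Y$ are pairwise disjoint, since consecutive $m_i$ are $w$-separated; reading condition~(3) as $(L_i\cap R_i)\cap(L_j\cap R_j)=\emptyset$ (the only sensible interpretation, as $L_i\cup R_i = X \neq \emptyset$), this gives property~(3). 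The distance from $L_i\setminus R_i = [0,m_i)\times Y$ to $R_i\setminus L_i = (m_i+w, d_{r^{\star}}]\times Y$ equals the overlap thickness minus one, hence is at least $\ell_k/(8s_k)-2$, which is property~(2). That all cuts fit inside $[0,d_{r^{\star}}]$ follows from $s_k w \le \ell_k/8 + O(s_k) \ll d_{r^{\star}}/3$, using $d_{r^{\star}} > \ell_{k+D+1-r^{\star}}\ge \ell_{k+1}$ and $\ell_k \ge 64$ valid when $k\ge k_{\min}$.

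The substantive step is to verify $L_i, R_i \in \mathcal{F}_k$. The only side length that changes is the $r^{\star}$-th, whose new length is at most $2 d_{r^{\star}}/3 + s_k w + O(1)$ for $R_i$ (and smaller for $L_i$). Combining $d_{r^{\star}} \le \ell_{k+D+2-r^{\star}} = (3/2)^{1/D}\,\ell_{k+D+1-r^{\star}}$ with $s_k w \le \ell_k/8 + O(s_k) \le \ell_{k+D+1-r^{\star}}/8 + O(s_k)$, one checks that this new length lies below $\ell_{k+D+1-r^{\star}}$ once $\ell_k$ is large enough, which is precisely what the threshold $k \ge k_{\min} = \lceil D \log(64)/\log(3/2)\rceil$ enforces. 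The remaining side lengths of $L_i$ and $R_i$ coincide with those of $X$; by the minimality of $r^{\star}$, dimensions $d_j$ with $j < r^{\star}$ already satisfy the $\mathcal{F}_k$ bound $d_j \le \ell_{k+D+1-j}$, while for $j > r^{\star}$ the $\mathcal{F}_{k+1}$ bound gives $d_j \le \ell_{k+D+2-j} = \ell_{k+D+1-(j-1)}$, so each such $d_j$ fits in the slot of $R(k)$ one step larger than its original position.

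The main obstacle is the combinatorial bookkeeping in this last paragraph: checking that the one-step shift of dimensions indexed by $j > r^{\star}$ into the ``larger'' slots, combined with the freshly-shrunken $r^{\star}$-th dimension sliding down into a smaller slot, defines a genuine permutation matching every side length of $L_i$ (and of $R_i$) to a slot of $R(k)$. The geometric progression $\ell_{k+1}/\ell_k = (3/2)^{1/D}$ provides a uniform multiplicative slack between consecutive bounds, which together with the minimality of $r^{\star}$ rules out collisions in this shift and yields $L_i, R_i \in \mathcal{F}_k$, completing the proof.
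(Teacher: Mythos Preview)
The paper does not prove this proposition; it is quoted from \cite{KL_2018} with only the remark that $k\ge k_{\min}$ ensures $s_k\le \ell_k/8$. Your construction follows the same cut-and-overlap template used there and in the explicit 2D argument of Theorem~\ref{Theo:gap2}, and properties (1)--(3) are handled correctly (your reading of (3) as disjointness of the overlaps $L_i\cap R_i$ is the intended one).

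The gap is in the last step. The bijection you describe sends $d_j\mapsto$ slot $j$ for $j<r^\star$ and $d_j\mapsto$ slot $j-1$ for $j>r^\star$, which leaves slot $D$ (bound $\ell_{k+1}$) for the shrunken side $d'_{r^\star}$. But the inequality you actually verify is $d'_{r^\star}\le \ell_{k+D+1-r^\star}$, the bound for slot $r^\star$; when $r^\star<D$ this is strictly weaker than the required $d'_{r^\star}\le\ell_{k+1}$, so the permutation is not yet justified by what you have shown. The stronger bound does hold: from $d_{r^\star}\le d_1\le \ell_{k+D+1}$ one gets $2d_{r^\star}/3\le(2/3)\ell_{k+D+1}=\ell_{k+1}$ for the $R_i$, and the $L_i$ side length $\lceil d_{r^\star}/3\rceil+s_kw$ is smaller still once $\ell_k\ge 64$. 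Once you prove this, the minimality of $r^\star$ plays no role; cutting along the longest side $d_1$ from the outset (as in \cite{KL_2018} and in the 2D proof) gives the clean bijection $d_j\mapsto$ slot $j-1$ for $j\ge 2$ and $d'_1\mapsto$ slot $D$, and sidesteps the bookkeeping you flag as the main obstacle.
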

Note that the condition $k\ge k_{\min}$ guarantees that $s_k \le \frac{1}{8}\ell_k$, an assumption required for the proof of \cite[Proposition 1]{KL_2018}.

We can therefore iteratively apply Lemma~\ref{Lemm:DivideAndConquer} to obtain an estimate on the gap of $\mathbf{H}_{\Lambda_N}$ which only depends on finite regions.
\begin{Prop}
    Let 
    \begin{equation}
        \delta_k = \inf_{(L, R)} \Delta_\sigma\qty(L\setminus R : R \setminus L \, | \, (L\cup R)^c ),
    \end{equation}
    where the infimum is taken over all pairs $L,R$ of elements in $\mathcal{F}_k$ satisfying
    \[
    \dist(L \setminus R, R\setminus L) \ge \frac{\ell_k}{8s_k}-2.
    \]
    Then we have that
    \begin{equation}
        \gap(\mathcal{F}_{k_{\max}}) \ge \prod_{r=k_{{\min}}}^\infty \qty(\frac{1-\delta_k}{1+\frac{1}{s_k}}) \,\cdot\, \min_{R \in \mathcal{F}_{k_{\min}}} \frac{1}{\eta_R(\sigma)^2}.
    \end{equation}
\end{Prop}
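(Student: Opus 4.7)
The plan is to run the divide-and-conquer scheme of Lemma~\ref{Lemm:DivideAndConquer} along the filtration $\mathcal{F}_{k_{\min}} \subset \mathcal{F}_{k_{\min}+1} \subset \cdots \subset \mathcal{F}_{k_{\max}}$, using the geometric decomposition supplied by the preceding proposition to descend one level at a time, and to close the induction with the local bound from Theorem~\ref{Theo:roughBoundSpectralGap} at the base level $k_{\min}$.

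Concretely, for fixed $k_{\min} \le k < k_{\max}$ and any $X \in \mathcal{F}_{k+1}$, I first aim to establish the single-step estimate
\begin{equation*}
\gap(\mathbf{H}_X) \ge \frac{1-\delta_k}{1+\frac{1}{s_k}} \min_{Y \in \mathcal{F}_k} \gap(\mathbf{H}_Y).
\end{equation*}
If $X$ already lies in $\mathcal{F}_k$, this is immediate since the prefactor is at most $1$. Otherwise, the preceding proposition produces $s_k$ pairs $(L_i, R_i)_{i=1}^{s_k}$ in $\mathcal{F}_k$ with $X = L_i \cup R_i$, with the overlap regions $L_i \cap R_i$ pairwise disjoint (matching hypothesis (ii) of Lemma~\ref{Lemm:DivideAndConquer}), and with $\dist(L_i \setminus R_i, R_i \setminus L_i) \ge \ell_k/(8s_k) - 2$. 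Setting $A_i = L_i \setminus R_i$, $B_i = L_i \cap R_i$, $C_i = R_i \setminus L_i$ and $D_i = (L_i \cup R_i)^c = X^c$, we get a partition $\Lambda_N = A_i B_i C_i D_i$ to which Theorem~\ref{Theo:martingaleVScorrelations} applies, giving
\begin{equation*}
\|\Pi_{L_i}\Pi_{R_i} - \Pi_X\| = \Delta_\sigma(A_i : C_i \,|\, D_i) \le \delta_k,
\end{equation*}
where the last inequality reflects that $(L_i,R_i)$ is in the class over which $\delta_k$ is defined. Lemma~\ref{Lemm:DivideAndConquer} then yields the claimed single-step bound.

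Iterating this estimate from $k = k_{\max}-1$ down to $k = k_{\min}$ produces
\begin{equation*}
\gap(\mathcal{F}_{k_{\max}}) \ge \prod_{k=k_{\min}}^{k_{\max}-1} \frac{1-\delta_k}{1+\frac{1}{s_k}} \cdot \gap(\mathcal{F}_{k_{\min}}),
\end{equation*}
and since every factor lies in $(0,1]$, the finite product dominates the stated infinite product $\prod_{k \ge k_{\min}}$. For the base level, Theorem~\ref{Theo:roughBoundSpectralGap} gives $\gap(\mathbf{H}_R) \ge \eta_R(\sigma)^{-4}$ for each $R \in \mathcal{F}_{k_{\min}}$; substituting $\min_{R \in \mathcal{F}_{k_{\min}}} \eta_R(\sigma)^{-4}$ for $\gap(\mathcal{F}_{k_{\min}})$ in the previous inequality yields the advertised bound.

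I do not expect any serious obstacle: the two preceding propositions carry out all the combinatorial and geometric work (the construction of the $s_k$ pairs at each scale, and the identification of the martingale-type norm $\|\Pi_L \Pi_R - \Pi_{L \cup R}\|$ with $\Delta_\sigma$), and the only remaining bookkeeping is verifying at each scale $k$ that the pairs $(L_i, R_i)$ satisfy the distance condition used to define $\delta_k$, which is exactly property (ii) in the decomposition proposition. The chaining then proceeds without friction.
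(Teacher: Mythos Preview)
Your approach is correct and coincides with the paper's: the paper does not spell out a proof here but simply indicates that one should ``iteratively apply Lemma~\ref{Lemm:DivideAndConquer}'', which is exactly the recursion you carry out, closing with Theorem~\ref{Theo:roughBoundSpectralGap} at level $k_{\min}$.

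Two remarks on the statement itself rather than your argument. First, the $\inf$ in the definition of $\delta_k$ is evidently a typo for $\sup$: with an infimum your step $\Delta_\sigma(A_i:C_i\,|\,D_i)\le\delta_k$ would not follow from membership of $(L_i,R_i)$ in the admissible class, and the analogous quantities in the 1D and 2D sections are all upper bounds. Second, the exponent $2$ on $\eta_R(\sigma)$ in the conclusion should be $4$, consistent with Theorem~\ref{Theo:roughBoundSpectralGap} and with Theorems~\ref{Theo:gapIntervals1D} and~\ref{Theo:gap2}; your argument correctly produces $\eta_R(\sigma)^{-4}$, which since $\eta_R(\sigma)\ge 1$ is a \emph{weaker} lower bound than the literally stated $\eta_R(\sigma)^{-2}$, so strictly speaking it does not ``yield the advertised bound'' --- but the advertised exponent is almost certainly a misprint.
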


\subsection{Locality of the canonical purified Hamiltonian}
\label{sec:gibbs-states}

In the last section, we have seen how assuming a decay of $\Delta_\sigma(A:C|D)$, and in particular assuming the decay of the quantities $\delta_k$, leads to a lower bound on the spectral gap of $\mathbf{H}$.
We can obtain a weak converse of this fact, under the additional assumption that $\textbf{H}$ is itself local, i.e., that it can be decomposed as a sum of local terms with finite support. If this is the case, then we can apply known result to prove the following: if all the local gaps of $\mathbf{H}$ are bounded away from zero, then $\Delta_\sigma$ has to decay at a certain rate (note that it is not sufficient to simply assume a bound on the gap of $\mathbf{H}$). While this is does not quite show that the decay of $\Delta_\sigma$ is a necessary condition for $\mathbf{H}$ to be gapped, it does indicate that there is a strong connection between the two quantities.

\begin{Prop}
    Suppose that $\mathbf{H}_\Lambda$ has finite range $r>0$. With the notation of Section~\ref{sec:gap-higher-dim}:
    \begin{itemize}
        \item If $\gap(\mathcal{F}_k) \ge \lambda_k > 0$, then
        \begin{equation}
            \delta_k \le C_1 \exp(-\frac{C_2}{r} \sqrt{\lambda_k \ell_k}).
        \end{equation}
        for some positive constants $C_1$ and $C_2$ independent of $k$.
        \item If $\gap( \mathbf{H}_{C_{i_{1},\dots, i_{u}}}) \ge \lambda^{(u)}>0$ for every $i_1,\dots, i_u = 0,1$, then
        \begin{equation}
            \delta^{(u)} \le C_1 \exp(-\frac{C_2}{r} \sqrt{\lambda^{(u)} N}).
        \end{equation}
        for some positive constants $C_1$ and $C_2$ independent of $u$.
    \end{itemize}
\end{Prop}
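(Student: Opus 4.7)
The plan is to apply Theorem~\ref{Theo:martingaleVScorrelations} to rewrite each $\delta_k$ and $\delta^{(u)}$ as an operator-norm overlap $\|\Pi_L \Pi_R - \Pi_{L\cup R}\|$ for appropriate pairs $(L,R)$, and then exploit the finite-range, frustration-free structure of $\mathbf{H}$ together with the gap hypothesis through an approximate ground-state-projector (AGSP) argument. Since $L, R \subset L\cup R$ implies $\Pi_{L\cup R} \le \Pi_L, \Pi_R$, the projector $\Pi_{L\cup R}$ commutes with both $\Pi_L$ and $\Pi_R$ and
\[
\|\Pi_L \Pi_R - \Pi_{L\cup R}\| = \|(\Pi_L - \Pi_{L\cup R})(\Pi_R - \Pi_{L\cup R})\|,
\]
which is exactly the quantity controlled by locality arguments in the frustration-free setting.

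The central ingredient I would use (or re-derive if convenient) is the following: for a positive, frustration-free, range-$r$ Hamiltonian $H$ on a region $\Omega$ with ground-space projector $\Pi_\Omega$ and spectral gap $\gamma > 0$, and for subregions $X, Y$ with $X\cup Y = \Omega$ whose local ground-space projectors satisfy $\Pi_X, \Pi_Y \ge \Pi_\Omega$, one has
\[
\|\Pi_X \Pi_Y - \Pi_\Omega\| \;\le\; C_1 \exp\!\left(-\tfrac{C_2}{r}\sqrt{\gamma}\,\dist(X\setminus Y,\,Y\setminus X)\right),
\]
for absolute constants $C_1,C_2$. This is the detectability-lemma / Chebyshev-AGSP estimate: a Chebyshev polynomial of degree $t$ in $H$ approximates $\Pi_\Omega$ with error $e^{-c\sqrt{\gamma}\,t}$, and since $H$ has range $r$ such a polynomial has support reach at most $tr$ across any cut, so choosing $t$ proportional to $\dist(X\setminus Y, Y\setminus X)/r$ produces a sufficiently local approximant of $\Pi_\Omega$ and yields the displayed bound.

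Given this estimate, both bullets reduce to direct book-keeping. For the first, $\delta_k = \|\Pi_L \Pi_R - \Pi_{L\cup R}\|$ for some $(L,R) \in \mathcal{F}_k\times\mathcal{F}_k$ with $\dist(L\setminus R, R\setminus L) \ge \ell_k/(8 s_k) - 2$; substituting $s_k = \lfloor\sqrt{\ell_k}\rfloor$ gives a distance of order $\sqrt{\ell_k}$, and combining with the hypothesis $\gap(\mathbf{H}_{L\cup R}) \ge \gap(\mathcal{F}_k) \ge \lambda_k$ yields the stated $\delta_k \le C_1 \exp(-(C_2/r)\sqrt{\lambda_k \ell_k})$ after absorbing numerical constants. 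For the second, the same rewriting gives $\delta^{(u)} = \|\Pi_{L_{i_1,\dots,i_{u-1}}}\Pi_{R_{i_1,\dots,i_{u-1}}} - \Pi_{L\cup R}\|$, the effective cut across the $u$-th torus direction is of the required order, and the hypothesis $\gap(\mathbf{H}_{C_{i_1,\dots,i_u}}) \ge \lambda^{(u)}$ plays the role of $\gamma$ in the displayed bound, producing the announced estimate.

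The principal obstacle is obtaining the $\sqrt{\gamma}$ scaling in the exponent rather than a linear $\gamma$: a direct Lieb--Robinson or naive spectral estimate would only give $e^{-c\gamma d/r}$, which is insufficient to match the claimed $\sqrt{\lambda_k \ell_k}$ and $\sqrt{\lambda^{(u)} N}$ exponents. The AGSP/Chebyshev route is the correct one, but one must check carefully that each polynomial approximation of $\mathbf{H}$ does not pick up uncontrolled range, given that the local summands $\Pi_x^\perp$ are themselves projectors of support radius $r$; this is standard but mildly tedious book-keeping, and is precisely where the finite-range hypothesis on $\mathbf{H}$ enters in an essential way.
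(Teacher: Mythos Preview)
Your approach is essentially the same as the paper's: the paper simply invokes the Detectability Lemma as a black box, citing \cite[Lemma~3.1]{2409.09685} directly, while you sketch the AGSP/Chebyshev mechanism that underlies that lemma. The translation via Theorem~\ref{Theo:martingaleVScorrelations} to $\|\Pi_L\Pi_R-\Pi_{L\cup R}\|$ and the distance book-keeping $\ell_k/(8s_k)\sim\sqrt{\ell_k}$ are exactly what is needed, and your identification of the $\sqrt{\gamma}$ scaling as the crucial point is correct.

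One small caution: you write $\gap(\mathbf{H}_{L\cup R})\ge\gap(\mathcal{F}_k)$, but $L\cup R$ is generally an element of $\mathcal{F}_{k+1}$, not $\mathcal{F}_k$. The version of the detectability-lemma estimate being invoked here (as in \cite{2409.09685}) uses the gaps on the \emph{pieces} $L,R\in\mathcal{F}_k$ rather than on the union, which is what makes the hypothesis $\gap(\mathcal{F}_k)\ge\lambda_k$ the right one. This is a minor bookkeeping point, not a flaw in the strategy.
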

\begin{proof}
    This is a straightforward consequence of the so-called Detectability Lemma \cite{Aharonov2009,Aharonov2011,Gosset2016}. In particular, we can directly apply \cite[Lemma 3.1]{2409.09685} to obtain all the bounds.
\end{proof}

A particularly important case for which we can show that $\mathbf{H}$ is local is when
$\sigma$ is the Gibbs state of a local Hamiltonian. Let $\Phi$ be a local, commuting interaction, in the sense that $[\Phi_X,\Phi_Y]=0$ for every pair of regions $X, Y$, and assume that $\sigma = e^{-\beta H_{\Lambda}}/\Tr(e^{- \beta H_{\Lambda}})$ is the Gibbs state associated to the Hamiltonian $H_{\Lambda} = \sum_{X \subset \Lambda} \Phi_{X}$. 
Then the canonical Hamiltonian $\mathbf{H}$ has finite range, as the next proposition shows.

\begin{Prop}
    If $\Phi$ is commuting and has finite range $r >0$, then $\mathbf{H}$ has finite range $2r$.
\end{Prop}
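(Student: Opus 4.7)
The plan is to derive an explicit local formula for $\Pi_x$ using the commutativity of $\Phi$, from which the support of $\Pi_x^\perp$ can be read off immediately.

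First, I would split $H_\Lambda = H_{\Lambda\setminus\{x\}} + H_{\{x\}}^\partial$, where $H_{\{x\}}^\partial = \sum_{Y \ni x} \Phi_Y$ collects the interaction terms touching $x$ and is supported on $\overline{\{x\}}$. By the commutativity hypothesis $[H_{\Lambda\setminus\{x\}}, H_{\{x\}}^\partial] = 0$, so the Gibbs state factorizes multiplicatively. Setting $K_x := e^{-\beta H_{\Lambda\setminus\{x\}}/2}$ (supported on $\{x\}^c$), $E_x := e^{-\beta H_{\{x\}}^\partial /2}$ (supported on $\overline{\{x\}}$), and $A_x := \Tr_x(E_x^2)$ (supported on $\overline{\{x\}}\setminus\{x\}$), this yields
\[ \sigma^{1/2} = \tfrac{1}{\sqrt{Z}}\, K_x\, E_x, \qquad \sigma_{\{x\}^c} = \tfrac{1}{Z}\, K_x^2\, A_x. \]

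The key auxiliary fact is that $A_x$ commutes with $K_x$: since $H_{\{x\}}^\partial$ commutes with $H_{\Lambda\setminus\{x\}}$, and partial traces preserve commutation with operators supported on $\{x\}^c$ (i.e., $[\Tr_x(M), N] = 0$ whenever $[M,N]=0$ and $N \in \cB(\cH_{\{x\}^c})$), one concludes $[A_x, K_x]=0$. I would then substitute the above factorizations into the explicit formula of Proposition~\ref{Prop:explicitprojectionCanonicalPurified} and use the identity $\Tr_x(MN) = \Tr_x(M)\, N$ for $N \in \cB(\cH_{\{x\}^c})$ to pull $K_x$ out of the partial trace. After this rearrangement all $K_x$ factors cancel, yielding the manifestly local expression
\[ \Pi_x(Q) = \Tr_x(Q\, E_x) \, A_x^{-1}\, E_x. \]

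Since both $E_x$ and $A_x^{-1}$ are supported on $\overline{\{x\}}$, this formula acts as the identity on the complementary tensor factor: for $Q = Q_1 \otimes Q_2$ with $Q_1 \in \cB(\cH_{\overline{\{x\}}})$ and $Q_2 \in \cB(\cH_{\overline{\{x\}}^c})$, the disjointness of supports gives $\Pi_x(Q_1 \otimes Q_2) = [\Tr_x(Q_1 E_x)\, A_x^{-1}\, E_x] \otimes Q_2$. Hence $\Pi_x^\perp = \id - \Pi_x$ has support in $\overline{\{x\}}$, a set of diameter at most $2r$ since it is contained in the ball of radius $r$ around $x$; consequently $\mathbf{H} = \sum_x \Pi_x^\perp$ has range $2r$. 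The main delicate point is the cancellation of the $K_x$ factors in the derivation of the explicit formula, which relies crucially on the commutation $[A_x, K_x]=0$ and careful bookkeeping for the partial trace; once this is in hand, the locality conclusion is immediate.
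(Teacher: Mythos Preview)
Your proof is correct and follows essentially the same approach as the paper: both derive the local formula $\Pi_x(Q) = \Tr_x(Q\, E_x)\, A_x^{-1}\, E_x$ (in your notation) by factoring $\sigma^{1/2}$ and $\sigma_{\{x\}^c}$ via commutativity and then cancelling the non-local factor $K_x$. You are in fact more explicit than the paper about the commutation $[A_x,K_x]=0$ needed for this cancellation, which the paper leaves implicit.
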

\begin{proof}
    This is a simple consequence of the formula for the interactions terms of $\mathbf{H}$: since
    \[
    \sigma_{\Lambda\setminus\{x\}} = Z_\beta^{-1} \Tr_x(e^{-\beta \sum_{X \subset \Lambda} \Phi_X}) = Z_\beta^{-1} e^{-\beta \sum_{X\not\ni x}\Phi_X} \Tr_x(e^{-\beta \sum_{X \ni x} \Phi_X})\,,
    \]
    we have that
    \[
    \Pi_x(Q) = \Tr_x(Q\sigma^{1/2}) \sigma_{\Lambda\setminus\{x\}}^{-1} \sigma^{1/2} = \Tr_x(Q e^{-\frac{\beta}{2} \sum_{X \ni x} \Phi_X}) \qty( \Tr_x (e^{-\beta \sum_{X \ni x} \Phi_X}) )^{-1} e^{-\frac{\beta}{2} \sum_{X \ni x} \Phi_X}
    \]
    which only acts in a neighborhood of radius $r$ around the point $x$.
\end{proof}

Under these assumptions, we can moreover obtain estimates on $\Delta_\sigma(A:C|D)$ from a particular condition on the marginals of $\sigma$, which we will use in Section~\ref{sec:quantum-double} in order to study 2D quantum double models.

\begin{Theo}\label{Theo:SufficientCondition1}
Let $\Phi$ be a commuting interaction on $\Lambda$ and let $\sigma = e^{-\beta H_\Lambda}/\Tr(e^{-\beta H_\Lambda})$ for some $\beta>0$. Assume we partition $\Lambda$ into four disjoint subsets $\Lambda=ABCD$ satisfying the following properties:
\begin{enumerate}
\item[(i)] There are no local interactions between $A$ and $C$, namely $\Phi_{X} = 0$ whenever $X \cap A \neq \emptyset$ and $X \cap C \neq \emptyset$.
\item[(ii)] For each $\cR \in \{ AB,B,BC,ABC\}$ there exist an operator $Q_{\partial \cR}$ supported in $\partial \cR$ and a real constant $\kappa_{\cR}>0$ such that
\[ \Tr_{\cR}(e^{-\beta H_{\cR}^{\partial}}) = \kappa_{\cR}(\mathbbm{1} + Q_{\partial |cR})\,. \]
\item[(iii)] $\kappa_{AB} \kappa_{BC} = \kappa_{B} \kappa_{ABC}$.
\item[(iv)] There is $\varepsilon \in (0,1)$ such that $\| Q_{\partial \cR}\|_{\infty} \leq \varepsilon \, e^{-2 \beta |\partial \cR| \, \| \Phi\|}$ for every $\cR \in \{ AB,BC,ABC,B\}$.
\end{enumerate}
Then, it holds that
\begin{equation}
\Delta_\sigma(A:C|D) \leq \qty(\frac{1+\varepsilon}{1-\varepsilon})^2 - 1 = \frac{4\varepsilon}{(1-\varepsilon)^2} \,.
\end{equation}
\end{Theo}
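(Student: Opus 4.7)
The plan is to use conditions (i)--(iii) to obtain closed-form expressions for the four marginals appearing in the definition of $\Delta_\sigma(A:C|D)$, assemble $\sigma_{AD}\sigma_D^{-1}\sigma_{DC}\sigma_{ACD}^{-1}$ explicitly in terms of the perturbations $\mathbbm{1}+Q_{\partial \cR}$, and then conclude via an eigenvalue bound coming from condition (iv) together with \eqref{equa:estimatesMartinagleCondition1}.

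First, I would use condition (i), together with the commuting nature of $\Phi$, to decompose $H_{ACD} = H_{AD} + H_{CD} - H_D$ into a sum of mutually commuting terms, so that $e^{-\beta H_{ACD}} = e^{-\beta H_{AD}}e^{\beta H_D}e^{-\beta H_{CD}}$. Combined with the identity $\Tr_\cR(e^{-\beta H_\Lambda}) = e^{-\beta H_{\Lambda\setminus\cR}}\Tr_\cR(e^{-\beta H_\cR^\partial})$ (valid for commuting $\Phi$, since then $H_{\Lambda\setminus\cR}$ and $H_\cR^\partial$ commute and the former acts trivially on $\cR$) and with condition (ii), the four relevant marginals take the form
\begin{align*}
\sigma_{ACD} &= \tfrac{\kappa_B}{Z}\, e^{-\beta H_{AD}}e^{\beta H_D}e^{-\beta H_{CD}}(\mathbbm{1}+Q_{\partial B}),\\
\sigma_{AD} &= \tfrac{\kappa_{BC}}{Z}\, e^{-\beta H_{AD}}(\mathbbm{1}+Q_{\partial BC}),\\
\sigma_{DC} &= \tfrac{\kappa_{AB}}{Z}\, e^{-\beta H_{CD}}(\mathbbm{1}+Q_{\partial AB}),\\
\sigma_D &= \tfrac{\kappa_{ABC}}{Z}\, e^{-\beta H_D}(\mathbbm{1}+Q_{\partial ABC}).
\end{align*}

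Next, plugging these into $\sigma_{AD}\sigma_D^{-1}\sigma_{DC}\sigma_{ACD}^{-1}$, condition (iii) forces the scalar prefactors to cancel. The commuting property of $\Phi$ should further imply that each $Q_{\partial\cR}$ lies in the commutant of every $\Phi_Y$ whose support does not meet $\cR$, and in turn that the four operators $\mathbbm{1}+Q_{\partial B}$, $\mathbbm{1}+Q_{\partial AB}$, $\mathbbm{1}+Q_{\partial BC}$, $\mathbbm{1}+Q_{\partial ABC}$ pairwise commute and commute with every exponential $e^{\pm\beta H_{\cR'}}$ appearing in the expression. Reordering factors and telescoping the exponentials, one should be left with
\[
\sigma_{AD}\sigma_D^{-1}\sigma_{DC}\sigma_{ACD}^{-1} = (\mathbbm{1}+Q_{\partial BC})(\mathbbm{1}+Q_{\partial ABC})^{-1}(\mathbbm{1}+Q_{\partial AB})(\mathbbm{1}+Q_{\partial B})^{-1}.
\]
Since the four factors on the right commute, they can be jointly diagonalized, and by condition (iv) each $Q_{\partial\cR}$ has spectrum contained in $[-\varepsilon,\varepsilon]$. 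Every eigenvalue of the product is then of the form $(1+q_1)(1+q_3)/[(1+q_2)(1+q_4)]$ with $|q_i|\leq \varepsilon$, whose extreme values are $(1\pm\varepsilon)^2/(1\mp\varepsilon)^2$. This gives $\|\sigma_{AD}\sigma_D^{-1}\sigma_{DC}\sigma_{ACD}^{-1} - \mathbbm{1}\|_\infty \leq (1+\varepsilon)^2/(1-\varepsilon)^2 - 1 = 4\varepsilon/(1-\varepsilon)^2$, with the same bound for the adjoint version, and substituting into \eqref{equa:estimatesMartinagleCondition1} concludes.

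The main obstacle I anticipate is the commutativity step: partial traces of commuting families of operators need not themselves commute, so justifying that all four $Q_{\partial\cR}$ pairwise commute and commute with each exponential $e^{\pm\beta H_{\cR'}}$ requires a careful support-based argument using that each $\Phi_Y$ commutes with the traced-out exponential $e^{-\beta H_\cR^\partial}$ and observing how $Y$ intersects $\cR$. The additional exponential factor $e^{-2\beta|\partial\cR|\|\Phi\|}$ in condition (iv) looks like a safety buffer designed precisely to absorb any residual multiplicative error of size $e^{\pm 2\beta|\partial\cR|\|\Phi\|}$ arising when some factors cannot be commuted exactly, and it will likely be essential at this step.
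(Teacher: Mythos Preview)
Your overall strategy---expressing the four marginals via condition (ii), cancelling the scalar prefactors via (iii), and bounding $\Delta_\sigma(A:C|D)$ through \eqref{equa:estimatesMartinagleCondition1}---is exactly the paper's. The gap is the one you flag yourself: the pairwise commutativity of the $Q_{\partial\cR}$ with each other and with the exponentials $e^{\pm\beta H_{\cR'}}$ cannot be established. For instance, $Q_{\partial ABC}$ is supported on $\partial(ABC)\subset D$, but $H_{AD}$ contains interaction terms $\Phi_X$ with $X$ straddling $A$ and $D$, and there is no reason for such $\Phi_X$ to commute with the partially traced quantity $Q_{\partial ABC}$. Partial traces of commuting families are not commuting in general.

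The paper avoids this by \emph{conjugating rather than commuting}. After inserting the marginal formulas one has (up to scalars cancelled by (iii))
\[
\sigma_{AD}\sigma_D^{-1}\sigma_{DC}\sigma_{ACD}^{-1}
=(\mathbbm{1}+Q_{\partial BC})\,e^{-\beta(H_{AD}-H_D)}\,(\mathbbm{1}+Q_{\partial ABC})^{-1}(\mathbbm{1}+Q_{\partial AB})\,e^{-\beta(H_{CD}-H_{ACD})}\,(\mathbbm{1}+Q_{\partial B})^{-1}.
\]
Condition (i) yields $H_{CD}-H_{ACD}=H_D-H_{AD}$, so the two exponentials are mutually inverse. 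Setting $\cO':=e^{\beta(H_{AD}-H_D)}\,\cO\,e^{-\beta(H_{AD}-H_D)}$ collapses the expression to a product of four factors $(\mathbbm{1}+\text{small})$ with the middle two conjugated. Since $\Phi$ is commuting and each $Q_{\partial\cR}$ (or its Neumann-series inverse $\widehat Q_{\partial\cR}=(\mathbbm{1}+Q_{\partial\cR})^{-1}-\mathbbm{1}$) is supported on $\partial\cR$, one has $\|\cO'\|_\infty\le e^{2\beta|\partial\cR|\,\|\Phi\|}\|\cO\|_\infty$. This is precisely where the factor $e^{-2\beta|\partial\cR|\|\Phi\|}$ in (iv) is spent, ensuring the conjugated perturbations still have norm at most $\varepsilon$ (respectively $\varepsilon/(1-\varepsilon)$). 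The conclusion then follows from the elementary, commutativity-free bound $\bigl\|\prod_i(\mathbbm{1}+a_i)-\mathbbm{1}\bigr\|_\infty\le\prod_i(1+\|a_i\|_\infty)-1$, which gives $(1+\varepsilon)^2(1-\varepsilon)^{-2}-1$.

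So your diagnosis of the obstacle and your guess about the role of the buffer $e^{-2\beta|\partial\cR|\|\Phi\|}$ are both correct; what remains is to replace the unjustified commutation step by this conjugation-and-absorption argument.
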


\begin{proof}
We will prove the result via the upper bound from Proposition~\ref{prop:estimatesMartinagleCondition} for one of the ordering of the product, since the proof will be analogous for the other one. Let us start by considering
\begin{equation}\label{equa:SufficientCondition1Aux1}
\sigma_{AD} \sigma_{D}^{-1} \sigma_{DC} \sigma_{ACD}^{-1}
= \Tr_{BC}(e^{-\beta H}) \Tr_{ABC}(e^{-\beta H})^{-1}  \Tr_{AB}(e^{-\beta H})  \Tr_{B}(e^{-\beta H})^{-1}\,. 
\end{equation}
Then, let us rewrite each of the four factors in the previous expression, first by extracting the factors of the exponential corresponding to interactions supported in the complement of the region that is being tracing out, and then using the condition ($ii$):
\begin{align*}
\Tr_{BC}(e^{-\beta H})  & = \Tr_{BC}(e^{-\beta H^{\partial}_{BC}}) e^{-\beta H_{AD}} = \kappa_{BC} (\mathbbm{1} + Q_{\partial BC}) e^{-\beta H_{AD}} \,,\\[2mm]
\Tr_{ABC}(e^{-\beta H})  & = \Tr_{ABC}(e^{-\beta H^{\partial}_{ABC}}) e^{-\beta H_{D}} = \kappa_{ABC} (\mathbbm{1} + Q_{\partial ABC}) e^{-\beta H_{D}}\,,\\[2mm]
\Tr_{AB^{c}}(e^{-\beta H})  & = \Tr_{AB}(e^{-\beta H^{\partial}_{AB}}) e^{-\beta H_{CD}} = \kappa_{AB} (\mathbbm{1} + Q_{\partial AB} )e^{-\beta H_{CD}}\,,\\[2mm]
\Tr_{B}(e^{-\beta H})  & = \Tr_{B}(e^{-\beta H^{\partial}_{B}}) e^{-\beta H_{ACD}} = \kappa_{B} (\mathbbm{1} + Q_{\partial B}) e^{-\beta H_{ACD}}\,.
\end{align*}
Inserting these expressions in \eqref{equa:SufficientCondition1Aux1}, we obtain
\begin{multline}\label{equa:SufficientCondition1Aux2}
\sigma_{AD} \sigma_{D}^{-1} \sigma_{DC} \sigma_{ACD}^{-1} =\\ (\mathbbm{1} + Q_{\partial BC}) e^{-\beta H_{AD}+\beta H_{D}}(\mathbbm{1} + Q_{\partial ABC})^{-1} (\mathbbm{1} + Q_{\partial AB} )e^{-\beta H_{CD}+\beta H_{ACD}} (\mathbbm{1} + Q_{\partial B})^{-1} \,. 
\end{multline}
Observe that the inverse of $\mathbbm{1} + Q_{\partial \cR}$ can be actually expanded $(\mathbbm{1} + Q_{\partial \cR})^{-1} = \sum_{m=0}^{\infty} (-1)^{m} Q_{\partial \cR}^{m}$, where this series is actually absolutely convergent whenever $\norm{Q_{\partial \cR}}<1$. Indeed, if we set
\[ \widehat{Q}_{\partial \cR} = (\mathbbm{1} + Q_{\partial \cR})^{-1} - \mathbbm{1}\,, \]
then $\widehat{Q}_{\partial \cR}$ is supported in $\partial \cR$ and satisfies, by condition ($iv$), the bound
\[ \|\widehat{Q}_{\partial \cR} \|_{\infty} \leq \sum_{m=1}^{\infty} \| Q_{\partial \cR}\|_{\infty}^{m} = \frac{ \norm{Q_{\partial \cR}}_{\infty}}{1-\norm{Q_{\partial \cR}}_{\infty}} \leq \frac{\varepsilon}{1-\varepsilon} e^{-2 \beta |\partial \cR| \| \Phi\|}\,. \]
Thus, we can rewrite \eqref{equa:SufficientCondition1Aux2} replacing $(\mathbbm{1}+Q_{\partial \cR})^{-1} = \mathbbm{1} + \widehat{Q}_{\partial \cR}$, 
\begin{multline}\label{equa:SufficientCondition1Aux3}
\sigma_{AD} \sigma_{D}^{-1} \sigma_{DC} \sigma_{ACD}^{-1}  =\\ (\mathbbm{1} + Q_{\partial BC}) e^{\beta H_{AD}-\beta H_{D}}(\mathbbm{1} + \widehat{Q}_{\partial ABC}) (\mathbbm{1} + Q_{\partial AB} )e^{\beta H_{CD}-\beta H_{ACD}} (\mathbbm{1} + \widehat{Q}_{\partial B}) \,. 
\end{multline}
Using condition ($i$), we deduce that $H_{CD}-H_{ACD} = H_{D} - H_{AD}$. For any operator $\cO$, let us define
\[ 
\cO'  := e^{\beta H_{AD} - \beta H_{D}} \cO e^{-\beta H_{AD} + \beta H_{D}}.
\]
Then, we can write the above expression as
\begin{equation}\label{equa:SufficientCondition1Aux4}
\sigma_{AD} \sigma_{D}^{-1} \sigma_{DC} \sigma_{ACD}^{-1}  = (\unit + Q_{\partial BC}) (\unit+\widehat{Q}_{\partial ABC}') (\unit+ Q_{\partial AB}') (\unit+\widehat{Q}_{\partial B})\,.
\end{equation}
We need estimates for the norm of $\widehat{Q}_{\partial ABC}'$ and $ Q_{\partial AB}'$. Note that both cases are obtained by evolving $\cO$ under imaginary time evolution. But we can easily estimate it using the following fact: if we evolve (imaginary time evolution) an observable $O_{X}$ supported on a region $X$ according to some commuting Hamiltonian $\widetilde{H}$ made of local (and commuting) interactions $\widetilde{\Phi}$, then we can rewrite
\[ e^{\beta \widetilde{H}} O_{X}e^{-\beta \widetilde{H}} = e^{\beta \widetilde{H}_{X}^{\partial}} O_{X}e^{-\beta\widetilde{H}_{X}^{\partial}}\,. \]
so that
\[ \| e^{\beta \widetilde{H}} O_{X}e^{-\beta \widetilde{H}} \|_{\infty} = \| e^{\beta \widetilde{H}_{X}^{\partial}} O_{X}e^{-\beta \widetilde{H}_{X}^{\partial}} \|_{\infty} \leq \| O_{X}\|_{\infty} \, e^{2 \| \beta \widetilde{H}_{X}^{\partial}\|_{\infty}} \leq \| O_{X}\|_{\infty} e^{2 \beta |X| \| \widetilde{\Phi}\|}\,. \]
This yields
\[ 
\| \widehat{Q}_{\partial ABC}'\|_{\infty}  \leq \| \widehat{Q}_{\partial ABC}\|_{\infty} e^{ 2\beta |\partial ABC| \,\| \Phi\|} \leq \frac{\varepsilon}{1-\varepsilon} \quad , \quad
\| Q_{\partial AB}'\|_{\infty}  \leq \| Q_{\partial AB}\|_{\infty} e^{ 2\beta |\partial AB| \,\| \Phi\|} \leq \varepsilon .\]
Finally, we have to use the observation that the polynomial
\[
(\unit + Q_{\partial BC}) (\unit+\widehat{Q}_{\partial ABC}') (\unit+ Q_{\partial AB}') (\unit+Q_{\partial B}) - \unit,
\]
has all positive coefficients, and therefore we can write
\begin{multline*}
\norm{(\unit + Q_{\partial BC}) (\unit+\widehat{Q}_{\partial ABC}') (\unit+ Q_{\partial AB}') (\unit+Q_{\partial B}) - \unit}_{\infty} \\ \le (1 + \norm*{Q_{\partial BC}}_{\infty}) (1+\norm*{\widehat{Q}_{\partial ABC}'}_{\infty}) (1+ \norm*{Q_{\partial AB}'}_{\infty}) (1+\norm*{\widehat{Q}_{\partial B}}_{\infty}) - 1\\
\le (1+\varepsilon) (1+\tfrac{\varepsilon}{1-\varepsilon})(1+\varepsilon) (1+\tfrac{\varepsilon}{1-\varepsilon})-1 \le
\qty(\frac{1+\varepsilon}{1-\varepsilon})^2 - 1.
\end{multline*}
This finishes the proof.
\end{proof}

\section{1D spin chains}
\label{sec:1D-chains}

In this section, we will first present as an example the application of the results of Section~\ref{sec:Estimating_the_gap} to the case of the 1D Ising model. We will then show that it is possible to obtain spectral gap estimates which are system-size independent for the purified canonical Hamiltonian associated to the Gibbs state of any 1D, local Hamiltonian (including non-commuting one), at any positive temperature.

\subsection{1D Ising Model}

\begin{figure}[hbt]
    \centering
\begin{tikzpicture}
    \def\nodes{16}
    \def\radius{1}
    \def\angreg{34}

    \begin{scope}
    \draw[thick] (0,0) circle(\radius);

    \foreach \i in {1,...,\nodes} {
        \pgfmathsetmacro{\angle}{360/\nodes * (\i - 1)}
        \node[draw, circle, fill=blue!20, inner sep=2pt] at (\angle:\radius) {};
    }


     \draw[thin] (\angreg:{\radius-0.2}) -- (\angreg:{\radius+0.5});

    \draw[thin] ({180-\angreg}:{\radius-0.2}) -- ({180-\angreg}:{\radius+0.5});


     \draw[ stealth-stealth] (\angreg+3:{\radius+0.3}) arc(\angreg+3:{180-\angreg-3}:{\radius+0.3});

     \node at (0, {\radius+0.5}) {$I$};
   
    \end{scope}


\begin{scope}[xshift=3.5cm]
    \draw[thick] (0,0) circle(\radius);

    \foreach \i in {1,...,\nodes} {
        \pgfmathsetmacro{\angle}{360/\nodes * (\i - 1)}
        \node[draw, circle, fill=blue!20, inner sep=2pt] at (\angle:\radius) {};
    }


     \draw[thin] (\angreg:{\radius-0.2}) -- (\angreg:{\radius+0.5});

    \draw[thin] ({180-\angreg}:{\radius-0.2}) -- ({180-\angreg}:{\radius+0.5});

    \node at (180-\angreg-10:{\radius+0.4}) {$1$};

    \node at (70:{\radius+0.3}) {$\cdot$};
    \node at (80:{\radius+0.3}) {$\cdot$};
    \node at (90:{\radius+0.3}) {$\cdot$};
    \node at (100:{\radius+0.3}) {$\cdot$};
    \node at (110:{\radius+0.3}) {$\cdot$};
    
    \node at (\angreg+10:{\radius+0.4}) {$m$};

    \end{scope}


\begin{scope}[xshift=8.5cm]
    \draw[thick] (0,0) circle(\radius);

    \foreach \i in {1,...,\nodes} {
        \pgfmathsetmacro{\angle}{360/\nodes * (\i - 1)}
        \node[draw, circle, fill=blue!20, inner sep=2pt] at (\angle:\radius) {};
    }


    \draw[thin] (-\angreg:{\radius-0.2}) -- (-\angreg:{\radius+0.5});

     \draw[thin] (\angreg:{\radius-0.2}) -- (\angreg:{\radius+0.5});

    \draw[thin] ({180-\angreg}:{\radius-0.2}) -- ({180-\angreg}:{\radius+0.5});

     \draw[thin] ({180+\angreg}:{\radius-0.2}) -- ({180+\angreg}:{\radius+0.5});


    \draw[ stealth-stealth] (-\angreg+3:{\radius+0.3}) arc(-\angreg+3:\angreg-3:{\radius+0.3});

    \node at ({\radius+0.5}, 0) {$C$};

     \draw[ stealth-stealth] (\angreg+3:{\radius+0.3}) arc(\angreg+3:{180-\angreg-3}:{\radius+0.3});

     \node at (0, {\radius+0.5}) {$I_{1}$};
     
    \draw[ stealth-stealth] ({180-\angreg+3}:{\radius+0.3}) arc({180-\angreg+3}:{180+\angreg-3}:{\radius+0.3});

    \node at ({-\radius-0.5}, 0) {$A$};

    \draw[ stealth-stealth] ({180+\angreg+3}:{\radius+0.3}) arc({180+\angreg+3}:{360-\angreg-3}:{\radius+0.3});

    \node at (0,{-\radius-0.5}) {$I_{2}$};
    \end{scope}


    \begin{scope}[xshift=12cm]
    \draw[thick] (0,0) circle(\radius);

    \foreach \i in {1,...,\nodes} {
        \pgfmathsetmacro{\angle}{360/\nodes * (\i - 1)}
        \node[draw, circle, fill=blue!20, inner sep=2pt] at (\angle:\radius) {};
    }


    \draw[thin] (-\angreg:{\radius-0.2}) -- (-\angreg:{\radius+0.5});

     \draw[thin] (\angreg:{\radius-0.2}) -- (\angreg:{\radius+0.5});

    \draw[thin] ({180-\angreg}:{\radius-0.2}) -- ({180-\angreg}:{\radius+0.5});

     \draw[thin] ({180+\angreg}:{\radius-0.2}) -- ({180+\angreg}:{\radius+0.5});

    \node at (180+\angreg-10:{\radius+0.3}) {$l$};

    \node at (180-\angreg+10:{\radius+0.3}) {$k$};

    \node at (\angreg-10:{\radius+0.3}) {$j$};

    \node at (-\angreg+10:{\radius+0.3}) {$i$};
    
    \end{scope}
    
    \end{tikzpicture}
    \caption{On the left, a selected subinterval $I$ of the ring, whose sites are identified with $[1,m]$. On the right, a partition of the $1D$ ring into four subintervals $\Lambda_{N}=A I_{1} C I_{2}$, where $I_{1}$ and $I_{2}$ shield $A$ from $C$. The endpoints of $A$ are marked as $l$ and $k$, and the endpoints of $C$ as $i$ and $j$.} 
    \label{fig:Ising1D}
\end{figure}
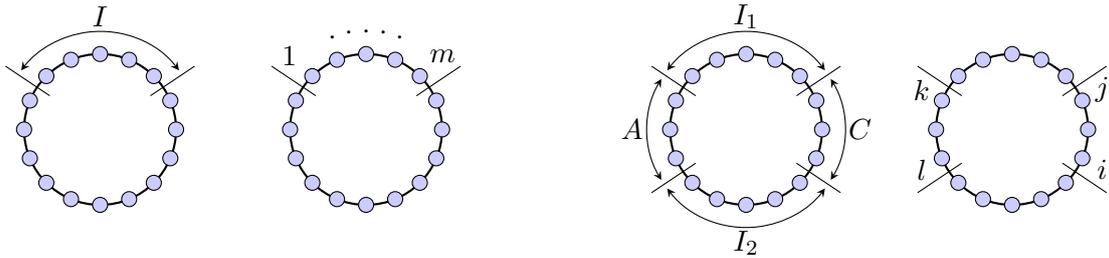

We consider $\Lambda = \mathbb{Z}_{N}$. Recall that the Hamiltonian of the Ising model is given by the nearest-neighbor interaction interaction
\[ H = \sum_{k=1}^{N} - Z_{k} Z_{k+1}\,. \]
Let $I$ be an interval of the ring $\Lambda$ that we can enumerate as $I=[1,m]$ (see Figure~\ref{fig:Ising1D}). Then, 
\[ e^{-\beta H_{\Lambda}} = \prod_{j=1}^{N} e^{\beta Z_{j} Z_{j+1}} =  \cosh(\beta)^{N} \prod_{j=1}^{N} (\mathbbm{1} + \tanh(\beta) Z_{j} Z_{j+1})\,. \]
If we apply $\Tr_{I^{c}}(\cdot)$ on the above expression, then we can take out those interactions supported in $I$. Expanding the product of the remaining terms, we get a linear combination of products of $Z_{j}'s$. The trace of each summand will be zero if for some $j \in I^{c}$ there is exactly one factor  $Z_{j}$. Therefore,
\begin{equation}\label{equa:marginalGibbsIsing} 
\Tr_{I^c}\left(e^{-\beta H_{\Lambda}}\right) =  2^{|I^c|} \cosh(\beta)^{N} (\mathbbm{1} + \tanh(\beta)^{N-m+1}Z_{1} Z_{m}) \, e^{-\beta H_{I}}\,.
\end{equation}
In particular, the partition function is \begin{equation}
\label{equa:partitionFunctionIsing}
\Tr\left(e^{-\beta H_\Lambda}\right) =2^{N} \cosh(\beta)^{N} (1+\tanh(\beta)^{N})\,.
\end{equation}

\begin{Prop}\label{Prop:correlationDecay1DIsing}
Let $\sigma = \sigma_{\beta}$ be the Gibbs state of the 1D Ising model. Consider any partition $\Lambda_{N} = A I_{1} C I_{2}$ into four nonempty intervals as in Figure~\ref{fig:Ising1D}, where $I_{1}$ and $I_{2}$ shield $A$ from $C$,  and let $\ell := \max\{|I_{1}|, |I_{2}| \} \geq 1$. Then, for any $D \in \{\emptyset, I_{1}, I_{2}\}$, it holds that
\[ \Delta_{\sigma}(A:C|D) 
\leq \delta(\ell) := 1 - \left( \frac{1-\tanh(\beta)^{\ell}}{1+\tanh(\beta)^{\ell}} \right)^{2}  = \frac{4 \tanh(\beta)^{\ell}}{(1+\tanh(\beta)^{\ell})^{2}}\,.\]
\end{Prop}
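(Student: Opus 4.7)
The plan is to exploit the fact that the 1D Ising Gibbs state is diagonal in the computational $Z$-basis, so all reduced density matrices entering $\Delta_\sigma(A:C|D)$ lie in a common commutative algebra. Proposition~\ref{prop:commutingMartingaleUpperBound} therefore reduces the problem to bounding $M:=\norm{\sigma_{AD}\sigma_D^{-1}\sigma_{DC}\sigma_{ADC}^{-1}}_\infty$, after which $\Delta_\sigma(A:C|D)\le 1-M^{-1}$.

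To compute $M$, I would first extend~\eqref{equa:marginalGibbsIsing} to any connected arc $J\subset\Lambda_N$ with connected complement: the same parity argument yields
\[
\Tr_J(e^{-\beta H_\Lambda})=2^{|J|}\cosh(\beta)^{|J|+1}\,\bigl(\mathbbm{1}+t^{|J|+1}Z_uZ_v\bigr)\,e^{-\beta H_{J^c}^{\mathrm{open}}},
\]
with $u,v$ the endpoints of $J^c$, since only the two constant choices (all $0$ or all $1$) of the interaction variables along $J$ survive the partial trace. For $D\in\{I_1,I_2\}$ each of the four marginals in $M$ is a single-arc trace; for $D=\emptyset$ the marginal $\sigma_{AC}$ requires the two disjoint arcs $I_1$ and $I_2$, and the formula splits into a product of two such factors. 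Assembling the ratio, the $\cosh$ and $2^{|\cdot|}$ prefactors cancel, and the open Hamiltonians cancel because $H_{AD}^{\mathrm{open}}+H_{DC}^{\mathrm{open}}=H_D^{\mathrm{open}}+H_{ADC}^{\mathrm{open}}$ (nearest-neighbour locality together with the non-adjacency of $A$ and $C$), leaving
\[
\sigma_{AD}\sigma_D^{-1}\sigma_{DC}\sigma_{ADC}^{-1}=K\cdot\frac{(\mathbbm{1}+\alpha P_1)(\mathbbm{1}+\beta P_2)}{(\mathbbm{1}+\gamma P_3)(\mathbbm{1}+\delta P_4)},
\]
where $P_1,\ldots,P_4$ are two-point $Z$-products satisfying $P_1P_2=P_3P_4$; for $D\in\{I_1,I_2\}$ one has $\alpha\beta=\gamma\delta$ and $K=1$, while for $D=\emptyset$ one has $\alpha\beta=t^N\gamma\delta$ and $K=(1+t^N)^{-1}$.

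Since the $P_i$ commute and generate a rank-three abelian subgroup of the Pauli group, simultaneous diagonalisation over the eight characters $(s_1,\ldots,s_4)\in\{\pm 1\}^4$ subject to $s_1s_2=s_3s_4$ identifies the maximum eigenvalue as $M=K\,(1+\alpha)(1+\beta)/[(1-\gamma)(1-\delta)]$, with optimum at $s_1=s_2=+$, $s_3=s_4=-$. For $D=I_k$ (labelling the other shield $I_{k'}$), the substitution $u=t^{|I_{k'}|+1}$, $x=t^{|A|}$, $y=t^{|C|}$ gives the clean factorisation $\delta=u$, $\alpha=yu$, $\beta=xu$, $\gamma=xyu$; the identity $\alpha\beta=\gamma\delta$ is then automatic and, after the cross-terms cancel, one obtains
\[
\Delta_\sigma(A:C|I_k)\,\le\,\frac{u\,(1+x)(1+y)}{(1+xu)(1+yu)}.
\]
The $D=\emptyset$ case produces the same expression after using $\alpha\beta=t^N\gamma\delta$ to absorb the factor $(1+t^N)^{-1}$, now with $u$ equal to the larger of the two arc coefficients $t^{|I_1|+1}$ and $t^{|I_2|+1}$.

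The proof then concludes with the elementary inequality
\[
(1+x)(1+y)(1+u)^2\,\le\,4\,(1+xu)(1+yu),\qquad x,y,u\in[0,1],
\]
which is exactly equivalent to $u(1+x)(1+y)/[(1+xu)(1+yu)]\le 4u/(1+u)^2=\delta(\ell)$, where $\ell$ is the exponent of $u$. I would verify it by writing the difference as $(1-u)\,g(x,y)$ with $g(x,y)=(3+u)-(1-u)(x+y)-(3u+1)xy$, which is bilinear and monotone decreasing in each of $x,y$ on $[0,1]^2$ and thus attains its minimum $g(1,1)=0$ at the corner. The main obstacle is the combinatorial bookkeeping in the marginal step: for each of the three cases $D\in\{\emptyset,I_1,I_2\}$ one must correctly identify the traced arcs, track the endpoint operators $Z_uZ_v$, and verify that the open Hamiltonians together with the $\cosh$ and scalar prefactors cancel exactly. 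Once that is set up, the character maximisation and the final bilinear inequality are both routine.
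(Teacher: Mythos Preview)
Your strategy coincides with the paper's: invoke Proposition~\ref{prop:commutingMartingaleUpperBound} to reduce to $M=\|\sigma_{AD}\sigma_D^{-1}\sigma_{DC}\sigma_{ADC}^{-1}\|_\infty$, compute each marginal from~\eqref{equa:marginalGibbsIsing}, and observe that the open Hamiltonians cancel by nearest-neighbour locality. The only difference is how you bound $M$. The paper simply uses submultiplicativity, $\|(\mathbbm{1}+\alpha P)\|=1+\alpha$ and $\|(\mathbbm{1}+\delta P)^{-1}\|=(1-\delta)^{-1}$, and the monotonicity $\alpha,\beta,\gamma,\delta\le t^{\ell}$ to get $M\le\bigl(\tfrac{1+t^\ell}{1-t^\ell}\bigr)^2$ in one line. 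Your route through the exact spectrum (via the constraint $P_1P_2=P_3P_4$) and the bilinear inequality $(1+x)(1+y)(1+u)^2\le 4(1+xu)(1+yu)$ is correct for $D\in\{I_1,I_2\}$ but lands on exactly the same bound $4u/(1+u)^2$; the extra precision is never used.

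The $D=\emptyset$ step, however, is a genuine gap. The assertion that one obtains ``the same expression'' $u(1+x)(1+y)/[(1+xu)(1+yu)]$ after absorbing $(1+t^N)^{-1}$ via $\alpha\beta=t^N\gamma\delta$ is false. Take $|A|=|C|=|I_1|=|I_2|=1$ (so $N=4$): then $\alpha=\beta=t^4$, $\gamma=\delta=t^2$, $K=(1+t^4)^{-1}$, and a direct computation gives
\[
1-M^{-1}=\frac{2t^2}{1+t^4},
\]
whereas your formula with $u=t^2$, $x=y=t$ gives $t^2(1+t)^2/(1+t^3)^2$. At $t=\tfrac12$ these are $\tfrac{8}{17}$ and $\tfrac{4}{9}$ respectively; your value is \emph{smaller}, so it cannot serve as an upper bound for $1-M^{-1}$. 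The remedy is exactly what the paper does: discard the helpful factor $(1+t^N)^{-1}\le 1$ and bound the four remaining factors separately, using that all four exponents $N-|A|+1$, $N-|C|+1$, $|I_1|+1$, $|I_2|+1$ are at least $\ell$. No eigenvalue analysis is needed. (As an aside, both arguments only establish the bound for $\ell\le\min\{|I_1|,|I_2|\}+1$; the $\max$ in the statement appears to be a typo for $\min$.)
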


\begin{proof}
Let us discuss first the case in which $D=I_{2}$ and denote $B = I_{1}$ (the case $D=I_{1}$ will also hold by symmetry). We denote the end points of the interval $A$ by $i$ and $j$, and the endpoints of $C$ by $k$ and $l$, as in Figure~\ref{fig:Ising1D}. The endpoints of $I_{1}=D$ will then be $l+1$ and $i-1$. Using \eqref{equa:marginalGibbsIsing} 
\begin{align*}
\sigma_{AD} \sigma_{D}^{-1} \sigma_{DC} \sigma_{ADC}^{-1}
& = \Tr_{BC}\left( e^{-\beta H}\right) \cdot \Tr_{ABC}\left( e^{-\beta H}\right)^{-1} \cdot \Tr_{AB}\left( e^{-\beta H}\right) \cdot \Tr_{B}\left( e^{-\beta H}\right)^{-1}   \\[2mm]
& = \left( \mathbbm{1} + \tanh(\beta)^{|BC|+1} Z_{l+1} Z_{j} \right) \cdot \left( \mathbbm{1} + \tanh(\beta)^{|ABC|+1} Z_{i-1} Z_{l+1} \right)^{-1} \cdot \\[2mm]
& \quad \quad \cdot \left( \mathbbm{1} + \tanh(\beta)^{|AB|+1} Z_{i-1} Z_{k} \right) \cdot \left( \mathbbm{1} + \tanh(\beta)^{|B|+1} Z_{j} Z_{k} \right)^{-1}\\[2mm]
& \quad \quad \cdot e^{-\beta H_{BC} + \beta H_{ABC} - \beta H_{AB} + \beta H_{B}}\,.
\end{align*}
Note that $-H_{BC} +H_{ABC} - H_{ABC} + H_{B} = 0$, since there are no (nonzero) interactions whose support has nonempty intersection with both $A$ and $C$. We thus easily estimate the supremum norm of the previous expression by
\[ \|\sigma_{AD} \sigma_{D}^{-1} \sigma_{DC} \sigma_{ADC}^{-1} \|_{\infty} \leq \frac{(1+\tanh(\beta)^{|BC|+1}) (1+\tanh(\beta)^{|AB|+1})}{(1-\tanh(\beta)^{|B|+1}) (1-\tanh(\beta)^{|ABC|+1})} \leq \left(\frac{1+\tanh(\beta)^{\ell}}{1-\tanh(\beta)^{\ell}}\right)^2\,. \]
Next we consider the case $D=\emptyset$ and denote $B=I_{1}I_{2}$. First, note that
\begin{equation*}
\Tr_{B}(e^{-\beta H_\Lambda}) = 2^{|B|} \left( \mathbbm{1} + \tanh(\beta)^{|I_{1}|+1} Z_{j} Z_{k} \right) \left( \mathbbm{1} + \tanh(\beta)^{|I_{2}|+1} Z_{i} Z_{l} \right) e^{-\beta H_{A}-\beta H_{C}}\,.  \\
\end{equation*}
Combining the last equality with \eqref{equa:partitionFunctionIsing} and \eqref{equa:marginalGibbsIsing}, we get
\begin{align*} \sigma_{A} \sigma_{C} \sigma_{AC}^{-1} & =  \Tr_{BC}\left( e^{-\beta H_\Lambda}\right) \cdot \Tr_{ABC}\left( e^{-\beta H_\Lambda}\right)^{-1} \cdot \Tr_{AB}\left( e^{-\beta H_\Lambda}\right) \cdot \Tr_{B}\left( e^{-\beta H_\Lambda}\right)^{-1}\\[2mm]
& =\left( \mathbbm{1} + \tanh(\beta)^{|AB|+1} Z_{i} Z_{j}\right) \, \left(1+\tanh(\beta)^{N}\right)^{-1}  \,\left(\mathbbm{1} + \tanh(\beta)^{|BC|+1} Z_{l} Z_{k}\right)  \\
& \hspace{2cm} 
\left( \mathbbm{1} + \tanh(\beta)^{|I_{1}|+1} Z_{j} Z_{k}\right)^{-1} 
\left( \mathbbm{1} + \tanh(\beta)^{|I_{2}|+1} Z_{i} Z_{l}\right)^{-1}
\end{align*}
Taking the supremum norm of this operator, we conclude
\[ \| \sigma_{A} \sigma_{C} \sigma_{AC}^{-1} \|_{\infty} \leq \frac{\left( 1 + \tanh(\beta)^{|BC|+1} \right) \left( 1 + \tanh(\beta)^{|AB|+1} \right)}{ \left( 1 - \tanh(\beta)^{|I_{1}|+1} \right) \left( 1 - \tanh(\beta)^{|I_{2}|+1} \right) } \leq \left( \frac{1+\tanh(\beta)^\ell}{1-\tanh(\beta)^{\ell}}\right)^{2}\,. \]
This finishes the proof.
\end{proof}

\begin{Coro}\label{coro:ising-model-gap}
Let $N \geq 2$ and let $\mathbf{H}_{\Lambda_N}$ be the canonical purified Hamiltonian associated to the Gibbs state $\sigma_{\beta}$ of the Ising model. Then, 
\begin{equation}\label{equa:coro-ising-model-gap} 
\operatorname{gap}(\mathbf{H}_{\Lambda_N}) \geq e^{-5-76\beta} e^{-34 \beta^{2}}  \left(\prod_{k=1}^{\infty} \frac{1-e^{-\frac{1}{2}(9/8)^k}}{1+e^{-\frac{1}{2}(9/8)^k}} \right)^2 \,, \end{equation}
where the last product converges to a positive constant independent of $N$ and $\beta$.
\end{Coro}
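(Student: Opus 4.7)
The plan is to invoke Theorem~\ref{Theo:gapIntervals1D} applied to $\sigma = \sigma_\beta$, after verifying Assumption~\ref{Assumption:1Dgap} via Proposition~\ref{Prop:correlationDecay1DIsing} and estimating the two remaining ingredients in \eqref{eq:gapIntervals1D}: the infinite product $\prod_{k\ge 0}(1-\delta_k)$ and the small-region $\eta$-constants. The constant $e^{-5}$ comes directly from Theorem~\ref{Theo:gapIntervals1D}.

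The mixing assumption is immediate: Proposition~\ref{Prop:correlationDecay1DIsing} provides the non-increasing function
\[
\delta(\ell) = \frac{4\tanh(\beta)^\ell}{(1+\tanh(\beta)^\ell)^2},
\]
which simultaneously controls $\Delta_\sigma(A:C|D)$ for all three required choices $D\in\{\emptyset, I_1, I_2\}$. The key identity that rewrites each factor $1-\delta_k$ in the form appearing in the corollary is
\[
\frac{1-u^\ell}{1+u^\ell} = \tanh\!\qty(\frac{\ell \log(1/u)}{2}),
\]
applied with $u = \tanh\beta$; this yields $1-\delta(\ell) = \tanh^2(\ell \log\coth\beta/2)$, which matches the squared product form of \eqref{equa:coro-ising-model-gap}.

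For the $\eta$-constants I specialize Remark~\ref{Rema:etaProperties}$(iv)$ to the Ising interaction. Taking $Q = e^{-\beta H_{I^c}/2}\in\cB(\cH_{I^c})$ (a valid choice because all terms $Z_kZ_{k+1}$ commute), the operator norm $\|H_\Lambda - H_{I^c}\|_\infty$ equals the number of bonds incident to the interval $I$, namely $|I|-1$ internal bonds plus $2$ boundary bonds. Hence $\eta_I(\sigma) \leq e^{\beta(|I|+1)}$, so $1/\eta_I(\sigma)^4 \geq e^{-4\beta(|I|+1)}$ for every interval $I\subset\Lambda_N$. I then choose $\mu = \max\{9,\lceil 17\beta/2\rceil + 18\}$. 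On the one hand, this yields $4\beta(\mu+1) \leq 34\beta^2 + 76\beta$, producing exactly the prefactor $e^{-34\beta^2-76\beta}$. On the other hand, for $k \geq 1$ the associated $\ell_k = \lfloor \mu(9/8)^k/9\rfloor$ is large enough that $\ell_k \log\coth\beta \geq (9/8)^k/2$, and the tanh identity then gives $1-\delta_k \geq \qty(\frac{1-e^{-(9/8)^k/2}}{1+e^{-(9/8)^k/2}})^2$. The $k=0$ factor, together with any slack from the inequality at small $k$, is absorbed into the $e^{-76\beta}$ prefactor via the elementary estimate $1-\delta_0 \geq e^{-c\beta}$ for a moderate constant $c$. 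Convergence of the resulting product to a strictly positive $\beta$-independent constant then follows from Remark~\ref{Rema:infiniteProductConvergence}, since $e^{-(9/8)^k/2}$ is trivially summable.

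The main obstacle is the quantitative balance behind the choice of $\mu$: Remark~\ref{Rema:etaProperties}$(iv)$ makes the $\eta$-factor grow exponentially in $\beta\mu$, while the mixing bound forces $\mu$ to grow with $\beta$ in order to keep $\tanh(\beta)^{\ell_k}$ small, especially at low temperatures where $\log\coth\beta$ decays exponentially in $\beta$. The linear choice $\mu \sim 17\beta/2 + 18$ is a compromise whose feasibility rests on the generous slack in the weak bound $e^{-34\beta^2-76\beta}$; the delicate technical verification is to show that this slack really does absorb all terms coming from the small-$k$ regime where $\ell_k\log\coth\beta$ might fail to reach $(9/8)^k/2$.
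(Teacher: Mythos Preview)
Your overall strategy---apply Theorem~\ref{Theo:gapIntervals1D} with the mixing function from Proposition~\ref{Prop:correlationDecay1DIsing} and control the $\eta$-factor via Remark~\ref{Rema:etaProperties}(iv)---is the right one, but the quantitative balance you propose does not close.

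The gap is in the choice $\mu \approx 17\beta/2 + 18$. You want $\ell_k \log\coth\beta \geq \tfrac{1}{2}(9/8)^k$ with $\ell_k = \lfloor \mu(9/8)^k/9\rfloor$; ignoring the floor, this is the \emph{$k$-independent} condition $\tfrac{\mu}{9}\log\coth\beta \geq \tfrac{1}{2}$. But $\log\coth\beta = \log\frac{1+e^{-2\beta}}{1-e^{-2\beta}} \sim 2e^{-2\beta}$ as $\beta\to\infty$, so with a linear-in-$\beta$ choice of $\mu$ one has $\mu\log\coth\beta\to 0$, and the inequality fails for \emph{every} $k$ once $\beta$ is large. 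It is therefore not a ``small-$k$'' boundary effect that the slack in $e^{-76\beta}$ can absorb; to make the inequality hold at all you would need $\mu$ of order $e^{2\beta}$, and then the $\eta$-factor $e^{-4\beta(\mu+1)}$ becomes doubly exponential in $\beta$, destroying the bound.

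The paper resolves this differently: it keeps $\mu = 9$ fixed (so the $\eta$-factor is simply $e^{-72\beta}$) and instead splits the infinite product at $k_0 = \lceil 17\beta\rceil$. For $k\leq k_0$ each factor $\frac{1-\tanh(\beta)^{\lfloor(9/8)^k\rfloor}}{1+\tanh(\beta)^{\lfloor(9/8)^k\rfloor}}$ is bounded below only by the crude $\frac{1-\tanh\beta}{1+\tanh\beta}$, and it is precisely these $\sim 17\beta$ crude factors (squared) that generate the $e^{-34\beta^2}$. For $k>k_0$ one uses $(9/8)^{17}\geq e^{2}$, hence $(9/8)^{k_0}\geq e^{2\beta}$, which together with $\tanh(\beta)^{m}\leq \exp\bigl(-\tfrac{1}{2}m\,e^{-2\beta}\bigr)$ gives $\tanh(\beta)^{\lfloor(9/8)^k\rfloor}\leq e^{-\frac{1}{2}(9/8)^{k-k_0}}$; a shift of index then produces the $\beta$-independent product. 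In short, the $e^{-34\beta^2}$ does not arise from letting $\mu$ grow with $\beta$: it comes from accepting $O(\beta)$ crude factors in the product while $\mu$ stays fixed.
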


 \begin{proof} 
 Let us fix $\mu=9$. Observe that, by Remark~\ref{Rema:etaProperties}.($iv$),
for any finite interval $I \subset \Lambda_N$ with $|I| \leq 9$, we can estimate 
 \[\eta_{I}(\sigma)^{4} \leq e^{4\beta |I| \| \Phi\|} \leq e^{8\beta |I|} \leq e^{72 \beta}.\] 
 Thus, if $N \leq \mu=9$, we can apply Theorem~\ref{Theo:roughBoundSpectralGap} to estimate $\operatorname{gap}(\mathbf{H}_{\Lambda_N}) \geq e^{-72 \beta}$, and so inequality \eqref{equa:coro-ising-model-gap} clearly holds in this case. Let us now assume that $N > \mu=9$. If we apply Theorem~\ref{Theo:gapIntervals1D} with these $N$ and $\mu$, and with the function $\delta(\ell)$ from Proposition~\ref{Prop:correlationDecay1DIsing}, we get
 \begin{equation}\label{equa:coro-ising-model-gap-Aux1} 
 \operatorname{gap}(\mathbf{H}_{\Lambda_N}) \geq e^{-5} \left( \prod_{k=0}^{\infty}(1-\delta_{k}) \right) e^{-72 \beta}\,. 
 \end{equation}
 Using that $(1-1/x)^{x} \leq e^{-1}$ for $x>1$, we can estimate
 \[ \tanh(\beta)^{\lfloor (9/8)^k \rfloor} \leq \tanh(\beta)^{\frac{1}{2}(9/8)^k} \leq (1-e^{-2\beta})^{\frac{1}{2}(9/8)^k} \leq \exp( - \tfrac{1}{2} (9/8)^{k} e^{-2\beta})\,. \]
Using this inequality, and since $(9/8)^{17} \geq e^{2}$, for all $k \geq k_{0}:= \lceil 17 \beta \rceil$, we deduce that
\[ \tanh(\beta)^{\lfloor (9/8)^{k} \rfloor} \leq \exp(-\tfrac{1}{2} (9/8)^{k-k_{0}})\,.\] 
As a consequence, since the function $y \mapsto \frac{1-y}{1+y}$ is decreasing on $0<y<1$, we can estimate 
 \begin{align*} 
 \prod_{k=0}^{\infty} \frac{1-\tanh(\beta)^{\lfloor (9/8)^k \rfloor}}{1+\tanh(\beta)^{\lfloor (9/8)^k \rfloor}} 
 & \geq
 \prod_{k=0}^{k_{0}}\frac{1-\tanh(\beta)}{1+\tanh(\beta)} \,
 \prod_{k=k_{0}+1}^{\infty} \frac{1-e^{-\frac{1}{2}(9/8)^{(k-k_{0})}}}{1+e^{-\frac{1}{2}(9/8)^{(k-k_{0})}}}\\
 & = e^{-\beta (k_{0}+1)} \prod_{k=1}^{\infty} \frac{1-e^{-\frac{1}{2}(9/8)^k}}{1+e^{-\frac{1}{2}(9/8)^k}}\\ 
 & \geq e^{-17\beta^2 -2\beta} \prod_{k=1}^{\infty} \frac{1-e^{-\frac{1}{2}(9/8)^k}}{1+e^{-\frac{1}{2}(9/8)^k}}\,. 
 \end{align*}
 Observe that the final infinite product is convergent. Applying this lower bound to \eqref{equa:coro-ising-model-gap-Aux1}, we conclude the result.
 \end{proof}


\subsection{1D General case}

Let $\Phi$ be any local interaction on $\Lambda_N = \mathbb{Z}_{N}$ with finite range $r>0$ and strength $\| \Phi\| \leq J$ for some $J>0$. As a main result, we demonstrate that for every (inverse) temperature $\beta >0$, the canonical purified Hamiltonian associated to the Gibbs state $\sigma_{\beta}$ has a spectral gap that is lower bounded  by a constant independent of the system size. To simplify notation, we will absorb the (inverse) temperature parameter $\beta$ of the Gibbs state into the local interaction, and thus work only with $\sigma = e^{-H}/\Tr(e^{-H})$, where $\beta$ is implicitly included in $J$. 

\begin{Theo}\label{Theo:1DCorrelationDecay}
Under the above conditions, there exist positive real constants $c=c(r,J,d)>0$ and $\alpha=\alpha(r,J,d) >0$ such that the following holds. Let $\Lambda_N$ be partitioned into four consecutive intervals $\Lambda_N=AI_{1}CI_{2}$ as in Figure \eqref{fig:1DringSplitting}, with $|I_{1}|, |I_{2}| \geq \ell$ for some integer $\ell \geq 1$. Then, for any $D \in \{ I_{1}, I_{2}, \emptyset\}$ we get
\[
 \Delta_{\sigma}(A:C|D) \leq c e^{-\alpha \sqrt{\ell}}\,.
\]
\end{Theo}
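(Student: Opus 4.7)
The plan is to reduce the problem, via Proposition~\ref{prop:estimatesMartinagleCondition}, to an operator-norm estimate of the form
\[
\|\mathbbm{1} - \sigma_{AD}\sigma_{D}^{-1}\sigma_{DC}\sigma_{ACD}^{-1}\|_{\infty} \le c\, e^{-\alpha\sqrt{\ell}},
\]
(plus the symmetric estimate with the reversed product), and then to obtain this estimate by exploiting the quasi-locality of one-dimensional Gibbs states of finite-range Hamiltonians at positive temperature. The three cases $D \in \{\emptyset, I_{1}, I_{2}\}$ can be treated uniformly: the hypothesis $|I_{1}|,|I_{2}|\ge \ell$ always provides at least one shielding interval of length $\ell$ between $A$ and $C$, and (when $D=\emptyset$) a second shielding of the same size, and the argument only uses the existence of such a buffer.

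The core analytical input is a \emph{quasi-local approximation} of the conditional operator $\sigma_{AD}\sigma_{D}^{-1}$. The claim is that, for $\sigma = e^{-H}/\Tr(e^{-H})$ with $H$ finite-range and at positive temperature in 1D, there exists an operator $K$ supported in a neighbourhood of width $w$ of the $A$-side of $\partial D$ such that $\|\sigma_{AD}\sigma_{D}^{-1} - K\|_{\infty} \le C e^{-\alpha w}$, with $C,\alpha$ depending only on $r$, $J$ and $d$. This is the point at which 1D enters in an essential way: either through Araki's uniformly convergent cluster expansion for $\log \sigma_{D}$ in 1D at any positive temperature, or equivalently through a Lieb--Robinson/Hastings truncation of the imaginary-time Heisenberg evolution of the boundary interactions crossing $\partial D$. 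An entirely analogous statement holds for $\sigma_{DC}$ (and, when $D=\emptyset$, directly for $\sigma_{A}$ and $\sigma_{C}$).

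With such quasi-local replacements in hand, the final step is a telescoping. Choosing $w \sim \sqrt{\ell}$, the local approximants of $\sigma_{AD}\sigma_{D}^{-1}$ and of $\sigma_{DC}$ have supports separated by at least $\ell - 2w$; they then commute exactly (in the commuting case) or, in general, up to a further application of the quasi-locality bound; and their product coincides with the corresponding local approximation of $\sigma_{ACD}$. Concatenating the $O(1)$ many errors of size $e^{-\alpha\sqrt{\ell}}$, multiplying by $\sigma_{ACD}^{-1}$ (whose operator norm is controlled by the standard Gibbs-state lower bound $\sigma_{ACD}\ge e^{-c_{0}|\partial(ACD)|}\mathbbm{1}$, with $|\partial(ACD)|$ bounded by a constant depending only on $r$), and estimating everything in operator norm yields the claimed rate.

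The main obstacle is the rigorous control of the non-commuting localization step. In the commuting case, $\sigma_{AD}\sigma_{D}^{-1}$ factorizes exactly as a ratio of exponentials of interactions supported near $\partial D$, giving a true exponential rate; in the non-commuting case one has to handle Baker--Campbell--Hausdorff-type corrections by a diagonal argument in which a truncation order $n$ suppresses the error by a factor decreasing in $w$ but produces combinatorial overhead growing in $n$. The optimal choice $n \sim w \sim \sqrt{\ell}$ is the origin of the stretched exponent $\sqrt{\ell}$ (as opposed to $\ell$), and the bookkeeping needed to make sure that all constants can indeed be absorbed into $c, \alpha = c(r,J,d), \alpha(r,J,d)$ --- independently of $N$ and $\ell$ --- is the most delicate part of the proof.
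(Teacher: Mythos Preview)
Your reduction via Proposition~\ref{prop:estimatesMartinagleCondition} to an operator-norm bound on $\mathbbm{1}-\sigma_{AD}\sigma_D^{-1}\sigma_{DC}\sigma_{ACD}^{-1}$ matches the paper exactly (this is Theorem~\ref{Theo:1DCorrelationDecayAux}). But the telescoping strategy you sketch after that has a genuine gap, and your diagnosis of the $\sqrt{\ell}$ rate is not what actually happens.

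The fatal step is ``multiplying by $\sigma_{ACD}^{-1}$, whose operator norm is controlled by the standard Gibbs-state lower bound $\sigma_{ACD}\ge e^{-c_0|\partial(ACD)|}\mathbbm{1}$''. No such bound exists for a \emph{marginal}: $\sigma_{ACD}$ is a density matrix on $\cH_{ACD}$, so its minimum eigenvalue is at most $d^{-|ACD|}$ and $\|\sigma_{ACD}^{-1}\|_\infty$ is exponentially large in the volume, not bounded by a constant in $r$. Any scheme that first builds an \emph{additive} approximation $\sigma_{AD}\sigma_D^{-1}\sigma_{DC}\approx\sigma_{ACD}$ and then multiplies the error by $\|\sigma_{ACD}^{-1}\|_\infty$ is doomed. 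The paper avoids this by never isolating $\sigma_{ACD}^{-1}$: it rewrites each marginal as $\Tr_X(\widetilde{\sigma}^X E_{ABCD}^{(Y)})$ times explicit local Gibbs factors (Proposition~\ref{Prop:CorrelationDecayStep1}), where $E_{ABCD}^{(Y)}$ is an Araki expansional built from the boundary interactions $W^{(A)},W^{(C)}$. The crucial Lemma~\ref{Lemm:boundsPartialtrace} then shows that these \emph{ratios} are invertible with $O(1)$ norm --- this is a nontrivial application of Araki's locality estimates (Theorem~\ref{Theo:localityProp}), not a ``standard lower bound''.

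Second, your account of the stretched exponential is off. The Araki/BCH localisation errors in 1D are not the bottleneck: they decay \emph{factorially}, as $\mathcal{G}^n/(\lfloor n/r\rfloor)!$, so no diagonal balancing at $w\sim\sqrt{\ell}$ is needed on that side. The $e^{-\alpha\sqrt{\ell}}$ rate in the paper comes entirely from an \emph{external} input: the Kimura--Kuwahara correlation-decay bound for finite-volume 1D Gibbs states (non-translation-invariant, hence only subexponential), which is invoked in Step~2 to decouple the $A$-side and $C$-side contributions inside a single partial trace $\Tr_B(\sigma^B\,Q_{B_1}Q_{B_3})$. Your sketch never identifies this ingredient; without it (or an equivalent), the factorisation of the two boundary contributions cannot be completed.
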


The rate of decay in the above result is determined by the correlation decay obtained by Kimura and Kuwahara in \cite{kimura25} for Gibbs states on finite chains, that we will use as an auxiliary result. When the local interaction is translation-invariant, this estimate can be strengthened to exhibit exponential decay, by using a previous result of Araki \cite{araki69} (see also \cite{Bluhm2022}). However, for the sake of generality, and since this subexponential rate of decay suffices for our purposes, we will rely on the subexponential result.

The proof of Theorem~\ref{Theo:1DCorrelationDecay} is long and follows ideas that have already been developped in \cite{Bluhm2022, Gondolf24}. It is given in detail in Appendix~\ref{sec:Proof1DGeneral}, together with the proof of the following result.

\begin{Prop}\label{Prop:eta1D}
Under the above conditions, there exists a real constant $c'=c'(r,J)>0$ such that for every interval $I \subset \Lambda_N$ 
\[ \eta_{I}(\sigma) \leq c' e^{J \cdot |I|} \]
\end{Prop}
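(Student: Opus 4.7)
The plan is to build an explicit $Q \in \cB(\cH_{I^c})$ by mimicking the commuting-case construction of Remark~\ref{Rema:etaProperties}($iv$) and then to replace its use of commutativity with an Araki-type imaginary-time bound available in $1$D. Concretely, I would take
\[
Q := e^{-H_{I^c}/2}\in \cB(\cH_{I^c}), \qquad H_{I^c} := \sum_{X \subset I^{c}} \Phi_X,
\]
so that, writing $\sigma = e^{-H_\Lambda}/Z$,
\[
Q\sigma^{-1/2} = Z^{1/2}\, e^{-H_{I^c}/2}\, e^{H_\Lambda/2}, \qquad \sigma^{1/2}Q^{-1} = Z^{-1/2}\, e^{-H_\Lambda/2}\, e^{H_{I^c}/2},
\]
and the product $\|Q\sigma^{-1/2}\|_\infty\,\|\sigma^{1/2}Q^{-1}\|_\infty$ is independent of the (unknown) $Z$. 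In the commuting case this product collapses to $e^{\|H_\Lambda-H_{I^c}\|_\infty}\le e^{J|I|}$ since $H_\Lambda-H_{I^c}=H_I^{\partial}$ satisfies $\|H_I^\partial\|_\infty \le \|\Phi\|\cdot|I|\le J|I|$; the claim is that the same bound, up to a universal prefactor $c'(r,J)$, survives in the non-commuting setting.

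To control the two norms, I would introduce $K(t) := e^{-tH_\Lambda} e^{tH_{I^c}}$, which satisfies the Duhamel differential equation
\[
\frac{d}{dt}K(t) = -\bigl(e^{-tH_\Lambda} H_I^{\partial} e^{tH_\Lambda}\bigr) K(t), \qquad K(0)=\unit,
\]
and is therefore a time-ordered exponential of the imaginary-time-evolved boundary operator $H_I^{\partial}(s) := e^{-sH_\Lambda} H_I^{\partial} e^{sH_\Lambda}$ for $s \in [0,1/2]$. An analogous representation holds for the other ordering. Taking operator norms and integrating would give
\[
\|K(1/2)\|_\infty \le \exp\!\left(\int_0^{1/2}\|H_I^{\partial}(s)\|_\infty\, ds\right),
\]
so the whole estimate reduces to a uniform-in-$\Lambda$ bound of the form $\|H_I^{\partial}(s)\|_\infty \le \widetilde{C}(r,J)\,|I|$ for $s\in[0,1/2]$.

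Supplying such a bound is the main obstacle: imaginary-time conjugation is not unitary, and a priori $\|H_I^{\partial}(s)\|_\infty$ could be exponential in $|\Lambda|$. This is precisely where one must invoke the special structure of $1$D finite-range Gibbs states. The classical input is Araki's analyticity and uniform-boundedness theorem for imaginary-time evolved local observables in one dimension (generalized and made quantitative in the references \cite{araki69, Bluhm2022, Gondolf24, kimura25} cited in the appendix), which asserts exactly that $\|e^{-sH_\Lambda} O e^{sH_\Lambda}\|_\infty$ stays bounded by a constant depending only on $\|O\|_\infty$, the support of $O$, the range $r$, the strength $J$, the local dimension $d$, and $s$, but not on $|\Lambda|$. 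Applying this to each of the at most $O(|I|)$ local terms composing $H_I^{\partial}$ yields the desired $\widetilde{C}(r,J)\,|I|$ bound.

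With this Araki-type estimate in hand, the rest is routine: one combines the Dyson series bounds on both orderings, observes that the $Z^{\pm 1/2}$ factors cancel, and obtains $\eta_I(\sigma)\le \|Q\sigma^{-1/2}\|_\infty\,\|\sigma^{1/2}Q^{-1}\|_\infty \le c'(r,J)\, e^{J|I|}$, absorbing the additive $\widetilde{C}(r,J)\,|I|$ exponent into either the prefactor $c'$ or a redefined $J$. I expect the proof of Theorem~\ref{Theo:1DCorrelationDecay} in the appendix to already establish the uniform bound on $\|H_I^{\partial}(s)\|_\infty$ needed here, so that Proposition~\ref{Prop:eta1D} can be deduced with only a short additional Duhamel computation.
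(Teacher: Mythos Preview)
Your choice of trial element $Q=e^{-H_{I^c}/2}$ and your use of a Duhamel/Araki bound are exactly the ingredients the paper uses, but your final step does not close. The Dyson bound you derive gives
\[
\|e^{-H_\Lambda/2}e^{H_{I^c}/2}\|_\infty\cdot\|e^{-H_{I^c}/2}e^{H_\Lambda/2}\|_\infty
\le \exp\!\Bigl(\int_0^{1/2}\bigl(\|H_I^\partial(s)\|_\infty+\|e^{-sH_{I^c}}H_I^\partial e^{sH_{I^c}}\|_\infty\bigr)\,ds\Bigr)
\le e^{\widetilde C(r,J)\,|I|},
\]
with $\widetilde C(r,J)\ge \mathcal G^{\,r}J>J$ in general (each of the $O(|I|)$ local terms in $H_I^\partial$ picks up a factor $\mathcal G^{\,r}$ under imaginary-time conjugation). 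This is the right kind of bound for the applications, but it is \emph{not} of the form $c'(r,J)\,e^{J|I|}$: a term $\widetilde C|I|$ in the exponent cannot be pushed into a prefactor, and $J$ is fixed by the hypothesis $\|\Phi\|\le J$, so ``redefining $J$'' is not allowed. As written, your argument proves a strictly weaker statement.

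The missing trick, which the paper uses, is to first pull out the commuting bulk: since $H_I$ and $H_{I^c}$ have disjoint supports, $e^{-H_{I^c}/2}=e^{H_I/2}e^{-(H_I+H_{I^c})/2}$, so
\[
\|e^{-H_{I^c}/2}e^{H_\Lambda/2}\|_\infty\cdot\|e^{-H_\Lambda/2}e^{H_{I^c}/2}\|_\infty
\le e^{\|H_I\|_\infty}\,\bigl\|E(\tfrac12 W;\tfrac12 H_\Lambda)\bigr\|_\infty\,\bigl\|E(\tfrac12 W;\tfrac12 H_\Lambda)^{-1}\bigr\|_\infty,
\]
where $W:=H_\Lambda-H_I-H_{I^c}$ is supported on two intervals of length $O(r)$ around $\partial I$. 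Now the Araki expansional bound (Theorem~\ref{Theo:localityProp}) applies to an operator of \emph{bounded} support, giving an $|I|$-independent constant $\exp(\mathcal G^{8r}\cdot 2rJ)$ for the two expansional norms, while the entire $|I|$-dependence sits in the elementary factor $e^{\|H_I\|_\infty}\le e^{J|I|}$. This yields exactly $\eta_I(\sigma)\le c'(r,J)\,e^{J|I|}$.
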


From these results, we immediately get the following:

\begin{Theo}
Let $\mathbf{H}_{\Lambda_N}$ be the canonical purified Hamiltonian associated to the Gibbs state $\sigma$. Then, there is a constant $\gamma = \gamma(r,J,d)>0$ independent of $N$, such that 
\[ \operatorname{gap}(\mathbf{H}_{\Lambda_N}) \geq \gamma \,.\]

\end{Theo}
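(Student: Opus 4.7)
The strategy is to verify the hypotheses of Theorem~\ref{Theo:gapIntervals1D} using the two ingredients proved in Appendix~\ref{sec:Proof1DGeneral}, namely the spatial mixing estimate of Theorem~\ref{Theo:1DCorrelationDecay} and the rough estimate of Proposition~\ref{Prop:eta1D}, and then to combine them into a size-independent lower bound on $\gap(\mathbf{H}_{\Lambda_N})$. The constants $r$, $J$, $d$ will be fixed throughout and all constants ``independent of $N$'' will be allowed to depend on them.

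First, I would record that Theorem~\ref{Theo:1DCorrelationDecay} says exactly that Assumption~\ref{Assumption:1Dgap} is satisfied by $\sigma$ with the choice
\[
\delta(\ell) \;=\; \min\!\bigl\{\,1/2,\; c\,e^{-\alpha\sqrt{\ell}}\,\bigr\},
\]
which is a positive non-increasing function with values in $(0,1)$. The truncation at $1/2$ is harmless and only matters for small $\ell$; for large $\ell$ the exponent $-\alpha\sqrt{\ell}$ dominates. Next, I would fix an integer $\mu=\mu(r,J,d)\ge 9$ large enough that $c\,e^{-\alpha\sqrt{\mu/9}} \le 1/2$, so that in the sequence $\delta_k=\delta(\lfloor\frac{\mu}{9}(9/8)^k\rfloor)$ appearing in Theorem~\ref{Theo:gapIntervals1D} the truncation plays no role and one has the genuine bound $\delta_k\le c\exp(-\alpha\sqrt{\mu/9}\,(9/8)^{k/2})$.

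The second step is to see that the infinite product in \eqref{eq:gapIntervals1D} is bounded below by a positive constant depending only on $r,J,d$. This is immediate from Remark~\ref{Rema:infiniteProductConvergence} since the upper bound just displayed makes $(\delta_k)_{k\ge 0}$ summable (and in fact doubly-exponentially small in $k$); write $\Pi_\infty:=\prod_{k\ge 0}(1-\delta_k)>0$ for this constant. The third step is the small-region estimate: by Proposition~\ref{Prop:eta1D}, for any interval $I\subset\Lambda_N$ with $|I|\le\mu$ one has $\eta_I(\sigma)\le c'\,e^{J|I|}\le c'\,e^{J\mu}$, and therefore $\eta_I(\sigma)^{-4}\ge (c')^{-4}e^{-4J\mu}$.

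Now I would apply Theorem~\ref{Theo:gapIntervals1D} directly in the regime $N\ge\mu$, obtaining
\[
\gap(\mathbf{H}_{\Lambda_N}) \;\ge\; e^{-5}\,\Pi_\infty\cdot (c')^{-4}\,e^{-4J\mu}\;=:\;\gamma_1,
\]
a constant depending only on $r,J,d$. The case $N<\mu$ has to be handled separately, but it is even easier: a direct application of Theorem~\ref{Theo:roughBoundSpectralGap} with $X=\Lambda_N$, combined with Proposition~\ref{Prop:eta1D}, gives $\gap(\mathbf{H}_{\Lambda_N})\ge \eta_{\Lambda_N}(\sigma)^{-4}\ge (c')^{-4}e^{-4JN}\ge (c')^{-4}e^{-4J\mu}=:\gamma_2$. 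Setting $\gamma:=\min(\gamma_1,\gamma_2)$ then concludes the proof.

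All of the genuine difficulty is packed into the two cited results: Theorem~\ref{Theo:1DCorrelationDecay}, whose proof (deferred to the appendix) rests on the subexponential correlation-decay bounds of Kimura--Kuwahara, and Proposition~\ref{Prop:eta1D}, which requires a careful choice of the witness $Q\in\cB(\cH_{I^c})$ in the definition of $\eta_I(\sigma)$ in the non-commuting case. Given these, the theorem itself is a book-keeping exercise in choosing $\mu$ and combining the bounds, with no further analytic obstacle.
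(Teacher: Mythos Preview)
Your proof is correct and follows essentially the same approach as the paper: choose $\mu$ depending only on $(r,J,d)$, handle $N<\mu$ directly via Theorem~\ref{Theo:roughBoundSpectralGap} and Proposition~\ref{Prop:eta1D}, and for $N\ge\mu$ feed Theorem~\ref{Theo:1DCorrelationDecay} and Proposition~\ref{Prop:eta1D} into Theorem~\ref{Theo:gapIntervals1D}. One small wording issue: the truncated function $\delta(\ell)=\min\{1/2,\,ce^{-\alpha\sqrt{\ell}}\}$ does not literally satisfy Assumption~\ref{Assumption:1Dgap} for small $\ell$ (Theorem~\ref{Theo:1DCorrelationDecay} does not give $\Delta_\sigma\le 1/2$ there), but since Theorem~\ref{Theo:gapIntervals1D} only evaluates $\delta$ at arguments $\ge\lfloor\mu/9\rfloor$ your subsequent choice of $\mu$ makes this harmless, exactly as in the paper.
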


\begin{proof}
Let $\alpha, c>0$ be the constants appearing in Theorem~\ref{Theo:1DCorrelationDecay}, and let us fix 
\[ \mu = \lceil 18 \left(1+\tfrac{1+\ln{(1+c)}}{\alpha}\right)^2\rceil \geq 18.\]
For every interval $I \subset \Lambda_{N}$ with $|I| \leq \mu$, we can estimate by Proposition~\ref{Prop:eta1D}
\[ \eta_{I}(\sigma)^{4} \leq c' e^{4 J \mu}\,. \]
Thus, if $N \leq \mu$ have, by Theorem~\ref{Theo:roughBoundSpectralGap},
\[ \operatorname{gap}(\mathbf{H}_{\Lambda_N}) \geq (1/c') e^{-4J \mu}\,. \]
On the other hand, if $N > \mu$, we can apply Theorem~\ref{Theo:gapIntervals1D}  to estimate
\[ \operatorname{gap}(\mathbf{H}_{\Lambda_N}) \geq e^{-5} \cdot \left(   \prod_{k=0}^{\infty} (1-\delta_{k})\right) (1/c') e^{-4J \mu} \geq e^{-5} \cdot \left(   \prod_{k=0}^{\infty} (1-e^{-(9/8)^{k/2}})\right) (1/c') e^{-4J \mu} \,, \]
where we have used that for every $k \geq 0$,
\[ \delta_{k} = \delta(\lfloor\tfrac{\mu}{9}(9/8)^{k}\rfloor) \leq \delta(\tfrac{\mu}{18}(9/8)^{k}) \leq \tfrac{c}{c+1} e^{-  (9/8)^{k/2}} \leq e^{- (9/8)^{k/2} } < 1\,. \]
Note that the resulting infinite product is convergent to a positive constant, finishing the proof.
\end{proof}

\section{2D Quantum Double Models}\label{sec:quantum-double}
In this section, we will recall the construction of the quantum double models by Kitaev \cite{Kitaev2003}, and then show that their Gibbs states satisfy the sufficient condition for the gap of the purified canonical Hamiltonian $\mathbf{H}$ introduced in Section~\ref{sec:gibbs-states}.

\subsection{Definition}
Quantum double models \cite{Kitaev2003} are a class of quantum spin systems defined on a square lattice with periodic boundary conditions. Let us recall the definition of their Hamiltonian.
Let $\Lambda_{N}$ be the set of midpoints of the edges of the square lattice $\mathbb{Z}_N \times \mathbb{Z}_{N}$ for some $N \in \mathbb{N}$. Let us denote by $\cV=\cV_{N} \equiv \mathbb{Z}_N \times \mathbb{Z}_{N}$ the set of vertices, and by $\cE=\cE_{N}$ the set of edges of $\mathbb{Z}_N \times \mathbb{Z}_{N}$. Each edge is given an orientation: for simplicity, we will assume that all horizontal edges point to the left, while vertical edges point downwards.

\[
\begin{tikzpicture}[equation, scale=0.4]

    \draw[step=1.0,gray,thick] (0,0) grid (5,5);
    \draw[postaction=torus horizontal] (0,0) -- (5,0);
    \draw[postaction=torus horizontal] (0,5) -- (5,5);
    \draw[postaction=torus vertical] (5,0) -- (5,5);
    \draw[postaction=torus vertical] (0,0) -- (0,5);

    \draw[<->] (0,-1) -- node[below]{$N$} (5,-1);
    \draw[<->] (6,0) --  node[right]{$N$} (6,5);

\begin{scope}[xshift=12cm, decoration={
    markings,
    mark=at position 0.7 with {\arrow{latex}}}]
\foreach \x in {1,...,5} 
\foreach \y in {1,...,5}{

\begin{scope}[xshift=\x cm, yshift=\y cm]
\draw[postaction={decorate}, gray] (0,0)  -- (-1,0); 
\draw[postaction={decorate}, gray] (-1,0)  -- (-1,-1); 
\draw[postaction={decorate}, gray] (0,0)  -- (0,-1); 
\draw[postaction={decorate}, gray] (0,-1)  -- (-1,-1); 
\end{scope}
}
\end{scope}

\end{tikzpicture}
\]
Let us fix an arbitrary finite group $G$ and denote by $\ell_2(G)$ the complex finite dimensional Hilbert space with orthonormal basis $\{ \ket{g} \mid g \in G\}$. At each edge $e \in \cE$ we have a local Hilbert space $\mathcal{H}_e$ and a space of observables $\cB_e$ defined as
\[
\mathcal{H}_{e} = \ell_{2}(G) \quad \text{and} \quad  \mathcal{B}_{e} = \mathcal{B}(\mathcal{H}_{e}) = \mathcal{M}_{|G|}(\mathbb{C}).
\]
The quantum double Hamiltonian for group $G$ on $\Lambda_N$ is then given by
\begin{equation}
    H_N := - \sum_{s} A_s - \sum_{p} B_p
\end{equation}
where $s$ runs over the \emph{stars} of $\Lambda_N$ (the set of 4 edges incident to a given vertex), $p$ runs over the \emph{plaquettes} of $\Lambda_N$ (the set of 4 edges forming a face in $\Lambda_N$). Each local operator $A_{s}$ and $B_{p}$ is supported on $s$ and $p$, respectively. 
\[
\begin{tikzpicture}[equation]   
  \draw[step=1.0,gray,thin] (-1.5,-1.2) grid (3.5,1.5);
  \draw[ultra thick] (-0.5,0) -- (0.5, 0);
  \draw[ultra thick] (0,-0.5) -- (0,0.5);
  \shade[ball color=black] (-0.5,0) circle (0.8ex);
  \shade[ball color=black] (0.5,0) circle (0.8ex);
  \shade[ball color=black] (0,-0.5) circle (0.8ex);
  \shade[ball color=black] (0,0.5) circle (0.8ex);    
  \draw (0.2, 0.2) node {$s$};
  \begin{scope}[xshift=2cm, yshift=0cm]
  \draw[ultra thick] (0,0) -- (1, 0) -- (1,1) -- (0,1) -- (0,0);
  \shade[ball color=black] (0.5,0) circle (0.8ex);
  \shade[ball color=black] (1,0.5) circle (0.8ex);
  \shade[ball color=black] (0.5,1) circle (0.8ex);
  \shade[ball color=black] (0,0.5) circle (0.8ex);
\draw (0.5, 0.5) node {$p$};
  \end{scope}
\end{tikzpicture}
\]
We define the \emph{left} ($L_g$) and \emph{right} ($R_g$) \emph{regular representation} of $G$ on $\ell_2(G)$ as
\[ L^{g} = \sum_{h \in G} \ketbra{gh}{h}  \quad , \quad R^{g}=\sum_{h \in G} \ketbra{hg^{-1}}{h}\,. \]
For each $g\in G$ and each star $s$, we define the operator
\begin{equation}
  \begin{tikzpicture}[equation]   
  \draw[step=1.0,gray,thin] (-1,-1) grid (1,1);
  \draw[ultra thick] (-0.5,0) -- (0.5, 0);
  \draw[ultra thick] (0,-0.5) -- (0,0.5);
  \shade[ball color=black] (-0.5,0) circle (0.8ex);
  \shade[ball color=black] (0.5,0) circle (0.8ex);
  \shade[ball color=black] (0,-0.5) circle (0.8ex);
  \shade[ball color=black] (0,0.5) circle (0.8ex);    
  \draw (0.2, 0.2) node {$s$};
\end{tikzpicture} \hspace{2cm}
 A_{s}(g) \, =  
  \begin{tikzpicture}[equation]  
    \draw (0,0) node {$\otimes$};
    \draw (0,0.5) node {$L^{g}$};
    \draw (0.5,0) node {$L^{g}$};
    \draw (0,-0.5) node {$R^{g}$};
    \draw(-0.5,0) node {$R^{g}$};
  \end{tikzpicture}
\end{equation}
(where the choice between the left and right regular representation is due to the orientation of the edges). 
This is a representations of $G$, which we can extend to its group algebra $\ell^{2}(G)$. In particular, taking $\lambda := \frac{1}{\abs{G}} \sum_{g \in G} g$ the Haar integral of $\ell^{2}(G)$, we define the \emph{star operator} as
\[ A_{s} \, = A_{s}(\lambda) = \, \displaystyle \frac{1}{|G|}\sum_{g \in G} A_{s}(g)\,. \]
Similarly, for $g\in G$, let us denote by $\delta_g$ the delta function which takes value one at $g$ and zero elsewhere. We then define $B_p(\delta_g)$ be a operator acting on the edges of the plaquette $p$ as
\[
  \begin{tikzpicture}[equation]   
  \draw[step=1.0,gray,thin] (-0.5,-0.5) grid (1.5,1.5);
  \draw[ultra thick] (0,0) -- (1, 0) -- (1,1) -- (0,1) -- (0,0);
  \shade[ball color=black] (0.5,0) circle (0.8ex);
  \shade[ball color=black] (1,0.5) circle (0.8ex);
  \shade[ball color=black] (0.5,1) circle (0.8ex);
  \shade[ball color=black] (0,0.5) circle (0.8ex);
\draw (0.5, 0.5) node {$p$};
\end{tikzpicture}
\hspace{1cm}
 B_p(\delta_g) = \, \sum_{g_{1}, g_{2}, g_{3}, g_{4} \in G} \delta_g(g_{1} g_{2} g_{3}^{-1} g_{4}^{-1}) \,\,\,\,
  \begin{tikzpicture}[equation]  
    \draw (0,0) node {$\otimes$};
    \draw (0,0.5) node {$\dyad{g_{1}}$};
    \draw(-1,0) node {$\dyad{g_{2}}$};
    \draw (0,-0.5) node {$\dyad{g_{3}}$};
    \draw (1,0) node {$\dyad{g_{4}}$};
  \end{tikzpicture}
\]
(once again, the choice of the inverses are due to the orientation of the edges). We can extend by linearity the definition of $B_p(\cdot)$ to obtain a representation of the algebra $\ell^{2}(G)$ of functions from $G$ to $\mathbb{C}$. In particular, $\delta_1$ is its Haar integral, which can be expressed in terms of the character of the regular representation as $\delta_{1} =  \frac{1}{\abs{G}} \chi^{\text{reg}}$. The plaquette operator $B_{p}$ is then given by 
\begin{equation}\label{equa:plaquetteOperatorDefinition}
 B_{p} = B_p(\delta_1) = \, \displaystyle \frac{1}{|G|} \sum_{g_{1}, g_{2}, g_{3}, g_{4} \in G} \chi^{reg}(g_{1} g_{2} g_{3}^{-1} g_{4}^{-1}) \,\,\,\,
  \begin{tikzpicture}[equation]  
    \draw (0,0) node {$\otimes$};
    \draw (0,0.5) node {$\dyad{g_{1}}$};
    \draw(-1,0) node {$\dyad{g_{2}}$};
    \draw (0,-0.5) node {$\dyad{g_{3}}$};
    \draw (1,0) node {$\dyad{g_{4}}$};
  \end{tikzpicture}
\end{equation}
It can be easily checked that both star and plaquette operators $A_{s}$ and $B_{p}$ are orthogonal projections and moreover commute with each other. Therefore that $H_N$ is a local commuting Hamiltonian.

When $G$ is an Abelian group, the plaquette operators admit a more gentle form. In this case, we can associate to $G$ its \emph{dual group} $\widehat{G}$, made of all homomorphisms $\chi:G \to \mathbb{T}$ where $\mathbb{T} = \{ z \in \mathbb{C} \colon |z| = 1 \}$, also referred as \emph{characters}. Characters act on $\ell_2(G)$ as diagonal operators, by defining $\Pi_\chi \ket{g} = \chi(g) \ket{g}$. The inverse of a given character $\chi$ is denoted by $\overline{\chi}$ since it is obtained by conjugation: $\overline{\chi}(g) = \overline{\chi(g)}$ for every $g \in G$. In particular, the unit of this group is the \emph{trivial character} $\mathbf{1}_{G}$ constantly equal to one. This set has the same cardinality as $G$ (actually, they are isomorphic groups), moreover and $\sum_{X \in \widehat{G}} \chi = \chi^{reg}$. Inserting this expression in \eqref{equa:plaquetteOperatorDefinition} and rearranging summands we get
\begin{equation}
  \begin{tikzpicture}[equation]   
  \draw[step=1.0,gray,thin] (-0.5,-0.5) grid (1.5,1.5);
  \draw[ultra thick] (0,0) -- (1, 0) -- (1,1) -- (0,1) -- (0,0);
  \shade[ball color=black] (0.5,0) circle (0.8ex);
  \shade[ball color=black] (1,0.5) circle (0.8ex);
  \shade[ball color=black] (0.5,1) circle (0.8ex);
  \shade[ball color=black] (0,0.5) circle (0.8ex);
\draw (0.5, 0.5) node {$p$};
\end{tikzpicture}
\hspace{2cm}
B_{p} \, = \, B_{p}(\delta_{1}) = \, \displaystyle\frac{1}{|G|} \, \sum_{\chi \in \widehat{G}} B_{p}(\chi) \quad \text{ where } \quad  B_{p}(\chi):= 
\begin{tikzpicture}[equation]  
    \draw (0,0) node {$\otimes$};
    \draw (0,0.4) node {$\Pi_{\chi}$};
    \draw(-0.4,0) node {$\Pi_{\chi}$};
    \draw (0,-0.4) node {$\Pi_{\overline{\chi}}$};
    \draw (0.45,0) node {$\Pi_{\overline{\chi}}$};
  \end{tikzpicture};
\end{equation}
It is easy to verify that $A_s(g)$ and $B_p(\chi)$ commute for every $s$, $p$, $g$ and $\chi$.\\

Let us introduce some further notation. We denote by $\cS$ and $\cP$ the sets of all stars and plaquettes, respectively. Given a subset $\cR \subset \cE$, we define $\cS_{\cR} \subset \cS$ as the set of all stars $s$ such that $s \cap \cR \neq \emptyset$, and similarly, define $\mathcal{P}_{\cR} \subset \mathcal{P}$ as the set of all plaquettes $p$ such that $p \cap \cR \neq \emptyset$. Let $\overline{\cR}_{S}$ denote the union of all stars in $\cS_{\cR}$, and $\overline{\cR}_{P}$ the union of all plaquettes in $\mathcal{P}_{\cR}$. Finally, define $\overline{\cR} := \overline{\cR}_{S} \cup \overline{\cR}_{P}$.

\begin{Defi}
We will say that $\cR$ is \emph{connected by stars} if given any two edges $e,e' \in \cR$ there is a finite sequence of stars $s_{1}, \ldots, s_{n} \in \cS_{\cR}$ such that $e \in s_{1}$, $e' \in s_{n}$ and $\emptyset \neq s_{j} \cap s_{j+1} \subset \cR$ for every $j$. Observe that this is equivalent to saying that given any two stars $s, s' \in \cS_{\cR}$, there is a finite sequence $s=s_{1}, s_{2}, \ldots, s_{n} = s'$ such that $\emptyset \neq s_{j} \cap s_{j+1} \subset \mathcal{R}$ for every $j$.

Analogously, we will say that $\cR$ is \emph{connected by plaquettes} if given any two edges $e,e' \in \cR$ there is a finite sequence of plaquettes $p_{1}, \ldots, p_{n} \in \mathcal{P}_{\cR}$ such that $e \in p_{1}$, $e' \in p_{n}$ and $\emptyset \neq p_{j} \cap p_{j+1} \subset \cR$ for every $j$. Observe that this is equivalent to saying that given any two stars $p, p' \in \cS_{\cR}$, there is a finite sequence of plaquettes $p=p_{1}, p_{2}, \ldots, p_{n} = p'$ such that $\emptyset \neq p_{j} \cap p_{j+1} \subset \cR$ for every $j$.

\end{Defi}

\subsection{Marginals of the Gibbs state}
Let us consider the marginal of the Gibbs state $\sigma=\sigma_{\beta}$ on the complement $\cR^c$ of a region $\cR$
\[ \sigma_{\cR^{c}} = \operatorname{Tr}_{\cR}(\rho) = \frac{1}{Z_\beta} \operatorname{Tr}_{\cR}(e^{-\beta H}) \,. \]
If we focus on the partial trace, we can extract the local factors whose support is disjoint with the region $\cR$ 
\[ \operatorname{Tr}_{\cR}(e^{-\beta H}) = \operatorname{Tr}_{\cR}(e^{-\beta H_{\cR}^{\partial}}) e^{-\beta H_{\cR^{c}}}. \]
We can further expand this expression as
\[ \operatorname{Tr}_{\cR}\left( \prod_{s \in \cS} e^{\beta A_{s}} \, \prod_{p \in \mathcal{P}} e^{\beta B_{p}} \right) =  \operatorname{Tr}_{\cR}\left( \prod_{s \in \cS_{\cR}} e^{\beta A_{s}} \, \prod_{p \in \mathcal{P}_{\cR}} e^{\beta B_{p}} \right) \,\, \prod_{s \in \cS \setminus \cS_{\cR}} e^{\beta A_{s}} \, \prod_{p \in \mathcal{P} \setminus \mathcal{P}_{\cR}} e^{\beta B_{p}}\,. \]

\begin{Theo}[Abelian groups]\label{Theo:abelianModelsMarginals}
Let $G$ be an Abelian group and define
\[ A_{\cS_\cR} := \frac{1}{|G|} \sum_{g \in G} \prod_{s \in \cS_\cR} A_{s}(g) 
\quad , \quad 
B_{\mathcal{P}_\cR} := \frac{1}{|G|} \sum_{\chi \in \hat{G}} \prod_{p \in \mathcal{P}_\cR} B_{p}(\chi)\,. \]
These are orthogonal projections that belong to the commuting algebra generated by the local operators $\{ A_{s}(g), B_{p}(\chi) \colon s \in \cS ,p \in \cP , g \in G ,\chi \in \widehat{G}\}$. If $\cR$ is connected by stars and plaquettes, then
\begin{multline} 
\operatorname{Tr}_{\cR}\left( \prod_{s \in \cS_{\cR}} e^{\beta A_{s}} \, \prod_{p \in \mathcal{P}_{\cR}} e^{\beta B_{p}} \right) = \textstyle \kappa_{\cR} \left( \left(1- \left(\frac{\gamma_{\beta}}{1+\gamma_{\beta}}\right)^{|\cS_\cR|}\right) \mathbbm{1} + |G|\left(\frac{\gamma_{\beta}}{1+\gamma_{\beta}}\right)^{|\cS_\cR|} A_{\cS_\cR} \right) \notag \\
\textstyle \cdot \left( \left(1-\left(\frac{\gamma_{\beta}}{1+\gamma_{\beta}}\right)^{|\cP_\cR|}\right) \mathbbm{1} + |G|\left(\frac{\gamma_{\beta}}{1+\gamma_{\beta}}\right)^{|\cP_\cR|} B_{\cP_\cR} \right),
\end{multline}
where we are denoting $\gamma_{\beta} := (e^{\beta}-1)/|G|$ and $\kappa_{\cR} := |G|^{|\cR|} \left( 1 + \gamma_{\beta} \right)^{|\cS_\cR|} \left( 1 + \gamma_{\beta} \right)^{|\cP_\cR|}$. As a consequence, $\sigma_{\cR^{c}}$ belongs to the above commuting algebra.
\end{Theo}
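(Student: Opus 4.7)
The plan is to expand both exponential products in the ``character'' operators $A_s(g)$ and $B_p(\chi)$, and carefully compute the partial trace over $\cR$ exploiting the abelian structure. Since $G$ is abelian, all $A_s(g)$ and $B_p(\chi)$ mutually commute, and hence $g \mapsto \prod_{s\in \cS_\cR} A_s(g)$ is a unitary representation of $G$ while $\chi \mapsto \prod_{p\in \cP_\cR} B_p(\chi)$ is a unitary representation of $\widehat{G}$. Their normalized averages $A_{\cS_\cR}$ and $B_{\cP_\cR}$ are then the orthogonal projections onto the respective invariant subspaces, and they both lie in the commuting algebra generated by all $A_s(g)$'s and $B_p(\chi)$'s.

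For the main identity, I would first use the projection property of $A_s$ to write $e^{\beta A_s} = (1+\gamma_\beta)[\mathbbm{1} + \lambda \sum_{g \neq 1} A_s(g)]$, with $\lambda := \gamma_\beta/(1+\gamma_\beta)$ (isolating the $A_s(1)=\mathbbm{1}$ term), and symmetrically for $e^{\beta B_p}$. Expanding the full product yields a double sum over subsets $(S,P) \subseteq \cS_\cR \times \cP_\cR$ and assignments $(g_s) \in (G\setminus\{1\})^S$, $(\chi_p) \in (\widehat{G}\setminus\{\mathbf{1}_G\})^P$, weighted by $\lambda^{|S|+|P|}$. The next step is to observe that the abelian trace identities
\[
\operatorname{Tr}(L^g R^h) = |G|\,\delta_{g=h}, \qquad \operatorname{Tr}(\Pi_{\chi_1}\Pi_{\chi_2})=|G|\,\delta_{\chi_1\chi_2=\mathbf{1}_G},
\]
together with their single-operator analogues, imply that the partial trace over $\cR$ of $\prod_{s \in S} A_s(g_s) \prod_{p \in P} B_p(\chi_p)$ vanishes unless, for every edge $e \in \cR$, the two stars containing $e$ are either both outside $S$ or both inside $S$ with equal $g$-value, and analogously for the plaquettes. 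The star/plaquette connectivity of $\cR$ then forces $S \in \{\emptyset, \cS_\cR\}$ and $P \in \{\emptyset, \cP_\cR\}$, with common values $g \in G$, $\chi \in \widehat{G}$ in the non-trivial choices.

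To evaluate the four surviving contributions, the crucial abelian identities are $L^g R^g = \mathbbm{1}$ and $\Pi_\chi \Pi_{\bar\chi} = \mathbbm{1}$, which imply that $\prod_{s \in \cS_\cR} A_s(g)$ and $\prod_{p \in \cP_\cR} B_p(\chi)$ each act as the identity on $\cH_\cR$. Each interior edge thus contributes a multiplicative factor $|G|$ to the partial trace, while the remaining boundary operator on $\cH_{\cR^c}$ is summed via $\sum_g \prod_s A_s(g) = |G|\, A_{\cS_\cR}$ and $\sum_\chi \prod_p B_p(\chi) = |G|\, B_{\cP_\cR}$. Applying inclusion–exclusion to restrict to $g \neq 1$ and $\chi \neq \mathbf{1}_G$, and then using the elementary identity
\[
1 + x(a-1) + y(b-1) + xy(a-1)(b-1) = [1+x(a-1)][1+y(b-1)],
\]
one recovers the claimed factorized form with $\kappa_\cR = |G|^{|\cR|}(1+\gamma_\beta)^{|\cS_\cR|}(1+\gamma_\beta)^{|\cP_\cR|}$. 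The main obstacle will be the careful handling of boundary edges in $\partial \cR_S \cap \partial \cR_P$, on which the single-edge operators $L^g$ and $\Pi_\chi$ fail to commute and instead satisfy $L^g \Pi_\chi = \overline{\chi(g)}\, \Pi_\chi L^g$; the resolution is that the sums over $g$ and $\chi$ are performed independently, so the offending phases cancel — consistent with the fact that $A_{\cS_\cR}$ and $B_{\cP_\cR}$ commute as whole projections within the commuting algebra.
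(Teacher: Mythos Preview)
Your proposal is correct and follows essentially the same strategy as the paper: expand each $e^{\beta A_s}$ and $e^{\beta B_p}$ in terms of $A_s(g)$ and $B_p(\chi)$, use the single-edge trace identity (the paper's eq.~\eqref{eq:local-trace}) together with star/plaquette connectivity of $\cR$ to reduce to constant families $(g_s)\equiv g$ and $(\chi_p)\equiv\chi$, and then factorize into the four cases $g=1$ or not, $\chi=\mathbf{1}_G$ or not; your subset bookkeeping is just a repackaging of the paper's sum over $(g_s)\in G^{\cS_\cR}$ with weights $(\delta_{g_s,1}+\gamma_\beta)$. One minor remark: the boundary-edge ``obstacle'' you flag does not actually arise, since each $A_s(g)$ commutes with each $B_p(\chi)$ as a full operator (the paper states this explicitly), so $\prod_{s}A_s(g)$ and $\prod_{p}B_p(\chi)$---and hence their restrictions to $\partial\cR$---commute before any summation over $g$ or $\chi$, and no phase cancellation is needed.
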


The proof of the above result is provided in Appendix~\ref{Appendix:Marginals_AbelianQDM}. In particular, the above result implies that if $\cR$ is connected by stars and plaquettes, then $\Tr_{\cR}(e^{-\beta H_{\cR}^{\partial}})$ is close to a multiple of the identity when $\cR$ is \emph{large}. This property can, in fact, be extended  to every (not necessarily abelian) group, as we show in the next result.
\begin{Theo}\label{thm:marginals-decayQDM}
Let $G$ be an arbitrary group and let $\mathcal{R}$ be a region connected by stars and plaquette operators. Then, for every $\beta \in \mathbb{R}$
\begin{multline}
\textstyle
\kappa_\cR 
\left( 1-\left(\frac{\gamma_{\beta}}{1+\gamma_{\beta}}\right)^{|\cP_\cR|}  \right) \, \left( 1-\left(\frac{\gamma_{\beta}}{1+\gamma_{\beta}}\right)^{|\cS_\cR|}    \right) \, \mathbbm{1}\\[2mm]
\leq
\Tr_{\cR}\left( \prod_{s \in \cS_{\cR}} e^{\beta A_{s}} \, \prod_{p \in \mathcal{P}_{\cR}} e^{\beta B_{p}} \right) 
\leq  \hspace{3cm}\\[2mm]
\textstyle
\kappa_\cR 
\left( 1  + (|G|-1)\left(\frac{\gamma_{\beta}}{1+\gamma_{\beta}}\right)^{|\cP_\cR|}  \right) \, \left( 1  + (|G|-1)\left(\frac{\gamma_{\beta}}{1+\gamma_{\beta}}\right)^{|\cS_\cR|}  \right) \, \mathbbm{1}\,,
\end{multline}
where we are denoting $\gamma_{\beta} := (e^{\beta}-1)/|G|$ and
$\kappa_{\cR} := |G|^{|\cR|} \left( 1 + \gamma_{\beta} \right)^{|\cS_\cR|} \left( 1 + \gamma_{\beta} \right)^{|\cP_\cR|}$. As a consequence, if we define $m_{\cR}:=\min\{ |\cS_{\cR}|, |\mathcal{P}_{\cR}|\}$, then there is a Hermitian operator $Q_{\partial \cR}$ supported on $\partial \cR$ such that
\[ \Tr_{\cR}(e^{-\beta H_{\cR}^{\partial}})  = \kappa_{\cR} (\mathbbm{1} + Q_{\partial \cR}) \quad \text{ and } \quad \| Q_{\partial \cR}\|_{\infty} \leq (|G|^{2}-1) \left( \frac{\gamma_{\beta}}{1+\gamma_{\beta}}\right)^{m_\cR}. \]
\end{Theo}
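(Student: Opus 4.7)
The plan is to follow the approach of the abelian case (Theorem~\ref{Theo:abelianModelsMarginals}), exploiting the fact that the family $\{A_s, B_p\}_{s,p}$ remains a mutually commuting collection of orthogonal projections for an arbitrary finite group $G$. The key difference from the abelian case is that, without a Fourier/character decomposition of $B_p$, we can only derive operator inequalities rather than the clean factored identity. Using $A_s^2=A_s$ together with commutativity, I would first rewrite each exponential as a convex combination
\[
e^{\beta A_s}=(1+\gamma_\beta)\bigl[(1-q)\mathbbm{1}+|G|\,q\,A_s\bigr],\qquad q:=\frac{\gamma_\beta}{1+\gamma_\beta},
\]
and analogously for $e^{\beta B_p}$. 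Extracting the scalar prefactors $(1+\gamma_\beta)^{|\cS_\cR|+|\cP_\cR|}$ gives
\[
\prod_s e^{\beta A_s}\prod_p e^{\beta B_p}
=\frac{\kappa_\cR}{|G|^{|\cR|}}\,\prod_s\bigl[(1-q)\mathbbm{1}+|G|q A_s\bigr]\prod_p\bigl[(1-q)\mathbbm{1}+|G|q B_p\bigr],
\]
reducing the theorem to sandwiching $|G|^{-|\cR|}\Tr_\cR$ of the bracketed product between the two stated scalar multiples of $\mathbbm{1}$.

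Next, I would expand each bracketed product as a sum over subsets,
\[
\prod_s\bigl[(1-q)\mathbbm{1}+|G|q A_s\bigr]=\sum_{T\subseteq\cS_\cR}(1-q)^{|\cS_\cR|-|T|}(|G|q)^{|T|}P^A_T,\qquad P^A_T:=\prod_{s\in T}A_s,
\]
and analogously $P^B_P:=\prod_{p\in P}B_p$; each $P^A_T P^B_P$ is a commuting product of projections, hence an orthogonal projection. The central technical step, parallel to the abelian computation in Appendix~\ref{Appendix:Marginals_AbelianQDM}, is to evaluate the partial trace $\Tr_\cR(P^A_T P^B_P)$ edge-by-edge using $A_s(g)=L^g\otimes L^g\otimes R^g\otimes R^g$, $\Tr(L^g)=\Tr(R^g)=|G|\delta_{g,e}$, and the regular-representation expression for $B_p$, and to show that $\Tr_\cR(P^A_T P^B_P)=|G|^{|\cR|-|T|-|P|}\mathbbm{1}$ whenever either $T\subsetneq\cS_\cR$ or $P\subsetneq\cP_\cR$, whereas for the maximal subsets the connectivity-by-stars and connectivity-by-plaquettes of $\cR$ each collapse one extra factor of $|G|$, yielding $\Tr_\cR(P^A_{\cS_\cR}P^B_{\cP_\cR})=|G|^{|\cR|-|\cS_\cR|-|\cP_\cR|+2}\,\widetilde A\widetilde B$ with $\widetilde A,\widetilde B$ two commuting positive operators on $\cR^c$ satisfying $0\le\widetilde A,\widetilde B\le\mathbbm{1}$ and supported in $\partial\cR$. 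Re-summing the ``non-full'' contributions using the binomial identity $\sum_{t<n}\binom{n}{t}(1-q)^{n-t}q^t=1-q^n$ then gives
\[
|G|^{-|\cR|}\Tr_\cR[\cdots]=\bigl[(1-q^{|\cS_\cR|})\mathbbm{1}+|G|q^{|\cS_\cR|}\widetilde A\bigr]\bigl[(1-q^{|\cP_\cR|})\mathbbm{1}+|G|q^{|\cP_\cR|}\widetilde B\bigr],
\]
from which the two-sided bounds follow by substituting $\widetilde A,\widetilde B$ by $0$ (lower) and $\mathbbm{1}$ (upper) in each factor.

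The norm estimate on $Q_{\partial\cR}$ is then immediate: writing $M=\kappa_\cR(\mathbbm{1}+Q_{\partial\cR})$, the upper bound rearranges to $Q_{\partial\cR}\le[(|G|-1)q^{|\cS_\cR|}+(|G|-1)q^{|\cP_\cR|}+(|G|-1)^2 q^{|\cS_\cR|+|\cP_\cR|}]\mathbbm{1}\le(|G|^2-1)q^{m_\cR}\mathbbm{1}$, while the lower bound gives $Q_{\partial\cR}\ge -2q^{m_\cR}\mathbbm{1}$, so (since $|G|\ge 2$) $\|Q_{\partial\cR}\|_\infty\le(|G|^2-1)q^{m_\cR}$; the support of $Q_{\partial\cR}$ in $\partial\cR$ is inherited from that of $\widetilde A,\widetilde B$. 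The principal obstacle is the edge-by-edge partial-trace computation: in the abelian case this is clean because $B_p$ diagonalizes into characters $B_p(\chi)$, whereas in the non-abelian case one must work directly with the group-algebra form $B_p=|G|^{-1}\sum_{(g_i)}\chi^{reg}(g_1 g_2 g_3^{-1}g_4^{-1})\bigotimes_i\ketbra{g_i}{g_i}$ and track how the flat-connection constraint at each plaquette propagates through $\cR$; the combinatorial mechanism whereby plaquette-connectivity forces the trace to collapse by exactly one factor of $|G|$ should carry over, but requires heavier book-keeping because the irreducible characters of $G$ need not be one-dimensional.
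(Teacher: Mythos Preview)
Your approach has a genuine gap at the central technical step. You claim that $\Tr_\cR(P^A_T P^B_P)=|G|^{|\cR|-|T|-|P|}\mathbbm{1}$ whenever \emph{either} $T\subsetneq\cS_\cR$ or $P\subsetneq\cP_\cR$, but this already fails in the abelian case: take $T=\cS_\cR$ and $P=\emptyset$. Then $\Tr_\cR\bigl(\prod_{s\in\cS_\cR}A_s\bigr)$ is not a scalar multiple of the identity; it equals $|G|^{|\cR|-|\cS_\cR|+1}A_{\cS_\cR}$ with $A_{\cS_\cR}$ the nontrivial boundary projection from Theorem~\ref{Theo:abelianModelsMarginals}. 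The ``collapse by one extra factor of $|G|$'' from star-connectivity occurs whenever $T=\cS_\cR$, not only when both $T$ and $P$ are full, so your binomial resummation into the factored form $\bigl[(1-q^{|\cS_\cR|})\mathbbm{1}+|G|q^{|\cS_\cR|}\widetilde A\bigr]\bigl[(1-q^{|\cP_\cR|})\mathbbm{1}+|G|q^{|\cP_\cR|}\widetilde B\bigr]$ cannot work as written. In the non-abelian case the obstruction is sharper: the residual operator coming from $T=\cS_\cR$ genuinely depends on which plaquette subset $P$ is present, because $\Tr_\cR\bigl(\prod_s A_s\,P_{\widehat g}\bigr)$ depends on the configuration $\widehat g$ through the size of the subgroup $S_{\widehat g}$ of compatible gauge labelings (see the proof of Lemma~\ref{Lemm:starOperatorsTraceRewritten2}), so summing over different $P$-constrained families of $\widehat g$'s does not yield a fixed $\widetilde A$.

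The paper sidesteps this by \emph{separating} stars from plaquettes before any subset expansion. Inserting the resolution of the identity $\mathbbm{1}=\sum_{\widehat g\in G^{\cR}}P_{\widehat g}$ in the computational basis of $\cH_\cR$, and using that each $B_p$ is diagonal there, one obtains the exact splitting
\[
\Tr_\cR\Bigl(\prod_p e^{\beta B_p}\prod_s e^{\beta A_s}\Bigr)=\sum_{\widehat g}\Tr_\cR\Bigl(\prod_p e^{\beta B_p}\,P_{\widehat g}\Bigr)\cdot\Tr_\cR\Bigl(\prod_s e^{\beta A_s}\,P_{\widehat g}\Bigr).
\]
The star factor is then sandwiched between scalars \emph{uniformly in $\widehat g$} (Lemmas~\ref{Lemm:starOperatorsTraceRewritten} and~\ref{Lemm:starOperatorsTraceRewritten2}); pulling these scalars out, the $\widehat g$-sum recombines into $\Tr_\cR\bigl(\prod_p e^{\beta B_p}\bigr)$, which is bounded separately (Lemmas~\ref{Lemm:plaquetteOperatorsTraceRewritten} and~\ref{Lemm:plaquetteOperatorsTraceEstimates}). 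The product-of-intervals form of the theorem thus comes from this two-stage sandwich, not from any factorization of the mixed traces $\Tr_\cR(P^A_T P^B_P)$. Your derivation of the norm bound on $Q_{\partial\cR}$ from the two-sided operator inequality is correct and matches the paper.
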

\begin{proof}
The first part of the theorem is proved in Appendix~\ref{Appendix:Marginals_GeneralQDM}. The second part is a straightforward consequence of it. On the one hand,
\begin{align*}
Q_{\partial \cR}:=\frac{1}{\kappa_\cR}\Tr_{\cR}(e^{-\beta H_{\mathcal{R}}^{\partial}}) - \mathbbm{1} 
& \textstyle \leq \left(\left(1+(|G|-1) \left( \frac{\gamma_{\beta}}{1+\gamma_{\beta}}\right)^{m_\cR}\right)^{2}-1\right) \mathbbm{1} \\
& \textstyle = (|G|-1) \left( \frac{\gamma_{\beta}}{1+\gamma_{\beta}}\right)^{m_\cR} \left( 2 + (|G|-1) \left( \frac{\gamma_{\beta}}{1+\gamma_{\beta}}\right)^{m_\cR} \right) \mathbbm{1}\\
&\textstyle  \leq (|G|-1) (|G|+1) \left( \frac{\gamma_{\beta}}{1+\gamma_{\beta}}\right)^{m_\cR} \mathbbm{1}\\
& \textstyle \leq (|G|^{2}-1) \left( \frac{\gamma_{\beta}}{1+\gamma_{\beta}}\right)^{m_{\cR}} \mathbbm{1}\,.
\end{align*}
On the other hand,
\begin{align*} 
\textstyle  Q_{\partial \cR} \geq \left(\left(1 - \left( \frac{\gamma_{\beta}}{1+\gamma_{\beta}}\right)^{m_\cR} \right)^{2} - 1\right) \mathbbm{1}   = - \left( \frac{\gamma_{\beta}}{1+\gamma_{\beta}}\right)^{m_\cR} \left(2- \left( \frac{\gamma_{\beta}}{1+\gamma_{\beta}}\right)^{m_\cR} \right) \mathbbm{1}\geq - 2 \left( \frac{\gamma_{\beta}}{1+\gamma_{\beta}}\right)^{m_\cR} \mathbbm{1}\,. 
\end{align*}
This finishes the proof.
\end{proof}



\subsection{Spectral gap: Abelian group case}

\begin{Theo}\label{Theo:CorrelationDecayAbelianQDM}
Let $N \geq 8$, let $G$ be an Abelian group, and let $\sigma = \sigma_{\beta}$ be the Gibbs state associated to the quantum double model on $\Lambda_{N}$. Consider a partition $\Lambda_N=ABCD$ into four subsets  and an integer $\ell \geq 2$ satisfying any of the following configurations:
\begin{enumerate}
\item[$(i)$] $\Lambda=AB_{1}CB_{2}$ as in Assumption~\ref{Assumtion2Dtorus}.$(i)$, where $B=B_{1} B_{2}$ and $D=\emptyset$. 
\item[$(ii)$] $\Lambda = AB_{1}CB_{2}D$ as in Assumption~\ref{Assumtion2Dtorus}.$(ii)$, where \mbox{$B=B_{1}B_{2}$}. 
\item[$(iii)$] $\Lambda = ABCD$ as in Assumption~\ref{Assumtion2Dtorus}.$(iii)$.  
\end{enumerate}
Then, the following inequality holds
\begin{equation}\label{equa:CorrelationDecayAbelianQDM}
\Delta_{\sigma}(A:C|D) \leq \delta(\ell):=1- \left(\frac{1 - \left( \frac{\gamma_{\beta}}{1+\gamma_{\beta}}\right)^{\ell^{2}} }{1 - \left( \frac{\gamma_{\beta}}{1+\gamma_{\beta}}\right)^{\ell^{2}}  + |G|  \left( \frac{\gamma_{\beta}}{1+\gamma_{\beta}}\right)^{\ell^{2}}}\right)^{6} \leq 6 |G| \left( \frac{\gamma_\beta}{1+\gamma_\beta}\right)^{\ell^2}\,. 
\end{equation}
\end{Theo}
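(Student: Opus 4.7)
The plan is to combine the commuting-marginals version of the martingale bound (Proposition~\ref{prop:commutingMartingaleUpperBound}) with the explicit closed-form formula for partial traces in the Abelian quantum double model (Theorem~\ref{Theo:abelianModelsMarginals}). By Theorem~\ref{Theo:abelianModelsMarginals}, each of the marginals $\sigma_{AD}, \sigma_D, \sigma_{DC}, \sigma_{ADC}$ belongs to the commutative algebra generated by $\{A_s(g), B_p(\chi)\}$, so they commute pairwise and Proposition~\ref{prop:commutingMartingaleUpperBound} yields
\[
\Delta_\sigma(A:C|D) \leq 1 - \|X\|_\infty^{-1}, \qquad X := \sigma_{AD}\,\sigma_D^{-1}\,\sigma_{DC}\,\sigma_{ADC}^{-1}.
\]
In each of the three configurations, the region separating $A$ from $C$ has width at least two plaquettes and the QDM has interaction range $1$, so no local operator touches both $A$ and $C$. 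As in the proof of Theorem~\ref{Theo:SufficientCondition1}, the exponential factors in the partial-trace expression of $X$ cancel, leaving
\[
X = \frac{\Tr_{BC}(e^{-\beta H_{BC}^\partial})\,\Tr_{AB}(e^{-\beta H_{AB}^\partial})}{\Tr_{ABC}(e^{-\beta H_{ABC}^\partial})\,\Tr_B(e^{-\beta H_B^\partial})}.
\]

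I would then evaluate each of the four traces using Theorem~\ref{Theo:abelianModelsMarginals}, splitting a region into its connected components when needed. In configurations $(i)$ and $(ii)$ the region $B = B_1 \sqcup B_2$ is disconnected and well-separated, so $H_B^\partial = H_{B_1}^\partial + H_{B_2}^\partial$ with disjoint supports and $\Tr_B(\cdot) = \Tr_{B_1}(\cdot)\,\Tr_{B_2}(\cdot)$; the regions $AB, BC$ and (in $(ii)$, $(iii)$) $ABC$ are connected cylinders or rectangles. In configuration $(i)$ moreover $ABC = \Lambda$, so the corresponding trace collapses to the scalar $\kappa_\Lambda\,\beta_{|\cS|}\,\beta_{|\cP|}$ with $\beta_n := 1 + (|G|-1)(\gamma_\beta/(1+\gamma_\beta))^n \geq 1$. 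A short combinatorial check, exploiting the absence of interactions between $A$ and $C$, shows that the $\kappa$-prefactors cancel exactly, leaving a ratio of commuting positive operators $T^\cS_\cR, T^\cP_\cR$ with spectrum in $[\alpha_n,\beta_n]$, where $\alpha_n := 1 - (\gamma_\beta/(1+\gamma_\beta))^n$.

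The size assumptions on $B$ guarantee that every relevant $|\cS_\cR|$ and $|\cP_\cR|$ is at least $\ell^2$, so every such factor lies in $[\alpha_{\ell^2}, \beta_{\ell^2}]$. Direct counting then gives $\|X\|_\infty \leq \beta_{\ell^2}^4/\alpha_{\ell^2}^6$ uniformly in the three configurations, with the worst case being $(ii)$ (whose disconnected denominator supplies three $\alpha$'s per sector), while $(i)$ and $(iii)$ actually give the sharper ratio $\beta_{\ell^2}^4/\alpha_{\ell^2}^4$. Since $\beta_{\ell^2} \geq 1$, inversion yields $\|X\|_\infty^{-1} \geq \alpha_{\ell^2}^6/\beta_{\ell^2}^4 \geq (\alpha_{\ell^2}/\beta_{\ell^2})^6$, proving the first inequality of \eqref{equa:CorrelationDecayAbelianQDM}. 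The second inequality follows from the elementary estimate $1-(1-u)^6 \leq 6u$ applied to $u := |G|(\gamma_\beta/(1+\gamma_\beta))^{\ell^2}/\beta_{\ell^2}\leq |G|(\gamma_\beta/(1+\gamma_\beta))^{\ell^2}$. The main delicate step will be the case-by-case combinatorics: verifying the $\kappa$-cancellation in each geometry, justifying the product decomposition $\Tr_B(\cdot) = \Tr_{B_1}(\cdot)\,\Tr_{B_2}(\cdot)$ from the disjoint supports of $H_{B_1}^\partial$ and $H_{B_2}^\partial$, and, in $(i)$, evaluating the scalar trace $Z = \Tr_\Lambda(e^{-\beta H_\Lambda})$ via the $\cR = \Lambda$ case of Theorem~\ref{Theo:abelianModelsMarginals}.
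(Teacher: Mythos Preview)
Your proposal is correct and follows essentially the same route as the paper: apply Proposition~\ref{prop:commutingMartingaleUpperBound} (marginals commute by Theorem~\ref{Theo:abelianModelsMarginals}), cancel the Hamiltonian exponentials via $H_{AD}-H_D=H_{ACD}-H_{CD}$, split $B=B_1\sqcup B_2$ in configurations $(i)$--$(ii)$, verify the $\kappa$-cancellation combinatorially, and bound each spectral factor by $[\alpha_{\ell^2},\beta_{\ell^2}]$ to obtain $\|X\|_\infty\le\beta_{\ell^2}^4/\alpha_{\ell^2}^6$. The only difference is that you treat case $(i)$ more carefully by observing $\Tr_{ABC}=Z$ is a scalar equal to $\kappa_\Lambda\beta_{|\cS|}\beta_{|\cP|}\ge\kappa_\Lambda$, yielding the sharper $\beta_{\ell^2}^4/\alpha_{\ell^2}^4$ there, whereas the paper simply lumps $(i)$ and $(ii)$ together; this refinement is harmless since the stated bound with exponent $6$ is driven by configuration $(ii)$ anyway.
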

\begin{proof} 
To prove the second inequality in \eqref{equa:CorrelationDecayAbelianQDM}, simply note that for every $a,b \geq 0$ with $a+b \geq 1$ it holds that
\[ 1-\frac{a^{6}}{(a+b)^{6}} = \frac{(a+b)^{6} - a^{6}}{(a+b)^{6}} \leq \frac{6(a+b)^{5} b}{(a+b)^{6}} = \frac{6b}{a+b} \leq 6b \,.\]
We now turn to the proof of the first inequality. Observe that in all three cases ($i$)-($iii$), rectangles $BC$, $ABC$ and $AB$ are connected by plaquettes and by stars. Then, by Theorem~\ref{Theo:abelianModelsMarginals}, the marginals $\sigma_{AD}, \sigma_{D}, \sigma_{DC}$ commute with each other, and so we can apply Proposition~\ref{prop:commutingMartingaleUpperBound} to estimate 
\begin{equation}\label{equa:equa:abeliancaseAux0}
\Delta_{\sigma}(A:C|D) \leq 1- \| \sigma_{AD} \sigma_{D}^{-1} \sigma_{DC} \sigma_{ADC}^{-1}\|_{\infty}^{-1}\,. 
\end{equation}
Let us then consider
\begin{equation}\label{equa:abeliancaseAux1}
\sigma_{AD} \sigma_{D}^{-1} \sigma_{DC} \sigma_{ACD}^{-1}
= \Tr_{BC}(e^{-\beta H}) \Tr_{ABC}(e^{-\beta H})^{-1}  \Tr_{AB}(e^{-\beta H})  \Tr_{B}(e^{ -\beta H})^{-1}\,, 
\end{equation}
and the following factorizations
\begin{equation}\label{equa:abeliancaseAux2}
\begin{split}
\Tr_{BC}(e^{-\beta H})   = \Tr_{BC}(e^{-\beta H^{\partial}_{BC}}) e^{-\beta H_{AD}} \quad & , \quad \Tr_{ABC}(e^{-\beta H})   = \Tr_{ABC}(e^{-\beta H^{\partial}_{ABC}}) e^{- \beta H_{D}}\,, \\[2mm]
\Tr_{AB}(e^{-\beta H})   = \Tr_{AB}(e^{-\beta H^{\partial}_{AB}}) e^{-\beta H_{CD}} \quad & , \quad \Tr_{B}(e^{-\beta H})   = \Tr_{B}(e^{-\beta H^{\partial}_{B}}) e^{-\beta H_{ACD}} \,.
\end{split}
\end{equation}
Next, we split the proof into two parts: we first address case ($iii$), and later consider cases ($i$)-($ii$).

In the case ($iii$), the rectangles $ABC$, $AB$, $BC$ and $B$ are all connected by stars and plaquettes. Therefore, if we substitute the expressions from \eqref{equa:abeliancaseAux2} into \eqref{equa:abeliancaseAux1}, and use that all eight resulting factors commute with each other by  Theorem~\ref{Theo:abelianModelsMarginals}, we get
\[ \sigma_{AD} \sigma_{D}^{-1} \sigma_{DC} \sigma_{ACD}^{-1} = \Tr_{BC}(e^{-\beta H_{BC}^{\partial}}) \, \Tr_{ABC}(e^{-\beta H_{ABC}^{\partial}})^{-1} \, \Tr_{AB}(e^{-\beta H_{AB}^{\partial}}) \, \Tr_{B}(e^{-\beta H_{B}^{\partial}})^{-1}\,. \]
We have used that $H_{AD} - H_{D} = H_{ACD} - H_{CD}$ since the regions $A$ and $C$ are separated by at least two plaquettes, ensuring that no interaction terms connect them directly. We can now apply again Theorem~\ref{Theo:abelianModelsMarginals}, and use that $|\cS_\cR|, |\cP_\cR| \geq \ell^{2}$ for $\cR \in \{ AB,B,BC,ABC\}$, to estimate 
\begin{align*} 
\|\sigma_{AD} \sigma_{D}^{-1} \sigma_{DC} \sigma_{ACD}^{-1} \|_{\infty} 
& \leq \|\Tr_{BC}(e^{-\beta H_{BC}^{\partial}})\|_{\infty} \cdot \|\Tr_{ABC}(e^{-\beta H_{ABC}^{\partial}})^{-1}\|_{\infty}\\
& \hspace{3cm}  \cdot \|\Tr_{AB}(e^{-\beta H_{AB}^{\partial}})\|_{\infty} \cdot \|\Tr_{B}(e^{-\beta H_{B}^{\partial}})^{-1}\|_{\infty} \\[2mm]
&  \leq \frac{\kappa_{AB} \kappa_{BC}}{\kappa_{B} \kappa_{ABC}}  \left(\frac{1-\left( \frac{\gamma_{\beta}}{1+\gamma_{\beta}}\right)^{\ell^2}+|G| \left( \frac{\gamma_{\beta}}{1+\gamma_{\beta}}\right)^{\ell^2}}{1-\left( \frac{\gamma_{\beta}}{1+\gamma_{\beta}}\right)^{\ell^2}} \right)^{4}\,.
\end{align*}
Finally, note that  $\kappa_{AB}  \kappa_{BC} = \kappa_{B} \kappa_{ABC}$, since $|\cS_{AB}| + |\cS_{BC}| = |\cS_{B}| + |\cS_{ABC}|$ and $|\cP_{AB}| + |\cP_{BC}| = |\cP_{B}| + |\cP_{ABC}|$. These two equalities hold because, as we already explained before, $A$ and $C$ are separated by at least two plaquettes, so that $\cS_{A} \cap \cS_{C} = \cP_{A} \cap \cP_{C} = \emptyset$. Applying the resulting inequality to \eqref{equa:equa:abeliancaseAux0}, we conclude the result.

In the cases ($i$)-($ii$), rectangles $ABC$, $AB$ and $BC$ are connected by stars and plaquettes, but $B$ is not. However, it is union of two rectangles $B_{1}$ and $B_{2}$ that are separated by at least two plaquettes, so that $\cS_{B_{1}} \cap \cS_{B_{2}} = \emptyset$ and $\cP_{B_{1}} \cap \cP_{B_{2}} = \emptyset$. Therefore
\begin{equation}\label{equa:abeliancaseAux3}
\Tr_{B}(e^{-\beta H_{B}^{\partial}}) = \Tr_{B}(e^{-\beta (H^{\partial}_{B_{1}} + H^{\partial}_{B_{2}})}) = \Tr_{B_{1}}(e^{-\beta H_{B_{1}}^{\partial}}) \Tr_{B_{2}}(e^{-\beta H_{B_{2}}^{\partial}})\,. 
\end{equation}
Moreover, $B_{1}$ and $B_{2}$ are connected by stars and plaquettes. We can then substitute \eqref{equa:abeliancaseAux2}  and \eqref{equa:abeliancaseAux3} into \eqref{equa:abeliancaseAux1}, and using that all the resulting factors commute with each other by Theorem \ref{Theo:abelianModelsMarginals}, we get that
\begin{multline*} 
\sigma_{AD} \sigma_{D}^{-1} \sigma_{DC} \sigma_{ACD}^{-1} = \Tr_{BC}(e^{-\beta H_{BC}^{\partial}}) \, \Tr_{ABC}(e^{-\beta H_{ABC}^{\partial}})^{-1} \\ \Tr_{AB}(e^{-\beta H_{AB}^{\partial}}) \, \Tr_{B_1}(e^{-\beta H_{B_1}^{\partial}})^{-1} \Tr_{B_2}(e^{-\beta H_{B_2}^{\partial}})^{-1}\,. 
\end{multline*}
We have used that $H_{AD} - H_{D} = H_{ACD} - H_{CD}$ since the regions $A$ and $C$ are separated by at least two plaquettes, ensuring that no interaction terms connect them directly. Applying again Theorem \ref{Theo:abelianModelsMarginals}, and using $|\cS_\cR|, |\cP_\cR| \geq \ell^{2}$ for $\cR \in \{ AB,B_1,B_2,BC,ABC\}$, we can estimate
\begin{align*} 
\|\sigma_{AD} \sigma_{D}^{-1} \sigma_{DC} \sigma_{ACD}^{-1} \|_{\infty} 
& \leq \|\Tr_{BC}(e^{-\beta H_{BC}^{\partial}})\|_{\infty} \cdot \|\Tr_{ABC}(e^{-\beta H_{ABC}^{\partial}})^{-1}\|_{\infty} \cdot \|\Tr_{AB}(e^{-\beta H_{AB}^{\partial}})\|_{\infty}  \\[2mm]
& \hspace{3cm} \cdot \|\Tr_{B_{1}}(e^{-\beta H_{B_{1}}^{\partial}})^{-1}\|_{\infty} \cdot \|\Tr_{B_{2}}(e^{-\beta H_{B_{2}}^{\partial}})^{-1}\|_{\infty}  \\[2mm]
& \leq \frac{\kappa_{AB} \kappa_{BC}}{ \kappa_{B_{1}} \kappa_{B_{2}} \kappa_{ABC}}  \frac{\left(1-\left( \frac{\gamma_{\beta}}{1+\gamma_{\beta}}\right)^{\ell^2}+|G| \left( \frac{\gamma_{\beta}}{1+\gamma_{\beta}}\right)^{\ell^2}\right)^{4}}{\left(1-\left( \frac{\gamma_{\beta}}{1+\gamma_{\beta}}\right)^{\ell^2}\right)^{6}} .
\end{align*}
Finally, we just need to use that $\kappa_{AB} \kappa_{BC} = \kappa_{B_{1}} \kappa_{B_{2}} \kappa_{ABC}$. This is a consequence of two facts we already mentioned, namely that  $A$ and $C$ (resp. $B_{1}$ and $B_{2}$) are separated by at least two plaquettes, so that $\cS_{A} \cap \cS_{C} = \cP_{A} \cap \cP_{C} = \emptyset$ (resp. $\cS_{B_1} \cap \cS_{B_2} = \cP_{B_1} \cap \cP_{B_2} = \emptyset$). Applying the resulting inequality to \eqref{equa:equa:abeliancaseAux0}, we conclude the proof.
\end{proof}

\begin{Coro}\label{Coro:gapAbelian}
Let $G$ be an Abelian group, let $\sigma = \sigma_{\beta}$ be the Gibbs state associated to the quantum double model, and let $\mathbf{H}_{\Lambda_N}$ be the canonical purified Hamiltonian of $\sigma$. Then, for every $N \geq 2$ we have
\begin{equation}\label{equa:CorogapAbelian} \gap(\mathbf{H}_{\Lambda_N}) \geq e^{-11} e^{-(54 \beta^{2} -72 \beta)}(3+\ln|G|)^{-54 \beta } e^{-2^{22}\beta}\, \prod_{k \geq 1}(1-e^{-(9/8)^{k}}) \,, 
\end{equation}
where the last infinite product converges to a positive constant independent of $N$ and $\beta$.
\end{Coro}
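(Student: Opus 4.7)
The plan is to combine three inputs: the 2D divide-and-conquer estimate of Theorem~\ref{Theo:gap2}, the correlation decay bound of Theorem~\ref{Theo:CorrelationDecayAbelianQDM}, and the ``small volume'' estimate on $\eta_X(\sigma)$ given by Remark~\ref{Rema:etaProperties}.($iv$). First, I fix $\mu = 2^8$ and distinguish two cases. For $N < \mu$, I apply Theorem~\ref{Theo:roughBoundSpectralGap} to $\Lambda_N$ directly: since $|\Lambda_N| = 2N^2 \leq 2\mu^2$ and the QDM has $\|\Phi\| = 4$ (every edge belongs to exactly two stars and two plaquettes, and the local operators $A_s,B_p$ are projections), one obtains $\gap(\mathbf{H}_{\Lambda_N}) \geq e^{-2^{22}\beta}$, which already dominates the right-hand side of~\eqref{equa:CorogapAbelian}. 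For $N \geq \mu$, Theorem~\ref{Theo:gap2} reduces the problem to estimating three quantities: the infimum of $\eta_\cR(\sigma)^{-4}$ over $\cR \in \mathcal{F}_\mu$, $(1-\delta_0)^2$, and $\prod_{k \geq 1}(1-\delta_k)$, where $\delta_k = \delta(\lfloor 2(9/8)^{k/2}\rfloor)$ and $\delta$ is the function from Theorem~\ref{Theo:CorrelationDecayAbelianQDM}.

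The $\eta$-factor is handled immediately: any $\cR \in \mathcal{F}_\mu$ has at most $2\mu^2 + 2\mu$ edges, so Remark~\ref{Rema:etaProperties}.($iv$) gives $\eta_\cR(\sigma)^{-4} \geq e^{-16\beta(2\mu^2 + 2\mu)} \geq e^{-2^{22}\beta}$. For the infinite product, Theorem~\ref{Theo:CorrelationDecayAbelianQDM} combined with the lower bound $\ell_k \geq (9/8)^{k/2}$ yields $\delta_k \leq 6|G| e^{-r(9/8)^k}$, where $r := \ln(1 + |G|/(e^\beta - 1))$. I split the product at the threshold $k_0 := \max\{0, \lceil \log_{9/8}(\ln(12|G|)/r)\rceil\}$, defined as the smallest index at which $\delta_k \leq 1/2$; a short calculation based on $r \geq |G|/(e^\beta - 1 + |G|)$ shows that $k_0 \leq \beta/\ln(9/8) + O(\ln(3+\ln|G|))$.

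For the tail $k \geq k_0$, the bound $1-\delta_k \geq e^{-2\delta_k}$ combined with the double-exponential decay $e^{-r(9/8)^k} \leq (12|G|)^{-(9/8)^{k-k_0}}$ produces $\prod_{k \geq k_0}(1-\delta_k) \geq c \prod_{k \geq 1}(1 - e^{-(9/8)^k})$ for an absolute constant $c > 0$. The head $k < k_0$ is the delicate part, since $\delta_k$ can be close to $1$ there, and the crude estimate $\delta_k \leq 6|G| p^{\ell_k^2}$ is useless. Instead, I go back to the exact formula $1-\delta_k = (1 + |G|u_k/(1-u_k))^{-6}$ with $u_k = e^{-r\ell_k^2}$, together with the elementary inequality $u/(1-u) \leq 1/(r\ell^2)$ that follows from $e^x - 1 \geq x$. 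This yields $-\ln(1-\delta_k) \leq 6\ln(1 + |G|/(r\ell_k^2))$. Summing this bound over $k = 0, \ldots, k_0 - 1$, with $k = 0$ counted twice due to the factor $(1-\delta_0)^2$, substituting $\ln\ell_k \approx \ln 2 + (k/2)\ln(9/8)$, and using the explicit bound on $k_0$, one arrives at
\[
(1-\delta_0)^2 \prod_{k=1}^{k_0 - 1}(1-\delta_k) \geq e^{-(54\beta^2 - 72\beta)} (3+\ln|G|)^{-54\beta}.
\]

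The main technical obstacle lies in this last bookkeeping: the quadratic exponent $54\beta^2$ emerges only after the negative contribution $6 k_0 \ln(1/r)$ (which is of order $\beta \cdot k_0 \sim \beta^2$) is partially canceled by the positive contribution $3\ln(9/8) k_0^2$ coming from the arithmetic sum $\sum_{k=0}^{k_0 - 1} k$, and the $|G|$-dependent factor $(3+\ln|G|)^{-54\beta}$ must be extracted from terms of the form $6k_0 \ln|G|$ by exploiting the fact that $k_0$ absorbs $\ln|G|$ only through the subleading correction $\ln(3+\ln|G|)$. Once these balances are performed and all constants are tracked, multiplying together the $\eta$-, tail- and head-estimates delivers~\eqref{equa:CorogapAbelian}; the convergence of the final infinite product follows from the double-exponential decay of $e^{-(9/8)^k}$ in the spirit of Remark~\ref{Rema:infiniteProductConvergence}.
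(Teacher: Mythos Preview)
Your overall architecture matches the paper exactly: fix $\mu = 2^8$, treat $N \le \mu$ by Theorem~\ref{Theo:roughBoundSpectralGap} directly, and for $N > \mu$ feed the decay bound of Theorem~\ref{Theo:CorrelationDecayAbelianQDM} into Theorem~\ref{Theo:gap2}; the $\eta$-factor is handled in the same way via Remark~\ref{Rema:etaProperties}.($iv$). The only substantive difference is how you control the ``head'' $\prod_{k=0}^{k_0}(1-\delta_k)$.

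Here the paper's argument is far simpler than the route you propose. Rather than bounding $-\ln(1-\delta_k) \le 6\ln\bigl(1 + |G|/(r\ell_k^2)\bigr)$ and then tracking a cancellation between a term of order $6k_0\ln(1/r)$ and one of order $3\ln(9/8)k_0^2$, the paper observes that the exact expression
\[
1-\delta_k \;=\; \Bigl(\tfrac{1-u_k}{1-u_k + |G|u_k}\Bigr)^{6}, \qquad u_k = \Bigl(\tfrac{\gamma_\beta}{1+\gamma_\beta}\Bigr)^{\ell_k^2},
\]
is monotone in $u_k$, so replacing $\ell_k^2$ by $1$ gives the uniform lower bound $1-\delta_k \ge e^{-6\beta}$ for every $k$. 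One then chooses $k_0 = 9\lceil\beta + \ln(3+\ln|G|)\rceil$ (so that $\delta_k \le e^{-(9/8)^{k-k_0}}$ for $k \ge k_0$), and the head contributes simply $(1-\delta_0)^2\prod_{0 \le k \le k_0}(1-\delta_k) \ge e^{-6\beta(k_0+3)}$. The constants $54 = 6\cdot 9$ and $72 = 6\cdot 12$ and the factor $(3+\ln|G|)^{-54\beta}$ drop out immediately from $k_0+1 \le 9\beta + 9\ln(3+\ln|G|) + 10$.

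Your refined per-term bound is correct and in fact sharper than $e^{-6\beta}$ for $k \ge 1$, so your approach could in principle yield an inequality at least as strong as \eqref{equa:CorogapAbelian}. But the ``delicate bookkeeping'' you describe is only sketched: you do not actually carry out the sum, and the specific constants $54$, $72$, and the exponent $54\beta$ on $(3+\ln|G|)$ are not outcomes of your cancellation mechanism --- they are artifacts of the paper's cruder but cleaner bound. If you want to recover exactly \eqref{equa:CorogapAbelian}, the one-line monotonicity observation $1-\delta_k \ge e^{-6\beta}$ is both the missing idea and the shortcut that makes the elaborate balancing unnecessary.
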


\begin{proof}
Let us fix $\mu=2^{8}=256$. As pointed out in Remark~\ref{Rema:etaProperties}.$(iv)$, we can estimate for every rectangle $\cR  \in \mathcal{F}_{\mu}$  
\[\eta_{\cR}(\sigma)^{4} \leq e^{2^2\beta  \| H-H_{\cR^{c}}\|_{\infty}}  \leq e^{2^4 \beta  |\cR|} \leq e^{2^{6}\beta \mu^{2}}  = e^{\beta  2^{22}}\,.\] 
Therefore, if $N \leq \mu$, we can simply use Theorem~\ref{Theo:roughBoundSpectralGap} to deduce that
\[ \operatorname{gap}(\mathbf{H}_{\Lambda_N}) \geq e^{-\beta 2^{22}}, \]
and so inequality \eqref{equa:CorogapAbelian} cleary holds. Let us then assume that $N>\mu=2^{8}$. Applying Theorem~\ref{Theo:gap2} with these $\mu$ and $N$, and with the function $\delta(\ell)$ from Theorem~\ref{Theo:CorrelationDecayAbelianQDM}, we get 
\begin{equation}\label{equa:TheoGap2Abelian}
\gap(\mathbf{H}_{\Lambda_N}) \, \geq \,e^{-11} (1-\delta_{0})^2 \left(\, \prod_{k=0}^{\infty} (1-\delta_{k})\right) e^{-\beta 2^{22}}\,. 
\end{equation}
Here, for each $k \geq 0$, if we denote $\ell_{k} := \lfloor 2(9/8)^{k/2}\rfloor \geq (9/8)^{k/2}$, then $\delta_{k}$ is given by
\[\delta_{k}:= 1- \left(\frac{1 - \left( \frac{\gamma_{\beta}}{1+\gamma_{\beta}}\right)^{\ell_k^{2}} }{1 - \left( \frac{\gamma_{\beta}}{1+\gamma_{\beta}}\right)^{\ell_k^2}  + |G|  \left( \frac{\gamma_{\beta}}{1+\gamma_{\beta}}\right)^{\ell_k^2}}\right)^{6}\,. \]  
Next, observe that the function $y \mapsto (1-y)/(1+(|G|-1)y)$ is decreasing for $0<y <1$. Then, we can lower estimate for every $k \geq 0$
\begin{equation}\label{equa:gapAbelian_Aux1}
1-\delta_{k}  \geq \left(\frac{1 - \left( \frac{\gamma_{\beta}}{1+\gamma_{\beta}}\right)^{1} }{1 - \left( \frac{\gamma_{\beta}}{1+\gamma_{\beta}}\right)^{1}  + |G|  \left( \frac{\gamma_{\beta}}{1+\gamma_{\beta}}\right)^{1}}\right)^{6} = e^{-6\beta}. 
\end{equation}
 Using that $(1-1/x) \leq e^{-1/x}$ for $x>1$ and that $1+\gamma_{\beta} \leq e^{\beta}$, we have (see equation \eqref{equa:CorrelationDecayAbelianQDM}) 
\[
\delta_{k} \leq 6|G|\left( \frac{\gamma_{\beta}}{1+\gamma_{\beta}}\right)^{\ell_{k}^{2}} \leq  6|G|\exp\left(-\frac{\ell_{k}^2}{1+\gamma_{\beta}}\right) \leq \exp\left(\ln{6}+\ln|G| \, -\left(9/8\right)^{k} e^{-\beta}\right) . 
\]
Next, we fix $k_{0} = 9\lceil \beta + \ln(3+\ln|G|)\rceil$ . Since $(9/8)^{9} \geq e$, we deduce from the previous inequality that for every $k \geq k_{0}$ 
\begin{equation}\label{equa:gapAbelian_Aux2} 
\delta_{k} \leq \exp\left( \ln{6}+\ln{|G|} -(9/8)^{k-k_{0}}(3+\ln|G|) \right) \leq  \exp\left(  -(9/8)^{k-k_{0}} \right)<1\,.
\end{equation}
Finally, applying first \eqref{equa:gapAbelian_Aux1}, and later \eqref{equa:gapAbelian_Aux2} we get
\begin{align*} 
(1-\delta_{0})^{2}\prod_{k=0}^{\infty}(1-\delta_{k}) \, & = (1-\delta_{0})^{2}\prod_{0 \leq k \leq k_{0}}(1-\delta_{k}) \prod_{k > k_{0}} (1-\delta_{k}) \\[2mm]
& \geq e^{-12\beta} e^{-6\beta (k_{0}+1)}  \, \prod_{k > k_{0}} (1-\delta_{k})\\[2mm] 
& \geq \, e^{-54 \beta^{2} -72 \beta}(3+\ln|G|)^{-54 \beta } \, \prod_{k \geq 1}(1-e^{-(9/8)^{k}})\,, 
\end{align*}
where we have used that $k_{0}+1 \leq 9 \beta + 9 \ln(3+\ln|G|)+10$.
Applying these inequalities to \eqref{equa:TheoGap2Abelian} we conclude the result.
\end{proof}

\subsection{Spectral gap: General group case}
In the general case, we can establish the following bound.
\begin{Theo}\label{theo:CorrelationDecayGeneralQDM}
Let $G$ be an arbitrary group,  let $\sigma = \sigma_{\beta}$ be the Gibbs state associated to the quantum double model and let $\mu_{\beta} := \lceil 2^{9}e^{\beta}(1+\ln{|G|}) \rceil$. Consider a partition of $\Lambda$ into four disjoint subsets $\Lambda=ABCD$ and an integer $\ell \geq \mu_{\beta}$ satisfying one of the following configurations:
\begin{enumerate}
\item[$(i)$] $\Lambda=AB_{1}CB_{2}$ as in Assumption~\ref{Assumtion2Dtorus}.$(i)$, where $B=B_{1} B_{2}$ and $D=\emptyset$. 
\item[$(ii)$] $\Lambda = AB_{1}CB_{2}D$ as in Assumption~\ref{Assumtion2Dtorus}.$(ii)$, where \mbox{$B=B_{1}B_{2}$}. 
\item[$(iii)$] $\Lambda = ABCD$ as in Assumption~\ref{Assumtion2Dtorus}.$(iii)$..  
\end{enumerate}
Then, the following inequality holds
\begin{equation}\label{equa:CorrelationDecayGeneralQDM}
\Delta_{\sigma}(A:C|D) \leq  \delta(\ell):=\left( \frac{\gamma_\beta}{1+\gamma_\beta}\right)^{\ell^{2}/2}\,. 
\end{equation}
\end{Theo}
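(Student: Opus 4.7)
The plan is to apply Theorem~\ref{Theo:SufficientCondition1}, using the structural bound on the marginals of the Gibbs state provided by Theorem~\ref{thm:marginals-decayQDM}. Unlike the Abelian case treated in Theorem~\ref{Theo:CorrelationDecayAbelianQDM}, we can no longer appeal to Proposition~\ref{prop:commutingMartingaleUpperBound}, because the marginals of $\sigma$ over different regions no longer commute with each other. Theorem~\ref{Theo:SufficientCondition1} is designed precisely for this situation, trading commutativity for a quantitative smallness assumption on the perturbation $Q_{\partial\cR}$ appearing in the decomposition $\Tr_\cR(e^{-\beta H^\partial_\cR}) = \kappa_\cR(\mathbbm{1}+Q_{\partial\cR})$.

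First I would verify hypotheses (i)--(iii) of Theorem~\ref{Theo:SufficientCondition1} for the four regions $\cR \in \{AB, BC, ABC, B\}$. Hypothesis (i), the absence of local interactions bridging $A$ and $C$, is geometrically immediate in all three configurations because $B$ contains a buffer of at least two plaquettes between $A$ and $C$, which exceeds the range of the star and plaquette interactions. Hypothesis (ii) is a direct application of Theorem~\ref{thm:marginals-decayQDM} for the three regions $AB$, $BC$, and $ABC$, all of which are connected by stars and plaquettes. For $\cR = B$ in configurations (i) and (ii), where $B = B_1 \sqcup B_2$ is disconnected, one instead factors $\Tr_B(e^{-\beta H_B^\partial}) = \Tr_{B_1}(e^{-\beta H_{B_1}^\partial})\Tr_{B_2}(e^{-\beta H_{B_2}^\partial})$ (valid since the interactions composing $H_B^\partial$ split cleanly across the two components) and applies Theorem~\ref{thm:marginals-decayQDM} to each factor, then expands $(\mathbbm{1}+Q_{\partial B_1})(\mathbbm{1}+Q_{\partial B_2})$ to identify $Q_{\partial B}$. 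Hypothesis (iii), the identity $\kappa_{AB}\kappa_{BC} = \kappa_B \kappa_{ABC}$, reduces to the counting identities $|\cS_{AB}|+|\cS_{BC}| = |\cS_B|+|\cS_{ABC}|$ and its analogue for plaquettes, both of which hold because $\cS_A \cap \cS_C = \cP_A \cap \cP_C = \emptyset$ thanks again to the two-plaquette buffer.

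Hypothesis (iv) is the crux, and this is where the assumption $\ell \geq \mu_\beta$ enters. Theorem~\ref{thm:marginals-decayQDM} yields
\[
\|Q_{\partial\cR}\|_\infty \leq (|G|^2-1)\left(\frac{\gamma_\beta}{1+\gamma_\beta}\right)^{m_\cR}, \qquad m_\cR := \min\{|\cS_\cR|,|\cP_\cR|\},
\]
and since $B$ alone (or each $B_i$) is contained in $\cR$ and contributes at least $\ell^2$ stars and $\ell^2$ plaquettes (the geometric assumption guarantees at least $\ell$ along each coordinate direction), we have $m_\cR \geq \ell^2$. Using $1+\gamma_\beta \leq e^\beta$, this gives $\|Q_{\partial\cR}\|_\infty \leq |G|^2 \exp(-\ell^2 e^{-\beta})$. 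The recipe is then to set $\varepsilon := \tfrac{1}{4}(\gamma_\beta/(1+\gamma_\beta))^{\ell^2/2}$, so that $4\varepsilon/(1-\varepsilon)^2 \leq (\gamma_\beta/(1+\gamma_\beta))^{\ell^2/2}$ (this is the right half of the exponent we save for the final bound), and then verify the required smallness $\|Q_{\partial\cR}\|_\infty \leq \varepsilon \, e^{-2\beta|\partial\cR|\|\Phi\|}$. With the estimate $\|\Phi\| \leq 4$ for the quantum double interaction (each edge lies in $2$ stars and $2$ plaquettes of unit norm) and a perimeter bound on $|\partial\cR|$, this smallness becomes the inequality
\[
\tfrac{\ell^2}{2}\log\!\left(\tfrac{1+\gamma_\beta}{\gamma_\beta}\right) \; \geq \; 2\log|G| + 2\beta|\partial\cR|\|\Phi\| + \log 4,
\]
which the choice $\mu_\beta = \lceil 2^{9} e^{\beta}(1+\ln|G|)\rceil$ is tailored to guarantee.

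The main technical obstacle is exactly this last quantitative verification: tracking, for each of the three configurations, the geometric quantities $|\partial\cR|$, $|\cS_\cR|$, $|\cP_\cR|$ carefully enough to ensure that the chosen $\mu_\beta$ is large enough to absorb simultaneously the polynomial factor $|G|^2$, the conjugation prefactor $e^{2\beta|\partial\cR|\|\Phi\|}$, and the half-exponent $(\gamma_\beta/(1+\gamma_\beta))^{\ell^2/2}$ that is sacrificed in the passage $4\varepsilon/(1-\varepsilon)^2 \mapsto (\gamma_\beta/(1+\gamma_\beta))^{\ell^2/2}$. Once this bookkeeping is in place, Theorem~\ref{Theo:SufficientCondition1} closes the argument directly.
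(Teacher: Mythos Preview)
Your overall strategy---verify the hypotheses of Theorem~\ref{Theo:SufficientCondition1} using the marginal estimate of Theorem~\ref{thm:marginals-decayQDM}---is exactly the paper's, and your treatment of hypotheses (i)--(iii), including the factorisation over $B_1$ and $B_2$ and the counting identity for $\kappa$, is correct.

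The gap is in hypothesis~(iv). You replace $m_{\cR}$ by the crude lower bound $\ell^2$ and then propose to absorb the conjugation factor $e^{2\beta|\partial\cR|\,\|\Phi\|}$ via the displayed inequality
\[
\tfrac{\ell^{2}}{2}\log\!\tfrac{1+\gamma_\beta}{\gamma_\beta}\;\ge\;2\log|G|+2\beta|\partial\cR|\,\|\Phi\|+\log 4.
\]
This does not hold in general: in configurations $(i)$ and $(ii)$ the regions $AB$, $BC$, $ABC$, $B_1$, $B_2$ are cylinders or long rectangles whose perimeter $|\partial\cR|$ is of order $N$, while $\ell$ may be as small as the fixed constant $\mu_\beta$. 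No ``perimeter bound on $|\partial\cR|$'' in terms of $\ell$ alone exists, so the right-hand side can be made arbitrarily large by increasing the system size, and the inequality fails.

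What is missing is the observation that $m_{\cR}$ itself grows with $|\partial\cR|$. Each relevant region $\cR$ (or each component $B_i$) is a rectangle with both side lengths $\ge\ell$, so its area controls its perimeter: writing the dimensions as $a,b\ge\ell$ one has $m_{\cR}\ge ab$ and $|\partial\cR|\le 8(a+b)$, hence
\[
m_{\cR}\;\ge\;\frac{ab}{8(a+b)}\,|\partial\cR|\;=\;\frac{1}{8}\Bigl(\tfrac{1}{a}+\tfrac{1}{b}\Bigr)^{-1}|\partial\cR|\;\ge\;\frac{\ell}{16}\,|\partial\cR|\;\ge\;\frac{\mu_\beta}{16}\,|\partial\cR|.
\]
With this, one keeps the full exponent $m_{\cR}$ (rather than $\ell^2$) in $(\gamma_\beta/(1+\gamma_\beta))^{m_{\cR}}$, spends half of it for the final bound and uses a quarter, converted via $(\gamma_\beta/(1+\gamma_\beta))^{m_{\cR}/4}\le e^{-m_{\cR}/(4e^\beta)}\le e^{-\mu_\beta|\partial\cR|/(2^{6}e^\beta)}$, to kill the perimeter factor $e^{8|\partial\cR|}$; the choice $\mu_\beta\ge 2^{9}e^\beta$ makes this work. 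The remaining quarter handles the $|G|^{4}$ prefactor (note that for disconnected $B=B_1B_2$ the cross term in $(\mathbbm{1}+Q_{\partial B_1})(\mathbbm{1}+Q_{\partial B_2})$ upgrades the constant from $|G|^{2}-1$ to $|G|^{4}$). A minor point: with $\varepsilon=\tfrac14(\gamma_\beta/(1+\gamma_\beta))^{\ell^2/2}$ one has $4\varepsilon/(1-\varepsilon)^2>4\varepsilon$, so the last step does not close; the paper takes $\varepsilon=e^{-2}(\gamma_\beta/(1+\gamma_\beta))^{\ell^2/2}$, for which $4e^{-2}/(1-e^{-2})^2\le 1$.
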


\begin{proof}
Let us consider any of the three possible decompositions ($i$)-($iii$). Observe that in the three cases, if $\cR \in \{ AB, BC, ABC\}$, since these are regions connected by stars and plaquettes, we can apply Theorem~\ref{thm:marginals-decayQDM} to decompose
\begin{equation}\label{equa:CorrelationDecayGeneralQDMaux1} 
\Tr_{\cR}(e^{-\beta H_{\cR}^\partial}) = \kappa_{\cR} (\mathbbm{1} + Q_{\partial \cR}) \quad  \text{ where } \quad \|Q_{\partial \cR}\|_{\infty} \leq (|G|^{2}-1) \left( \frac{\gamma_{\beta}}{ 1+ \gamma_{\beta}}\right)^{m_\mathcal{R}}\,.
\end{equation}
To deal with the case  $\cR=B$, note that in the configuration ($iii$), $B$ is also connected by stars and plaquettes, so the same estimate \eqref{equa:CorrelationDecayGeneralQDMaux1}  applies in this case. However, in the configurations ($i$) and ($ii$), $B$ is the union of two subsets $B_{1}$ and $B_{2}$ each of which is connected by stars and plaquettes. Since, by hypothesis, $\cS_{B_{1}} \cap \cS_{B_{2}} = \emptyset$ and $\cP_{B_{1}} \cap \cP_{B_{2}} = \emptyset$, we can decompose
\[ \Tr_{B}(e^{-\beta H_{B}^\partial}) = \Tr_{B_{1}}(e^{-\beta H_{B_1}^\partial}) \, \Tr_{B_2}(e^{-\beta H_{B_2}^\partial})  \,, \]
and apply next Theorem~\ref{thm:marginals-decayQDM} to further rewrite
\[ \Tr_{B}(e^{-\beta H_{B}^\partial}) = \kappa_{B_{1}} \kappa_{B_{2}} (\mathbbm{1} + Q_{\partial B_{1}}) (\mathbbm{1} + Q_{\partial B_{2}}) = \kappa_{B}(\mathbbm{1} + Q_{\partial B})\,, \]
where $\kappa_{B}:=\kappa_{B_{1}} \kappa_{B_{2}}$ and $Q_{\partial B} := Q_{\partial B_{1}} + Q_{\partial B_{2}} + Q_{\partial B_{1}} Q_{\partial B_{2}}$. We can moreover estimate, denoting $m_{B}:=\min\{ m_{B_{1}}, m_{B_{2}}\}$:
\[ 
\|Q_{\partial B} \|_{\infty}  \leq 2  (|G|^{2}-1) \left( \frac{\gamma_{\beta}}{ 1+ \gamma_{\beta}}\right)^{m_{B}} + (|G|^{2}-1)^{2} \left( \frac{\gamma_{\beta}}{ 1+ \gamma_{\beta}}\right)^{2m_B} \leq |G|^{4} \left( \frac{\gamma_{\beta}}{ 1+ \gamma_{\beta}}\right)^{m_{B}}\,.
\]
Therefore, for every $\cR \in \{AB, BC, ABC,B\}$ in any configuration ($i$)-($iii$) we have
\begin{equation}\label{equa:CorrelationDecayGeneralQDMaux1.5}  
\| Q_{\partial \cR}\|_{\infty} e^{2 |\partial \cR| \| \Phi\|} \leq \| Q_{\partial \cR}\|_{\infty} e^{8 |\partial \cR| } \leq |G|^{4} \left( \frac{\gamma_{\beta}}{1+\gamma_{\beta}}\right)^{m_\cR} e^{8 |\partial \cR| }\,. 
\end{equation}
Now, we claim that the condition $\ell \geq \mu_{\beta}$ yields that
\begin{equation}\label{equa:CorrelationDecayGeneralQDMaux2} 
\| Q_{\partial \cR}\|_{\infty} e^{2 |\partial R| \| \Phi\|} \leq \varepsilon:=\frac{1}{e^2} \left( \frac{\gamma_{\beta}}{1+\gamma_{\beta}}\right)^{\ell^{2}/2} \,. 
\end{equation}
Before proving this claim, let us explain how this helps to conclude \eqref{equa:CorrelationDecayGeneralQDM} finishing the proof. Indeed, we just need to apply Theorem~\ref{Theo:SufficientCondition1}, whose four hypothesis ($i$)-($iv$) clearly hold: The first hypothesis holds since $A$ and $C$ are separated by at least $\ell \geq 2$ plaquettes; the second and fourth hypothesis are satisfied by \eqref{equa:CorrelationDecayGeneralQDMaux1} and the $\varepsilon$ given in claim \eqref{equa:CorrelationDecayGeneralQDMaux2}, and the third hypothesis $\kappa_{ABC} \kappa_{B} = \kappa_{AB} \kappa_{BC}$ holds in the three possible configurations of $ABCD$,  as we already checked in Proposition~\ref{Theo:CorrelationDecayAbelianQDM}. Therefore, Theorem~\ref{Theo:SufficientCondition1} yields that 
\[ \Delta_{\sigma}(A:C|D) \leq \frac{4 (1/e^2)}{(1-1/e^2)^2} \left( \frac{\gamma_{\beta}}{1+\gamma_{\beta}} \right)^{\ell^2/2} \leq \left( \frac{\gamma_{\beta}}{1+\gamma_{\beta}} \right)^{\ell^2/2}\,, \]
and so \eqref{equa:CorrelationDecayGeneralQDM} holds.

Let us then demonstrate claim \eqref{equa:CorrelationDecayGeneralQDMaux2} for every $\cR \in \{AB, BC, B, ABC\}$ in any configuration ($i$)-($iii$). First, observe that $\cR \in \{ ABC,AB,BC,B\}$ can be either a rectangle or a union of two rectangles:
\begin{itemize}
\item If $\cR$ is a cylinder or a rectangle of dimensions $a,b$, then $m_{\cR} \geq ab$ whereas $|\partial \cR| \leq 4(a+b)$.
\item If $\cR$ is a union of two rectangles $B_{1}$ and $B_{2}$ as in configurations ($ii$) and ($iii$), each of which has dimensions $a,b$, then $m_{\cR} \geq  ab$ and $|\partial \cR| \leq |\partial B_{1}| + |\partial B_{2}| \leq 8(a+b)$. 
\end{itemize}
Consequently, for every $\cR$, since $a,b \geq \ell \geq \mu_{\beta}$, we can then estimate
\[ m_{\cR} \geq \frac{ab}{8(a+b)} \, |\partial \cR| = \frac{1}{8} \frac{1}{\frac{1}{a} + \frac{1}{b}} \, |\partial \cR| \geq \frac{\mu_{\beta}}{16} \, |\partial \cR|\,. \]
Next, let us rewrite \eqref{equa:CorrelationDecayGeneralQDMaux1.5} as
\begin{equation*} 
\| Q_{\partial \cR}\|_{\infty} e^{2|\partial \cR| \| \Phi\|}  \leq \left( \frac{\gamma_{\beta}}{1+\gamma_{\beta}} \right)^{\frac{m_\cR}{2}} \left( \frac{\gamma_{\beta}}{1+\gamma_{\beta}}\right)^{\frac{m_\cR}{4}} \, |G|^{4} \, \left( \frac{\gamma_{\beta}}{1+\gamma_{\beta}} \right)^{\frac{m_\cR}{4}} \, e^{8|\partial \cR|}\,.
\end{equation*}
We use now that $(1-1/x) \leq e^{-1/x}$ for every $x>1$. This implies that
\[ \left( \frac{\gamma_{\beta}}{1+\gamma_{\beta}} \right)^{\frac{m_\cR}{4}} = \left( 1-\frac{1}{1+\gamma_{\beta}} \right)^{\frac{m_\cR}{4 }}  \leq e^{-\frac{m_\cR}{4 (1+\gamma_\beta)}} \leq e^{-\frac{m_{\cR}}{4e^{\beta}}}\leq e^{-\frac{|\partial \cR| \mu_{\beta}}{2^6 e^{\beta}} }\,, \]
and therefore
\[ \| Q_{\partial R}\|_{\infty} e^{2|\partial R| \| \Phi\|}  \leq  \left( \frac{\gamma_{\beta}}{1+\gamma_{\beta}} \right)^{\frac{m_\cR}{2}}  e^{-\frac{\mu_{\beta}}{4 e^\beta}} \, |G|^{4} \, e^{-\frac{|\partial \cR| \mu_{\beta}}{2^{6} e^\beta}} \, e^{8|\partial \cR|}\,. \]
Thus, using that $\mu_{\beta} \geq 2^{9}e^{\beta}(1+\ln{|G|})$, this guarantees that 
\[ \| Q_{\partial R}\|_{\infty} e^{2|\partial R| \| \Phi\|} \leq  \left( \frac{\gamma_{\beta}}{1+\gamma_{\beta}} \right)^{\frac{m_\cR}{2}} e^{-2} \leq   \left( \frac{\gamma_{\beta}}{1+\gamma_{\beta}} \right)^{\frac{\ell^2}{2}} e^{-2}\,. \]
where we used that $m_{\cR} \geq \ell^2$ for every $\cR \in \{ ABC,AB,BC,B\}$ in any of the possible configurations ($i$)-($iii$).  This concludes the proof of the claim, and so the proof of theorem.
\end{proof}

\begin{Coro}\label{Coro:gapNonAbelian}
Let $G$ be a group, let $\sigma = \sigma_{\beta}$ be the Gibbs state associated to the quantum double model, and let $\mathbf{H}_{\Lambda_N}$ be the canonical purified Hamiltonian of $\sigma$. Denote $ \mu_{\beta}:= \lceil 2^{9}e^{\beta}(1+\ln{|G|})\rceil $. Then, for every $N \geq 2$ we have 
\begin{equation} \label{equa:coro-gapNonAbelian}\gap(\mathbf{H}_{\Lambda_{N}}) \geq  e^{-11} (1-e^{-1})^{2} e^{-\beta 2^{18} \mu_{\beta}^2}\,  \, \prod_{k \geq 0}\left( 1-e^{-(9/8)^{k}}\right) \,. \end{equation}
where the last infiniteproduct converges to a positive constant independent of $N$ and $\beta$.
\end{Coro}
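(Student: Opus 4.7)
The proof follows the same scheme as Corollary~\ref{Coro:gapAbelian}, with the essential modification that the base scale $\mu$ fed into Theorem~\ref{Theo:gap2} must be chosen $\beta$- and $|G|$-dependent. Indeed, Theorem~\ref{theo:CorrelationDecayGeneralQDM} only provides the decay $\Delta_\sigma(A:C|D) \leq (\gamma_\beta/(1+\gamma_\beta))^{\ell^2/2}$ once $\ell \geq \mu_\beta$, so one must ensure that every recursion scale $\ell_k := \lfloor \sqrt{\mu}(9/8)^{k/2}/8\rfloor$ appearing in Theorem~\ref{Theo:gap2} is already above this threshold at $k=0$.

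The plan is to fix $\mu$ as the smallest integer $\geq 64\mu_\beta^2$, which automatically satisfies $\mu \geq 2^8$ since $\mu_\beta \geq 2^9$, so that $\ell_k \geq \mu_\beta(9/8)^{k/2}$ for every $k \geq 0$. For $N \leq \mu$ I would invoke Theorem~\ref{Theo:roughBoundSpectralGap} directly, bounding $\eta_{\Lambda_N}(\sigma)^4 \leq e^{16\beta |\Lambda_N|}$ via Remark~\ref{Rema:etaProperties}.(iv) (using $\|\Phi\| \leq 4$, since each edge is incident to exactly two stars and two plaquettes, each a norm-one projection). For $N > \mu$ I feed the chosen $\mu$ together with the decay function $\delta(\ell) := (\gamma_\beta/(1+\gamma_\beta))^{\ell^2/2}$ into Theorem~\ref{Theo:gap2}, applying the same $\eta$-bound on the admissible rectangles in $\mathcal{F}_\mu$.

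The heart of the argument is then to show that the infinite product $\prod_{k\geq 0}(1-\delta_k)$ is bounded below by the $N$- and $\beta$-independent constant $\prod_{k\geq 0}(1-e^{-(9/8)^k})$. The plan is to use $\gamma_\beta/(1+\gamma_\beta) = 1-(1+\gamma_\beta)^{-1} \leq \exp(-e^{-\beta})$ to write
\[
\delta_k \leq \exp\left(-\tfrac{1}{2}\ell_k^2 e^{-\beta}\right) \leq \exp\left(-\tfrac{1}{2}\mu_\beta^2(9/8)^k e^{-\beta}\right),
\]
and then to invoke the hypothesis $\mu_\beta \geq 2^9 e^\beta(1+\ln|G|)$ to force $\tfrac{1}{2}\mu_\beta^2 e^{-\beta} \geq 1$, whence $\delta_k \leq e^{-(9/8)^k}$ uniformly in $\beta$ and $|G|$. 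In particular $\delta_0 \leq e^{-1}$, so the factor $(1-\delta_0)^2 \geq (1-e^{-1})^2$ can be isolated exactly as in the Abelian case to reproduce the stated prefactor.

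The main obstacle is the bookkeeping of all constants: matching the precise exponent $2^{18}\beta\mu_\beta^2$ in \eqref{equa:coro-gapNonAbelian} requires careful tracking of how $|\cR|$ scales with $\mu$ for $\cR \in \mathcal{F}_\mu$, and how this combines with the factor $e^{16\beta|\cR|}$ from the small-volume estimate of Theorem~\ref{Theo:roughBoundSpectralGap} and the multiplicative constant $e^{-11}$ coming from Theorem~\ref{Theo:gap2}. The structural form of the bound, however, follows directly from the Abelian blueprint, with the only qualitative change being the shift of the base scale $\mu$ from the constant $2^8$ to the $\beta$- and $|G|$-dependent $64\mu_\beta^2$.
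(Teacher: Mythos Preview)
Your proposal is correct and follows essentially the same route as the paper: the paper also fixes $\mu = 2^{6}\mu_\beta^{2}$, splits into $N\le\mu$ (handled by Theorem~\ref{Theo:roughBoundSpectralGap} with the bound $\eta_\cR(\sigma)^4 \le e^{2^{4}\beta|\cR|}$) and $N>\mu$ (handled by Theorem~\ref{Theo:gap2} with $\delta(\ell)$ from Theorem~\ref{theo:CorrelationDecayGeneralQDM}), and then bounds $\delta_k \le e^{-(9/8)^k}$ via the same chain $(\gamma_\beta/(1+\gamma_\beta))^{\ell_k^2/2} \le \exp(-\ell_k^2/(2(1+\gamma_\beta))) \le \exp(-\mu_\beta^{2}(9/8)^{k}/(8e^{\beta})) \le e^{-(9/8)^k}$. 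The only difference is cosmetic bookkeeping (the paper keeps a factor $1/8$ from the floor estimate where you keep $1/2$), and your identification of the ``main obstacle'' as constant-tracking is accurate.
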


\begin{proof}
Let us fix $\mu=2^6\mu_{\beta}^{2}$ where $ \mu_{\beta}= \lceil 2^{9}e^{\beta}(1+\ln{|G|})\rceil $ is given in Theorem~\ref{theo:CorrelationDecayGeneralQDM}. Observe that, by Remark~\ref{Rema:etaProperties}.($iv$), for every rectangle $\cR  \in \mathcal{F}_{\mu}$  we can estimate
\[\eta_{\cR}(\sigma)^{4} \leq e^{2^2\beta |\cR| \| H-H_{R^{c}}\|} \leq e^{2^4\beta   |\cR|} \leq e^{\beta 2^6\mu^{2}} \leq e^{\beta 2^{18} \mu_{\beta}^2}\,.\] 
Thus, if $N \leq \mu=2^{6}\mu_{\beta}^{2}$ we can simply use Theorem~\ref{Theo:roughBoundSpectralGap} to show that 
\[\operatorname{gap}(\mathbf{H}_{\Lambda_N}) \geq e^{-\beta 2^{18} \mu_{\beta}^2},\]
and so inequality \eqref{equa:coro-gapNonAbelian} clearly holds. Let us now assume that $N> \mu=2^{6}\mu_{\beta}$. We now apply Theorem~\ref{Theo:gap2} with these $N$ and $\mu$, and with $\delta(\ell)$ given in Theorem~\ref{theo:CorrelationDecayGeneralQDM}, obtaining that
\begin{equation}\label{equa:Coro-gapNonAbelian} 
\gap(\mathbf{H}) \geq e^{-11} (1-\delta_{0})^2 \, \left(\prod_{k=1}^{\infty} (1 - \delta_{k})\right) e^{-\beta 2^{18} \mu_{\beta}^2} \,, 
\end{equation}
where for every $k \geq 0$
\[ \delta_{k} = \delta(\lfloor \tfrac{\sqrt{\mu}}{8}(9/8)^{k/2} \rfloor) = \delta(\lfloor \mu_{\beta} (9/8)^{k/2} \rfloor) \leq \left( \frac{\gamma_\beta}{1+\gamma_\beta} \right)^{\frac{1}{2}\lfloor \mu_{\beta}(9/8)^{k/2}\rfloor^{2}} \leq \left( \frac{\gamma_\beta}{1+\gamma_\beta} \right)^{\frac{\mu_{\beta}^{2}}{8}(9/8)^{k}}  \,. \]
We use now that $(1-1/x) \leq e^{-1/x}$ for every $x>1$. This yields
\[ \delta_{k} \leq \exp\left( -\tfrac{\mu_{\beta}^{2}}{8(1+\gamma_{\beta})} (9/8)^{k}\right) \leq \exp\left( -\tfrac{\mu_{\beta}^{2}}{8 e^{\beta}} (9/8)^{k}\right) \leq e^{-(9/8)^{k}}\,. \]
Applying this upper bound to \eqref{equa:Coro-gapNonAbelian}, we conclude the result.
\end{proof}

\section{Conclusions and Final Comments}

In this work, we have introduced the purified canonical Hamiltonian $\mathbf{H}$ as a tool to connect dynamical properties of a reversible dissipative semigroup (its spectral gap, controlling the convergence rate and mixing time),  with static properties of its invariant state (the decay of $\Delta_\sigma(A:C|D)$). To conclude, let us now comment on some open questions and possible directions for future work that arise from our contribution.

\paragraph{Optimality of spectral gap estimates.}

A natural question to put forward is whether the spectral gap estimates for Davies generators we obtained are in any sense sharp or optimal, and more specifically, whether their dependence on the inverse temperature $\beta$ is the best possible. To the best of our knowledge, the optimal scaling of the gap as the temperature goes to zero is not known. Previous results \cite{Alicki2009} showed that for both the 1D quantum Ising model and the 2D toric code model a scaling as $\order{\exp(-C\beta)}$, while our estimates obtained in Corollaries~\ref{coro:ising-model-gap} and~\ref{Coro:gapAbelian} only scale as $\order{\exp(-C\beta^2)}$, making them not optimal in this sense. In \cite{Kmr2016}, a spectral gap bound for Davies generators of any abelian quantum double model was obtained: although this estimate is not system-size independent (it scales as inverse-polynomial in the system size), its dependence on $\beta$ is once again a simple exponential. We leave as an open question whether this is the correct $\beta$-dependence of the gap, as $\beta$ goes to infinity, and whether this also applies to non-abelian models, for which (see Corollary~\ref{Coro:gapNonAbelian}) we can only prove a scaling of $\order{\exp(\exp(-C\beta))}$ (a double exponential), in line with the scaling we first obtained in \cite{Lucia2023}.

\paragraph{Non-commuting models.}
As we mentioned in the introduction, the definition of the purified parent Hamiltonian $\mathbf{H}$ and the connection between its spectral gap and the mixing condition $\Delta_\sigma(A:C|D)$ can be made even for the case of states $\sigma$ which are positive-temperature Gibbs states of non-commuting Hamiltonians. Under this setup, Davies generators can be defined, and by requiring the thermal bath to satisfy Assumption~\ref{assumption:davies-1}, we obtain the a lower bound on their gap given by Proposition~\ref{prop:davies-local-primitivity}, which depends not only on the gap of $\mathbf{H}$ but also on a ``local gap'' of the Davies generator $\mathcal{D}_x$. Unfortunately, in this case, the Davies generators are not necessarily finite-range, and therefore we do not know how to estimate the gap of $\mathcal{D}_x$ as we did in the case of Gibbs states of commuting models (in Proposition~\ref{prop:local-davies-gap-commuting}). This raises the question of whether it is possible to either estimate the gap of $\mathcal{D}_x$ for these cases, or if it is possible to find other finite-range QMS generators which can be compared with $\mathbf{H}$.

While we do not the answer of either these problems, we would like to mention a recent construction of a (quasi-)local QMS generator having a Gibbs state of a non-commuting model as an invariant state~\cite{2311.09207}. The key difference is that, contrary to the Davies generators, these generators do \emph{not} satisfy the stronger GNS version of detailed balance, the one with parameter $s\neq 1/2$ (see Section~\ref{sec:OpenQuantumSystems}), but only the weaker one with parameter $s=1/2$, also known as KMS-reversibility.
Therefore, a third solution to the problem of analyzing QMS generators for non-commuting models would be to extend 
 our results to the case of KMS-reversible generators. We do not know the answer to this question, but let us explain the obstacle in analyzing KMS-reversible generators with our approach.

Our construction of the canonical purified Hamiltonian is done in terms of projections onto the subspaces 
\[ W_{X} = \{ (\mathbbm{1}_{X} \otimes O) \sigma^{1/2} \colon O \in \mathcal{B}(\mathcal{H}_{X^{c}}) \}\,, \]
which naturally arise from considering locally primitive generators satisfying the GNS-reversibility.
In the weaker case of KMS-reversible generators, we could have made a similar construction starting from the subspaces
\[ \tilde{W}_{X} = \{ \sigma^{1/4} (\mathbbm{1}_{X} \otimes O)  \sigma^{1/4} \colon O \in \cB(\cH_{X^{c}}) \}\,. \]
If we repeat the arguments from  the proof of Lemma~\ref{Prop:explicitprojectionCanonicalPurified}, we get that the orthogonal projection $\tilde{\Pi}_{X}(Q)$ onto $\tilde{W}_X$ is characterized by 
\begin{equation}
    \label{eq:kms-symmetric-projection}
    \Tr_{X}(\sigma^{1/4}Q\sigma^{1/4}) = \Tr_{X}(\sigma^{1/4} \tilde{\Pi}_{X}(Q)\sigma^{1/4})\,.
\end{equation}
While this approach could in principle work, it seems difficult to obtain an explicit formula for $\tilde{\Pi}_{X}(Q)$ from this expression, as it was done for $\Pi_X$ in \eqref{eq:explicitprojectionCanonicalPurified}. Therefore we do not know if it is possible to obtain an explicit characterization of the martingale condition for the projections $\tilde{\Pi}_X$ in terms of a spatial mixing condition on $\sigma$, in the spirit of Theorem~\ref{Theo:martingaleVScorrelations}.

In \cite{KB_2016}, only the weaker $1/2$-detailed balance condition is required, and in fact the authors can also prove a spectral gap estimate for a generator which is \emph{not} GNS-symmetric. This is the case of the so-called \emph{heat-bath} generator, given by $\cD_x(Q) = \mathbbm{E}_x(Q) - Q$, where $\mathbbm{E}_x(Q)$ is the Petz recovery map associated to region $X$:
\[
\mathbbm{E}_X(Q) = \sigma_{X^{c}}^{-1/2} \Tr_X[ \sigma^{1/2} Q   \sigma^{1/2}]\sigma_{X^{c}}^{-1/2}.
\]
Since the Petz recovery map satisfies detailed balance only for $s=1/2$, but not for $s\neq 1/2$, its purification with respect to the GNS scalar product does not yield a self-adjoint operator, and consequently we are unable to compare it to the canonical purified Hamiltonian $\mathbf{H}$.

To conclude this comparison, we observe that if we apply the inverse of the purification procedure of Section~\ref{sec:OpenQuantumSystems} to the purified canonical Hamiltonian, i.e.,
 if we consider an operator $\cK_X : \cB(\cH) \to \cB(\cH)$ defined by
\[
\Pi_X(Q) = -\cK_X(Q\sigma^{-1/2})\sigma^{1/2},
\]
then a simple algebraic manipulation gives us that
\[
\cK_X(Q) = \Tr_X[Q\sigma] \sigma_{X^c}^{-1} - Q := T_X(Q) - Q \,.
\]
The operator $T_X(Q) = \Tr_X[Q\sigma] \sigma_{X^c}^{-1}$ has a clear resemblance to the Petz recovery map. There is a crucial difference though: the Petz recovery map is a completely positive map, and so it can be used to construct a QMS generator, while $T_X$ does not need to be, and in particular the operator $\cK_X$ defined in this way does not necessarily define a QMS.

\paragraph{Hopf algebra quantum double models.}

The quantum double models presented in Section~\ref{sec:quantum-double} can be further generalized, by replacing the group algebra $\ell_2(G)$ with an appropriate abstract algebraic structure. It is believed that models defined starting from a weak Hopf $C^*$-algebra \cite{buerschaper_hierarchy_2013,molnar_matrix_2022} include a representative of every possible non-chiral 2D topological ordered phase of matter. We expect that the analysis of the marginals of the quantum double models carried out in this work can be extended to this more abstract setting (although it might not be straightforward to rewrite the results of Section~\ref{sec:quantum-double} purely in terms of the algebraic structure of a weak $C^*$-Hopf algebra), and that our results can be used to show that the Davies generator for this larger class of models is always gapped at any finite temperature, confirming the commonly held believe that there is no thermally stable 2D topological order.

\paragraph{Log-Sobolev estimates.}
As we have mentioned, the estimates on the spectral gap of the generator of a quantum Markov semigroup immediately translate, via standard techniques, to bounds on the \emph{mixing time}, i.e., the time it takes in the worst case scenario for the semigroup to reach a small ball around its fixed point. In the setting of quantum spin systems on a lattice that we are considering, the system-size independent spectral gap bound which we obtain becomes a polynomial (in the system size) bound on the mixing time. We expect that this bound is not sharp, and in fact we know this is the case for 1D models with commuting interactions \cite{Bardet2024, Bardet2023}: in this case, it is possible to prove a stronger condition, called a (modified) log-Sobolev inequality, that implies that the mixing time scales only logarithmically in the size of the system. We believe that a log-Sobolev inequality holds also for the case of the 2D quantum double models (as suggested by the absence of a positive-temperature phase transition, given that we can prove a constant spectral gap for every positive temperature), but we are currently unable to prove it even for the simpler case of abelian models. One indication that this should be the case is the similarity between the spatial mixing condition given by the decay of $\Delta_\sigma(A:C|D)$ and the mixing condition used in the proofs of the log-Sobolev inequality (see for example \cite{Bardet2024, Bardet2021}).

\printbibliography

\appendix

\newpage

\section{Proof of the 1D general result}\label{sec:Proof1DGeneral}

Recall that, along this section, $\Lambda = \mathbb{Z}_{N}$ and $\Phi$ is a local interaction on $\Lambda = \mathbb{Z}_{N}$ with finite range $r>0$ and strength $\| \Phi\| \leq J$ for some $J>0$. We also denote by $H$ the corresponding Hamiltonian and by $\sigma = e^{-H}/\Tr(e^{-H})$ the Gibbs state.

\subsection{Locality estimates}

For every pair of operators $O_{1}, O_{2}$, we denote the adjoint action of $e^{-O_{1}}$ on $O_{2}$ as
\[ \Gamma(O_{2};O_{1}) := e^{-O_{1}}O_{2}e^{O_{1}}\,. \]
Additionally, we introduce the following Araki expansionals \cite[Theorem 3]{araki73} given by
\begin{align}\label{equa:ExpansionalAraki}
E(O_{2}; O_{1}) &:= e^{O_{2}-O_{1} } e^{O_{1}} = \operatorname{Exp}_{r}\left( \int_{0}^{1} ds \, \Gamma(O_{2};sO_{1})\right)\,,\\
E(O_{2}; O_{1})^{-1} & := e^{-O_{1}} e^{O_{1}-O_{2}} = \operatorname{Exp}_{l}\left( \int_{0}^{1} ds \, \Gamma(-O_{2};sO_{1})\right)\,. 
\end{align}
We next formulate the main locality property of these elements when $O_{1}$ is a finite-range Hamiltonian and $O_{2}$ is an operator supported in a certain subset of the lattice. Its proof follows from \cite[Theorem 2.3 and Section 2.2.1]{perez23}, and extends Araki's original result from \cite{araki73}.

\begin{Theo}\label{Theo:localityProp}
There exists a constant $\mathcal{G} = \mathcal{G}(r, J) \geq 1$, independent of the system size, such that for every observable $Q$ supported on a set $Z$ which is union of $k$ intervals $Z=\cup_{j=1}^{k}[a_{j},b_{j}]$, if we denote $Z_{n} = \cup_{j=1}^{k}[a_{j} - n, b_{j}+n]$ for each $n \geq 0$, then for every  $0 \leq n \leq m$:
\begin{align*}
\|\Gamma(Q;H_{Z_n})\| & \leq \| Q\| \,\mathcal{G}^{k|Z|}\,,\\[2mm]
\| \Gamma(Q;H_{Z_m}) - \Gamma(Q;H_{Z_n}) \| & \leq \| Q\| \, \mathcal{G}^{k|Z|} \, \frac{\mathcal{G}^{n}}{(\lfloor n/r\rfloor +1)! }\,.
\end{align*}
As a consequence, for the Araki expansional,
\begin{align*} 
\| E(Q;H_{Z_n})\|  \, , \, \| E(Q;H_{Z_n})^{-1}\| & \leq \exp{\mathcal{G}^{k|Z|} \| Q\|}\,,\\[2mm]
\| E(Q;H_{Z_{m}}) - E(Q;H_{Z_{n}}) \| \, , \, \| E(Q;H_{Z_{m}})^{-1} - E(Q;H_{Z_{n}})^{-1} \|  & \leq \exp{\mathcal{G}^{k|Z|}\| Q\|} \frac{\mathcal{G}^{n}}{(\lfloor n/r\rfloor  + 1)!}\,. 
\end{align*}
\end{Theo}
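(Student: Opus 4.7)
The plan is to establish the two bounds on the adjoint action $\Gamma$ first, and then to derive the corresponding bounds on the Araki expansionals $E$ directly from the integral representation \eqref{equa:ExpansionalAraki}. The approach follows Araki's classical argument in one dimension \cite{araki73}, suitably generalized to handle a support $Z$ that is a union of $k$ intervals rather than a single interval.

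For the first estimate on $\Gamma$, the idea is to consider the differential equation $\tfrac{d}{dt}\Gamma(Q; tH_{Z_n}) = -[H_{Z_n}, \Gamma(Q; tH_{Z_n})]$ and expand it via the Dyson series
\[ \Gamma(Q; H_{Z_n}) = \sum_{\ell\ge 0} (-1)^\ell \int_{0\le t_1\le \cdots \le t_\ell \le 1}  [H_{Z_n}, [\cdots [H_{Z_n}, Q]\cdots ]]\, dt_1\cdots dt_\ell. \]
In each term, only those interactions $\Phi_X$ whose support meets the ``current'' support of $Q$ contribute. In one dimension with range $r$, at each level of nested commutators the support can only grow by at most $r$ from each of its $2k$ endpoints, which is a much weaker growth than the exponential blow-up one would see in higher dimensions. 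A careful bookkeeping of the admissible sequences of interactions (as carried out in \cite[Theorem 2.3]{perez23}) gives an upper bound of the form $\|Q\|\,\mathcal{G}^{k|Z|}$, valid uniformly in $n$ and in the system size $N$.

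For the stability bound, the key observation is that $H_{Z_m} - H_{Z_n} = \sum_{X : X\cap Z_m\neq\emptyset,\, X\cap Z_n = \emptyset} \Phi_X$ is supported at distance at least $n$ from $Z$. Writing
\[ \Gamma(Q; H_{Z_m}) - \Gamma(Q; H_{Z_n}) = \int_0^1 \frac{d}{dt}\Gamma(Q;(1-t)H_{Z_n}+tH_{Z_m})\,dt, \]
the derivative produces a commutator between the evolved observable and $H_{Z_m}-H_{Z_n}$. For such a commutator to be nonzero, the Dyson expansion has to contain at least $\lfloor n/r\rfloor$ nested commutators with $H_{Z_n}$ bridging the gap between $Z$ and the support of $H_{Z_m}-H_{Z_n}$. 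The ordering constraint on the integration variables $0\le t_1\le\cdots\le t_\ell\le 1$ introduces a $1/\ell!$ factor, and summing the geometric growth of the number of allowed paths against this factorial yields the bound $\|Q\|\,\mathcal{G}^{k|Z|}\,\mathcal{G}^n/(\lfloor n/r\rfloor+1)!$. This is the step that genuinely uses that we are in one dimension: in higher dimensions the number of paths of length $\ell$ bridging the gap would grow super-exponentially in $\ell$, destroying the factorial cancellation.

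The estimates on $E(Q;H_{Z_n})$ follow routinely from those on $\Gamma$. Indeed, since $\Gamma(Q;sH_{Z_n})$ satisfies the same bound $\|Q\|\,\mathcal{G}^{k|Z|}$ uniformly in $s\in[0,1]$, the right-ordered exponential representation in \eqref{equa:ExpansionalAraki} gives
\[ \|E(Q;H_{Z_n})\| \le \exp\!\left(\int_0^1 \|\Gamma(Q;sH_{Z_n})\|\,ds\right) \le \exp\!\bigl(\mathcal{G}^{k|Z|}\|Q\|\bigr), \]
and analogously for the inverse. For the difference bound, one uses the Duhamel-type identity for ordered exponentials, which expresses $E(Q;H_{Z_m}) - E(Q;H_{Z_n})$ as an integral of terms of the form $E(Q;H_{Z_m})\cdot\bigl(\Gamma(Q;sH_{Z_m})-\Gamma(Q;sH_{Z_n})\bigr)\cdot E(Q;H_{Z_n})$; combining the norm bound on $E$ just obtained with the stability bound on $\Gamma$ finishes the argument. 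The principal obstacle throughout is the combinatorial bookkeeping needed to guarantee that the exponent in $\mathcal{G}^{k|Z|}$ scales linearly in $k|Z|$ and not, say, as $|Z|^k$; this is exactly the technical content of \cite[Section 2.2.1]{perez23} that upgrades Araki's original single-interval argument to the multi-interval setting required here.
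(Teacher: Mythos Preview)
Your sketch is correct and follows exactly the route the paper indicates: the paper does not give a self-contained proof of this theorem but simply states that it follows from \cite[Theorem 2.3 and Section 2.2.1]{perez23}, extending Araki's original argument \cite{araki73}, which is precisely the Dyson-series-plus-factorial-bookkeeping strategy you outline. Your description of how the multi-interval case is handled and how the expansional bounds follow from the $\Gamma$ bounds via the ordered-exponential representation and a Duhamel identity matches the content of those references.
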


\begin{Rema}
For any $X \subset \Lambda$,we can actually replace $H_{Z_{n}}$ with $H_{Z_{n} \cap X}$ in all the above inequalities. To see this, we define a new modified local interaction $\Phi'$ by setting $\Phi_{Y}' = \Phi_{Y}$ if $Y \subset X$ and $\Phi'_{Y} = 0$ otherwise. Note that $\Phi'$ also has finite range $r$ and strength $\| \Phi'\| \leq \| \Phi\| \leq J$, so Theorem~\ref{Theo:localityProp} applies equally to $\Phi'$. Let $H'$ be the Hamiltonian associated to $\Phi'$. Then, by construction, we have $H'_{Z_{n}} = H_{X \cap Z_{n}}$. We will sometimes use this remark when referring to the theorem.
\end{Rema}

\subsection{Proof of Proposition~\ref{Prop:eta1D}}

Fix any interval $I \subset \Lambda$. Taking $Q = e^{-\frac{1}{2}H_{I^c}}/\Tr(e^{-H_\Lambda})$ which belongs to $\cB_{I^c}$, we have by the very definition of $\eta_{I}(\sigma)$ given in Theorem~\ref{Theo:roughBoundSpectralGap}, that 
\begin{align*} 
\eta_{I}(\sigma) 
& \leq \| Q \sigma^{-1/2}\| \cdot \|  \sigma^{1/2}Q^{-1}\| \\[2mm]
& = \left\| e^{-\frac{1}{2}H_{I^c}} e^{\frac{1}{2}H}\right\| \cdot \left\| e^{-\frac{1}{2}H} e^{\frac{1}{2}H_{I^c}} \right\|\\[2mm]
& = \left\| e^{\frac{1}{2}H_{I}}e^{-\frac{1}{2}(H_{I}+H_{I^c})} e^{\frac{1}{2}H}\right\| \cdot \left\| e^{-\frac{1}{2}H} e^{\frac{1}{2}(H_{I}+H_{I^c})} e^{\frac{1}{2}H_{I}}\right\|\\[2mm]
& \leq e^{\| H_{I}\|} \, \left\| e^{-\frac{1}{2}(H_{I}+H_{I^c})} e^{\frac{1}{2}H}\right\| \, \left\|e^{-\frac{1}{2}H} e^{\frac{1}{2}(H_{I}+H_{I^c})} \right\|\\[2mm]
& = e^{\| H_{I}\|} \, \left\| E(\tfrac{1}{2}(H - H_{I} - H_{I^{c}}), \tfrac{1}{2} H)\right\| \, \left\| E(\tfrac{1}{2}(H - H_{I} - H_{I^{c}}), \tfrac{1}{2}H)^{-1}\right\|\,.
\end{align*}
Finally, we apply Theorem~\ref{Theo:localityProp} to estimate the norm of the expansionals. Observe that, after canceling the terms $\Phi_X$ fully supported inside $I$ or $I^c$, the difference $H-H_{I}-H_{I^c}$ is supported in the union of two intervals of length $2r$ (around the boundaries between $I$ and $I^c$). Thus, we conclude
\[ \eta_{I}(\sigma) \leq e^{\| \Phi\| \, |I|} \exp(2\mathcal{G}^{8r} \tfrac{1}{2}\| H -H_{I} - H_{I^c}\|) \leq e^{J \, |I|}  \exp(\mathcal{G}^{8r} 2rJ)\,. \]
This finishes the proof.

\subsection{Proof of Theorem~\ref{Theo:1DCorrelationDecay}}

In view of Proposition~\ref{prop:estimatesMartinagleCondition}, Theorem~\ref{Theo:1DCorrelationDecay} will follow from the following stronger result:

\begin{Theo}\label{Theo:1DCorrelationDecayAux}
There exist constants $c=c(J,r,d), \alpha=\alpha(J,r,d) >0$, independent of the system size, such that for every partition $\Lambda=ABCD$ as in Figure~\ref{Fig:splitRingABCD}.(a) with $|B|, |D| \geq 3\ell$ for some real $\ell >0$, then 
\begin{align}
\label{equa:1DCorrelationDecayAux1} \| \sigma_{A} \sigma_{C} \sigma_{AC}^{-1} - \mathbbm{1} \|_{\infty} & \leq c e^{-\alpha \sqrt{\ell}} \\[2mm]
\label{equa:1DCorrelationDecayAux2}  \| (\sigma_{AD} \sigma_{D}^{-1} \sigma_{DC}) \sigma_{ADC}^{-1}  - \mathbbm{1}\|_{\infty}& \leq   c e^{-\alpha \sqrt{\ell}}\\[2mm] 
\label{equa:1DCorrelationDecayAux3}   \|  \sigma_{ADC}^{-1}(\sigma_{AD} \sigma_{D}^{-1} \sigma_{DC})  - \mathbbm{1}\|_{\infty}& \leq   c e^{-\alpha \sqrt{\ell}}
\end{align}
\end{Theo}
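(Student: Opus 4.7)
The plan is to adapt the strategy of \cite{Bluhm2022, Gondolf24} for clustering in 1D quantum Gibbs states, combined with the subexponential correlation decay bound of \cite{kimura25}. The three inequalities share the same underlying mechanism, so I outline the proof of \eqref{equa:1DCorrelationDecayAux1} first, and then indicate how the other two follow.

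Subdivide $B = B_L B_M B_R$ and $D = D_L D_M D_R$ into three consecutive subintervals with $|B_L|=|B_R|=|D_L|=|D_R|=\lfloor\sqrt{\ell}\rfloor$ and the central intervals $B_M, D_M$ of length at least $\ell$; this is possible thanks to $|B|,|D|\geq 3\ell$. Apply the Araki expansional identity to factor $e^{-H} = e^{-(H_U+H_{U^c})} E(H-H_U-H_{U^c};H_U+H_{U^c})$ along a thickened partition $U := A B_L D_R$, $U^c := \Lambda\setminus U$, so that the expansional factor $E$ is supported in the $r$-neighborhood of the two interfaces, both of which lie inside $B$ and $D$. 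Then truncate $E$ to an operator $E^{(\sqrt{\ell})}$ supported within the $\sqrt{\ell}$-neighborhood of those interfaces: by Theorem~\ref{Theo:localityProp},
\[
\| E - E^{(\sqrt{\ell})} \|_\infty \;\leq\; c_1\, \frac{\mathcal{G}^{\sqrt{\ell}}}{(\lfloor\sqrt{\ell}/r\rfloor+1)!} \;\leq\; c_2\, e^{-\alpha\sqrt{\ell}},
\]
for constants depending only on $J,r,d$, the factorial denominator comfortably beating the exponential numerator. Substituting this truncation into the marginals $\sigma_A$, $\sigma_C$, $\sigma_{AC}$, the difference $\sigma_A \sigma_C \sigma_{AC}^{-1} - \mathbbm{1}$ becomes, modulo a boundary error of the above order, a covariance in $\sigma$ between observables supported near the $A$-side interfaces and near the $C$-side interfaces. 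These supports are separated by the uncut central intervals $B_M$ and $D_M$, hence by distance $\geq \ell - 2\sqrt{\ell} \geq \ell/2$, so the Kimura--Kuwahara clustering bound \cite{kimura25} yields a covariance bound of order $e^{-\alpha'\sqrt{\ell}}$. The Araki prefactors are uniformly bounded by Theorem~\ref{Theo:localityProp}, and combining the constants produces \eqref{equa:1DCorrelationDecayAux1}.

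For \eqref{equa:1DCorrelationDecayAux2} and \eqref{equa:1DCorrelationDecayAux3}, $D$ is now a conditioning region rather than a separator, so the expansional factorization is applied across $B$ only. The inverse marginal $\sigma_D^{-1}$ is rewritten via Araki expansionals localized near the boundary of $D$, following the strategy of \cite{Gondolf24}; after this rewriting the whole quantity again reduces to a two-point correlator between localized observables separated by at least $\ell/2$ inside a Gibbs state, and the clustering bound of \cite{kimura25} closes the argument. The two orderings in \eqref{equa:1DCorrelationDecayAux2} and \eqref{equa:1DCorrelationDecayAux3} correspond simply to the two orderings of the ``left'' and ``right'' factorizations of the Araki expansional.

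The main obstacle is the bookkeeping: one must verify that the expansionals arising from the different interfaces can be combined and truncated without destroying the algebraic cancellations that reduce the full expression to a genuine covariance. In particular, for the conditioned quantities \eqref{equa:1DCorrelationDecayAux2}--\eqref{equa:1DCorrelationDecayAux3}, the inverse factor $\sigma_D^{-1}$ introduces potential blow-ups, and one must show (via the locality bounds of Theorem~\ref{Theo:localityProp} applied to the expansional representation of $\sigma_D^{-1}$) that all resulting norms are uniformly bounded in $N$. Once this is done, the $e^{-\alpha\sqrt{\ell}}$ rate is inherited directly from \cite{kimura25}.
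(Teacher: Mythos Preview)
Your proposal identifies the right ingredients --- Araki expansionals, the locality bounds of Theorem~\ref{Theo:localityProp}, and the Kimura--Kuwahara clustering --- and is in the spirit of the paper and of \cite{Bluhm2022,Gondolf24}. The genuine gap is precisely the step you flag at the end but do not resolve: one cannot pass na\"ively from a truncated-expansional approximation of each marginal to a bound on $\sigma_A\sigma_C\sigma_{AC}^{-1}-\mathbbm{1}$, because the operator norms of $\sigma_{AC}^{-1}$, $\sigma_{ADC}^{-1}$, $\sigma_D^{-1}$ are exponentially large in the subsystem size and multiply any individual truncation error. The paper's solution is an explicit algebraic identity (its Step~1, Proposition~\ref{Prop:CorrelationDecayStep1}): by inserting cancelling factors $e^{\pm H_X}$, the entire product is rewritten as a product of four factors $\bigl[\Tr_{X_k}(\tilde\sigma^{X_k}\,E^{(AC)}_{ABCD})\bigr]^{\pm 1}$, where $\tilde\sigma^{X_k}$ is a product of \emph{local} Gibbs states on $A,B,C,D$ and $E^{(AC)}_{ABCD}=e^{-H}e^{H_A+H_B+H_C+H_D}$ is a single expansional supported near $\partial A\cup\partial C$. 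Each such factor and its inverse is then shown to be uniformly bounded in $N$ by a separate positivity argument (Lemma~\ref{Lemm:boundsPartialtrace}, which rewrites $\Tr_X(\tilde\sigma^X E)$ as a conjugate of a positive operator built from half-exponent expansionals). Only with these four uniformly bounded factors in hand does one localize $E^{(AC)}\approx E^{(A)}_{D_1AB_1}E^{(C)}_{B_3CD_3}$, tensor-decompose each localized piece (Lemma~\ref{Lemm:TensorProductFact}), apply the Kimura--Kuwahara decay inside the traced local Gibbs states, and observe that the four factors telescope to $\mathbbm{1}$. Your single factorization of $e^{-H}$ across one cut does not produce this structure, and without it there is no mechanism to absorb the inverses.

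Two smaller points. First, the choice $|B_L|=|B_R|=\lfloor\sqrt\ell\rfloor$ with truncation at scale $\sqrt\ell$ is unnecessary: the paper splits $B=B_1B_2B_3$ and $D=D_1D_2D_3$ with all pieces of size $\geq\ell$ and truncates the expansionals at scale $\ell$, so the locality error decays super-exponentially in $\ell$ and the final $e^{-\alpha\sqrt\ell}$ rate comes entirely from \cite{kimura25}. Second, for \eqref{equa:1DCorrelationDecayAux2}--\eqref{equa:1DCorrelationDecayAux3} your description ``the expansional factorization is applied across $B$ only'' is misleading: on the ring $D$ also separates $A$ from $C$, and the paper localizes the expansionals into $D_1AB_1$ and $B_3CD_3$ in all three cases. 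What distinguishes \eqref{equa:1DCorrelationDecayAux2}--\eqref{equa:1DCorrelationDecayAux3} from \eqref{equa:1DCorrelationDecayAux1} is only which sets $X_k$ are traced in the Step~1 identity; for the conditioned versions $D$ is never traced, so the correlation decay is applied only in $\sigma^B$, whereas for \eqref{equa:1DCorrelationDecayAux1} it is applied in both $\sigma^B$ and $\sigma^D$.
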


For the proof of the previous result, we will also assume that the intervals $B$ and $D$ are split into three consecutive intervals $B=B_{1}B_{2}B_{3}$ and $D=D_{3}D_{2}D_{1}$ as in Figure~\ref{Fig:splitRingABCD}.(b) being $|B_{i}|, |D_{i}| \geq \ell$ for $i=1,2,3$.

 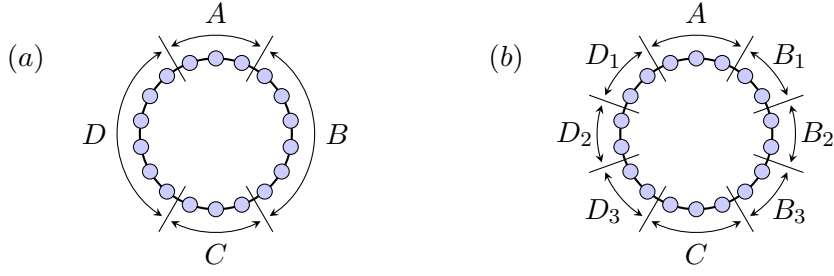
\begin{figure}[h]
 \centering
\begin{subfigure}{.4\textwidth}
\centering
\begin{tikzpicture}
\node at (-2.5,1) {$(a)$};
    \def\nodes{18}
    \def\radius{1}
    \def\angreg{40}
    \def\ang{60}
    
    \begin{scope}
    \draw[thick] (0,0) circle(\radius);

    \foreach \i in {1,...,\nodes} {
        \pgfmathsetmacro{\angle}{360/\nodes * (\i - 1)+10}
        \node[draw, circle, fill=blue!20, inner sep=2pt] at (\angle:\radius) {};
    }


    \draw[thin] (-120:{\radius-0.2}) -- (-120:{\radius+0.5});

    \draw[thin] (-60:{\radius-0.2}) -- (-60:{\radius+0.5});

     \draw[thin] (60:{\radius-0.2}) -- (60:{\radius+0.5});

     \draw[thin] (120:{\radius-0.2}) -- (120:{\radius+0.5});


    \draw[ stealth-stealth] (60+3:{\radius+0.3}) arc(60+3:120-3:{\radius+0.3});
    \node at (90:{\radius+0.6}) {$A$};

    \draw[ stealth-stealth] (-60+3:{\radius+0.3}) arc(-60+3:60-3:{\radius+0.3});
    \node at (0:{\radius+0.6}) {$B$};

     \draw[ stealth-stealth] (-120+3:{\radius+0.3}) arc(-120+3:-60-3:{\radius+0.3});
    \node at (270:{\radius+0.6}) {$C$};
    
    \draw[ stealth-stealth] (120+3:{\radius+0.3}) arc(120+3:240-3:{\radius+0.3});
    \node at (180:{\radius+0.6}) {$D$};

    \end{scope}
\end{tikzpicture}
\end{subfigure}%
\begin{subfigure}{.4\textwidth}
\centering
\begin{tikzpicture}
\node at (-2.5,1) {$(b)$};
    \def\nodes{18}
    \def\radius{1}
    \def\angreg{40}
    \def\ang{60}
    
    \begin{scope}
    \draw[thick] (0,0) circle(\radius);

    \foreach \i in {1,...,\nodes} {
        \pgfmathsetmacro{\angle}{360/\nodes * (\i - 1)+10}
        \node[draw, circle, fill=blue!20, inner sep=2pt] at (\angle:\radius) {};
    }


    \draw[thin] (-160:{\radius-0.2}) -- (-160:{\radius+0.5});

    \draw[thin] (-120:{\radius-0.2}) -- (-120:{\radius+0.5});

    \draw[thin] (-60:{\radius-0.2}) -- (-60:{\radius+0.5});
    
    \draw[thin] (-20:{\radius-0.2}) -- (-20:{\radius+0.5});

    \draw[thin] (20:{\radius-0.2}) -- (20:{\radius+0.5});

     \draw[thin] (60:{\radius-0.2}) -- (60:{\radius+0.5});

     \draw[thin] (120:{\radius-0.2}) -- (120:{\radius+0.5});

     \draw[thin] (160:{\radius-0.2}) -- (160:{\radius+0.5});


    \draw[ stealth-stealth] (60+3:{\radius+0.3}) arc(60+3:120-3:{\radius+0.3});
    \node at (90:{\radius+0.6}) {$A$};

    \draw[ stealth-stealth] (20+3:{\radius+0.3}) arc(20+3:60-3:{\radius+0.3});
    \node at (40:{\radius+0.6}) {$B_{1}$};
    
    \draw[ stealth-stealth] (-20+3:{\radius+0.3}) arc(-20+3:20-3:{\radius+0.3});
    \node at (0:{\radius+0.6}) {$B_{2}$};

    \draw[ stealth-stealth] (-60+3:{\radius+0.3}) arc(-60+3:-20-3:{\radius+0.3});
    \node at (-40:{\radius+0.6}) {$B_{3}$};

     \draw[ stealth-stealth] (-120+3:{\radius+0.3}) arc(-120+3:-60-3:{\radius+0.3});
    \node at (270:{\radius+0.6}) {$C$};
    
    \draw[ stealth-stealth] (120+3:{\radius+0.3}) arc(120+3:160-3:{\radius+0.3});
    \node at (140:{\radius+0.6}) {$D_{1}$};
    
    \draw[ stealth-stealth] (160+3:{\radius+0.3}) arc(160+3:200-3:{\radius+0.3});
    \node at (180:{\radius+0.6}) {$D_{2}$};

    \draw[ stealth-stealth] (200+3:{\radius+0.3}) arc(200+3:240-3:{\radius+0.3});
    \node at (220:{\radius+0.6}) {$D_{3}$};
    \end{scope}
\end{tikzpicture}
\end{subfigure}
\caption{In $(a)$, the 1D ring is split into four consecutive intervas $ABCD$. In $(b)$, intervals $B$ and $D$ are split into three adjacent subintervals $B=B_{1}B_{2}B_{3}$ and $D=D_{1}D_{2}D_{3}$. Intervals $B_{1}$ and $D_{1}$ are adjacent to $A$, whereas $B_{3}$ and $D_{3}$ are adjacent to $C$.}
\label{Fig:splitRingABCD}
\end{figure} 

The rate of decay is determined by the following result proved by Kimura and Kuwahara and \cite{kimura25} for Gibbs states on finite chains:

\begin{Theo}
Let $\Phi$ be a local interaction with finite range $r >0$ and $\| \Phi\| \leq J$ for some $J > 0$, and let $\sigma = e^{-H}/\Tr(e^{-H})$ be the Gibbs state.  Then, there are constants $c', \alpha' >0$ such that
\[ \|\Tr(\sigma^{B}Q_{B_{1}} Q_{B_{3}}) - \Tr(\sigma^{B}Q_{B_{1}})\Tr(\sigma^{B}Q_{B_{3}})  \| \leq \xi(\ell)=c' e^{-\alpha' \sqrt{\ell}} \]
\end{Theo}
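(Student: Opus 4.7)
The plan is to prove this correlation-decay estimate via the Araki-expansional machinery introduced in Theorem~\ref{Theo:localityProp}, approximately factorizing the reduced Gibbs operator across the midpoint of the buffer $B_2$ and then optimizing the truncation radius against the factorial locality tail to obtain the subexponential rate.

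Concretely, I would write $\sigma^B = \Tr_{B^c}(e^{-H})/Z$, split $\Lambda$ at the midpoint $m$ of $B_2$, and decompose $H = H_L + H_R + V$, where $H_L$ and $H_R$ gather all $\Phi_X$ supported strictly on one side of $m$, and $V$ is the residual coupling across $m$, supported on an interval of length at most $r$ around $m$. The identity $E(O_2;O_1) = e^{O_2-O_1} e^{O_1}$ from~\eqref{equa:ExpansionalAraki}, applied with $O_1 = H_L+H_R$ and $O_2 = -V$, gives
\[ e^{-H} = E(-V;\, H_L+H_R)\, \cdot\, e^{-H_L} \otimes e^{-H_R}, \]
where the tensor product refers to the cut at $m$. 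By Theorem~\ref{Theo:localityProp}, the expansional factor can be replaced, up to operator-norm error of order $\mathcal{G}^{O(r)} \mathcal{G}^{n}/(\lfloor n/r\rfloor + 1)!$, by its truncation $E(-V;\, H_L^{(n)} + H_R^{(n)})$ to the $n$-thickening of the support of $V$, which itself has norm at most $\exp(Cn)$ for some $C = C(J,r)$.

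Provided $2n \le \ell$, the truncated operator is contained in a width-$O(n)$ corridor inside $B_2$. Tracing out $B^c$ against this approximate factorization yields a tensor-product approximation of $\sigma^B$ across $m$, whose connected correlator with $Q_{B_1}, Q_{B_3}$ (supported on opposite sides of the corridor) vanishes exactly; the residual error is bounded, after Stirling, by $C'\|Q_{B_1}\|_\infty\|Q_{B_3}\|_\infty \exp(C'' n - (n/r)\log(n/(re\mathcal{G})))$. Optimizing in $n \sim \sqrt\ell$ yields the claimed rate $c' e^{-\alpha'\sqrt\ell}$. The main obstacle will be the normalization: truncating $E(-V;\cdot)$ preserves neither positivity nor trace, so the partition function $Z$ and the one-point reductions $\Tr(\sigma^B Q_{B_j})$ do not automatically factorize in the same clean way, and a naive bound would accumulate multiplicative errors scaling with $|\Lambda|$. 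Following the strategy of~\cite{Bluhm2022}, the clean way out is to express the connected correlator as a telescoping difference of four partition-function ratios and to control each truncation step in operator norm \emph{before} normalizing, so that all errors depend only on the local structure around $m$; combining this telescoping with the optimization of $n$ is the technical heart of the argument.
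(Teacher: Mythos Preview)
First, note that the paper does not prove this statement at all: it is quoted as an external input, attributed to Kimura and Kuwahara \cite{kimura25}, and used as a black box in the proof of Theorem~\ref{Theo:1DCorrelationDecay}. So there is no ``paper's own proof'' to compare against here.

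Your proposed argument has a genuine gap at the step where you assert that the connected correlator of the approximated state vanishes exactly. The truncation $E_n = E(-V;\, H_L^{(n)}+H_R^{(n)})$ is supported in a corridor around $m$, but it is \emph{not} a tensor product across the cut at $m$; it genuinely entangles the two halves of the corridor. Consequently $E_n\, e^{-H_L}\otimes e^{-H_R}$ is not a product across $m$, and its reduced operator on $B_1 B_3$ is not one either: writing $E_n = \sum_j A_j \otimes_m C_j$ with $A_j$ supported in $[m-n,m]$ and $C_j$ in $[m,m+n]$, one obtains
\[
\Tr_{B_2}\bigl(E_n\, e^{-H_L} e^{-H_R}\bigr) \;=\; \sum_j \Tr_{B_2^L}\!\bigl(A_j e^{-H_L}\bigr) \otimes \Tr_{B_2^R}\!\bigl(C_j e^{-H_R}\bigr),
\]
and the operators $\Tr_{B_2^L}(A_j e^{-H_L})$ on $B_1$ are not proportional to one another in general. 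Showing that they are approximately proportional amounts to showing that a local perturbation of $e^{-H_L}$ near $m$ barely affects its $B_1$-marginal --- which is exactly the correlation-decay statement you are trying to prove. The telescoping/normalization trick you invoke from \cite{Bluhm2022} addresses a different issue (partition-function ratios) and does not close this gap.

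A further diagnostic: if your scheme worked as written, nothing would stop you from taking $n$ all the way up to $\ell/2$ (the only constraint is that the corridor stays inside $B_2$), and the factorial tail $\mathcal G^n/(\lfloor n/r\rfloor+1)!$ would then give \emph{super}-exponential decay in $\ell$ --- stronger than Araki's exponential bound in the translation-invariant case and than the Kimura--Kuwahara subexponential bound. There is no competing growth to optimize against, since $\|E_n\|$ is bounded uniformly in $n$ by Theorem~\ref{Theo:localityProp}; your choice $n\sim\sqrt\ell$ is not the outcome of any genuine trade-off but is masking the missing ingredient.
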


When the local interaction is translation-invariant, this estimate can be strengthened to exhibit exponential decay, as a consequence of a seminal result by Araki \cite{araki69}, see also \cite{Bluhm2022}. However, for the sake of generality, and since this subexponential rate of decay suffices for our purposes, we will rely on this last result.

Within this setting, we define a new local interaction $\widetilde{\Phi}$ by $\widetilde{\Phi}_{X} = \Phi_{X}$ if $X \subset A,B,C$ or $D$, and $\widetilde{\Phi}_{X} = 0$ otherwise. In other words, $\widetilde{\Phi}$ is obtained from $\Phi$ by supressing interactions between the intervals $A,B,C,D$. Note that $\widetilde{\Phi}$ also has finite range $r>0$ and $\|\widetilde{\Phi}\| \leq J$. Let us denote by $\widetilde{H}$ the corresponding Hamiltonian, and use $\widetilde{\sigma}^{X} = e^{-\widetilde{H}_{X}}/ \widetilde{Z}_{X}$ where $\widetilde{Z}_{X} = \Tr(e^{-\widetilde{H}_X})$ for the corresponding local Gibbs state. Note that
\[ \widetilde{H}_{ABCD} = H_{A} + H_{B} + H_{C} + H_{D} \quad , \quad \widetilde{\sigma}^{ABCD} = \sigma^{A} \otimes \sigma^{B} \otimes \sigma^{C} \otimes \sigma^{D}\,. \]
Let us define
\[ E_{ABCD}^{(AC)} := e^{-H_{ABCD}}e^{H_{A}+H_{B}+H_{C}+H_{D}} = e^{-H_{ABCD}} e^{\widetilde{H}_{ABCD}}\,. \]
We can prove the following.

\begin{Prop}[\bfseries Step 1]\label{Prop:CorrelationDecayStep1}  
With the above notation, the following equalities hold
\begin{equation}\label{equa:Step1_1} 
\begin{split}
\sigma_{AD} \sigma_{D}^{-1} \sigma_{DC} \sigma_{ADC}^{-1} = \Tr_{BC}\left( \widetilde{\sigma}^{BC} E_{ABCD}^{(AC)}\right) \cdot \Tr_{ABC}\left( \widetilde{\sigma}^{ABC}  E_{ABCD}^{(AC)}\right)^{-1}\\[2mm] 
\cdot \Tr_{AB}\left( \widetilde{\sigma}^{AB}  E_{ABCD}^{(AC)}\right)
\cdot \Tr_{B}\left( \widetilde{\sigma}^{B} E_{ABCD}^{(AC)}\right)^{-1}\,,
\end{split}
\end{equation}
\begin{equation}\label{equa:Step1_2} 
\begin{split}
\sigma_{A} \sigma_{C} \sigma_{AC}^{-1} 
= \Tr_{BCD}\left(\widetilde{\sigma}^{BCD}  E_{ABCD}^{(AC)}\right) \cdot \Tr_{ABCD}\left(\widetilde{\sigma}^{ABCD} E_{ABCD}^{(AC)}\right)^{-1} \\[2mm]
  \quad \cdot \Tr_{DAB}\left(\widetilde{\sigma}^{DAB} E_{ABCD}^{(AC)}\right)    \,     \Tr_{BD}\left(\widetilde{\sigma}^{BD}  E_{ABCD}^{(AC)}\right)^{-1} \,.
\end{split}
\end{equation} 
\end{Prop}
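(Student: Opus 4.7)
\bigskip
\noindent\textbf{Proof proposal for Proposition~\ref{Prop:CorrelationDecayStep1}.}
The plan is to derive both identities from a single starting point: from the definition of the Araki expansional one has $e^{-H_\Lambda} = E_{ABCD}^{(AC)} \cdot e^{-\widetilde H_{ABCD}}$, and since $\widetilde H_{ABCD} = H_A + H_B + H_C + H_D$ is a sum of commuting terms with pairwise disjoint supports,
\[
e^{-\widetilde H_{ABCD}} \;=\; e^{-H_A}\otimes e^{-H_B}\otimes e^{-H_C}\otimes e^{-H_D} \;=\; \widetilde Z_{ABCD}\cdot \widetilde\sigma^{A}\otimes \widetilde\sigma^{B}\otimes \widetilde\sigma^{C}\otimes \widetilde\sigma^{D}.
\]
Writing $\sigma = e^{-H_\Lambda}/Z$ and setting $E := E_{ABCD}^{(AC)}$, the goal will be to show that for every $Y \in \{BC, ABC, AB, B, BCD, ABCD, DAB, BD\}$, the marginal of $\sigma$ on $Y^c$ admits the factorization
\begin{equation}\label{eq:plan-marginal}
\sigma_{Y^c} \;=\; \frac{\widetilde Z_{ABCD}}{Z}\; \Tr_{Y}\!\bigl[\widetilde\sigma^{Y}\, E\bigr]\; \widetilde\sigma^{Y^c}.
\end{equation}

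To prove \eqref{eq:plan-marginal}, I would start from $\sigma_{Y^c} = \tfrac{1}{Z}\Tr_Y[E\cdot e^{-H_A}\otimes e^{-H_B}\otimes e^{-H_C}\otimes e^{-H_D}]$ and use that each factor $e^{-H_X}$ with $X\subset Y^c$ is supported outside $Y$, hence commutes with the partial trace $\Tr_Y$ and can be pulled out to the right. What remains inside is $\Tr_Y\bigl[E\cdot \prod_{X\subset Y} e^{-H_X}\bigr] = \widetilde Z_Y\,\Tr_Y[E\,\widetilde\sigma^{Y}]$, and by cyclicity of the partial trace with respect to operators supported on the traced region this equals $\widetilde Z_Y\,\Tr_Y[\widetilde\sigma^{Y}\,E]$. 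Collecting the prefactors $\widetilde Z_Y\cdot \widetilde Z_{Y^c}=\widetilde Z_{ABCD}$ yields \eqref{eq:plan-marginal}.

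For the first identity, plug \eqref{eq:plan-marginal} into $\sigma_{AD}\sigma_D^{-1}\sigma_{DC}\sigma_{ADC}^{-1}$ with $Y=BC,ABC,AB,B$; the four factors of $\widetilde Z_{ABCD}/Z$ cancel against each other, and the tensor-product identities
\[
\widetilde\sigma^{AD}(\widetilde\sigma^D)^{-1}=\widetilde\sigma^{A},\qquad \widetilde\sigma^{DC}(\widetilde\sigma^{ADC})^{-1}=(\widetilde\sigma^{A})^{-1}
\]
collapse the middle $\widetilde\sigma^{Y^c}$ factors to a single $\widetilde\sigma^{A}$ and its inverse. These commute freely with $\Tr_{ABC}(\widetilde\sigma^{ABC}E)^{-1}$ (supported on $D$) and $\Tr_{AB}(\widetilde\sigma^{AB}E)$ (supported on $CD$) because of disjoint supports, so they can be brought together and cancelled, leaving the right-hand side of \eqref{equa:Step1_1}. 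The second identity follows analogously with $Y=BCD,ABCD,DAB,BD$: three of the $1/Z$ (or $Z$) factors cancel directly, and the remaining $1/Z$ is absorbed as $\Tr_{ABCD}(\widetilde\sigma^{ABCD}E)^{-1}=\widetilde Z_{ABCD}/Z$, which is a scalar and can be moved freely; the telescoping $\widetilde\sigma^{C}(\widetilde\sigma^{AC})^{-1}=(\widetilde\sigma^{A})^{-1}$ combined with the disjoint-support commutations of $\widetilde\sigma^{A}$ past $\Tr_{DAB}(\widetilde\sigma^{DAB}E)$ (supported on $C$) produces \eqref{equa:Step1_2}.

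The entire argument is algebraic and the proposition contains no analytic content; the only obstacle is careful bookkeeping of the supports of the various $\Phi_Y:=\Tr_Y(\widetilde\sigma^{Y}E)$ (namely $Y^c$) and of the factors $\widetilde\sigma^{X}$ in order to justify each commutation. The locality estimates from Theorem~\ref{Theo:localityProp} and the correlation-decay result of \cite{kimura25} will only enter in the subsequent steps used to bound the deviation of the right-hand sides of \eqref{equa:Step1_1}--\eqref{equa:Step1_2} from the identity, not in the derivation of the identities themselves.
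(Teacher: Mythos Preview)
Your proof is correct and follows essentially the same approach as the paper: both derive the identities by rewriting each marginal $\sigma_{Y^c}$ via $e^{-H}=E_{ABCD}^{(AC)}e^{-\widetilde H_{ABCD}}$, pulling the $e^{-H_X}$ factors with $X\subset Y^c$ outside the partial trace, and then letting these factors cancel across the product. Your unified formula $\sigma_{Y^c}=\tfrac{\widetilde Z_{ABCD}}{Z}\,\Tr_Y[\widetilde\sigma^{Y}E]\,\widetilde\sigma^{Y^c}$ is just a cleaner packaging of the factor-by-factor manipulation the paper writes out explicitly; the disjoint-support commutations you invoke to eliminate the residual $\widetilde\sigma^{A}$ are exactly the cancellations hidden in the paper's ``introduce factors which cancel appropriately'' step.
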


\begin{proof}
To verify \eqref{equa:Step1_1} , we denote $H=H_{ABCD}$ and manipulate the expressions by introducing factors supported outside the traced regions, which cancel appropriately. 
\begin{align*} 
\sigma_{AD} \sigma_{D}^{-1} \sigma_{DC} \sigma_{ADC}^{-1} 
& =  \Tr_{BC}\left( e^{-H}\right) \, \Tr_{ABC}\left( e^{-H}\right)^{-1} \, \Tr_{AB}\left( e^{-H}\right) \, \Tr_{B}\left( e^{-H}\right)^{-1}\\[2mm]
& = \Tr_{BC}\left( e^{-H} e^{H_{A} + H_{D}}\right) \, \Tr_{ABC}\left( e^{-H} e^{H_{D}}\right)^{-1}  \\[2mm]
& \hspace{3.5cm} \cdot \Tr_{AB}\left( e^{-H} e^{H_{C} + H_{D}}\right) \, \Tr_{B}\left( e^{-H}e^{H_{A} + H_{C} + H_{D}}\right)^{-1}\\[2mm]
& = \Tr_{BC}\left( e^{-H_{B} - H_{C}} E_{ABCD}^{(AC)}\right) \cdot \Tr_{ABC}\left( e^{-H_{A} -H_{B} + H_{C}} E_{ABCD}^{(AC)}\right)^{-1} \\[2mm]
& \hspace{3.5cm} \cdot  \Tr_{AB}\left( e^{-H_{A} -H _{B}} E_{ABCD}^{(AC)}\right) \cdot \Tr_{B}\left( e^{-H_{B}} E_{ABCD}^{(AC)}\right)^{-1}\\[2mm]
& = \Tr_{BC}\left( \tfrac{e^{-H_{B} - H_{C}}}{Z_{B}Z_{C}} E_{ABCD}^{(AC)}\right) \cdot \Tr_{ABC}\left( \tfrac{e^{-H_{A} -H_{B} - H_{C}}}{Z_{A} Z_{B} Z_{C}} E_{ABCD}^{(AC)}\right)^{-1} \\[2mm]
& \hspace{3.5cm} \cdot  \Tr_{AB}\left( \tfrac{e^{-H_{A} -H _{B}}}{Z_{A} Z_{B}} E_{ABCD}^{(AC)}\right) \cdot \Tr_{B}\left( \tfrac{e^{-H_{B}}}{Z_{B}} E_{ABCD}^{(AC)}\right)^{-1}\,.
\end{align*}
This proves \eqref{equa:Step1_1}. To see \eqref{equa:Step1_2} , we begin by rewriting 
\[ \sigma_{A} \sigma_{C} \sigma_{AC}^{-1} = \Tr_{BCD}(e^{-H}) \cdot \Tr_{ABCD}(e^{-H})^{-1} \cdot \Tr_{ABD}(e^{-H}) \cdot \Tr_{BCD}(e^{-H})^{-1}\,. \]
Then, applying the same strategy as above yields the desired identity.
\end{proof}

In view of equations \eqref{equa:Step1_1} and \eqref{equa:Step1_2}, it seems clear that to prove Theorem~\ref{Theo:1DCorrelationDecayAux} we need to analyze the properties of the operator $E^{(AC)}_{ABCD}$. Let us then define
\[ W^{(AC)} := H_{ABCD} - \widetilde{H}_{ABCD}  = H_{ABCD} - H_{A}-H_{B}-H_{C}-H_{D}\,. \] 
Note that cancelling summands $\Phi_{X}$ in the right hand-side of the last expression, we easily see that $W$ can be written as a sum of nonzero terms $\Phi_{X}$ where $X$ supported around the boundaries of $A$ and $C$. More specifically, if $A=[a_{1}, a_{2}]$ and $C=[c_{1}, c_{2}]$, let us denote 
\begin{align*}
\alpha_{m} & := [a_{1} - (r+m), a_{1}+(r+m)] \cup [a_{2}-(r+m), a_{2}+(r+m)]\,,\\[2mm]
\gamma_{m} & := [c_{1}-(r+m), c_{1} +(r+m)] \cup [c_{2}-(r+m), c_{2} +(r+m)]\,.
\end{align*}
Then, all the remaining nonzero summands $\Phi_{X}$ that form $W^{(AC)}$ are supported in $\alpha_{0}$ or $\beta_{0}$ (only one of them, if $\ell \geq r$). This allows us to decompose $W^{(AC)}$ into two sums, one $W^{(A)}$ supported in $\alpha_{0}$ and another one $W^{(C)}$ supported in $\gamma_{0}$, so that
\[ W^{(AC)} = W^{(A)} + W^{(C)}\,. \]
Moreover, since $|\alpha_0|, |\gamma_0| \leq 6r$, we can estimate
\[ \| W^{(A)}\| \,,\, \| W^{(C)}\| \leq 6r \| \Phi\| \leq 6r J\,. \]
Let us define for every $X \subset \Lambda = ABCD$ and $Y \in \{ A,C,AC\}$ the following expansional:
\begin{equation}\label{equa:defiEXY} 
E_{X}^{(Y)} := E(-W^{(Y)}; \widetilde{H}_{X}) = e^{-(W^{(Y)} + \widetilde{H}_{X} )}e^{\widetilde{H}_{X}}\,. 
\end{equation}
According to the locality estimates from Theorem~\ref{Theo:localityProp}, these expansionals should be localized around the support of $W^{(Y)}$, that is, around the boundary of $A$ and/or $C$.

\begin{Prop}[\bfseries Step 2]\label{Prop:CorrelationDecayStep2} 
 With the previous notation, assume that $\ell \geq r$. Then, we can approximately factorize each term of the right hand-side of \eqref{equa:Step1_1}, ommitting the inverses, as follows:
\begin{align} 
\label{equa:approxStep2_1_1}\Tr_{BC}\left( \widetilde{\sigma}^{BC} E_{ABCD}^{(AC)}\right) & \approx \Tr_{B}\left( \widetilde{\sigma}^{B} E_{ABCD}^{(A)}\right) \, \Tr_{BC}\left( \widetilde{\sigma}^{BC}  E_{ABCD}^{(C)}\right) \\[2mm]
\label{equa:approxStep2_1_2}
\Tr_{ABC}\left( \widetilde{\sigma}^{ABC}  E_{ABCD}^{(AC)}\right) & \approx \Tr_{AB}\left( \widetilde{\sigma}^{AB} E_{ABCD}^{(A)}\right)  \, \Tr_{BC}\left( \widetilde{\sigma}^{BC} E_{ABCD}^{(C)}\right)\\[2mm] 
\label{equa:approxStep2_1_3}
\Tr_{AB}\left( \widetilde{\sigma}^{AB} E_{ABCD}^{(AC)}\right) & \approx \Tr_{AB}\left( \widetilde{\sigma}^{AB}  E_{ABCD}^{(A)}\right) \, \Tr_{B}\left( \widetilde{\sigma}^{B} E_{ABCD}^{(C)}\right) \\[2mm]
\label{equa:approxStep2_1_4}
\Tr_{B}\left( \widetilde{\sigma}^{B} E_{ABCD}^{(AC)}\right) & \approx \Tr_{B}\left( \widetilde{\sigma}^{B} E_{ABCD}^{(A)}\right) \, \Tr_{B}\left( \widetilde{\sigma}^{B} E_{ABCD}^{(C)}\right)
\end{align}
We can also approximately factorize each term of the right hand-side of  \eqref{equa:Step1_2},  ommitting the inverses, as follows: 
\begin{align} 
\label{equa:approxStep2_2_1}
\Tr_{BCD}\left(\widetilde{\sigma}^{BCD} E_{ABCD}^{(AC)}\right)  
& \approx \Tr_{BD}\left(\widetilde{\sigma}^{BD} E_{ABCD}^{(A)}\right) \Tr_{BCD}\left(\widetilde{\sigma}^{BCD} E_{ABCD}^{(C)}\right) \\[2mm]
\label{equa:approxStep2_2_2}
\Tr_{ABCD}\left(\widetilde{\sigma}^{ABCD} E^{(AC)}_{ABCD}\right)  
& \approx \Tr_{ABD}\left(\widetilde{\sigma}^{ABD} \, E_{ABCD}^{(A)}\right) \, \Tr_{BCD}\left(\widetilde{\sigma}^{BCD} \, E_{ABCD}^{(C)}\right) \\[2mm]
\label{equa:approxStep2_2_3}
\Tr_{ABD}\left(\widetilde{\sigma}^{ABD} E^{(AC)}_{ABCD}\right)  
& \approx \Tr_{ABD}\left(\widetilde{\sigma}^{ABD} \, E_{ABCD}^{(A)}\right) \, \Tr_{BD}\left(\widetilde{\sigma}^{BD} \, E_{ABCD}^{(C)}\right) \\[2mm]
\label{equa:approxStep2_2_4}
\Tr_{BD}\left(\widetilde{\sigma}^{BD} E^{(AC)}_{ABCD}\right)  
& \approx \Tr_{BD}\left(\widetilde{\sigma}^{BD} \, E_{ABCD}^{(A)}) \, \Tr_{BD}(\widetilde{\sigma}^{BD} \, E_{ABCD}^{(C)}\right) 
\end{align}
Indeed, the additive error (in the operator norm) of these approximations $x \approx y$ can be estimated in all these cases by 
\[\| x-y\| \leq \mathcal{G}''' \,\left(\xi(\ell) + \frac{\mathcal{G}^{\ell}}{(\lfloor \ell/r \rfloor)!}\right)\] 
for some constant $\mathcal{G}''' = \mathcal{G}'''(r,J,d)$, where recall that $d$ is the local dimension of the system.
\end{Prop}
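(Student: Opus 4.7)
The plan has three main ingredients, to be executed for each of the eight approximations in parallel; I describe the strategy for \eqref{equa:approxStep2_1_1}, the others being entirely analogous.

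First, I would split $E_{ABCD}^{(AC)}$ into a product of terms that are localized near $A$ and near $C$ respectively. Starting from $W^{(AC)} = W^{(A)} + W^{(C)}$ and the elementary identity
\[
E(Q_1 + Q_2; H) = e^{-(Q_1+Q_2+H)} e^{Q_2+H} \cdot e^{-(Q_2+H)} e^{H} = E(Q_1; Q_2+H)\, E(Q_2;H),
\]
which follows directly from the definition $E(Q;H) = e^{Q-H}e^H$, one obtains
\[
E_{ABCD}^{(AC)} = E(-W^{(A)};\, -W^{(C)} + \widetilde H_{ABCD})\cdot E_{ABCD}^{(C)}.
\]
Because $W^{(C)}$ is supported in $\gamma_0$, at distance $\geq 3\ell - 2r$ from $\text{supp}(W^{(A)}) \subset \alpha_0$, Theorem~\ref{Theo:localityProp} (applied after absorbing $-W^{(C)}$ into the local interaction defining the base Hamiltonian and using the truncation estimate on $\Gamma$) gives
\[
\bigl\|E(-W^{(A)};\, -W^{(C)} + \widetilde H_{ABCD}) - E_{ABCD}^{(A)}\bigr\| \leq \mathcal{G}'\, \mathcal{G}^{\ell}/(\lfloor \ell/r\rfloor +1)!\,.
\]
Hence $E_{ABCD}^{(AC)} \approx E_{ABCD}^{(A)} E_{ABCD}^{(C)}$ up to an error of the stated order.

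Second, I would truncate each factor. Setting $X_A = \alpha_\ell$ and $X_C = \gamma_\ell$, Theorem~\ref{Theo:localityProp} again yields $E_{ABCD}^{(A)} \approx E_{X_A}^{(A)}$ and $E_{ABCD}^{(C)} \approx E_{X_C}^{(C)}$ within the same error. Crucially, $E_{X_A}^{(A)}$ is supported in $X_A$, a union of two intervals of length $\leq 2(r+\ell)$ lying within $A \cup (B\cap X_A) \cup (D\cap X_A)$, disjoint from $C$ and from $X_C$; and inside $B$ (resp.\ $D$), the supports of $E_{X_A}^{(A)}$ and $E_{X_C}^{(C)}$ are separated by at least $\ell$, since $|B|,|D|\geq 3\ell$.

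Third, I would exploit the product structure $\widetilde\sigma^{BC} = \sigma^B \otimes \sigma^C$ together with the Kimura--Kuwahara correlation decay. Writing $M_1 = E_{X_A}^{(A)}$ and $M_2 = \Tr_C(\sigma^C E_{X_C}^{(C)})$, and using that $M_1$ does not touch $C$, one gets
\[
\Tr_{BC}\bigl(\widetilde\sigma^{BC} E_{X_A}^{(A)} E_{X_C}^{(C)}\bigr) = \Tr_B\bigl(\sigma^B\, M_1 M_2\bigr).
\]
The operators $M_1,M_2$ act on $B$ with supports contained in $X_A\cap B$ and $X_C\cap B$ respectively, separated by $\geq \ell$, while their residual components on $A\cup D$ commute with $\sigma^B$. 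Expanding $M_1, M_2$ in a tensor-product operator basis indexed by the spectator sites in $A\cup D$ and applying the Kimura--Kuwahara bound to each pair of scalar-valued components supported on $B$, one obtains
\[
\bigl\|\Tr_B(\sigma^B M_1 M_2) - \Tr_B(\sigma^B M_1)\cdot \Tr_B(\sigma^B M_2)\bigr\| \leq \xi(\ell)\, \|M_1\|\, \|M_2\|\,.
\]
Undoing the truncations of Steps 1--2 and recognizing $\Tr_B(\sigma^B M_1) = \Tr_B(\widetilde\sigma^B E_{ABCD}^{(A)})$ and $\Tr_B(\sigma^B M_2) = \Tr_{BC}(\widetilde\sigma^{BC} E_{ABCD}^{(C)})$ (each up to an extra locality error), this gives \eqref{equa:approxStep2_1_1}. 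The norms of all expansionals are uniformly bounded, since $\|W^{(A)}\|, \|W^{(C)}\| \leq 6rJ$ and Theorem~\ref{Theo:localityProp} gives $\|E_X^{(Y)}\| \leq \exp(\mathcal{G}^{8r}\cdot 6rJ)$; hence the multiplicative prefactors can be absorbed into a constant $\mathcal{G}''' = \mathcal{G}'''(r,J,d)$.

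The main obstacle I anticipate is Step~3: the Kimura--Kuwahara result bounds correlations of scalar observables supported on disjoint sub-intervals of $B$, whereas here $M_1$ and $M_2$ carry nontrivial spectator factors on $A$ and $D$. Handling this cleanly requires either a vector-valued refinement of the correlation-decay bound (obtained by testing against an operator-valued basis and controlling the resulting error by $\xi(\ell)\,\|M_1\|\,\|M_2\|$), or a repeated application of the scalar bound after expanding in a suitable basis; keeping track of combinatorial and dimensional factors without losing the $e^{-\alpha\sqrt\ell}$ rate will be the main technical bookkeeping.
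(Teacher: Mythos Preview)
Your overall strategy—localize the expansional, factor into an $A$-piece and a $C$-piece, then invoke correlation decay of $\sigma^B$—matches the paper's. However, the step you yourself flag as ``the main obstacle'' (your Step~3) contains a genuine gap that is not just bookkeeping.

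The claimed bound
\[
\bigl\|\Tr_B(\sigma^B M_1 M_2) - \Tr_B(\sigma^B M_1)\,\Tr_B(\sigma^B M_2)\bigr\| \leq \xi(\ell)\,\|M_1\|\,\|M_2\|
\]
does \emph{not} follow from expanding $M_1,M_2$ in a tensor-product basis on the spectator sites and applying the scalar Kimura--Kuwahara bound termwise. With $X_A=\alpha_\ell$, the spectator region $\alpha_\ell\cap(A\cup D)$ has size of order $\ell$, so a tensor-basis expansion (as in Lemma~\ref{Lemm:productDecomp}) produces a decomposition whose $\ell_1$-weight is of order $d^{O(\ell)}\|M_1\|$; summing the scalar covariance bounds then yields $d^{O(\ell)}\xi(\ell)\|M_1\|\,\|M_2\|$. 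Since $\xi(\ell)=c'e^{-\alpha'\sqrt\ell}$ decays only subexponentially, the factor $d^{O(\ell)}$ dominates and no decay survives.

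The paper resolves this via Lemma~\ref{Lemm:TensorProductFact}: rather than expanding $E^{(A)}_{D_1AB_1}$ in one shot, one telescopes it as $E_0^{(A)}+\sum_{n\geq 1}(E_n^{(A)}-E_{n-1}^{(A)})$, where $E_n^{(A)}$ is the expansional built from $\widetilde H$ restricted to $\alpha_n$. The $n$-th increment is supported on a region of size $O(n)$ and has norm $\lesssim \mathcal G^{n}/\lfloor n/r\rfloor!$ by Theorem~\ref{Theo:localityProp}. Expanding only that piece in a tensor basis costs $d^{O(n)}$, so the total $\ell_1$-weight of the resulting product decomposition is $\sum_n d^{O(n)}\mathcal G^{n}/\lfloor n/r\rfloor!$, a convergent series bounded by a constant $\mathcal G''$ independent of $\ell$. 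This telescoping idea is precisely what makes Step~3 go through, and it is absent from your proposal.

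A minor remark on your Step~1: with the paper's convention $E(Q;H)=e^{Q-H}e^H$, the correct factorization is
\[
E(Q_1+Q_2;H)=E(Q_1;\,H-Q_2)\,E(Q_2;H),
\]
so the first factor should be $E(-W^{(A)};\,\widetilde H_{ABCD}+W^{(C)})$, not $E(-W^{(A)};\,-W^{(C)}+\widetilde H_{ABCD})$. This sign slip is harmless for the locality argument, since $W^{(C)}$ sits far from $\operatorname{supp}W^{(A)}$ either way.
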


The proof of Step 2 is based on three lemmas.

\begin{Lemm}\label{Lemm:splitTwo}
With the previous notation, for every $Y \in \{ A,C,AC\}$, every $X \subset \Lambda$ and every $0\leq n \leq m$ 
\begin{align*} 
\| E_{X}^{(Y)}\| \, , \, \| (E_{X}^{(Y)})^{-1}\| & \leq \exp{12rJ\mathcal{G}^{4(12r)}}\\[2mm]
\| E_{X \cap (\alpha_{n}\gamma_{n})}^{(Y)} - E_{X \cap (\alpha_{m}\gamma_{m})}^{(Y)}\|  & \leq \exp{12rJ\mathcal{G}^{4(12r)}} \frac{\mathcal{G}^{n}}{(\lfloor n / r\rfloor + 1)!}\,.
\end{align*}
Moreover, if $\ell \geq r$ we can estimate
\begin{align*}\label{Lemm:splitTwo1} 
\|E_{ABCD}^{(C)} - E_{B_{3}CD_{3}}^{(C)} 
\| \, , \, \|E_{ABCD}^{(A)} - E_{D_{1}AB_{1}}^{(A)} 
\| \, , \, \|E_{ABCD}^{(AC)} - E_{D_{1}AB_{1}}^{(A)} E_{B_{3}CD_{3}}^{(C)}\| & \leq   \frac{ 2 \, e^{12rJ \mathcal{G}^{12r}} \, \mathcal{G}^{\ell-r}}{(\lfloor \ell/r \rfloor)!}\,,
\end{align*}
where $\cG$ is the constant from Theorem~\ref{Theo:localityProp}.
\end{Lemm}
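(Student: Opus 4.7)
The first two displayed inequalities are, in my view, a direct application of Theorem~\ref{Theo:localityProp} to the expansional $E_X^{(Y)} = E(-W^{(Y)}; \widetilde{H}_X)$, once the combinatorial data of $W^{(Y)}$ is read off. For each $Y \in \{A, C, AC\}$, the operator $W^{(Y)}$ is supported on at most four short intervals of the ring (one near each endpoint of $A$ and $C$), each of length at most $2r+1$, so altogether the supporting set $Z$ satisfies $k \le 4$, $|Z| \le 12r$, and $\|W^{(Y)}\| \le 12 r J$. Plugging these values into the two expansional bounds of Theorem~\ref{Theo:localityProp} yields the uniform bound $\exp(12 r J\, \mathcal{G}^{4\cdot 12 r})$, while the corresponding difference estimate—applied along the monotone filtration $Z_n = \alpha_n \cup \gamma_n$ (or its halves, when $Y = A$ or $Y = C$), restricted to $X$—gives the second inequality with the claimed factorial decay.

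For the ``moreover'' part, the key point is an algebraic factorization that I would establish first: provided $|B_1|, |D_1| \ge r$, the support of $W^{(A)}$ lies inside $D_1 A B_1$ and the support of $W^{(C)}$ lies inside $B_3 C D_3$. Since these two intervals are disjoint and $\widetilde{\Phi}$ has no terms crossing between distinct blocks of $ABCD$, the operators $W^{(A)} + \widetilde{H}_{D_1 A B_1}$ and $W^{(C)} + \widetilde{H}_{B_3 C D_3}$ have disjoint supports and hence commute. This lets me collapse the product of two expansionals into a single expansional:
\[
E_{D_1 A B_1}^{(A)} \cdot E_{B_3 C D_3}^{(C)} = e^{-(W^{(AC)} + \widetilde{H}_{D_1 A B_1 \cup B_3 C D_3})} \, e^{\widetilde{H}_{D_1 A B_1 \cup B_3 C D_3}} = E_{D_1 A B_1 \cup B_3 C D_3}^{(AC)}.
\]
With this identity in hand, each of the three ``moreover'' bounds reduces to comparing two expansionals of the form $E_X^{(Y)}$ and $E_{X'}^{(Y)}$ with $X' \subset X$, which is precisely what the second inequality of the lemma is designed to control.

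To extract the explicit bound, I would invoke the second inequality of the lemma with $X = ABCD$ and a neighborhood parameter $n$ of order $\ell - r$: since $|B_2|, |D_2| \ge \ell$, the truncated Hamiltonian $\widetilde{H}_{X \cap (\alpha_n \gamma_n)}$ agrees with $\widetilde{H}_{D_1 A B_1 \cup B_3 C D_3}$ (up to at most a boundary offset), while $\widetilde{H}_{X \cap (\alpha_m \gamma_m)}$ for $m$ large returns $\widetilde{H}_{ABCD}$. The resulting factor $\mathcal{G}^{\ell - r}/(\lfloor \ell/r\rfloor)!$ then propagates through, and the two-step comparison (first truncating on the $A$-side, then on the $C$-side, if exact alignment of $\alpha_n \cup \gamma_n$ with $D_1 A B_1 \cup B_3 C D_3$ fails) accounts for the prefactor $2$. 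The main obstacle I anticipate is purely bookkeeping: verifying that the index $n$ can be taken large enough for the filtration $\alpha_n \cup \gamma_n$ to cover exactly $D_1 A B_1 \cup B_3 C D_3$ (or that the mismatch is negligible), and correctly tracking the constants $12r J \mathcal{G}^{4(12r)}$ versus the slightly smaller $12 r J \mathcal{G}^{12r}$ that one gets by applying the sharper single-block ($k=2$) version of Theorem~\ref{Theo:localityProp} in the separate bounds for $E_{ABCD}^{(A)}$ and $E_{ABCD}^{(C)}$.
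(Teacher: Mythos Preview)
Your proposal is correct and follows essentially the same route as the paper's proof: the first two inequalities are indeed a direct application of Theorem~\ref{Theo:localityProp} with $k\le 4$, $|Z|\le 12r$, $\|W^{(Y)}\|\le 12rJ$; and for the ``moreover'' part the paper establishes exactly the factorization $E_{D_1AB_1}^{(A)}E_{B_3CD_3}^{(C)}=E_{D_1AB_1B_3CD_3}^{(AC)}$ that you describe (via $\widetilde{H}_{D_1AB_1B_3CD_3}=\widetilde{H}_{D_1AB_1}+\widetilde{H}_{B_3CD_3}$ when $|B_2|,|D_2|\ge r$), then compares both $E_{ABCD}^{(Y)}$ and $E_{D_1AB_1B_3CD_3}^{(Y)}$ to the common intermediate $E_{\alpha_n\gamma_n}^{(Y)}$ with $n=\ell-r$ by a triangle inequality, which is the source of the prefactor $2$. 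Your remark about the discrepancy between $\mathcal{G}^{4(12r)}$ and $\mathcal{G}^{12r}$ is also on point; the paper simply does not track these exponents carefully between the two parts.
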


\begin{proof}
To prove the first and second inequalities, we just need to recall the definition of $E_{X}^{(Y)}$ as an expansional given in \eqref{equa:defiEXY}, and notice that $W^{(Y)}$ is supported in $\alpha_{0}\gamma_{0}$, which is the union  of four intervals of length $\leq 3r$. Thus, applying Theorem~\ref{Theo:localityProp} we conclude the result.

Let us next prove the second part. Assume that $\ell \geq r$. Then, since $|B_{2}|, |D_{2}| \geq \ell \geq r$, we have $\widetilde{H}_{D_{1}AB_{1}B_{3}CD_{3}} = \widetilde{H}_{D_{1}AB_{1}}  + \widetilde{H}_{B_{3}CD_{3}} $, and therefore
\[ E^{(A)}_{D_{1}AB_{1}} E^{(C)}_{B_{3}CD_{3}} = E^{(AC)}_{D_{1}AB_{1}B_{3}CD_{3}} \, , \,  
E^{(A)}_{D_{1}AB_{1}}  = E^{(A)}_{D_{1}AB_{1}B_{3}CD_{3}}
\, , \, E^{(C)}_{B_{3}CD_{3}} = E^{(C)}_{D_{1}AB_{1}B_{3}CD_{3}}
\,. \]
Again by Theorem~\ref{Theo:localityProp}, since $W^{(Y)}$ is supported in $\alpha_{0}\gamma_{0}$ for every $Y \in \{ A,C,AC\}$, we can establish 
\[ \| E^{(Y)}_{ABCD} - E^{(Y)}_{\alpha_{n}\gamma_{n}}\| \,\, , \,\,  \| E^{(Y)}_{D_{1}AB_{1}B_{3}CD_{3}} - E^{(Y)}_{(D_{1}AB_{1}B_{3}CD_{3}) \cap \alpha_{n}\gamma_{n}}\| \leq e^{12rJ \mathcal{G}^{12r}} \frac{\mathcal{G}^{n}}{(\lfloor n/r \rfloor+1)!}\,.\]
Taking $n= \ell-r$ we have that $(D_{1}AB_{1}B_{3}CD_{3}) \cap \alpha_{n}\gamma_{n} = \alpha_{n}\gamma_{n}$ since $|B_{i}|, |D_{i}| \geq \ell$.  Therefore
\[ \| E_{ABCD}^{(Y)} - E_{D_{1}AB_{1}}^{(Y)} E_{B_{3}CD_{3}}\| \leq \| E_{ABCD}^{(Y)} - E_{\alpha_{n} \gamma_{n}}^{(Y)} \| + \| E_{D_{1}AB_{1}B_{3}CD_{3}}^{(Y)} - E^{(Y)}_{\alpha_{n} \gamma_{n}} \|\,, \]
and applying the above estimates we get the desired conclusion.
\end{proof}

We will need the following easy observation.
\begin{Lemm}\label{Lemm:productDecomp}
Let $Q$ be an operator on $\cH_{A_{1}} \otimes \ldots \otimes \cH_{A_{n}}$. Then, we can express $Q$ as
\[ Q = \sum_{j=1}^{D^2} Q_{A_{1}}^{(j)} \otimes \ldots \otimes Q_{A_{n}}^{(j)} \quad \text{such that} \quad\sum_{j=1}^{D^2}\|Q_{A_{1}}^{(j)}\|_{\infty} \cdot \ldots \cdot \| Q_{A_{n}}^{(j)}\|_{\infty} \leq D^2 \| Q\|_{\infty} \,, \]
where $D = \dim(\cH_{A_{1}} \otimes \ldots \otimes \cH_{A_{n}} ) =\dim(\cH_{A_{1}}) \cdot \ldots \cdot \dim(\cH_{A_{n}})$.
\end{Lemm}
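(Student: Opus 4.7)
The plan is to use a product matrix-unit basis obtained from orthonormal bases on each factor. Concretely, I would fix an orthonormal basis $\{\ket{e^{(i)}_a}\}_{a=1}^{d_i}$ of each local space $\cH_{A_i}$ where $d_i = \dim(\cH_{A_i})$, so that $D = d_1 \cdots d_n$. The product vectors $\ket{e_{\vec a}} := \ket{e^{(1)}_{a_1}} \otimes \cdots \otimes \ket{e^{(n)}_{a_n}}$, indexed by $\vec a \in \{1,\dots,d_1\} \times \cdots \times \{1,\dots,d_n\}$, form an orthonormal basis of $\cH_{A_1} \otimes \cdots \otimes \cH_{A_n}$, and consequently $\{\ket{e_{\vec a}}\bra{e_{\vec b}}\}_{\vec a, \vec b}$ is a basis of $\cB(\cH_{A_1} \otimes \cdots \otimes \cH_{A_n})$ consisting of $D^2$ elements.

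Next I would expand $Q$ in this basis, writing
\[
Q = \sum_{\vec a, \vec b} q_{\vec a, \vec b} \, \ket{e_{\vec a}}\bra{e_{\vec b}}, \qquad q_{\vec a, \vec b} := \bra{e_{\vec a}} Q \ket{e_{\vec b}},
\]
and then observe the crucial factorization
\[
\ket{e_{\vec a}}\bra{e_{\vec b}} = \ket{e^{(1)}_{a_1}}\bra{e^{(1)}_{b_1}} \otimes \cdots \otimes \ket{e^{(n)}_{a_n}}\bra{e^{(n)}_{b_n}},
\]
so the proposed decomposition has (up to relabeling the $D^2$ pairs $(\vec a, \vec b)$ as $j = 1, \dots, D^2$) $Q^{(j)}_{A_i} := q_{\vec a(j),\vec b(j)}^{1/n} \ket{e^{(i)}_{a_i(j)}}\bra{e^{(i)}_{b_i(j)}}$, or more simply one absorbs the scalar $q_{\vec a, \vec b}$ into any single factor.

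The norm bound is then immediate from two elementary facts: each rank-one operator $\ket{e^{(i)}_{a}}\bra{e^{(i)}_{b}}$ has operator norm equal to $1$, and each matrix coefficient satisfies $\abs{q_{\vec a,\vec b}} \le \norm{Q}_\infty$ because $\ket{e_{\vec a}}$ and $\ket{e_{\vec b}}$ are unit vectors. Therefore
\[
\sum_{j=1}^{D^2} \prod_{i=1}^n \norm{Q^{(j)}_{A_i}}_\infty = \sum_{\vec a, \vec b} \abs{q_{\vec a, \vec b}} \le D^2 \norm{Q}_\infty,
\]
since the sum has exactly $D^2$ terms. This completes the proof.

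There is essentially no obstacle here: the statement is just a tensor product generalization of the trivial fact that any $D \times D$ matrix is a combination of $D^2$ matrix units with coefficients bounded by the operator norm. The only reason to record it as a lemma is convenience, as this decomposition will be applied repeatedly in the subsequent Step 2 arguments to reduce partial traces of products of an expansional $E^{(AC)}_{ABCD}$ against operators supported on disjoint regions to the product form needed for invoking the correlation decay estimate $\xi(\ell)$ from \cite{kimura25}.
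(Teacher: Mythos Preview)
Your proof is correct and uses essentially the same decomposition as the paper: both expand $Q$ in a tensor-product orthonormal basis of $\cB(\cH_{A_1}\otimes\cdots\otimes\cH_{A_n})$ (your matrix units are a particular instance of the Hilbert--Schmidt orthonormal basis the paper invokes). Your bound is slightly more direct---you exploit that each matrix unit has operator norm $1$ and that each coefficient $|q_{\vec a,\vec b}|\le\|Q\|_\infty$, then simply count the $D^2$ terms---whereas the paper passes through Cauchy--Schwarz and the comparison $\|\cdot\|_2\le D\|\cdot\|_\infty$; both routes yield the same $D^2\|Q\|_\infty$ bound.
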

\begin{proof}
For each $i=1, \dots, n$, consider an orthonormal basis  of $\cB(\cH_{A_i})$. Then, the set of al possible tensor products of these basis elements forms an orthonormal basis of $\cB(\cH_{A_1} \otimes \ldots \otimes \cH_{A_n})$. This allows us to decompose $Q$ as a sum of orthogonal elements 
\[ Q  = \sum_{j=1}^{D^2}  Q_{A_{1}}^{(j)} \otimes \ldots \otimes Q_{A_{n}}^{(j)} \quad \text{such that} \quad \| Q\|_{2}^{2} = \sum_{j=1}^{D^2} \|Q_{A_{1}}^{(j)}\|_{2}^{2} \cdot \ldots \cdot \|Q_{A_{n}}^{(j)}\|_{2}^{2}\,.\]
Applying the Cauchy-Schwartz inequality and the trivial bound $\| Q\|_{2} \leq D \| Q\|_{\infty}$, we conclude that
\[ \sum_{j=1}^{D^2}\|Q_{A_{1}}^{(j)}\|_{\infty} \cdot \ldots \cdot \| Q_{A_{n}}^{(j)}\|_{\infty} \leq D^2 \| Q\|_{\infty} \leq \sqrt{D^{2}} \| Q\|_{2}^{2} \leq D^{2} \| Q\|_{\infty}\,, \]
completing the proof.
\end{proof}

\begin{Lemm}\label{Lemm:TensorProductFact}
There is a positive constant $\mathcal{G}''=\mathcal{G}''(r,J,d)$, where recall that $d$ is the local dimension of the system, such that we can decompose as finite sums
\[ E_{D_{1}AB_{1}}^{(A)}= \sum_{j}  Q_{D_{1}}^{(j)} Q_{A}^{(j)} Q_{B_{1}}^{(j)} \quad \text{and} \quad E_{B_{3}CD_{3}}^{(C)}= \sum_{k}  Q_{B_{3}}^{(k)} Q_{C}^{(k)} Q_{D_{3}}^{(k)}\,, \]
where each $Q_{X}^{(\cdot)}$ is supported in $X$, and satisfying
\[  \sum_{j} \|Q_{D_{1}}^{(j)}\| \, \| Q_{A}^{(j)} \| \, \| Q_{B_{1}}^{(j)} \|  \,\, , \,\,  \sum_{k} \|Q_{B_{3}}^{(k)}\| \, \| Q_{C}^{(k)} \| \, \| Q_{D_{3}}^{(k)} \| \,\, \leq \,\, \mathcal{G}''\,. \]
\end{Lemm}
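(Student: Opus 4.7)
The plan is to peel off the operator $E^{(A)}_{D_1AB_1}$ layer by layer around the support $\alpha_0$ of $W^{(A)}$ and then, on each bounded layer, to apply the general tripartite decomposition of Lemma~\ref{Lemm:productDecomp}. Because each layer lives in a region of linear size $O(r+n)$ but carries a norm decaying super-exponentially in $n$, the dimensional cost of Lemma~\ref{Lemm:productDecomp} will be controlled by the factorial appearing in Theorem~\ref{Theo:localityProp}, yielding a constant $\mathcal{G}''$ depending only on $r$, $J$, and $d$.

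Concretely, I would first write the finite telescoping sum
\begin{equation*}
E^{(A)}_{D_1AB_1} \;=\; E^{(A)}_{\alpha_0 \cap (D_1AB_1)} \;+\; \sum_{n \ge 1} \Delta_n, \qquad \Delta_n := E^{(A)}_{\alpha_n \cap (D_1AB_1)} - E^{(A)}_{\alpha_{n-1} \cap (D_1AB_1)},
\end{equation*}
which is indeed finite since $\alpha_n \cap (D_1AB_1) = D_1AB_1$ once $n$ is large enough. Applying Theorem~\ref{Theo:localityProp} with the modified interaction $\widetilde{\Phi}$ (exactly as in the proof of Lemma~\ref{Lemm:splitTwo}) yields $\|\Delta_n\| \le C_0 \cdot \mathcal{G}^n / (\lfloor n/r \rfloor + 1)!$ for some constant $C_0 = C_0(r,J)$, and $\|E^{(A)}_{\alpha_0 \cap (D_1AB_1)}\| \le C_0$. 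The support of each $\Delta_n$ is contained in $\alpha_n$, a union of two intervals of combined size at most $4(r+n)+2$.

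Next, I would apply Lemma~\ref{Lemm:productDecomp} to each $\Delta_n$ (and to the first term) with respect to the tripartition $\alpha_n \cap D_1 \mid \alpha_n \cap A \mid \alpha_n \cap B_1$. Since the ambient Hilbert space on $\alpha_n$ has dimension at most $D_n := d^{4(r+n)+2}$, Lemma~\ref{Lemm:productDecomp} produces a decomposition
\begin{equation*}
\Delta_n \;=\; \sum_k Q_{D_1}^{(n,k)} \otimes Q_A^{(n,k)} \otimes Q_{B_1}^{(n,k)}, \qquad \sum_k \|Q_{D_1}^{(n,k)}\|\,\|Q_A^{(n,k)}\|\,\|Q_{B_1}^{(n,k)}\| \;\le\; D_n^2 \,\|\Delta_n\|,
\end{equation*}
after extending every local factor to the whole of $D_1$, $A$, $B_1$ by tensoring with identities (which preserves operator norms). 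Concatenating these decompositions over $n \ge 0$ produces the required factorization of $E^{(A)}_{D_1AB_1}$, and summing the norm bounds gives
\begin{equation*}
\mathcal{G}'' \;\le\; C_0 \sum_{n \ge 0} \frac{d^{8(r+n)+4}\, \mathcal{G}^n}{(\lfloor n/r \rfloor + 1)!},
\end{equation*}
a finite constant depending only on $r$, $J$, $d$, since the factorial ultimately dominates the geometric factor $(d^{8}\mathcal{G})^n$. The symmetric argument, with $\gamma_n$ replacing $\alpha_n$ and the triple $(B_3,C,D_3)$ replacing $(D_1,A,B_1)$, gives the analogous decomposition of $E^{(C)}_{B_3CD_3}$. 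I do not expect any real difficulty; the only slightly delicate point is checking that both the dimension bound and the bound on $\|\Delta_n\|$ are uniform in the sizes $|D_1|, |A|, |B_1|$ and in $N$, which is exactly what the super-exponential decay in Theorem~\ref{Theo:localityProp} is designed to ensure.
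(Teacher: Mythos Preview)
Your proposal is correct and follows essentially the same approach as the paper's proof: both telescope $E^{(A)}_{D_1AB_1}$ over the growing neighborhoods $\alpha_n\cap(D_1AB_1)$, bound each increment via Theorem~\ref{Theo:localityProp}, and then apply Lemma~\ref{Lemm:productDecomp} on each layer, summing the resulting geometric-times-factorial series. The only differences are cosmetic (your dimension count $d^{8(r+n)+4}$ versus the paper's looser $d^{12(r+n)}$), and the argument goes through as you describe.
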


\begin{proof}
The argument is identical, so we argue with the first case $D_{1}AB_{1}$. Recall the expansional formula of $E_{D_{1}AB_{1}}^{(A)}$ given in terms of $W^{(A)}$, which is supported in $\alpha_{0}$, and $\widetilde{H}_{D_{1}AB_{1}}$. Let us define for each $n \geq 0$
\[ E_{n}^{(A)} := E_{(D_{1}AB_{1}) \cap \alpha_{n}}^{(A)} = E(-W^{(A)}; \widetilde{H}_{\alpha_{m} \cap D_{1}AB_{1}})\,. \]
Observe that this sequence is eventually equal to $E_{D_{1}AB_{1}}$. Thus, we can decompose as a series (which is actually a finite sum)
\[ E^{(A)}_{D_{1}AB_{1}} = E_{0}^{(A)} + \sum_{n=1}^{\infty} E^{(A)}_{n} - E^{(A)}_{n-1}\,. \]
Moreover, by Theorem~\ref{Theo:localityProp}, we have for every $0 \leq n \leq m$ the following estimates 
\[ \| E^{(A)}_{n} \| \leq \mathcal{G}' \quad , \quad \|E_{n}^{(A)} - E_{m}^{(A)} \| \leq \mathcal{G}' \, \frac{\mathcal{G}^{n}}{(\lfloor n/r \rfloor) !}\,. \]
Using Lemma~\ref{Lemm:productDecomp}, we can decompose
\[ E_{0}^{(A)} =  \sum_{j=1}^{d^{2|\alpha_{0}|}} Q_{D_{1} \cap \alpha_0}^{(j)} Q_{A\cap \alpha_0}^{(j)} Q_{B_{1}\cap \alpha_0}^{(j)} \quad , \quad E_{n}^{(A)} - E_{n-1}^{(A)} = \sum_{j=1}^{d^{2|\alpha_{n}|}} Q_{D_{1}\cap \alpha_{n}}^{(j)} Q_{A \cap \alpha_{n}}^{(j)} Q_{B_{1} \cap \alpha_{n}}^{(j)} \,, \]
so that
\[ E^{(A)}_{D_{1}AB_{1}} = \sum_{n=0}^{\infty} \sum_{j=1}^{d^{2|\alpha_{n}|}} Q_{D_{1} \cap \alpha_{n}}^{(j)} Q_{A \cap \alpha_{n}}^{(j)} Q_{B_{1} \cap \alpha_{n}}^{(j)} \]
and
\begin{align*} 
\sum_{n=0}^{\infty} \sum_{j=1}^{d^{|\alpha_{n}|}} \| Q_{D_{1} \cap \alpha_{n}}^{(j)} Q_{A \cap \alpha_{n}}^{(j)} Q_{B_{1} \cap \alpha_{n}}^{(j)} \| & \leq d^{2|\alpha_{0}|} \| E_{0}^{(A)}\| + \sum_{n=1}^{\infty} d^{2|\alpha_{n}|} \|E_{n}^{(A)} - E_{n-1}^{(A)} \| \\
& \leq d^{12r} \mathcal{G} + \sum_{n=1}^{\infty} d^{12(r+n)} \frac{\mathcal{G}^{n}}{(\lfloor n/r \rfloor )!}\,.
\end{align*}
Note that the last series is absolutely convergente and converges to a constant that only depends on $\mathcal{G},d,r$, which proves the result.
\end{proof}

Now we can prove \textbf{Step 2}.

\begin{proof}[Proof of Proposition~\ref{Prop:CorrelationDecayStep2} (Step 2)]
Let us check it in the first case \eqref{equa:approxStep2_1_1}, the idea for the rest is identical. First, we apply Lemma~\ref{Lemm:splitTwo} to approximate
\begin{equation}\label{equa:Step3_Aux1}
\Tr_{BC}\left( \widetilde{\sigma}^{BC}  E_{ABCD}^{(AC)}\right) \approx \Tr_{BC}\left( \widetilde{\sigma}^{BC}  E_{D_{1}AB_{1}}^{(A)}  E_{B_{3}CD_{3}}^{(C)}\right)
\end{equation}
with an error
\begin{equation*}
\left\| \Tr_{BC}\left( \widetilde{\sigma}^{BC}  E_{ABCD}^{(AC)}\right) - \Tr_{BC}\left( \widetilde{\sigma}^{BC}  E_{D_{1}AB_{1}}^{(A)}  E_{B_{3}CD_{3}}^{(C)}\right) \right\| \leq \mathcal{G}' \frac{\mathcal{G}^{\ell}}{(\lfloor \ell/r\rfloor)!}\,.
\end{equation*}
Next, we want to approximate 
\begin{equation}\label{equa:Step3_Aux2}
 \Tr_{BC}\left( \widetilde{\sigma}^{BC} E_{D_{1},A,B_{1}}^{(A)} \, E_{B_{3},C,D_{3}}^{(C)}\right)  \approx \Tr_{B}\left( \widetilde{\sigma}^{B} E_{B,A,D_{1}}^{(A)}\right) \, \Tr_{BC}\left( \widetilde{\sigma}^{BC}  E_{BC D_{3}}^{(C)}\right)\,.
\end{equation}
For that, we use Lemma~\ref{Lemm:TensorProductFact} and the fact that $\widetilde{\sigma}^{BC} = \sigma^{B} \otimes \sigma^{C}$. Inserting these expressions in the left hand-side of  \eqref{equa:Step3_Aux2},  we get
\[
\begin{split}
& \Tr_{BC}(\widetilde{\sigma}^{BC} \, E_{D_{1},A, B_{1}}^{(A)} \, E_{B_{3},C, D_{3}}^{(C)})  = \\[2mm]
& \hspace{1cm} = \sum_{k, j} Q_{A}^{(j)}  \Tr_{C}\left(\sigma^{C} Q_{C}^{(k)}\right) \Tr_{B}\left(\sigma^{B} Q^{(j)}_{B_{1}} Q^{(k)}_{B_{3}} \right) \, Q^{(j)}_{D_{1}} Q^{(k)}_{D_{3}} \,,
\end{split}
\]
and inserting them also on right hand-side of \eqref{equa:Step3_Aux2}, we get
\[
\begin{split}
& \Tr_{B}\left( \widetilde{\sigma}^{B} E_{B_{1}AD_{1}}^{(A)} \right) \, \Tr_{BC}\left( \widetilde{\sigma}^{BC}  E_{B_{3}, C, D_{3}}^{(C)}\right) = \\[2mm]
& \hspace{1cm}  = \sum_{k, j} Q_{A}^{(j,m)}  \Tr_{C}\left(\sigma^{C} Q_{C}^{(k)}\right) \Tr_{B}\left(\sigma^{B} Q^{(j)}_{B_{1}}\right) \Tr_{B}\left( \sigma^{B} Q^{(k)}_{B_{3}} \right) \, Q^{(j,m)}_{D_{1}} Q^{(k)}_{D_{3}} \,.
\end{split}
\]
Comparing both expressions, it is clear that we can estimate the additive error in \eqref{equa:Step3_Aux1} using the correlation decay property of $\sigma_{B}$ on the 1D chain $B$:
\begin{align*}
& \left\| \Tr_{BC}(\widetilde{\sigma}_{BC} \, E_{D_{1}A B_{1}}^{(A)} \, E_{B_{3}C D_{3}}^{(C)}) - \Tr_{B}\left( \widetilde{\sigma}_{B} E_{B_{1}AD_{1}}^{(A)} \right) \, \Tr_{BC}\left( \widetilde{\sigma}_{BC} E_{B_{3} C D_{3}}^{(C)}\right) \right\| \leq \\[2mm]
& \hspace{1cm} \leq  \sum_{n,m,k,j} \| Q_{A}^{(j)}\| \, \| Q_{C}^{(k)}\| \,  \| \Tr_{B}\left(\sigma^{B} Q^{(j)}_{B_{1}} Q^{(k)}_{B_{3}} \right) - \Tr_{B}\left(\sigma^{B} Q^{(j)}_{B_{1}}\right) \Tr_{B}\left(\sigma^B Q^{(k)}_{B_{3}} \right)  \|  \, \| Q_{D_{1}}^{(j)}\| \, \| Q_{D_{3}}^{(k)}\|\\[2mm]
&  \hspace{1cm} \leq \xi(\ell) \, \sum_{k,j} \sum_{n,m,k,j} \| Q_{A}^{(j)}\| \, \| Q_{C}^{(k)}\|  \,  \| Q^{(j)}_{B_{1}}\| \, \| Q^{(j)}_{B_{1}}\| \, \| Q^{(k)}_{B_{3}}  \| \, \| Q_{D_{1}}^{(j)}\| \, \| Q_{D_{3}}^{(k)}\|\\[2mm]
& \hspace{1cm} \leq \mathcal{G}'' \xi( \ell )\,.
\end{align*}
Finally, we can approximate
\begin{equation}\label{equa:Step3_Aux3}
 \Tr_{B}\left( \widetilde{\sigma}^{B} E_{B_{1}AD_{1}}^{(A)} \right) \, \Tr_{BC}\left( \widetilde{\sigma}^{BC} E_{B_{3}, C, D_{3}}^{(C)}\right) \approx \Tr_{B}\left( \widetilde{\sigma}^{B} E_{ABCD}^{(A)} \right) \, \Tr_{BC}\left( \widetilde{\sigma}^{BC} E_{ABCD}^{(C)}\right)\,.
\end{equation}
The additive error in this case, can be estimated using Lemma~\ref{Lemm:splitTwo} 
\begin{align*}
& \left\|\Tr_{B}\left( \widetilde{\sigma}^{B} E_{B_{1}AD_{1}}^{(A)} \right) \, \Tr_{BC}\left( \widetilde{\sigma}^{BC} E_{B_{3}, C, D_{3}}^{(C)}\right) - \Tr_{B}\left( \widetilde{\sigma}^{B} E_{ABCD}^{(A)} \right) \, \Tr_{BC}\left( \widetilde{\sigma}^{BC} E_{ABCD}^{(C)}\right)\right\| \\[2mm]
& \hspace{1cm} \leq \| E_{B_{1}AD_{1}}^{(A)} - E_{ABCD}^{(A)}\| \cdot \| E_{B_{3} CD_{3}}^{(C)}\| + \|E_{ABCD}^{(A)} \| \cdot \|E_{B_{3}CD_{3}}^{(C)} - E_{ABCD}^{(C)} \| \\[2mm]
& \hspace{1cm}  \leq 4 \mathcal{G}' \exp{\mathcal{G}^{12r}12rJ} \frac{\mathcal{G}^{\ell}}{(\lfloor \ell/r \rfloor)!}\,.
\end{align*}
Finally, combining \eqref{equa:Step3_Aux1}, \eqref{equa:Step3_Aux2} and \eqref{equa:Step3_Aux3} with their respective additive error estimations, we conclude the result.
\end{proof}

\begin{Prop}[\bfseries Step 3] \label{Prop:CorrelationDecayStep3} Assume that $\ell \geq r$. The approximations in equations \eqref{equa:approxStep2_1_2} and \eqref{equa:approxStep2_1_4} can be rewritten in terms of their inverses 
\begin{align} 
\label{equa:approxStep3_1_2} \Tr_{ABC}\left( \widetilde{\sigma}_{ABC} E_{ABCD}^{(AC)}\right)^{-1} & \approx \Tr_{AB}\left( \widetilde{\sigma}_{B} E_{ABCD}^{(C)}\right)^{-1}  \, \Tr_{BC}\left( \widetilde{\sigma}_{BC} E_{ABCD}^{(A)}\right)^{-1}\\[2mm] 
\label{equa:approxStep3_1_4} \Tr_{B}\left( \widetilde{\sigma}^{B} E_{ABCD}^{(AC)}\right)^{-1} & \approx \Tr_{B}\left( \widetilde{\sigma}^{B} E_{ABCD}^{(C)}\right)^{-1} \, \Tr_{B}\left( \widetilde{\sigma}^{B} E_{ABCD}^{(A)}\right)^{-1}\,.
\end{align}
Similarly, we have
\begin{align} 
\label{equa:approxStep3_2_2}  
\Tr_{ABCD}\left(\widetilde{\sigma}^{ABCD} E^{(AC)}_{ABCD}\right)^{-1}  
& \approx \Tr_{BCD}\left(\widetilde{\sigma}^{BCD} \, E_{ABCD}^{(C)}\right)^{-1} \, \Tr_{ABD}\left(\widetilde{\sigma}^{ABD} \, E_{ABCD}^{(A)}\right)^{-1} \\[2mm]
\label{equa:approxStep3_2_4} 
\Tr_{BD}\left(\widetilde{\sigma}^{BD} E^{(AC)}_{ABCD}\right)^{-1}  
& \approx \Tr_{BD}\left(\widetilde{\sigma}^{BD}  E_{ABCD}^{(C)}\right)^{-1} \, \Tr_{BD}\left(\widetilde{\sigma}^{BD}  E_{ABCD}^{(A)}\right)^{-1}  
\end{align}
Indeed, the additive error in all these approximations $x \approx y$ can be estimated by 
\[ \| x-y\| \leq \mathcal{G}^{(iv)} \mathcal{G}''' \,\left(\xi(\ell) + \frac{\mathcal{G}^{\ell}}{(\lfloor \ell/r \rfloor)!}\right) \,,\]
for some positive constant $\mathcal{G}^{(iv)} = \mathcal{G}^{(iv)}(r,J)$.
\end{Prop}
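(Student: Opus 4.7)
The plan is to derive each of the four identities in Step 3 from the corresponding approximation in Step 2 via the first-order perturbation formula for inverses,
\[
X^{-1} - Y^{-1} \;=\; X^{-1}(Y - X)Y^{-1},
\]
so that $\|X^{-1} - Y^{-1}\|_\infty \leq \|X^{-1}\|_\infty \|Y^{-1}\|_\infty \|X - Y\|_\infty$. For instance, for \eqref{equa:approxStep3_1_2}, I would set $X := \Tr_{ABC}(\widetilde{\sigma}^{ABC} E^{(AC)}_{ABCD})$ and $Y := PQ$ with $P := \Tr_{AB}(\widetilde{\sigma}^{AB} E^{(A)}_{ABCD})$ and $Q := \Tr_{BC}(\widetilde{\sigma}^{BC} E^{(C)}_{ABCD})$; then $Y^{-1} = Q^{-1}P^{-1}$, the norm $\|X - Y\|_\infty$ is controlled by the error bound of Step 2, and the claim of Step 3 follows provided that $\|X^{-1}\|_\infty$, $\|P^{-1}\|_\infty$ and $\|Q^{-1}\|_\infty$ can be bounded by a constant $\mathcal{G}^{(iv)}$ depending only on $r, J, d$. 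The remaining three identities \eqref{equa:approxStep3_1_4}, \eqref{equa:approxStep3_2_2}, \eqref{equa:approxStep3_2_4} would be treated by exactly the same procedure with the appropriate choice of $X, P, Q$.

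The key technical ingredient is therefore the uniform-in-$N$ lower bound on each of the partial traces appearing as $P$, $Q$ and $X$. I would establish this in two steps. First, using Lemma~\ref{Lemm:splitTwo}, the expansional $E^{(A)}_{ABCD}$ can be replaced, up to an error decaying super-exponentially in $\ell$, by the localized expansional $E^{(A)}_{D_1AB_1}$, which is supported on the fixed-diameter neighborhood $D_1AB_1$ of $A$ (similarly for $E^{(C)}$). This reduces each of $P, Q$ to a partial trace of a product of localized operators with $\widetilde{\sigma}^{AB}$ (which is itself a product state across the $A,B,C,D$ partition). Second, Theorem~\ref{Theo:localityProp} provides norm bounds on $E^{(A)}_{D_1AB_1}$ and its inverse in terms of $\|W^{(A)}\|_\infty \leq 6rJ$, which are independent of $N$; combined with uniform lower bounds on the marginals $\sigma^A = e^{-H_A}/Z_A$ when $A$ is replaced by a fixed-size boundary region, one obtains $\|P^{-1}\|_\infty, \|Q^{-1}\|_\infty \leq c(r,J,d)$.

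Once $\|P^{-1}\|_\infty$ and $\|Q^{-1}\|_\infty$ are under control, the bound on $\|X^{-1}\|_\infty$ follows by a standard bootstrap: as soon as $\ell$ is large enough that $\|Y^{-1}\|_\infty \|X - Y\|_\infty \leq 1/2$, the identity $X^{-1} = Y^{-1} + Y^{-1}(Y - X)X^{-1}$ yields $\|X^{-1}\|_\infty \leq 2\|Y^{-1}\|_\infty$. For the (finitely many) small values of $\ell$ not covered by this bootstrap, the estimate can be absorbed into the final constant $\mathcal{G}^{(iv)}$ by an elementary compactness argument.

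The main obstacle is precisely the uniform lower bound on the partial traces $\Tr_X(\widetilde{\sigma}^{YX}E^{(\cdot)}_{ABCD})$: positivity and invertibility are automatic from $\sigma$ being full-rank, but a bound independent of the system size requires one to carefully peel off the localized expansional structure so that only the boundary contribution (living in the sets $\alpha_0, \gamma_0$ of diameter $\mathcal{O}(r)$) enters the estimate. The fact that $W^{(A)}$ and $W^{(C)}$ are supported in these fixed-diameter sets, together with Theorem~\ref{Theo:localityProp}, is exactly what makes the argument work and delivers a constant $\mathcal{G}^{(iv)} = \mathcal{G}^{(iv)}(r,J,d)$ as required.
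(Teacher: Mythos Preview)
Your top-level strategy matches the paper's: both use the resolvent identity $\|X^{-1} - Y^{-1}\| \leq \|X^{-1}\|\,\|Y^{-1}\|\,\|X-Y\|$, feed in the Step~2 bound for $\|X-Y\|$, and rely on a uniform bound on the inverses. The difference, and the gap in your proposal, lies in how that uniform inverse bound is obtained.

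The paper proves it as a standalone result (Lemma~\ref{Lemm:boundsPartialtrace}): for \emph{every} $Q_{X,Y} = \Tr_X(\widetilde{\sigma}^X E^{(Y)}_{ABCD})$ with $X$ a union of intervals from $\{A,B,C,D\}$ and $Y \in \{A,C,AC\}$, one has $\|Q_{X,Y}^{-1}\| \leq \mathcal{G}^{(iv)}(r,J)$. The key trick is a symmetrization: since $\widetilde{H}_{ABCD} = \widetilde{H}_X + \widetilde{H}_{X^c}$, partial cyclicity gives
\[
Q_{X,Y} \;=\; e^{-\frac{1}{2}\widetilde{H}_{X^c}} \, \Tr_X\!\bigl((\widetilde{\sigma}^X)^{1/2} E_m^\dagger E_m\, (\widetilde{\sigma}^X)^{1/2}\bigr) \, e^{\frac{1}{2}\widetilde{H}_{X^c}},
\]
with $E_m = E(-\tfrac{1}{2}W^{(Y)}; \tfrac{1}{2}\widetilde{H}_{ABCD})$. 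The middle factor $O_m$ is manifestly positive with $O_m \geq \|E_m^{-1}\|^{-2}\,\mathbbm{1}$, and the outer conjugation by $e^{\pm\frac{1}{2}\widetilde{H}_{X^c}}$ is controlled by telescoping $O_m = O_0 + \sum_n (O_n - O_{n-1})$ and invoking Theorem~\ref{Theo:localityProp}. This covers $X$, $P$, $Q$ simultaneously, so no bootstrap is needed.

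Your alternative route for $\|P^{-1}\|, \|Q^{-1}\|$ has a genuine gap. Knowing that $\|E^{(A)}_{D_1AB_1}\|$ and $\|(E^{(A)}_{D_1AB_1})^{-1}\|$ are bounded does \emph{not} by itself yield a lower bound on $\Tr_{AB}(\widetilde{\sigma}^{AB} E^{(A)}_{D_1AB_1})$: the expansional is not self-adjoint, and a bounded invertible operator can have arbitrarily small (even vanishing) partial trace against a state. The phrase ``uniform lower bounds on the marginals $\sigma^A$ when $A$ is replaced by a fixed-size boundary region'' does not name a mechanism that converts norm control on $E$ into an operator lower bound on the partial trace; the positivity argument above is exactly the missing ingredient. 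Separately, the compactness step you invoke for small $\ell$ is unavailable: for any fixed $\ell \geq r$ the system size $N$ is still unbounded, so there is no finite family of operators over which to take a worst case.
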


For the proof we need a uniform bound.

\begin{Lemm}\label{Lemm:boundsPartialtrace}
With the previous notation, if $X \subset \Lambda$ is a union of intervals of the set $\{ A,B,C,D\}$ and $Y \in \{ A,C,AC\}$, then
\[ Q_{X,Y}:=\Tr_{X}(\widetilde{\sigma}_{X} E_{ABCD}^{(Y)})  \,\,  \mbox{ is invertible with } \,\, \| Q_{X,Y}\|_{\infty}, \| (Q_{X,Y})^{-1}\|_{\infty} \leq \mathcal{G}^{(iv)}\,.\]
for some positive constant $\mathcal{G}^{(iv)} = \mathcal{G}^{(iv)}(r,J)$.
\end{Lemm}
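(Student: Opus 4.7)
The plan is to bound $\|Q_{X,Y}\|_\infty$ and $\|Q_{X,Y}^{-1}\|_\infty$ separately, with different tools. For the upper bound I will invoke the standard fact that, for any state $\rho$ on $\cH_X$, the partial expectation $V \mapsto \Tr_X(\rho V)$ is a unital completely positive map from $\cB(\cH_{ABCD})$ to $\cB(\cH_{X^c})$, and hence a contraction in operator norm (its pre-dual is the isometric embedding $\omega \mapsto \rho \otimes \omega$ in trace norm). Applied with $\rho = \widetilde{\sigma}^X$ and combined with Lemma~\ref{Lemm:splitTwo}, this gives $\|Q_{X,Y}\|_\infty \le \|E^{(Y)}_{ABCD}\|_\infty \le \exp(12rJ\,\mathcal{G}^{12r})$. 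For invertibility my plan is to recast $Q_{X,Y}$ as a partition-function ratio: using $[\widetilde{H}_X,\widetilde{H}_{X^c}] = 0$ and cyclicity of $\Tr_X$ in the $X$-local factor $e^{-\widetilde{H}_X}$, I would derive
\[
Q_{X,Y} \;=\; \frac{1}{\widetilde{Z}_X}\,\Tr_X\!\bigl(e^{-(\widetilde{H}_{ABCD}+W^{(Y)})}\bigr)\,e^{\widetilde{H}_{X^c}} \;=\; T\,\sigma^{(Y)}_{X^c}\,(\widetilde{\sigma}^{X^c})^{-1},
\]
where $\sigma^{(Y)} := e^{-(\widetilde{H}_{ABCD}+W^{(Y)})}/Z^{(Y)}$ and $T := Z^{(Y)}/\widetilde{Z}_{ABCD}$. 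Because $\widetilde{H}_{ABCD}+W^{(Y)}$ is Hermitian, $e^{-(\widetilde{H}_{ABCD}+W^{(Y)})}$ is strictly positive; its partial trace is then strictly positive on $\cH_{X^c}$, and multiplication by the positive invertible $e^{\widetilde{H}_{X^c}}$ preserves invertibility.

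Next I would exploit the detailed-balance identity $E^{(Y)}_{ABCD}\widetilde{\sigma}^{ABCD} = \widetilde{\sigma}^{ABCD}(E^{(Y)}_{ABCD})^\dagger$, which is immediate from $E^{(Y)}_{ABCD}\,e^{-\widetilde{H}_{ABCD}} = e^{-(\widetilde{H}_{ABCD}+W^{(Y)})}$. This identity guarantees that $M := (\widetilde{\sigma}^{ABCD})^{-1/2}\,E^{(Y)}_{ABCD}\,(\widetilde{\sigma}^{ABCD})^{1/2}$ is Hermitian and, being similar to $E^{(Y)}_{ABCD}$, positive with spectrum contained in $\bigl[\|(E^{(Y)}_{ABCD})^{-1}\|_\infty^{-1},\,\|E^{(Y)}_{ABCD}\|_\infty\bigr]$. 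Combining with $\sigma^{(Y)} = T^{-1}(\widetilde{\sigma}^{ABCD})^{1/2}M(\widetilde{\sigma}^{ABCD})^{1/2}$ then yields the operator inequality
\[
c\,\widetilde{\sigma}^{ABCD} \;\leq\; T\,\sigma^{(Y)} \;\leq\; C\,\widetilde{\sigma}^{ABCD},
\]
with $c,C > 0$ depending only on $r$ and $J$. Partial traces preserve operator inequalities, so the same sandwich holds for the marginals on $X^c$, which shows that $Q_{X,Y}$ is similar (via $(\widetilde{\sigma}^{X^c})^{1/2}$) to a Hermitian positive operator with spectrum in $[c,C]$. This settles invertibility and pins down the spectrum of $Q_{X,Y}$.

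The hard part will be upgrading this spectral information into a uniform bound on $\|Q_{X,Y}^{-1}\|_\infty$, since the similarity implementing the Hermitization is $(\widetilde{\sigma}^{X^c})^{1/2}$, whose condition number diverges with the system size, so one cannot simply read operator norms off the spectrum. To close this gap I would rely on the $\widetilde{\sigma}^{X^c}$-self-adjointness of $Q_{X,Y}$ (a direct consequence of detailed balance at the $X^c$ level, namely $Q_{X,Y}\widetilde{\sigma}^{X^c} = \widetilde{\sigma}^{X^c}Q_{X,Y}^\dagger$) together with a second UCP-contraction argument, applied this time to $(E^{(Y)}_{ABCD})^{-1}$, whose norm is also controlled by Theorem~\ref{Theo:localityProp}. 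Concretely, the goal is to identify a state $\tau$ on $\cH_X$ such that $Q_{X,Y}^{-1} = \Tr_X\bigl(\tau\,(E^{(Y)}_{ABCD})^{-1}\bigr)$; the natural candidate is the Petz-like recovery state associated with $\sigma^{(Y)}$ rather than with $\widetilde{\sigma}^{ABCD}$, and once this identity is verified, UCP contraction immediately gives $\|Q_{X,Y}^{-1}\|_\infty \leq \|(E^{(Y)}_{ABCD})^{-1}\|_\infty \leq \exp(12rJ\,\mathcal{G}^{12r})$. Setting $\mathcal{G}^{(iv)} := \exp(12rJ\,\mathcal{G}^{12r})$ then completes the proof.
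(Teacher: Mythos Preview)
Your upper bound on $\|Q_{X,Y}\|_\infty$ via the UCP contraction $V\mapsto\Tr_X(\widetilde\sigma^X V)$ is clean and correct, and your rewriting $Q_{X,Y}=T\,\sigma^{(Y)}_{X^c}(\widetilde\sigma^{X^c})^{-1}$ together with the detailed-balance computation does establish invertibility and the spectral sandwich $c\le R\le C$ for $R=(\widetilde\sigma^{X^c})^{-1/2}Q_{X,Y}(\widetilde\sigma^{X^c})^{1/2}$. You also correctly diagnose the obstruction: the similarity by $(\widetilde\sigma^{X^c})^{1/2}$ has condition number growing with system size, so the spectral bound on $R$ does not transfer to $\|Q_{X,Y}^{-1}\|_\infty$.

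The gap is precisely where you say ``once this identity is verified''. There is no reason to expect a state $\tau$ on $\cH_X$ with $\Tr_X\bigl(\tau\,(E^{(Y)}_{ABCD})^{-1}\bigr)=Q_{X,Y}^{-1}$. Writing $F:=(E^{(Y)}_{ABCD})^{-1}$, one has $F\sigma^{(Y)}=T^{-1}\widetilde\sigma^{ABCD}$ and hence $Q_{X,Y}^{-1}=\Tr_X(F\sigma^{(Y)})\,(\sigma^{(Y)}_{X^c})^{-1}$; but the trailing factor $(\sigma^{(Y)}_{X^c})^{-1}$ lives on $X^c$ and cannot be absorbed into an $X$-local state $\tau$ unless $\sigma^{(Y)}$ factorizes across $X{:}X^c$, which it does not. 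More structurally, the range of $\tau\mapsto\Tr_X(\tau F)$ as $\tau$ varies over $\cB(\cH_X)$ is an affine subspace of $\cB(\cH_{X^c})$ of dimension at most $d_X^2$, far too small to hit a prescribed operator on $X^c$. Replacing $\Tr_X(\tau\,\cdot\,)$ by the Petz conditional expectation associated with $\sigma^{(Y)}$ does give a UCP map, but applying it to $F$ produces $(\sigma^{(Y)}_{X^c})^{-1/2}\Tr_X\bigl((\sigma^{(Y)})^{1/2}F(\sigma^{(Y)})^{1/2}\bigr)(\sigma^{(Y)}_{X^c})^{-1/2}$, which does not reduce to $Q_{X,Y}^{-1}$ either. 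So this route, as stated, does not close.

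The paper's argument bypasses this obstacle by exploiting \emph{locality} rather than a global UCP representation. It writes $Q_{X,Y}=e^{-\frac12\widetilde H_{X^c}}\,O_m\,e^{\frac12\widetilde H_{X^c}}$ with $O_n:=\Tr_X\bigl((\widetilde\sigma^X)^{1/2}E_n^\dagger E_n(\widetilde\sigma^X)^{1/2}\bigr)$ and $E_n$ the expansional truncated to the neighbourhood $\alpha_n\gamma_n$ of $\partial A\cup\partial C$. Each $O_n$ is positive with $\mathcal G'^{-1}\mathbbm 1\le O_n\le\mathcal G'\mathbbm 1$, so $O_n^{-1}$ is bounded uniformly. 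The point is that $O_n^{-1}$ (and $O_n^{-1}-O_{n-1}^{-1}$) are \emph{supported on $\alpha_n\gamma_n$}, a region of size $O(r+n)$; hence conjugation by $e^{\pm\frac12\widetilde H_{X^c}}$ costs only a factor $\mathcal G^{O(r+n)}$ by Theorem~\ref{Theo:localityProp}, not the full condition number of $\widetilde\sigma^{X^c}$. Telescoping $O_m^{-1}=O_0^{-1}+\sum_{n\ge1}(O_n^{-1}-O_{n-1}^{-1})$ and using the factorial decay $\|O_n^{-1}-O_{n-1}^{-1}\|\lesssim\mathcal G^{n}/\lfloor n/r\rfloor!$ then gives a convergent, system-size-independent bound on $\|Q_{X,Y}^{-1}\|_\infty$. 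This locality-plus-telescoping mechanism is the missing ingredient in your proposal.
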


\begin{proof}
Let us define for every $n \geq 0$ the expansional
\[ E_{n} := E(-\tfrac{1}{2}W^{(Y)}, \tfrac{1}{2} \widetilde{H}_{\alpha_{n}\gamma_{n}}) = e^{-(\frac{1}{2} W^{(Y)} + \frac{1}{2} \widetilde{H}_{\alpha_{n} \gamma_{n}})} e^{\frac{1}{2}\widetilde{H}_{\alpha_{n} \gamma_{n}}}\,. \]
Observe that for large $m$, we have $\alpha_{m}\gamma_{m} = ABCD$. Moreover, since $X$ is a union of intervals of $\{ A,B,C,D\}$, then $\widetilde{H}_{XX^c} = \widetilde{H}_{X} + \widetilde{H}_{X^{c}}$. Thus, we can rewrite
\begin{align*}  
Q_{X,Y} = \frac{1}{\widetilde{Z}_{X}} \Tr_{X}\left(e^{-(W^{(Y)}+\widetilde{H}_{XX^c})} e^{\widetilde{H}_{X^{c}}}\right)
& = 
\frac{1}{\widetilde{Z}_{X}}e^{-\frac{1}{2}\widetilde{H}_{X^{c}}} \Tr_{X}\left(e^{\frac{1}{2}\widetilde{H}_{X^{c}}}e^{-(W^{(Y)}+\widetilde{H}_{XX^c})} e^{\frac{1}{2}\widetilde{H}_{X^{c}}}\right)e^{\frac{1}{2}\widetilde{H}_{X^{c}}}\\[2mm] 
& = 
\frac{1}{\widetilde{Z}_{X}} e^{-\frac{1}{2} \widetilde{H}_{X^c}} \, \Tr_{X}\left( e^{-H_{X}} E_{m}^{\dagger} E_{m} \right)  \, e^{\frac{1}{2} \widetilde{H}_{X^c}}\\[2mm]
& = e^{-\frac{1}{2} \widetilde{H}_{X^c}} \, \Tr_{X}\left( (\widetilde{\sigma}^{X})^{1/2} E_{m}^{\dagger} E_{m} (\widetilde{\sigma}^{X})^{1/2}\right) \, e^{\frac{1}{2} \widetilde{H}_{X^c}}\,.
\end{align*}
If we define for each $n \geq 0$
\[ O_{n}:= \Tr_{X}\left( (\widetilde{\sigma}^{X})^{1/2} E_{n}^{\dagger} E_{n} (\widetilde{\sigma}^{X})^{1/2}\right)\,,  \]
then this is a positive operator satisfying the following bounds for every $0\leq n \leq m$
\[\mathcal{G}'^{-1} \mathbbm{1} \leq O_{n} \leq \mathcal{G}' \mathbbm{1} \quad , \quad \| O_{n} - O_{m}\| \leq \mathcal{G}' \frac{\mathcal{G}^{n}}{(\lfloor n/r \rfloor + 1)!}\,.  \]
As a consequence, we also have
\[ \| O_{n}^{-1} - O_{m}^{-1}\| = \| O_{m}^{-1}(O_{n} - O_{m}) O_{n}^{-1}\| \leq \mathcal{G}'^{3} \frac{\mathcal{G}^n}{(\lfloor n/r \rfloor + 1)!}\,. \]
Applying now Theorem~\ref{Theo:localityProp}, we can estimate
\begin{align*}
\| Q_{X,Y}^{-1} \| 
& = \| e^{\frac{1}{2}H_{X^c}} (O_{0}^{-1} + \sum_{n=1}^{m} (O_{n}^{-1} - O_{n-1}^{-1})) \, e^{\frac{1}{2}H_{X^c}} \|\\[2mm]
& \leq \| e^{\frac{1}{2}H_{X^c}} O_{0}^{-1} e^{\frac{1}{2}H_{X^c}}\| + \sum_{n=1}^{m} \| e^{\frac{1}{2}H_{X^c}} (O_{n}^{-1} - O_{n-1}^{-1}) e^{\frac{1}{2}H_{X^c}}  \|\\
& \leq \| O_{0}^{-1}\| \mathcal{G}^{6r} + \sum_{n=1}^{m} \| O_{n}^{-1} - O_{n-1}^{-1}\| \, \mathcal{G}^{6(r+n)}\\
& \leq \mathcal{G}'^2 \mathcal{G}^{6r} + \sum_{n=1}^{m} \mathcal{G}'^{3} \mathcal{G}^{6(r+n)} \frac{\mathcal{G}^{n-1}}{(\lfloor n/r \rfloor)!}\,.
\end{align*}
This last series is absolutely convergent, which establishes the desired conclusion.
\end{proof}

Now we can prove \textbf{Step 3}.

\begin{proof}[Proof of Proposition~\ref{Prop:CorrelationDecayStep3}(Step 3)]
We only need to use that 
\[\| O_{1}^{-1} - O_{2}^{-1}\| \leq \| O_{1}^{-1}\| \, \| O_{2}^{-1}\| \, \| O_{1} - O_{2}\|\]
and apply the estimates from Proposition~\ref{Prop:CorrelationDecayStep2}(Step 2) on $\| O_{1} - O_{2}\|$, together with the bound from Lemma~\ref{Lemm:boundsPartialtrace} on $\| O_{1}^{-1}\|$ and $\| O_{2}^{-1}\|$.
\end{proof}

\begin{Prop}[\bfseries Step 4]\label{Prop:CorrelationDecayStep4}
Assume that $\ell \geq r$. Then, we can estimate
\begin{equation*} 
\|\sigma_{AD} \sigma_{D}^{-1} \sigma_{DC} \sigma_{ADC}^{-1}  
  - \mathbbm{1} \| \,\,,\,\,
   \| \sigma_{A}  \sigma_{C} \sigma_{AC}^{-1}  
   - \mathbbm{1} \|
   \,\, \leq \,\,
   4 \left(\mathcal{G}^{(iv)}\right)^{7} \mathcal{G'''} \left(\xi(\ell) + \frac{\mathcal{G}^{\ell}}{(\lfloor \ell/r \rfloor)!}\right)\,.
\end{equation*}
\end{Prop}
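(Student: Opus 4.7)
The plan is to combine Steps 1–3 by a telescoping argument. I treat the first bound in detail; the second follows verbatim, substituting equation \eqref{equa:Step1_2} for \eqref{equa:Step1_1} and the approximations \eqref{equa:approxStep2_2_1}--\eqref{equa:approxStep2_2_4},\eqref{equa:approxStep3_2_2},\eqref{equa:approxStep3_2_4} for those involving $B,BC,AB,ABC$.

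By Proposition~\ref{Prop:CorrelationDecayStep1}, I can write
\[
\sigma_{AD}\sigma_D^{-1}\sigma_{DC}\sigma_{ADC}^{-1} = F_1\,F_2^{-1}\,F_3\,F_4^{-1},
\]
where
\[
F_1=\Tr_{BC}(\tilde\sigma^{BC}E^{(AC)}),\quad F_2=\Tr_{ABC}(\tilde\sigma^{ABC}E^{(AC)}),\quad F_3=\Tr_{AB}(\tilde\sigma^{AB}E^{(AC)}),\quad F_4=\Tr_B(\tilde\sigma^B E^{(AC)}),
\]
with $E^{(AC)}=E^{(AC)}_{ABCD}$. By Steps 2 and 3, each $F_i$ (resp.\ $F_i^{-1}$) admits a factorized surrogate $\tilde F_i$ (resp.\ $\tilde F_i^{-1}$) of the form ``$(A\text{-piece})\cdot(C\text{-piece})$'', with $\|F_i-\tilde F_i\|,\,\|F_i^{-1}-\tilde F_i^{-1}\|\le \varepsilon$, where $\varepsilon=\mathcal{G}^{(iv)}\mathcal{G}'''\bigl(\xi(\ell)+\mathcal{G}^{\ell}/(\lfloor\ell/r\rfloor)!\bigr)$.

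The first step of the plan is to show the \emph{exact} cancellation
\begin{equation}\label{eq:cancellation}
\tilde F_1\,\tilde F_2^{-1}\,\tilde F_3\,\tilde F_4^{-1}=\mathbbm{1}.
\end{equation}
Here I will use that every $A$-piece in the $\tilde F_i$'s is supported in $\alpha_0$ and every $C$-piece in $\gamma_0$; since $|B_2|,|D_2|\geq\ell\geq r$, these supports are disjoint, so any $A$-piece commutes with any $C$-piece. After commuting, the product in \eqref{eq:cancellation} splits as $\mathcal{A}\cdot\mathcal{C}$ where
\[
\mathcal{A}=\Tr_B(\tilde\sigma^B E^{(A)})\cdot\Tr_{AB}(\tilde\sigma^{AB}E^{(A)})^{-1}\cdot\Tr_{AB}(\tilde\sigma^{AB}E^{(A)})\cdot\Tr_B(\tilde\sigma^B E^{(A)})^{-1},
\]
and an analogous $\mathcal{C}$ built from $E^{(C)}$. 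Both products collapse to $\mathbbm{1}$ by the trivial pairwise cancellation of consecutive inverse factors. (For the second identity the same collapse uses \eqref{equa:approxStep2_2_*} and \eqref{equa:approxStep3_2_*}.)

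The second step is the standard telescoping bound
\begin{align*}
\bigl\|F_1F_2^{-1}F_3F_4^{-1}-\tilde F_1\tilde F_2^{-1}\tilde F_3\tilde F_4^{-1}\bigr\|
&\le\;\|(F_1-\tilde F_1)F_2^{-1}F_3F_4^{-1}\|
+\|\tilde F_1(F_2^{-1}-\tilde F_2^{-1})F_3F_4^{-1}\|\\
&\quad+\|\tilde F_1\tilde F_2^{-1}(F_3-\tilde F_3)F_4^{-1}\|
+\|\tilde F_1\tilde F_2^{-1}\tilde F_3(F_4^{-1}-\tilde F_4^{-1})\|.
\end{align*}
By Lemma~\ref{Lemm:boundsPartialtrace}, every $F_i$ and $F_i^{-1}$ is bounded in norm by $\mathcal{G}^{(iv)}$, and each factorized piece appearing in $\tilde F_i$ is of the form $\Tr_X(\tilde\sigma^X E^{(Y)})$ with $Y\in\{A,C\}$, hence again bounded by $\mathcal{G}^{(iv)}$, so $\|\tilde F_i\|,\|\tilde F_i^{-1}\|\le(\mathcal{G}^{(iv)})^2$. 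The $k$-th term ($k=0,1,2,3$) is therefore bounded by $(\mathcal{G}^{(iv)})^{2k}\cdot\varepsilon\cdot(\mathcal{G}^{(iv)})^{3-k}\le(\mathcal{G}^{(iv)})^{6}\,\varepsilon$, giving the total bound $4(\mathcal{G}^{(iv)})^{6}\varepsilon=4(\mathcal{G}^{(iv)})^{7}\mathcal{G}'''\bigl(\xi(\ell)+\mathcal{G}^{\ell}/(\lfloor\ell/r\rfloor)!\bigr)$, as claimed. Combining with \eqref{eq:cancellation} yields the desired inequality.

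The main (only) obstacle is the commutation claim used to split $\tilde F_1\tilde F_2^{-1}\tilde F_3\tilde F_4^{-1}$ into $\mathcal{A}\cdot\mathcal{C}$: one must verify that the partial-trace pieces arising on the $A$-side really are supported inside $\alpha_0$ (and similarly on the $C$-side inside $\gamma_0$), and hence commute. This is immediate from the fact that $E^{(A)}_{ABCD}$ is the Araki expansional of $W^{(A)}$, which is supported in $\alpha_0$, and the trace is taken over regions $X\subset B,AB,BC,ABCD$ disjoint from $\gamma_0$; partial tracing over such $X$ preserves support within $\alpha_0$, so the $A$- and $C$-pieces commute as operators on $\mathcal{H}_\Lambda$. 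Everything else is bookkeeping with the norm bounds from Lemma~\ref{Lemm:boundsPartialtrace} and the approximation estimates from Propositions~\ref{Prop:CorrelationDecayStep2} and~\ref{Prop:CorrelationDecayStep3}.
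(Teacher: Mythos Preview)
Your overall architecture is exactly the paper's: write $\sigma_{AD}\sigma_D^{-1}\sigma_{DC}\sigma_{ADC}^{-1}=F_1F_2^{-1}F_3F_4^{-1}$ via Step~1, replace each factor by its factorized surrogate from Steps~2--3, observe the surrogate product is $\mathbbm{1}$, and telescope. Your telescoping estimate and the final constant $4(\mathcal{G}^{(iv)})^7\mathcal{G}'''(\cdots)$ match the paper's exactly.

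The gap is in your justification of the cancellation $\tilde F_1\tilde F_2^{-1}\tilde F_3\tilde F_4^{-1}=\mathbbm{1}$. You argue that the ``$A$-pieces'' $\Tr_X(\tilde\sigma^X E^{(A)}_{ABCD})$ are supported in $\alpha_0$ and the ``$C$-pieces'' in $\gamma_0$, hence commute. This is not correct: the expansional $E^{(A)}_{ABCD}=e^{-(W^{(A)}+\tilde H_{ABCD})}e^{\tilde H_{ABCD}}$ involves the full $\tilde H_{ABCD}$ and is only \emph{approximately} localized near $\alpha_0$ (that is precisely the content of Theorem~\ref{Theo:localityProp}). One can peel off $H_C$ to see $E^{(A)}_{ABCD}=E^{(A)}_{ABD}$, so after tracing $B$ the $A$-piece lives on $AD$; similarly the $C$-piece lives on $D$. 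Their supports overlap on $D$, and there is no reason for them to commute.

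The paper avoids this issue entirely. The point is that the specific \emph{ordering} of the two factors chosen in each approximation of Step~2 (always $A$-piece first, $C$-piece second, i.e.\ $\tilde F_i=a_ic_i$) together with the natural inverse order in Step~3 ($\tilde F_i^{-1}=c_i^{-1}a_i^{-1}$) makes the long product telescope \emph{sequentially}: since $c_1=c_2$, $a_2=a_3$, $c_3=c_4$, $a_1=a_4$,
\[
a_1c_1\cdot c_2^{-1}a_2^{-1}\cdot a_3c_3\cdot c_4^{-1}a_4^{-1}
= a_1(c_1c_1^{-1})(a_2^{-1}a_2)(c_3c_3^{-1})a_1^{-1}=\mathbbm{1},
\]
with every cancellation between \emph{adjacent} factors and no commutation needed. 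Replace your splitting-into-$\mathcal{A}\cdot\mathcal{C}$ paragraph by this one-line observation and the proof is complete.
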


\begin{proof}
Let us focus on the first inequality. Applying to equation \eqref{equa:Step1_1} from Proposition~\ref{Prop:CorrelationDecayStep1} (Step 1)   approximations \eqref{equa:approxStep2_1_1}, \eqref{equa:approxStep2_1_3}, \eqref{equa:approxStep3_1_2} and \eqref{equa:approxStep3_1_4} from Propositions~\ref{Prop:CorrelationDecayStep2} (Step 2) and~\ref{Prop:CorrelationDecayStep3} (Step 3), we get
\begin{align*} 
 \sigma_{AD} \sigma_{D}^{-1} \sigma_{DC} \sigma_{ADC}^{-1} & \approx \Tr_{B}\left( \widetilde{\sigma}^{B} E_{ABCD}^{(A)}\right) \, \Tr_{BC}\left( \widetilde{\sigma}^{BC}  E_{ABCD}^{(C)}\right)\\[2mm]
& \hspace{1cm} \cdot \Tr_{AB}\left( \widetilde{\sigma}_{B} E_{ABCD}^{(C)}\right)^{-1}  \, \Tr_{BC}\left( \widetilde{\sigma}_{BC} E_{ABCD}^{(A)}\right)^{-1}\\[2mm] 
& \hspace{1cm} \cdot \Tr_{AB}\left( \widetilde{\sigma}^{AB}  E_{ABCD}^{(A)}\right) \, \Tr_{B}\left( \widetilde{\sigma}^{B} E_{ABCD}^{(C)}\right)\\[2mm]
& \hspace{1cm} \cdot \Tr_{B}\left( \widetilde{\sigma}^{B} E_{ABCD}^{(C)}\right)^{-1} \, \Tr_{B}\left( \widetilde{\sigma}^{B} E_{ABCD}^{(A)}\right)^{-1}\\[2mm]
 & = \mathbbm{1}\,.
\end{align*} 
To bound the additive error, we use that if $\| a_{i} - b_{i}\| \leq \varepsilon$ and $\| a_{i}\|, \|b_{i} \| \leq \kappa$ for $i=1,2,3,4$, then $\| a_{1}a_{2}a_{3}a_{4} - b_{1} b_{2} b_{3} b_{4}\| \leq 4 \kappa^{3} \varepsilon$. In our case, we can take $\kappa = \left(\mathcal{G}^{(iv)}\right)^2$ by Lemma~\ref{Lemm:boundsPartialtrace}, and $\varepsilon = \mathcal{G}^{(iv)} \mathcal{G}''' \,\left(\xi(\ell) + \frac{\mathcal{G}^{\ell}}{(\lfloor \ell/r \rfloor)!}\right)$ by Step 2 and 3. This yields the first inequality. The argument for the second inequality is analogous.
\end{proof}

We can now finish this section with the proof of the main result.

\begin{proof}[Proof of Theorem~\ref{Theo:1DCorrelationDecayAux}]
We will focus on \eqref{equa:1DCorrelationDecayAux1} and \eqref{equa:1DCorrelationDecayAux2}, since \eqref{equa:1DCorrelationDecayAux3} is completely analogous to the latter. If $\ell \geq r$,  the desired bounds follow directly from Proposition~\ref{Prop:CorrelationDecayStep4}. On the other hand, if $\ell < r$ we can simply estimate
\begin{align*} 
\|\sigma_{AD} \sigma_{D}^{-1} \sigma_{DC} \sigma_{ADC}^{-1}  
  - \mathbbm{1} \| & \leq 1+ \| \sigma_{AD} \sigma_{D}^{-1} \sigma_{DC} \sigma_{ADC}^{-1}\|\,,\\[2mm]
   \| \sigma_{A}  \sigma_{C} \sigma_{AC}^{-1}  
   - \mathbbm{1} \|
   & \leq 1+ \| \sigma_{A}  \sigma_{C} \sigma_{AC}^{-1}\|\,.
\end{align*}
We then apply the expressions given in Proposition~\ref{Prop:CorrelationDecayStep1} (Step 1) to rewrite each expression $\sigma_{AD} \sigma_{D}^{-1} \sigma_{DC} \sigma_{ADC}^{-1}$ and $\sigma_{A}  \sigma_{C} \sigma_{AC}^{-1}$, as a product of four factors. Each of these terms is  bounded by a constant depending only on $r$ and $J$ by Lemma~\ref{Lemm:boundsPartialtrace}.
\end{proof}

\section{Proof of Theorem~\ref{Theo:abelianModelsMarginals}} \label{Appendix:Marginals_AbelianQDM}

Recall that $A_{s}$ and $B_{p}$ are orthogonal projections, and that for every projection $P$, using that $P^{n} = P$ for each $n \geq 1$, 
\begin{equation}\label{equa:exponentialProjection} 
e^{\beta P} = \sum_{n=0}^{\infty} \frac{\beta^{n}}{n!} P^{n} = \mathbbm{1} + \sum_{n=1}^{\infty} \frac{\beta^{n}}{n!} P = \mathbbm{1} + (e^{\beta} - 1)P = (1+\gamma_{\beta}) \, \mathbbm{1} \,\, + \,\, \gamma_{\beta} \,\widetilde{P}  \,.
\end{equation}
where we are denoting $\widetilde{P}:= |G| \, P - \mathbbm{1}$.
\noindent Hence, we can rewrite
\begin{equation*}\label{equa:GibbsStarOp}
    e^{\beta A_{s}}   \,\, = \,\,  (1 + \gamma_{\beta}) \mathbbm{1}  + \gamma_{\beta} \sum_{g \neq 1}  A_{s}(g) \,\, =  \,\, \sum_{g \in G} (\delta_{g,1} + \gamma_{\beta}) A_{s}(g) \,,
\end{equation*} 
where $\delta_{1,g}$ stands for the Kronecker delta. Inserting this expression in the product of the exponentials of star operators, and expanding the product, we get
\begin{equation}\label{equa:ProductGibbsStarOp}
\prod_{s \in \cS_{\mathcal{R}}} e^{\beta_{1}A_{s}} 
=  \prod_{s \in \cS_{\mathcal{R}}} \, \sum_{g \in G} (\delta_{g,1} + \gamma_{\beta}) A_{s}(g) 
= \sum_{(g_{s}) \in G^{\cS_{\mathcal{R}}}} \, \prod_{s \in \cS_{\mathcal{R}}} (\delta_{g_s,1} + \gamma_{\beta}) \prod_{s \in \cS_{\mathcal{R}}} A_{s}(g_s)\,. 
\end{equation}
Analogously, we can derive the formulas
\begin{equation*}\label{equa:GibbsPlaqOp}
    e^{\beta B_{p}}   \,\, = \,\,  (1 + \gamma_{\beta}) \mathbbm{1}  + \gamma_{\beta} \sum_{\chi \neq \mathbf{1}}  B_{p}(\chi) \,\, = \,\,  \sum_{\chi \in \widehat{G}} (\delta_{\chi,\mathbf{1}} + \gamma_{\beta})B_{p}(\chi)  \,,
\end{equation*} 
and
\begin{equation}\label{equa:ProductGibbsPlaqOp}
\prod_{p \in \mathcal{P}_{\cR}} e^{\beta B_{p}} = \prod_{s \in \mathcal{P}_{\cR}} \, \sum_{\chi \in \widehat{G}}\left( \delta_{\chi,\mathbf{1}}+\gamma_{\beta}\right) B_{p}(\chi) = \sum_{(\chi_{p}) \in \widehat{G}^{\mathcal{P}_\cR}} \, \prod_{p \in \mathcal{P}_\cR} (\delta_{\chi_{p}, \mathbf{1}} + \gamma_{\beta}) \,  \prod_{p \in \mathcal{P}_\cR} B_{p}(\chi_{p})
\end{equation}
Combining \eqref{equa:ProductGibbsStarOp} and \eqref{equa:ProductGibbsPlaqOp}
\[ \prod_{s \in \cS_{\mathcal{R}}} e^{\beta_{1}A_{s}} \prod_{p \in \mathcal{P}_{\cR}} e^{\beta B_{p}} = \sum_{(g_{s}) \in G^{\cS_{\mathcal{R}}}} \sum_{(\chi_{p}) \in \widehat{G}^{\mathcal{P}_\cR}} \, \prod_{s \in \cS_{\mathcal{R}}} (\delta_{g_s,1} + \gamma_{\beta}) \, \prod_{p \in \mathcal{P}_\cR} (\delta_{\chi_{p}, \mathbf{1}} + \gamma_{\beta})   \prod_{s \in \cS_{\mathcal{R}}} A_{s}(g_s) \,  \prod_{p \in \mathcal{P}_\cR} B_{p}(\chi_{p})\,.\]
Next, we have to take partial trace over $\cR$ on the previous expression. By linearity, we will obtain a linear combination of summands of the form
\begin{equation}\label{equa:partialTraceSummand}
\Tr_{\mathcal{R}}\left( \prod_{s \in \cS_{\mathcal{R}}} A_{s}(g_s) \,  \prod_{p \in \mathcal{P}_\cR} B_{p}(\chi_{p}) \right)
\end{equation}
We claim now that this last term will be nonzero only if both families $(g_{s})_{s}$ and $(\chi_{p})_{p}$ are constant. To check this claim, let us start by observing what happens if we trace only one edge $e_{0} \in \mathcal{R}$. This edge belongs to exactly two stars, say $s_{0}$ and $s_{0}'$ and two plaquettes, say $p_{0}$ and $p'_{0}$, as in the following picture:
\[
\begin{tikzpicture}

\draw[very thick, gray]  (0,0) -- (3,0) ;
\draw[very thick, gray]  (1,-1) -- (1,1) ;
\draw[very thick, gray]  (2,-1) -- (2,1) ;
\fill[gray, very thick] (1.6,-0.13) -- (1.6,0.13) -- (1.3,0);

\draw (1.5,0.6) node {$p_{0}$}; 
\draw (1.5,-0.6) node {$p_{0}'$}; 
\draw (0.7,0.24) node {$s_{0}$}; 
\draw (2.35,0.24) node {$s_{0}'$}; 

\end{tikzpicture} 
\]
Thus, we can take out from the partial trace the rest of factors that act as the identity on $e$:
\begin{equation}\label{equa:partialTraceSummand2}
\begin{split}
\Tr_{e_{0}}\left( \prod_{s \in \cS_{\cR}} A_{s}(g_s) \,  \prod_{p \in \mathcal{P}_\cR} B_{p}(\chi_{p}) \right) & = \prod_{s \in \cS_{\mathcal{R}}} A_{s}(g_s) \,  \prod_{p \in \mathcal{P}_\cR} B_{p}(\chi_{p}) \, \cdot \\[1mm]
& \hspace{1cm} \cdot \Tr_{e_{0}}\left(A_{s_{0}}(g_{s_{0}}) \cdot A_{s_{0}'}(g_{s_{0}'}) \cdot B_{p_{0}}(\chi_{p_{0}}) \cdot  B_{p_{0}'}(\chi_{p_{0}'}) \right)
\end{split}
\end{equation}
Observe that in the product of the four last operators, the resulting operator acting on the edge $e_{0}$ is
\begin{equation}\label{equa:localTraceEdge0}
L^{g_{s_{0}}} L^{g_{s_{0}'}^{-1}} \Pi_{\chi_{p_{0}}} \Pi_{\overline{\chi}_{p_{0}'}} = \sum_{h \in G} \overline{\chi}_{p_{0}}(h) \chi_{p_{0}}(h) \dyad*{g_{s_{0}} g_{s_{0}'}^{-1}h}{h}\,. 
\end{equation}
Thus, tracing out $e_{0}$ we get
\begin{equation}\label{equa:localTraceEdge}
\Tr_{e}(L^{g_{s_{0}}} L^{g_{s_{0}'}^{-1}} \Pi_{\chi_{p_{0}}} \Pi_{\overline{\chi}_{p_{0}'}}) = \sum_{h \in G} \overline{\chi}_{p_{0}}(h) \chi_{p_{0}}(h) \braket{g_{s_{0}} g_{s_{0}'}^{-1}h}{h}\,. 
\end{equation}
For this element to be nonzero, we need $g_{s_{0}} g_{s_{0}'}^{-1} h = h$ for some $h \in G$, which is equivalent to $g_{s_{0}} = g_{s_{0}'}$. In this case, all the brakets in \eqref{equa:localTraceEdge} are equal to one, and so the trace reduces to
\[  \Tr_{e}(L^{g_{s_{0}}} L^{g_{s_{0}'}^{-1}} \Pi_{\chi_{p_{0}}} \Pi_{\overline{\chi}_{p_{0}'}}) = \sum_{h \in G}\overline{\chi}_{p_{0}}(h) \chi_{p_{0}}(h) = \langle \chi_{p_{0}'}, \chi_{p_{0}} \rangle\,. \]
Using the orthogonality relations of the characters in $\widehat{G}$, we get that the product is zero unless $\chi_{p_{0}} = \chi_{p_{0}'}$. In summary, \eqref{equa:localTraceEdge} and thus \eqref{equa:partialTraceSummand2}, are nonzero if and only if $g_{s_{0}} = g_{s_{0}'}$ and $\chi_{p_{0}} = \chi_{p_{0}'}$. Observe that, in this case, the operator in \eqref{equa:localTraceEdge0} becomes the identity, and thus its trace is equal to $|G|$. In summary, we have that
\begin{multline}\label{eq:local-trace}
    \Tr_{e_{0}}\left(A_{s_{0}}(g_{s_{0}}) \cdot A_{s_{0}'}(g_{s_{0}'}) \cdot B_{p_{0}}(\chi_{p_{0}}) \cdot  B_{p_{0}'}(\chi_{p_{0}'}) \right) \\= \delta_{g_{s_0}, g_{s_0'}} \delta_{\chi_{p_0}, \chi_{p_0'}}
    \abs{G} A_{s_{0}}(g_{s_{0}}) \cdot A_{s_{0}'}(g_{s_{0}}) \cdot B_{p_{0}}(\chi_{p_{0}}) \cdot  B_{p_{0}'}(\chi_{p_{0}}).
\end{multline}

We are going to extend now the previous observation to any connected (by stars and operators) region $\mathcal{R}$, proving that \eqref{equa:partialTraceSummand} is nonzero only if $(g_{s})$ and $(\chi_{p})$ are constant. Indeed, given that $\cR$ is connected by stars, for any two arbitrary stars $s$ and $s'$ in $\cS_{\cR}$, there exists a sequence of stars $s=s_{1}, \ldots, s_{n}=s'$ such that $\emptyset \neq s_{j} \cap s_{j+1} \subset \cR$ for each $j$.  By expressing $\Tr_{\cR} = \Tr_{\cR \setminus \{ e_{j}\}} \Tr_{e_{j}}$, we see that \eqref{equa:partialTraceSummand} is nonzero only if $g_{s_{j}} = g_{s_{j+1}}$ basing on the previous argument for a single edge. But this yields that $g_{s} = g_{s_{1}} = \ldots g_{s_{n}} = g_{s'}$. Since $s$ and $s'$ were arbitrary, we conclude that $(g_{s})$ has to be constant. Similarly, one can verify that \eqref{equa:partialTraceSummand} is nonzero only if  $(\chi_{p})$ is also constant. 

Using the previous fact, we can erase most of summands and only consider those corresponding to constant families in
\[ 
\Tr_{\cR}(\prod_{s \in \cS_{\mathcal{R}}} e^{\beta_{1}A_{s}} \prod_{p \in \mathcal{P}_{\cR}} e^{\beta B_{p}}) = \sum_{g \in G} \sum_{\chi \in \widehat{G}} \,  (\delta_{g,1} + \gamma_{\beta})^{|\cS_{\mathcal{R}}|} \,  (\delta_{\chi, \mathbf{1}} + \gamma_{\beta})^{|\mathcal{P}_\cR|} \,  \Tr_{\cR}( \prod_{s \in \cS_{\mathcal{R}}} A_{s}(g) \,  \prod_{p \in \mathcal{P}_\cR} B_{p}(\chi))\,.\]
Observe that the weights maximize in the case in which $g=1$ and $\chi = \mathbf{1}$. Thus we can split the previous expression into four summands
\begin{align}
\Tr_{\cR}(\prod_{s \in \cS_{\mathcal{R}}} e^{\beta_{1}A_{s}} \prod_{p \in \mathcal{P}_{\cR}} e^{\beta B_{p}}) 
& \label{equa:firstSummand} = (1+\gamma_\beta)^{|\cS_{\mathcal{R}}|} \,  (1 + \gamma_{\beta})^{|\mathcal{P}_\cR|} \,  \Tr_{\cR}( \prod_{s \in \cS_{\mathcal{R}}} A_{s}(1) \,  \prod_{p \in \mathcal{P}_\cR} B_{p}(\mathbf{1}))\\[2mm]
& \label{equa:secondSummand} \quad  + (1+\gamma_{\beta})^{|\cS_{\mathcal{R}}|} \,   \gamma_{\beta}^{|\mathcal{P}_{\cR}|} \,  \sum_{\substack{\chi \in \widehat{G}\\ \chi \neq \mathbf{1}}}\Tr_{\cR}( \prod_{s \in \cS_{\mathcal{R}}} A_{s}(1) \,  \prod_{p \in \mathcal{P}_\cR} B_{p}(\chi)) \\[2mm]
& \label{equa:thirdSummand} \quad + \gamma_{\beta}^{|\cS_{\mathcal{R}}|} \,   (1+\gamma_{\beta})^{|\mathcal{P}_{\cR}|} \,  \sum_{\substack{g \in G\\ g \neq 1}}\Tr_{\cR}( \prod_{s \in \cS_{\mathcal{R}}} A_{s}(g) \,  \prod_{p \in \mathcal{P}_\cR} B_{p}(\mathbf{1})) \\[2mm] 
& \label{equa:fourthSummand} \quad + \gamma_{\beta}^{|\cS_{\cR}|} \,   \gamma_{\beta}^{|\mathcal{P}_{\cR}|} \,  \sum_{\substack{g \in G\\ g \neq 1}} \sum_{\substack{\chi \in \widehat{G}\\ \chi \neq \mathbf{1}}}\Tr_{\cR}( \prod_{s \in \cS_{\cR}} A_{s}(g) \,  \prod_{p \in \mathcal{P}_\cR} B_{p}(\chi)) \,.
\end{align}
In the first summand \eqref{equa:firstSummand}, observe that $A_{s}(1)$ and $B_{p}(\mathbf{1})$ are equal to the identity. Tracing out $\cR$ results in a factor $|G|^{\cR}$ accompanying the identity on the complemenet of $\cR$. If we extract as a common factor $|G|^{\cR}(1+\gamma_{\beta})^{|\cS_\cR|} (1+\gamma_{\beta})^{|\mathcal{P}_\cR|}$ from the four factors, we then obtain
\begin{equation}\label{equa:AbelianMarginalAlmostIdentity}
\Tr_{\cR}(\prod_{s \in \cS_{\mathcal{R}}} e^{\beta_{1}A_{s}} \prod_{p \in \mathcal{P}_{\cR}} e^{\beta B_{p}}) = (1+\gamma_{\beta})^{|\cS_\cR|} (1+\gamma_{\beta})^{|\mathcal{P}_\cR|} \left( \mathbbm{1} + \cS_{2} + \cS_{3} + \cS_{4}\right) 
\end{equation}
where
\[ \cS_{2} = \left(\frac{\gamma_{\beta}}{1+\gamma_{\beta}}\right)^{|\mathcal{P}_{\cR}|} \sum_{\substack{\chi \in \widehat{G}\\ \chi \neq \mathbf{1}}} \frac{1}{|G|^{\cR}}\Tr_{\cR}( \prod_{s \in \cS_{\mathcal{R}}} A_{s}(1) \,  \prod_{p \in \mathcal{P}_\cR} B_{p}(\chi)) 
= \left(\frac{\gamma_{\beta}}{1+\gamma_{\beta}}\right)^{|\mathcal{P}_{\cR}|} \sum_{\substack{\chi \in \widehat{G}\\ \chi \neq \mathbf{1}}} \prod_{p \in \mathcal{P}_\cR} B_{p}(\chi)
\,, \]
\[ \cS_{3} = \left(\frac{\gamma_{\beta}}{1+\gamma_{\beta}}\right)^{|\cS_{\cR}|} \sum_{\substack{g \in G\\ g \neq 1}} \frac{1}{|G|^{\cR}}\Tr_{\cR}( \prod_{s \in \cS_{\mathcal{R}}} A_{s}(g) \,  \prod_{p \in \mathcal{P}_\cR} B_{p}(\mathbf{1}))
= \left(\frac{\gamma_{\beta}}{1+\gamma_{\beta}}\right)^{|\cS_{\cR}|} \sum_{\substack{g \in G\\ g \neq 1}} \prod_{s \in \cS_{\mathcal{R}}} A_{s}(g)
\,, \]
\begin{multline*}
\cS_{4} = \left(\frac{\gamma_{\beta}}{1+\gamma_{\beta}}\right)^{|\cS_{\cR}|+|\mathcal{P}_{\cR}|}  \sum_{\substack{g \in G\\ g \neq 1}} \sum_{\substack{\chi \in \widehat{G}\\ \chi \neq \mathbf{1}}} \frac{1}{|G|^{\cR}}\Tr_{\cR}( \prod_{s \in \cS_{\cR}} A_{s}(g) \,  \prod_{p \in \mathcal{P}_\cR} B_{p}(\chi))\\
=  \left(\frac{\gamma_{\beta}}{1+\gamma_{\beta}}\right)^{|\cS_{\cR}|+|\mathcal{P}_{\cR}|}  \sum_{\substack{g \in G\\ g \neq 1}} \sum_{\substack{\chi \in \widehat{G}\\ \chi \neq \mathbf{1}}} \prod_{s \in \cS_{\cR}} A_{s}(g) \,  \prod_{p \in \mathcal{P}_\cR} B_{p}(\chi)\,.
\end{multline*} 
Recall that operators $A_{s}(g)$ and $B_{p}(\chi)$ commute with each other for every $g \in G$ and $\chi \in \widehat{G}$. Then, we can use the above expressions for $\cS_1$, $\cS_2$ and $\cS_3$ to rewrite \eqref{equa:AbelianMarginalAlmostIdentity} as
\begin{equation}\label{equa:AbelianMarginalAlmostIdentity2}
 \kappa_{\cR} \left( \mathbbm{1} + \left(\frac{\gamma_{\beta}}{1+\gamma_{\beta}}\right)^{|\cS_{\cR}|}\sum_{\substack{g \in G\\ g \neq 1}} \prod_{s \in \cS_{\mathcal{R}}} A_{s}(g)\right)  
\left( \mathbbm{1} + \left( \frac{\gamma_{\beta}}{1+\gamma_{\beta}}\right)^{|\mathcal{P}_{\cR}|}  \sum_{\substack{\chi \in \widehat{G}\\ \chi \neq \mathbf{1}}} \prod_{p \in \mathcal{P}_\cR} B_{p}(\chi) \right) 
\end{equation}
where $\kappa_{\cR}:= |G|^{|\cR|}(1+\gamma_{\beta})^{|\cS_\cR|} (1+\gamma_{\beta})^{|\mathcal{P}_\cR|}$. Finally, let us define
\[ A_{\cS_\cR} := \frac{1}{|G|} \sum_{g \in G} \prod_{s \in \cS_\cR} A_{s}(g) 
\quad , \quad 
B_{\mathcal{P}_\cR} := \frac{1}{|G|} \sum_{\chi \in \hat{G}} \prod_{p \in \mathcal{P}_\cR} B_{p}(\chi)\,. \]
It is easy to see that these are orthogonal projections, as they are self-adjoint and idempotent. We can finally rewrite \eqref{equa:AbelianMarginalAlmostIdentity} in terms of these projections:
\[\textstyle 
\kappa_{\cR} 
\left( \left(1- \left(\frac{\gamma_{\beta}}{1+\gamma_{\beta}}\right)^{|\cS_\cR|}\right) \mathbbm{1} + |G|\left(\frac{\gamma_{\beta}}{1+\gamma_{\beta}}\right)^{|\cS_\cR|} A_{\cS_\cR} \right)
\left( \left(1-\left(\frac{\gamma_{\beta}}{1+\gamma_{\beta}}\right)^{|\cP_\cR|}\right) \mathbbm{1} + |G|\left(\frac{\gamma_{\beta}}{1+\gamma_{\beta}}\right)^{|\cP_\cR|} B_{\cP_\cR} \right)\,.
\]
This finishes the proof.


\section{Proof of Theorem~\ref{thm:marginals-decayQDM}}\label{Appendix:Marginals_GeneralQDM}

For every $\widehat{g}:\cR \to G$ let us denote $\ket{\widehat{g}} := \otimes_{e \in \cR} \ket{\widehat{g}(e)}$. The set of these vector form an orthonormal basis of $\cH_\cR$, and the corresponding orthogonal projections
\begin{equation}\label{equa:ProjectionGroupElements}   
P_{\widehat{g}} := \mathbbm{1}_{\cR^{c}} \otimes \ketbra{\widehat{g}}{\widehat{g}} \,,
\end{equation}
which notice are supported on region $\cR$, satisfy the relation
\begin{equation}\label{equa:completenessProjections} 
\mathbbm{1} = \sum_{\widehat{g} \in G^{\cR}} P_{\widehat{g}}\,. 
\end{equation}
Moreover, notice that these projection are diagonal in the \emph{group} basis as the plaquette operators, so that $B_{p} P_{\widehat{g}} = P_{\widehat{g}} B_{p}$ for every plaquette $p$, and $\Tr_{\cR}(QP_{\widehat{g}}) = \Tr_{\cR}(P_{\widehat{g}}Q) = \Tr_{\cR}(P_{\widehat{g}}QP_{\widehat{g}})$ for every operator $Q$. This allows us to rewrite 
\begin{equation}\label{equa:marginalsFirstDecomp} 
\Tr_{\cR}\left(  \prod_{p \in \cP_\cR} e^{\beta B_{p}} \prod_{s \in \cS_\cR} e^{\beta A_{s}} \right) = \sum_{\widehat{g} \in G^\cR} \Tr_{\cR}\left(  \prod_{p \in \cP_\cR} e^{\beta B_{p}} P_{\widehat{g}} \right) \cdot \Tr_{\cR}\left(  \prod_{s \in \cS_\cR} e^{\beta A_{s}} P_{\widehat{g}} \right) \,. 
\end{equation} 
Observe that the two the factors that appear in each summand are positive semidefinite and commute with each other. We first prove the following alternative expression for them.

\begin{Lemm}\label{Lemm:starOperatorsTraceRewritten}
For every region $\cR$ being connected by stars, and every $\widehat{g}: \cR \to G$
\begin{equation}\label{equa:starOperatorsTraceRewritten} 
\Tr_{\cR}\left(  \prod_{s \in \cS_\cR} e^{\beta A_{s}} P_{\widehat{g}} \right)  = \left( (1+\gamma_{\beta})^{|\cS_\cR|} - \gamma_{\beta}^{|\cS_\cR|} \right) \, \Tr_{\cR}(P_{\widehat{g}}) + (e^{\beta} - 1)^{|\cS_\cR|} \Tr_{\cR}\left( \prod_{s \in \cS_\cR} A_{s} \, P_{\widehat{g}} \right)\,. 
\end{equation}
\end{Lemm}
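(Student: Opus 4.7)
The key observation is that each $A_s$ is a projection, so $e^{\beta A_s} = \mathbbm{1} + (e^\beta-1)A_s$. Expanding the product $\prod_{s \in \cS_\cR} e^{\beta A_s}$ over all subsets $T \subset \cS_\cR$ gives $\sum_T (e^\beta-1)^{|T|}\prod_{s \in T} A_s$, so after multiplication by $P_{\widehat{g}}$ and partial trace the lemma reduces to the local identity
\[
\Tr_\cR\!\left(\prod_{s \in T} A_s\, P_{\widehat{g}}\right) = \frac{1}{|G|^{|T|}}\,\Tr_\cR(P_{\widehat{g}}) \qquad \text{for every } T \subsetneq \cS_\cR.
\]
Once this is proved, the binomial identity $(1+\gamma_\beta)^{|\cS_\cR|}-\gamma_\beta^{|\cS_\cR|}=\sum_{k=0}^{|\cS_\cR|-1}\binom{|\cS_\cR|}{k}\gamma_\beta^k$, combined with $(e^\beta-1)^{|T|} = |G|^{|T|}\gamma_\beta^{|T|}$, collects the $T\subsetneq\cS_\cR$ contributions into the first summand of the claim, while the $T=\cS_\cR$ term produces the second.

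To establish the local identity, I will expand $A_s = |G|^{-1}\sum_g A_s(g)$ in each factor and study each term $\Tr_\cR(\prod_{s\in T}A_s(g_s)P_{\widehat g})$. Since $A_s(g)$ is a tensor product of $L^g$'s and $R^g$'s over the four edges of $s$, the matrix element $\bra{a\otimes\widehat g}\prod_{s \in T}A_s(g_s)\ket{a'\otimes \widehat g}$ factorizes edge-by-edge. For each edge $e\in\cR$ we obtain a diagonal matrix element in the group basis at the value $\widehat{g}(e)$: if $e$ lies in exactly one $s\in T$, this yields the constraint $\delta_{g_s,1}$ (irrespective of whether the action is $L^{g_s}$ or $R^{g_s}$), while if $e$ lies in two stars $s,s'\in T$ it yields a conjugation relation of the form $g_{s'}=\widehat g(e)^{-1}g_s\widehat g(e)$ (or its inverse, depending on orientations). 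Edges $e\in\cR^c$ produce no constraint on $(g_s)$ and combine to give $\mathbbm{1}_{\cR^c}$ once $(g_s)=\mathbf{1}$.

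The combinatorial heart of the proof, and the step I expect to be the main obstacle, is exploiting the star-connectedness of $\cR$ to ensure that only the trivial configuration $(g_s)=(1,\dots,1)$ survives. The plan is to introduce the auxiliary graph on vertex set $\cS_\cR$ whose edges join two stars sharing an edge in $\cR$; star-connectedness of $\cR$ precisely translates to connectedness of this graph. For any $T\subsetneq\cS_\cR$, every connected component of the induced subgraph on $T$ must contain a ``boundary'' $\cR$-edge whose other incident star lies in $\cS_\cR\setminus T$, for otherwise that component would be all of $\cS_\cR$, contradicting $T\subsetneq\cS_\cR$. Such a boundary edge produces a $g_s=1$ constraint in Step~3, and the conjugation constraints then propagate $g_{s'}=1$ throughout the component. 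Thus only $(g_s)=\mathbf 1$ contributes; the corresponding product $\prod_s A_s(1)=\mathbbm{1}$ gives $\Tr_\cR(P_{\widehat g})$, and the overall prefactor $|G|^{-|T|}$ from the expansion of the $A_s$'s yields the local identity, completing the proof.
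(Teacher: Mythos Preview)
Your proposal is correct and follows essentially the same approach as the paper: expand $e^{\beta A_s}=\mathbbm{1}+(e^\beta-1)A_s$, establish $\Tr_\cR\bigl(\prod_{s\in T}A_s\,P_{\widehat g}\bigr)=|G|^{-|T|}\Tr_\cR(P_{\widehat g})$ for every $T\subsetneq\cS_\cR$ via star-connectedness, and collect the proper subsets with the binomial formula. The only cosmetic difference is that the paper proves the key identity by an iterative peeling argument---tracing out a single boundary edge removes one star from the product at the cost of a factor $1/|G|$---rather than expanding all $A_s$'s simultaneously and arguing via your auxiliary graph, but the underlying mechanism (a boundary $\cR$-edge touching exactly one star of $T$ forces $g_s=1$) is the same.
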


\begin{proof}
Since each $A_{p}$ is a projection, we can rewrite  $e^{\beta A_{s}} = \mathbbm{1} + (e^{\beta} - 1) A_{s}$,  and expand
\[ \prod_{s \in \cS_\cR} e^{\beta A_{s}} = \sum_{W \subset \cS_\cR} (e^{\beta} - 1)^{|W|} \prod_{s \in W} A_{s}\,, \]
where, for $W=\emptyset$, the last product is understood to be the identity. Then,
\begin{equation}\label{equa:starOperatorsTraceRewrittenAux1} 
\Tr_{\cR}\left( \prod_{s \in \cS_\cR} e^{\beta A_{s}} \, P_{\widehat{g}} \right) = \sum_{W \subset \cS_\cR}(e^{\beta} - 1)^{|W|} \Tr_{\cR}\left( \prod_{s \in W} A_{s} \, P_{\widehat{g}} \right)\,. 
\end{equation} 
We claim that for every $W \subset \cS_\cR$ with $W \neq \cS_\cR$,
\begin{equation}\label{equa:starOperatorsTraceRewrittenAux2} 
\Tr_{\cR}\left( \prod_{s \in W} A_{s} \, P_{\widehat{g}} \right) = \frac{1}{|G|^{|W|}} \Tr_{\cR}(P_{\widehat{g}}) \,. 
\end{equation}
The claim is obvious if $W=\emptyset$, so let us assume that $\emptyset \subsetneq W \subsetneq \cS_{\cR}$, and fix two stars, one in $W$ and another one not in $W$. By the hypothesis on $\cR$ we can connect them with a sequence of neighbouring stars, that is, we can construct a sequence $s_{0}, \ldots , s_{n}$ such that $s_{0} \in W$, $s_{n} \notin W$ and $ \emptyset \neq s_{j} \cap s_{j+1} \subset \cR$ for every $j=0,1, \ldots n-1$.  Then, there must be  two neighbouring stars $s_{k}, s_{k+1}$ along the sequence such that $s_{k} \in W$, $s_{k+1} \notin W$ and its common edge $e_{k}$ belongs to $\cR$. Actually $s_{k}$  will necessarily be the only star in $W$ containing $e_{k}$. Thus, 
\[ \Tr_{\{e_{k}\}}\left( \prod_{s \in W} A_{s} P_{\widehat{g}} \right)  =  \left(\prod_{s \in W, s \neq s_{k}} A_{s}\right) \, \Tr_{\{e_{k}\}}\left(  A_{s_{k}} P_{\widehat{g}} \right) \,. \]
Finally, observe that if we expand the star $A_{s_{k}} = \frac{1}{|G|} \sum_{g \in G} A_{s_{k}}(g)$, then  since $\Tr(L^{g} \dyad{h}) = \Tr(R^{g} \dyad{h}) = \delta_{g,1}$, we have that
\[ \Tr_{\{e_{k}\}}\left( \prod_{s \in W} A_{s} P_{\widehat{g}} \right)  = \frac{1}{|G|} \Big(\prod_{\substack{s \in W\\ s \neq s_{k}}} A_{s}\Big) \Tr_{\{e_{k}\}}(P_{\widehat{g}}) =  \frac{1}{|G|}  \Tr_{\{e_{k}\}}\left( \prod_{s \in W , s \neq s_{k}} A_{s} P_{\widehat{g}} \right)  \,. \]
Applying $\Tr_{\cR \setminus \{ e_{k}\}}$ to the above expression, we deduce that
\[ \Tr_{\cR}\left( \prod_{s \in W} A_{s} P_{\widehat{g}} \right)  = \frac{1}{|G|}  \Tr_{\cR}\left( \prod_{s \in W , s \neq s_{k}} A_{s} P_{\widehat{g}} \right)  \,. \]
We have erased star $s_{k}$ from the productory at the expenses of adding a factor $1/|G|$. Iterating this process, we can erase every star in $W$ to get \eqref{equa:starOperatorsTraceRewrittenAux2}, proving the claim. 

Next, by applying this identity to all summands of  \eqref{equa:starOperatorsTraceRewrittenAux1}, we get that
\[ \Tr_{\cR}\left( \prod_{s \in \cS_\cR} e^{\beta A_{s}} P_{\widehat{g}} \right) = \sum_{\substack{W \subset \cS_\cR \\ W \neq \cS_\cR}} \gamma_{\beta}^{|W|} \Tr_{\cR}(
P_{\widehat{g}}) + (e^{\beta} - 1)^{|\cS_\cR|} \Tr_{\cR}\left( \prod_{s \in \cS_\cR} A_{s} P_{\widehat{g}} \right) \,. \]
Finally, applying the binomial formula to the first summatory leads to \eqref{equa:starOperatorsTraceRewritten}. 
\end{proof}

\begin{Lemm}\label{Lemm:starOperatorsTraceRewritten2}
For every region $\cR$ connected by stars and plaquettes and every $\widehat{g}: \cR \to G$, we have
\begin{equation}\label{equa:starOperatorsTraceRewritten2} 
0\leq \Tr_{\cR}\left( \prod_{s \in \cS_\cR} A_{s} \, P_{\widehat{g}} \right) \leq \frac{|G|}{|G|^{|\cS_\cR|}} \, \mathbbm{1}\,.
\end{equation}
\end{Lemm}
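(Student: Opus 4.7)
The plan is to prove both inequalities by directly expanding the product of star operators in the group-element basis and exploiting the structure of what survives after partial trace on $\cR$.

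For the lower bound, note that $\prod_{s} A_s$ is a product of commuting orthogonal projections (as observed earlier in the paper), hence is itself an orthogonal projection, and in particular is positive. Since $P_{\widehat{g}}$ is supported in $\cR$, cyclicity of the partial trace over $\cR$ allows one to write
\[
\Tr_{\cR}\Bigl(\textstyle \prod_{s\in\cS_\cR} A_s\, P_{\widehat g}\Bigr) = \Tr_{\cR}\Bigl(P_{\widehat g}\textstyle \prod_{s\in\cS_\cR} A_s\, P_{\widehat g}\Bigr),
\]
and the right-hand side is the partial trace of a positive operator (a conjugation of a projection), hence is positive.

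For the upper bound, expand $A_s = |G|^{-1}\sum_{g\in G} A_s(g)$, so that
\[
\Tr_{\cR}\Bigl(\textstyle \prod_{s\in\cS_\cR} A_s\, P_{\widehat g}\Bigr) = \frac{1}{|G|^{|\cS_\cR|}} \sum_{(g_s)\in G^{\cS_\cR}} \Tr_{\cR}\Bigl(\textstyle \prod_{s\in\cS_\cR} A_s(g_s)\, P_{\widehat g}\Bigr).
\]
Since each $A_s(g_s)$ factorizes over the four edges of $s$ as a product of $L^{g_s}$'s and $R^{g_s}$'s, the operator $\prod_s A_s(g_s) P_{\widehat g}$ is a tensor product over edges. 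I would then analyze edge by edge: for each edge $e\in\cR$ the two adjoining stars both belong to $\cS_\cR$, so the partial trace over $e$ against $\ketbra{\widehat g(e)}{\widehat g(e)}$ produces a scalar $\bra{\widehat g(e)} L^{g_{s_1}} R^{g_{s_2}}\ket{\widehat g(e)}$ (with orientations depending on $e$), which is either $0$ or $1$ and imposes a deterministic constraint of the form $g_{s_2} = \widehat g(e)^{-1} g_{s_1}\,\widehat g(e)$; for each edge $e\notin\cR$, the corresponding factor is either the identity or a product of at most two left/right regular representation operators, which is unitary and hence has operator norm one. Consequently each summand has operator norm $\le 1$.

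The final ingredient is a counting argument: since $\cR$ is connected by stars, any two elements of $\cS_\cR$ are linked by a chain of stars whose consecutive pairs share an edge inside $\cR$, and each such shared edge (when the scalar is nonzero) determines $g_{s_{j+1}}$ from $g_{s_j}$. Fixing $g_s$ on any single reference star therefore determines all other $g_{s'}$, so at most $|G|$ of the $|G|^{|\cS_\cR|}$ summands contribute a nonzero scalar. Combining this with the unit norm bound on each summand yields
\[
\Bigl\|\Tr_{\cR}\bigl(\textstyle \prod_{s} A_s\, P_{\widehat g}\bigr)\Bigr\|_\infty \le \frac{|G|}{|G|^{|\cS_\cR|}},
\]
and since the operator is positive this is equivalent to the operator inequality in \eqref{equa:starOperatorsTraceRewritten2}.

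The main obstacle I anticipate is the careful bookkeeping with edge orientations: making sure that the local computation $\bra{\widehat g(e)} L^{g_{s_1}} R^{g_{s_2}}\ket{\widehat g(e)}$ really gives a single deterministic relation between $g_{s_1}$ and $g_{s_2}$ regardless of whether $e$ is a horizontal or vertical edge, and verifying that the resulting ``propagation along shared edges'' respects the connectedness-by-stars hypothesis consistently—so that the counting of at most $|G|$ nonzero families is rigorous and not an over-count due to potential obstructions when cycles of shared edges impose additional constraints (which can only decrease the count, and hence are harmless for an upper bound).
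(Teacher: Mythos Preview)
Your proposal is correct and follows essentially the same approach as the paper: expand $\prod_s A_s$ into a sum over $(g_s)\in G^{\cS_\cR}$, observe that tracing over each $e\in\cR$ forces the conjugation relation $g_{s_2}=\widehat g(e)^{-1}g_{s_1}\widehat g(e)$ between adjacent stars, and then use connectedness by stars to conclude that at most $|G|$ families survive. The only cosmetic difference is in the final bound: the paper repackages the surviving summands as an orthogonal projection $A_\cR(\widehat g)$ and uses $A_\cR(\widehat g)\le\mathbbm{1}$, whereas you bound each summand in norm by $1$ via the unitarity of the boundary factors and apply the triangle inequality---slightly more direct, but equivalent.
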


\begin{proof}
 The first inequality obvious, so we focus on the second one. If we expand each factor $A_{s} = \frac{1}{|G|} \sum_{a \in G} A_{s}(a)$, we can rewrite 
\[ \textstyle \Tr_{\cR}\left( P_{\widehat{g}} \prod_{s \in \cS_\cR} A_{s}  \right) = \sum_{\widehat{a}: \cS_\cR \to G}\Tr_{\cR}\left( P_{\widehat{g}} \prod_{s \in \cS_\cR}A_{s}(\widehat{a}(s)) \right)\,. \] 
In each summand of the previous expression, since $P_{\widehat{g}}$ is a one-dimensional orthogonal projection when considered on region $\cR$, the trace over $\cR$ will be either zero or one. Actually, since $ \prod_{s \in \cS_\cR}A_{s}(\widehat{a}(s))$ is a tensor product of operators acting on each edge, we can calculate the trace over $\cR$ by examining the trace over each edge. Observe that if we take $e \in \cR$, and denote by $s_{1}$ and $s_{2}$ the two stars of $\cS_\cR$ containing $e$, then the trace of the corresponding tensor factor is
\[
\Tr_{e}\left( L^{\widehat{a}(s_{1})} R^{\widehat{a}(s_{2})^{-1}} \dyad{\widehat{g}(e)} \right)  = \braket{\widehat{g}(e)|\widehat{a}(s_1) \, \widehat{g}(e) \, \widehat{a}(s_2)^{-1}}\,.
\]
Therefore, the trace on site $e$ will be one (i.e. nonzero) if and only if the group elements corresponding to stars $s_{1}$ and $s_{2}$ are related via conjugation with $\widehat{g}(e)$:
\begin{equation}\label{equa:conjugationCondition} 
\widehat{a}(s_{2}) =  \widehat{g}(e)^{-1} \widehat{a}(s_{1}) \widehat{g}(e)\,.  
\end{equation} 
As a consequence, the trace over $\cR$ will be one (i.e. nonzero) if, and only if, the elements of $\widehat{a}$ and $\widehat{g}$ satisfy the compatibility condition \eqref{equa:conjugationCondition} for every pair of neighbouring stars $s_{1}$ and $s_{2}$ with common edge $e$. Fixed any $\widehat{g}: \cR \to G$, let us denote by $S_{\widehat{g}}$ the set of elements $\widehat{a}$ that satisfy this property. We can then prove the following properties:
\begin{enumerate}
\item[$(i)$] \emph{Considering $G^{\cS_\cR}$ as a group with the pointwise product, we have that $S_{\widehat{g}}$ is a subgroup}.  Indeed, for any two elements $\widehat{a}, \widehat{b}$ in $S_{\widehat{g}}$, we have that $\widehat{c}:=\widehat{a} \cdot \widehat{b}$ also belongs to $S_{\widehat{g}}$, since
\begin{align*} 
\widehat{g}(e)^{-1} \widehat{c}(s_{1}) \widehat{g}(e) 
& = \widehat{g}(e)^{-1} \widehat{a}(s_{1}) \widehat{b}(s_{1}) \widehat{g}(e)\\ 
& = \widehat{g}(e)^{-1} \widehat{a}(s_{1})  \widehat{g}(e) \widehat{g}(e)^{-1} \widehat{b}(s_{1}) \widehat{g}(e)\\ 
& = \widehat{a}(s_{2}) \widehat{b}(s_{2})\\ 
& = \widehat{c}(s_{2})\,, 
\end{align*}
for every pair of neighbouring stars $s_{1}, s_{2} \in \cS_\cR$ with common edge $e \in \cR$. It is also clear that the identity  $\widehat{1} \in G^{\cS_\cR}$ also satisfies the compatibility condition and so belongs to $S_{\widehat{g}}$. Finally, if $\widehat{a} \in S_{\widehat{g}}$, then its inverse $\widehat{a}^{-1}$, given by $\widehat{a}^{-1}(h) = \widehat{a}(h)^{-1}$ for every $h \in G$, also belongs to $S_{\widehat{g}}$, since
\[ \widehat{g}(e)^{-1} \widehat{a}^{-1}(s_{1}) \widehat{g}(e) = \widehat{g}(e)^{-1} \widehat{a}(s_{1})^{-1} \widehat{g}(e) = ( \widehat{g}(e)^{-1} \widehat{a}(s_{1}) \widehat{g}(e) )^{-1} = \widehat{a}(s_{2})^{-1} = \widehat{a}^{-1}(s_{2}) \,. \]
\item[$(ii)$] \emph{The subgroup $S_{\widehat g}$ contains at most $|G|$ elements}. It is enough to prove that fixed any star $s_{0} \in \cS_\cR$, if two elements $\widehat{a}$ and $\widehat{b}$ of $S_{\widehat g}$  coincide on $s_{0}$, then they must coincide on every star $s \in \cS_\cR$. To check this, let us fix an arbitrary star $s'$. Since $\cR$ is connected by stars, we can find a path of stars $s_{0}, s_{1}, \ldots, s_{n-1}, s_{n} = s'$ such that $s_{j} \cap s_{j+1} = \{ e_{j}\} \subset \cR$. Then, using the compatibility condition on each pair of consecutive stars, we get 
\[ \widehat{a}(s_{n}) = \widehat{g}(e_{n})^{-1} \widehat{a}(s_{n-1}) \widehat{g}(e_{n}) = \ldots  =  (\widehat{g}(e_{n})^{-1} \cdot \ldots \cdot \widehat{g}(e_{1})^{-1}) \, \widehat{a}(s_{0}) \, (\widehat{g}(e_{1}) \cdot \ldots \cdot \widehat{g}(e_{n}))\,.  \]
A similar formula holds for $\widehat{b}$, and since $\widehat{a}(s_{0}) = \widehat{b}(s_{0})$, we conclude that $\widehat{a}(s_{n}) = \widehat{b}(s_{n})$.
\item[$(iii)$] \emph{The operator given by
\[ A_{\cR}(\widehat{g}) := \frac{1}{|S_{\widehat{g}}|} \sum_{\widehat{a} \in S_{\widehat{g}}} \prod_{s \in \cS_\cR} A_{s}(\widehat{a}(s))  \]
defines an orthogonal projection}. To verify that it is Hermitian, notice that star operators satisfy $A_{s}(h)^{\dagger} = A_{s}(h^{-1})$, and also that $\widehat{a} \in S_{\widehat{g}}$ if, and only if, $\widehat{a}^{-1} \in S_{\widehat{g}}$. To show that it is idempotent, note that this follows from the fact that star operators satisfy $A_{s}(h) A_{s}(h') = A_{s}(hh')$.
\end{enumerate}
Using the above properties, we can rewrite and upper estimate 
\[ \textstyle \Tr_{\cR}\left(  P_{\widehat{g}} \prod_{s \in \cS_\cR} A_{s}\right) = \frac{|S_{\widehat{g}}|}{|G|^{|\cS_\cR|}}  \Tr_{\cR}\left( P_{\widehat{g}} A_{\cR}(\widehat{g})\right) \leq \frac{|S_{\widehat{g}}|}{|G|^{|\cS_\cR|}} \, \Tr_{\cR}(P_{\widehat{g}}) \leq \frac{|G|}{|G|^{|\cS_\cR|}} \, \Tr_{\cR}(P_{\widehat{g}}) \,. \]
Applying this upper estimate to \eqref{equa:starOperatorsTraceRewritten} from Lemma~\ref{Lemm:starOperatorsTraceRewritten}, we conclude the second inequality.
\end{proof}

Finally, we need the following auxiliary result. 

\begin{Lemm}\label{Lemm:plaquetteOperatorsTraceRewritten}
For every region $\cR$ connected by plaquettes, it holds that
\begin{equation}\label{equa:plaquetteOperatorsTraceRewritten}
 \Tr_{\cR}\left( \prod_{p \in \cP_\cR} e^{\beta B_{p}} \right) = \left( (1+\gamma_{\beta})^{|\cP_\cR|} - \gamma_{\beta}^{|\cP_\cR|} \right) \Tr_{\cR}(\mathbbm{1}) + (e^{\beta} - 1)^{|\cP_\cR|} \Tr_{\cR}\left( \prod_{p \in \cP_\cR} B_{p}\right)\,.
\end{equation}
\end{Lemm}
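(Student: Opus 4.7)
The plan is to mirror the strategy of Lemma~\ref{Lemm:starOperatorsTraceRewritten}: expand each exponential and reduce the statement to a combinatorial identity about tracing out products of plaquette projections, with the plaquette-connectedness of $\cR$ playing the role that star-connectedness played before. Since each $B_{p}$ is a projection, we have $e^{\beta B_{p}} = \unit + (e^{\beta}-1)B_{p}$, so expanding the product over $p \in \cP_{\cR}$ gives
\[
\prod_{p \in \cP_{\cR}} e^{\beta B_{p}} = \sum_{W \subset \cP_{\cR}} (e^{\beta}-1)^{|W|} \prod_{p \in W} B_{p}.
\]
Applying $\Tr_{\cR}$ and splitting off the unique term with $W = \cP_{\cR}$, the identity (\ref{equa:plaquetteOperatorsTraceRewritten}) will follow from the binomial identity $\sum_{W \subset \cP_{\cR}} \gamma_{\beta}^{|W|} = (1+\gamma_{\beta})^{|\cP_{\cR}|}$ once we establish the sublemma
\[
\Tr_{\cR}\!\left( \prod_{p \in W} B_{p} \right) = \frac{1}{|G|^{|W|}}\, \Tr_{\cR}(\unit) \qquad \text{for every } W \subsetneq \cP_{\cR}. \tag{$\ast$}
\]

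To prove $(\ast)$ I would proceed by induction on $|W|$. The base case $W=\emptyset$ is immediate. For the inductive step, since $W \subsetneq \cP_{\cR}$ and $\cR$ is connected by plaquettes, we can find a pair of neighbouring plaquettes $p_{k}, p_{k+1} \in \cP_{\cR}$ with $p_{k} \in W$, $p_{k+1} \notin W$, and common edge $e_{k} \in \cR$. Because every edge belongs to exactly two plaquettes, $p_{k}$ is the \emph{unique} plaquette of $W$ containing $e_{k}$. The key local computation is that the partial trace of $B_{p_{k}}$ over $e_{k}$ is the identity on the remaining three edges of $p_{k}$: using the expression \eqref{equa:plaquetteOperatorDefinition} together with $\chi^{\mathrm{reg}}(h) = |G|\delta_{h,1}$, $B_{p_{k}}$ is the projection onto configurations $(g_1, g_2, g_3, g_4)$ with $g_1 g_2 g_3^{-1} g_4^{-1} = 1$; for any fixed values of the three edge variables not equal to $e_k$, there is exactly one $g$ on $e_{k}$ satisfying the constraint, so
\[
\Tr_{e_{k}}(B_{p_{k}}) = \unit_{p_{k} \setminus \{e_{k}\}}.
\]
Since all $B_{p}$ are diagonal in the group basis and hence commute, and since no other plaquette of $W$ involves $e_{k}$, this yields
\[
\Tr_{e_{k}}\!\left( \prod_{p \in W} B_{p} \right) = \prod_{p \in W, p \neq p_{k}} B_{p},
\]
whereas the same product (which does not involve $e_{k}$) satisfies $\Tr_{e_{k}}\!\left( \prod_{p \in W \setminus \{p_{k}\}} B_{p} \right) = |G| \prod_{p \in W \setminus \{p_{k}\}} B_{p}$. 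Comparing the two after applying $\Tr_{\cR \setminus \{e_{k}\}}$ gives the one-step reduction
\[
\Tr_{\cR}\!\left( \prod_{p \in W} B_{p} \right) = \frac{1}{|G|}\, \Tr_{\cR}\!\left( \prod_{p \in W \setminus \{p_{k}\}} B_{p} \right),
\]
and since $W \setminus \{p_{k}\} \subsetneq \cP_{\cR}$ the inductive hypothesis completes the proof of $(\ast)$.

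The main obstacle is simply carrying out the local trace calculation $\Tr_{e_{k}}(B_{p_{k}}) = \unit_{p_{k} \setminus \{e_{k}\}}$ cleanly in the general (possibly non-abelian) setting; once that is in place, the rest of the argument is a verbatim reorganisation of the star-operator proof. Finally, combining $(\ast)$ with the expansion above yields
\[
\Tr_{\cR}\!\left( \prod_{p \in \cP_{\cR}} e^{\beta B_{p}} \right) = \Tr_{\cR}(\unit) \sum_{W \subsetneq \cP_{\cR}} \gamma_{\beta}^{|W|} + (e^{\beta}-1)^{|\cP_{\cR}|} \Tr_{\cR}\!\left(\prod_{p \in \cP_{\cR}} B_{p}\right),
\]
and since $\sum_{W \subsetneq \cP_{\cR}} \gamma_{\beta}^{|W|} = (1+\gamma_{\beta})^{|\cP_{\cR}|} - \gamma_{\beta}^{|\cP_{\cR}|}$, this is exactly the claimed identity.
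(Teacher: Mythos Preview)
Your proof is correct and follows essentially the same route as the paper: expand each $e^{\beta B_p}$ as $\unit+(e^\beta-1)B_p$, reduce to the claim $\Tr_\cR\bigl(\prod_{p\in W}B_p\bigr)=|G|^{-|W|}\Tr_\cR(\unit)$ for $W\subsetneq\cP_\cR$, and prove this by peeling off one plaquette at a time using connectedness to find an edge $e_k\in\cR$ lying in exactly one plaquette of $W$. Your local computation $\Tr_{e_k}(B_{p_k})=\unit_{p_k\setminus\{e_k\}}$ is exactly the paper's $\Tr_{e_0}(B_{p_0})=\tfrac{1}{|G|}\Tr_{e_0}(\unit)$, and the induction you set up is what the paper phrases as ``iterating this process''.
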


\begin{proof}
We expand each factor $e^{\beta B_{p}} = \mathbbm{1} + (e^{\beta} - 1) B_{p}$, as in the proof of Lemma~\ref{Lemm:starOperatorsTraceRewritten}, to get
\begin{equation}\label{equa:plaquetteOperatorsTraceRewrittenAux1}  
\Tr_{\cR}\left( \prod_{p \in \cP_\cR} e^{\beta B_{p}} \right) = \sum_{W \subset \cP_\cR} (e^{\beta} - 1)^{|W|}\Tr_{\cR}\left( \prod_{p \in W} B_{p} \right)\,, 
\end{equation}
where, for $W=\emptyset$, the last product is understood to be the identity. We claim that if $W \neq \cP_\cR$, then
\begin{equation}\label{equa:plaquetteOperatorsTraceRewrittenAux2} 
\Tr_{\cR}\left( \prod_{p \in W} B_{p} \right) = \frac{1}{|G|^{|W|}}\Tr_{\cR}\left[ \mathbbm{1} \right] \,.
\end{equation}
Since $\cR$ is connected by plaquettes, there exists an edge $e_{0} \in \cR$ such that exactly one of the two plaquettes containing $e_{0}$ belongs to $W$, say $p_{0}$, so that
\[ \Tr_{\cR}\left( \prod_{p \in W} B_{p}\right) = \Tr_{\cR \setminus \{ e_{0}\}}\left( \Tr_{\{e_{0}\}}\left( \prod_{p \in W} B_{p} \right)\right) = \Tr_{\cR \setminus \{ e_{0}\}}\left( \prod_{\substack{p \in W \\ p \neq p_{0}}} B_{p} \, \Tr_{e_{0}}\left( B_{p_{0}}\right) \right) \,. \]
At this point, we use that $\Tr_{e_{0}}(B_{p_{0}}) = \frac{1}{|G|} \Tr_{e_{0}}(\mathbbm{1})$, which can be easily verified using the explicit description of $B_{p_{0}}$. This leads to
\[ \Tr_{\cR}\left( \prod_{p \in W} B_{p}\right)  = \frac{1}{|G|}\Tr_{\cR}\left( \prod_{p \in W, p \neq p_{0}} B_{p}\right)\,.  \]
Iterating this process, we can \emph{erase} every element of $W$ at the expenses of adding a scalar factor $1/|G|$, and conclude that the claim \eqref{equa:plaquetteOperatorsTraceRewrittenAux2} holds. Replacing this identity in \eqref{equa:plaquetteOperatorsTraceRewrittenAux1}, we can rewrite
\begin{align*}
\Tr_{\cR}\left( \prod_{p \in \cP_\cR} e^{\beta B_{p}} \right)& = \sum_{W \subset \cP_\cR, W \neq \cP_\cR} \gamma_{\beta}^{|W|}\Tr_{\cR}\left( \mathbbm{1} \right)  \, +\, (e^{\beta} - 1)^{|\cP_\cR|}\Tr_{\cR}\left( \prod_{p \in \cP_\cR} B_{p} \right)\\[2mm]
 & = \left( (1+\gamma_{\beta})^{|\cP_\cR|}  - \gamma_{\beta}^{|\cP_\cR|}\right) \, \Tr_{\cR}(\mathbbm{1}) \, +\, (e^{\beta} - 1)^{|\cP_\cR|} \Tr_{\cR}\left( \prod_{p \in \cP_\cR} B_{p} \right)\,,
\end{align*}
where in the last identity we have used the binomial formula.
\end{proof}

\begin{Lemm}\label{Lemm:plaquetteOperatorsTraceEstimates}
For every region $\cR$ being connected by plaquettes, it holds that
\begin{equation}\label{equa:plaquetteOperatorsTraceEstimates}
0 \leq \Tr_{\cR}\left( \prod_{p \in \cP_\cR} B_{p}\right) \leq \frac{|G|^{|\cR|+1}}{|G|^{|\cP_\cR|}} \mathbbm{1}
\end{equation}
\end{Lemm}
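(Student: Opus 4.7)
The plan is to combine the monotonicity of partial trace on positive operators with the iterative reduction argument that underlies Lemma~\ref{Lemm:plaquetteOperatorsTraceRewritten}. The lower bound is immediate: the plaquette operators $B_{p}$ are mutually commuting orthogonal projections (all diagonal in the group basis), so $\prod_{p \in \cP_\cR} B_{p}$ is itself a positive semidefinite projection, and partial tracing preserves positivity.

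For the upper bound, the key observation is that the iterative trick used in Lemma~\ref{Lemm:plaquetteOperatorsTraceRewritten} — removing one plaquette at a time while picking up a factor $1/|G|$ — applies to any proper subset $W \subsetneq \cP_\cR$, but cannot start at $W = \cP_\cR$ itself, because every edge of $\cR$ belongs to exactly two plaquettes of $\cP_\cR$. To bypass this obstruction, I would fix an arbitrary plaquette $p_{0} \in \cP_\cR$ and exploit the operator inequality
\[ 0 \leq \prod_{p \in \cP_\cR} B_{p} \, \leq \, \prod_{p \in \cP_\cR \setminus \{p_{0}\}} B_{p}, \]
which follows from the fact that the $B_{p}$ are commuting orthogonal projections (so $P_{1}P_{2} \leq P_{1}$ whenever $[P_{1},P_{2}]=0$ and $P_{2}$ is a projection). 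Applying $\Tr_\cR$, which is positive and linear, preserves the inequality.

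Now the set $W := \cP_\cR \setminus \{p_{0}\}$ is a strict subset of $\cP_\cR$, so I can invoke exactly the computation carried out in the proof of Lemma~\ref{Lemm:plaquetteOperatorsTraceRewritten}: by repeatedly finding an edge $e \in \cR$ and a plaquette $p \in W$ such that $p$ is the only plaquette of $W$ containing $e$ (which exists by the plaquette-connectedness of $\cR$, using that $p_{0} \in \cP_\cR \setminus W$ together with a path from $p_{0}$ toward any plaquette of $W$), we iteratively reduce $W$ to the empty set, gaining a factor $1/|G|$ at each step. This gives
\[ \Tr_\cR\!\left( \prod_{p \in \cP_\cR \setminus \{p_{0}\}} B_{p} \right) = \frac{1}{|G|^{|\cP_\cR|-1}} \Tr_\cR(\mathbbm{1}) = \frac{|G|^{|\cR|}}{|G|^{|\cP_\cR|-1}}\, \mathbbm{1} = \frac{|G|^{|\cR|+1}}{|G|^{|\cP_\cR|}}\, \mathbbm{1}, \]
and combining with the operator inequality above yields \eqref{equa:plaquetteOperatorsTraceEstimates}. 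There is no real obstacle here — the only subtlety is recognizing that dropping one plaquette is precisely what allows the iterative reduction of the previous lemma to start, and that this loss is absorbed into the extra factor of $|G|$ appearing on the right-hand side of the bound.
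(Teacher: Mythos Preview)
Your proof is correct and is in fact cleaner than the paper's own argument. The paper proves the upper bound by expanding $\prod_{p\in\cP_\cR} B_p$ in the group basis, rewriting the partial trace as a sum over configurations $\widehat{h}\in G^{\overline{\cR}}$ weighted by the product of constraints $\prod_p \delta_1(\widehat{h}_p)$, and then counting directly how many such configurations satisfy all plaquette constraints for a fixed boundary assignment; the counting uses an iterative simplification via the identity $\sum_{h} \delta_1(uhu')\delta_1(vh^{-1}v') = \delta_1(u'uv'v)$ to show there are at most $|G|^{|\cR|-|\cP_\cR|+1}$ such configurations. You bypass this entire combinatorial step by observing that, since the $B_p$ are commuting orthogonal projections, dropping any single factor $B_{p_0}$ only increases the operator, and the partial trace of the resulting product over the proper subset $\cP_\cR\setminus\{p_0\}$ was already computed exactly as $|G|^{|\cR|}/|G|^{|\cP_\cR|-1}\,\mathbbm{1}$ inside the proof of Lemma~\ref{Lemm:plaquetteOperatorsTraceRewritten}. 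Your route is shorter and reuses existing work; the paper's direct count is more self-contained and makes the combinatorial origin of the extra factor $|G|$ (one redundant constraint among the $|\cP_\cR|$ plaquette conditions) more explicit.
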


\begin{proof}
The first inequality is obvious, so let us focus on the upper bound. Let $\overline{\cR}$ denote the union of all plaquettes in $\cP_\cR$. We can use  the completeness relation \eqref{equa:completenessProjections} on $\overline{\cR}$ to rewrite
\begin{equation*} 
 \prod_{p \in \cP_\cR} B_{p} = \left(  \prod_{p \in \cP_\cR} B_{p} \right) \sum_{\widehat{h} \in  G^{\overline{\cR}}} P_{\widehat{h}}   = \sum_{\widehat{h} \in  G^{\overline{\cR}}} \,\, \left(\prod_{p \in \cP_\cR} B_{p}\right) P_{\widehat{h}} \,. 
\end{equation*}
Recalling the definition of the plaquette operator from \eqref{equa:plaquetteOperatorDefinition}, where $\delta_{1}: G \to \{ 0,1\}$ is the characteristic function taking the value one at $1$ and zero elsewhere, we know that $B_{p}$ acts diagonally on the computational basis. Therefore, each $B_{p}$ satisfies
\begin{align*} 
\begin{tikzpicture}[equation]   
  \draw[step=1.0,gray,thin] (-0.5,-0.5) grid (1.5,1.5);
  \draw[ultra thick] (0,0) -- (1, 0) -- (1,1) -- (0,1) -- (0,0);
\draw (0.5, 0.5) node {$p$};
\draw (0.5, 1.4) node {$e_{1}$};
\draw (-0.3, 0.5) node {$e_{2}$};
\draw (0.5, -0.4) node {$e_{3}$};
\draw (1.3, 0.5) node {$e_{4}$};
\end{tikzpicture}
\hspace{1cm}
B_{p} P_{\widehat{h}} = \delta_{1}(\, \widehat{h}_{p}) P_{\, \widehat{h}} \quad \mbox{where} \quad \widehat{h}_{p}:=\widehat{h}(e_{1}) \, \widehat{h}(e_{2}) \, \widehat{h}(e_{3})^{-1} \, \widehat{h}(e_4)^{-1} \,.
\end{align*}
with $e_{1}, e_{2}, e_{3}, e_{4}$ being the edges forming the plaquette $p$ (ordered counterclockwise starting from the top, as in the diagram above). Therefore, we can simplify
\begin{equation*} 
 \prod_{p \in \cP_\cR} B_{p}  = \sum_{\widehat{h} \in G^{\overline{\cR}}} \,\, \left(\prod_{p \in \cP_\cR} \delta_{1}( \widehat{h}_{p})\right)  P_{\, \widehat{h}}\,. 
\end{equation*}
Next, we apply the partial trace $\Tr_{\cR}(\cdot)$ on the previous expression. Notice that $\Tr_{\cR}(P_{\widehat{h}})$ is a projection supported on $\partial \cR:=\overline{\cR} \setminus \cR$. Specifically,
\[ \textstyle \Tr_{\cR}\left( P_{\,\widehat{h}}\right) = \Tr_{\cR}\left( \bigotimes_{e \in \overline{\cR}} \ket{\widehat{h}(e)}\bra{\widehat{h}(e)} \right) = \bigotimes_{e \in \partial \cR} \ket{\widehat{h}(e)}\bra{\widehat{h}(e)} = P_{\, \widehat{h}|_{\partial \cR}}\,. \]
By grouping together summands $\widehat{h}$ that coincide on $\partial{\cR}$, we then get
\begin{equation}\label{equa:plaquetteOperatorsTraceRewrittenAux4} 
\Tr_{\cR}\left( \prod_{p \in \cP_\cR} B_{p}\right) =  \sum_{\widehat{f} \in G^{\partial \cR}}\,\, \left(\, \sum_{\widehat{h} \in G^{\overline{\cR}} \colon \widehat{h}|_{\partial \cR} = \widehat{f}} \,\,\, \prod_{p \in \cP_\cR} \delta_{1}(\, \widehat{h}_{p}) \right) P_{\widehat{f}}\,.
\end{equation}
Therefore, we can estimate its supremum norm by
\[  \left\| \Tr_{\cR}\left( \prod_{p \in \cP_\cR} B_{p}\right)  \right\|_{\infty} \leq \sup_{\widehat{f} \in G^{\partial \cR}} \,  \left( \sum_{\widehat{h} \in G^{\overline{\cR}} \colon \widehat{h}|_{\partial \cR} = \widehat{f}} \,\,\, \prod_{p \in \cP_\cR} \delta_{1}(\, \widehat{h}_{p})  \right)\,. \]
Notice that in each sum, we are fixing some boundary conditions $\widehat{h}(e)$ for $e \in \partial \cR$, so the sum amounts to assigning an element $\widehat{h}(e) \in G$ to every edge $e \in \cR$. This can be roughly estimated by
\[ \sum_{\widehat{h} \in G^{\cR}} \, \prod_{p \in \cP_\cR} \delta_{1}(\, \widehat{h}_{p}) \leq \sum_{\widehat{h} \in G^{\cR}} 1 = |G|^{|\cR|}\,. \]
However, this counting does not take into account that for each $p \in \cP_\cR$ there is a constraint $\widehat{h}_{p}=1$. Naively speaking, each constraint reduces the degrees of freedom by a factor of $|G|$ if they are independent, but since the region is connected by plaquettes, there is one redundant constrain. Therefore, we expect that
\[ \sum_{\widehat{h} \in G^{\cR}} \,\, \prod_{p \in \cP_\cR} \delta_{1}(\, \widehat{h}_{p})    \leq \frac{|G|^{|\cR|}}{|G|^{|\cP_\cR|-1}}. \]
To prove this, we just need the following easy observation on the delta function $\delta_{1}$:
\begin{equation}\label{equa:plaquetteOperatorsTraceRewrittenAux5}
\sum_{h \in G} \delta_{1}(uhu') \delta_{1}(vh^{-1}v') = \delta_{1}(u'uv'v) \quad \quad \mbox{for every $u,u',v,v' \in G$}\,.
\end{equation}
Let us consider first the case in which $\cR$ consists of only one edge $\cR=\{e_1\}$, and $\cP_\cR =\{ p_0,p_1\}$. If we fix the values of $\widehat{h}(e)$ for $e \in \overline{\cR} \setminus \{ e_{1}\}$,  we can then apply \eqref{equa:plaquetteOperatorsTraceRewrittenAux5} to simplify
\begin{align*}
\sum_{\widehat{h} \in G^{\cR}} \,\, \prod_{p \in \cP_\cR} \delta_{1}(\,\widehat{h}_{p}) = 
 \sum_{\widehat{h}(e_1) \in G} \delta_{1}(\,\widehat{h}_{p_0}) \delta_{1}(\,\widehat{h}_{p_1}) 
 & = \sum_{\widehat{h}(e_1) \in G} \delta_{1}\left(\ldots \widehat{h}(e_1) \ldots \right) \, \delta_{1}\left(\ldots \widehat{h}(e_1)^{-1} \ldots \right)\\[2mm]
 & = \delta_{1}(\ldots) \leq 1\,.
 \end{align*}
This idea can be applied iteratively on larger regions. Since $\cR$ is connected by plaquettes, we can order all plaquettes of $\cP_\cR$ as a finite sequence  $p_0, p_1, \dots, p_n$ such that plaquette $p_k$  shares (at least) one edge, say $e_{k} \in \cR$, with $\bigcup_{j=1}^{k-1} p_j$ for every $k \geq 1$. If we fix the values $\widehat{h}(e)$ for $e \in \overline{\cR} \setminus \{ e_{1}, \ldots e_{n}\}$, then
\begin{align*}
 \sum_{\widehat{h}(e_1), \ldots, \widehat{h}(e_n) \in G} \,\,   \prod_{j=0}^{n} \delta_{1}(\, \widehat{h}_{p_j}) &  =
\sum_{\widehat{h}(e_2), \ldots, \widehat{h}(e_n)} \left(\sum_{\widehat{h}(e_1) \in G} \delta_{1}(\ldots \widehat{h}(e_1) \ldots ) \delta_{1}(\ldots \widehat{h}(e_1)^{-1}\ldots)\right) \prod_{j=2}^{n} \delta_{1}(\widehat{h}_{p_j})\\
& =
\sum_{\widehat{h}(e_2), \ldots, \widehat{h}(e_n)} \delta_{1}(\ldots \widehat{h}(e_2)\ldots) \prod_{j=2}^{n} \delta_{1}(\widehat{h}_{p_j})\\
& =
\sum_{\widehat{h}(e_3), \ldots, \widehat{h}(e_n)} \left(\sum_{\widehat{h}(e_2) \in G} \delta_{1}(\ldots \widehat{h}(e_2) \ldots ) \delta_{1}(\ldots \widehat{h}(e_2)^{-1}\ldots)\right) \prod_{j=3}^{n} \delta_{1}(\widehat{h}_{p_j})\\[2mm]
& =
\sum_{\widehat{h}(e_3), \ldots, \widehat{h}(e_n)} \delta_{1}(\ldots \widehat{h}(e_{3})\ldots) \prod_{j=3}^{n} \delta_{1}(\widehat{h}_{p_j})\\
& = \ldots\\[2mm]
& = \delta_{1}(\ldots)\,.
\end{align*}
Therefore, if we fix the boundary values $\widehat{h}(e) \in G$ for $e \in \partial \cR$, we can estimate
\begin{align*}
\sum_{\widehat{h} \in G^{\cR}} \,\, \prod_{j=0}^{n} \delta_{1}(\widehat{h}_{p_j}) 
 = \sum_{\widehat{h}  \in G^{\cR \setminus \{ e_{1}, \ldots, e_{n}\}}} \,\, \sum_{\widehat{h}(e_{1}), \ldots \widehat{h}(e_n)} \,\, \prod_{j=0}^{n} \delta_{1}(\widehat{h}_{p_j}) & \leq \sum_{\widehat{h}  \in G^{\cR \setminus \{ e_{1}, \ldots, e_{n}\}}} \delta_{1}(\ldots)\\[2mm] 
& \leq |G^{\cR \setminus \{ e_{1}, \ldots, e_{n}\}}| =  |G|^{|\cR|-|\cP_\cR|+1}\,. 
\end{align*}
This finishes the proof.
\end{proof}

We can now finish the section with the proof of the result.
\begin{proof}[Proof of Theorem~\ref{thm:marginals-decayQDM}]
 Combining Lemmas~\ref{Lemm:starOperatorsTraceRewritten} and~\ref{Lemm:starOperatorsTraceRewritten2}, we can estimate
for every $\widehat{g} \in G^{\cR}$
\[
\left( (1+\gamma_{\beta})^{|\cS_\cR|} - \gamma_{\beta}^{|\cS_\cR|}   \right) \,  \mathbbm{1} \leq 
\Tr_{\cR}\left(  \prod_{s \in \cS_\cR} e^{\beta A_{s}} P_{\widehat{g}} \right)  \leq  \left( (1+\gamma_{\beta})^{|\cS_\cR|} - \gamma_{\beta}^{|\cS_\cR|}  + |G| \gamma_{\beta}^{|\cS_\cR|}  \right) \,  \mathbbm{1}\,.   \]
Applying this to the decomposition given in \eqref{equa:marginalsFirstDecomp}, we deduce that
\begin{multline}\label{equa:marginals-decayQDMAux1} 
 \left( (1+\gamma_{\beta})^{|\cS_\cR|} - \gamma_{\beta}^{|\cS_\cR|}   \right) \Tr_{\cR}\left( \prod_{p \in \cP_\cR} e^{\beta B_{p}} \right)\\ \leq \Tr_{\cR}\left(  \prod_{p \in \cP_\cR} e^{\beta B_{p}} \prod_{s \in \cS_\cR} e^{\beta A_{s}} \right) \leq\\  \left( (1+\gamma_{\beta})^{|\cS_\cR|} - \gamma_{\beta}^{|\cS_\cR|}  + |G| \gamma_{\beta}^{|\cS_\cR|}  \right) \, \Tr_{\cR}\left( \prod_{p \in \cP_\cR} e^{\beta B_{p}} \right)\,.
\end{multline}
But, from the combination of Lemmas~\ref{Lemm:plaquetteOperatorsTraceRewritten} and
\ref{Lemm:plaquetteOperatorsTraceEstimates}, we know that
\[ |G|^{|\cR|} \left( (1+\gamma_{\beta})^{|\cP_\cR|} - \gamma_{\beta}^{|\cP_\cR|} \right)  \mathbbm{1} \leq \Tr_{\cR}\left( \prod_{p \in \cP_\cR} e^{\beta B_{p}} \right) \leq |G|^{|\cR|} \left( (1+\gamma_{\beta})^{|\cP_\cR|} - \gamma_{\beta}^{|\cP_\cR|} +  |G|\gamma_{\beta}^{|\cP_\cR|} \right)\mathbbm{1}\,. \]
Thus, applying these inequalities to \eqref{equa:marginals-decayQDMAux1}, we conclude that
\begin{multline}\label{equa:marginals-decayQDMAux2} 
 |G|^{|\cR|} 
 \left( (1+\gamma_{\beta})^{|\cP_\cR|} - \gamma_{\beta}^{|\cP_\cR|} \right) \left( (1+\gamma_{\beta})^{|\cS_\cR|} - \gamma_{\beta}^{|\cS_\cR|}   \right) \, \mathbbm{1} \\ 
 \leq 
 \Tr_{\cR}\left(  \prod_{p \in \cP_\cR} e^{\beta B_{p}} \prod_{s \in \cS_\cR} e^{\beta A_{s}} \right) 
 \leq\\  
  |G|^{|\cR|} 
 \left( (1+\gamma_{\beta})^{|\cS_\cR|} - \gamma_{\beta}^{|\cS_\cR|}  + |G| \gamma_{\beta}^{|\cS_\cR|}  \right) \, 
 \left( (1+\gamma_{\beta})^{|\cP_\cR|} - \gamma_{\beta}^{|\cP_\cR|} +  |G|\gamma_{\beta}^{|\cP_\cR|} \right) \, \mathbbm{1}\,.
\end{multline}
Taking $\kappa_{\cR} = |G|^{|\cR|} (1+\gamma_{\beta})^{|\cP_\cR|} (1+\gamma_{\beta})^{|\cS_\cR|}$ as a common multiplicative factor in both the lower and upper bounds, we conclude the result.
\end{proof}

\end{document}